\documentclass[11pt]{article}

\usepackage[margin=1in]{geometry}
\usepackage[T1]{fontenc}
\usepackage[utf8]{inputenc}
\usepackage{amsmath,amssymb,amsthm,mathtools,mathrsfs}
\usepackage{array,booktabs,longtable,tabularx}
\usepackage{enumitem}
\usepackage[expansion=false]{microtype}
\usepackage{xcolor}
\usepackage{tikz}
\usetikzlibrary{arrows.meta}
\usepackage{hyperref}
\usepackage{ifthen}
\usepackage{aliascnt}
\usepackage[capitalize,noabbrev]{cleveref}
\usepackage[backend=biber,style=alphabetic,maxbibnames=6]{biblatex}
\hypersetup{
  colorlinks=true,
  linkcolor=blue!55!black,
  citecolor=blue!55!black,
  urlcolor=blue!55!black
}

\newtheorem{theorem}{Theorem}[section]
\newaliascnt{lemma}{theorem}
\newtheorem{lemma}[lemma]{Lemma}
\aliascntresetthe{lemma}
\newaliascnt{proposition}{theorem}
\newtheorem{proposition}[proposition]{Proposition}
\aliascntresetthe{proposition}
\newaliascnt{corollary}{theorem}
\newtheorem{corollary}[corollary]{Corollary}
\aliascntresetthe{corollary}
\newaliascnt{external}{theorem}

\aliascntresetthe{external}
\newaliascnt{construction}{theorem}
\newtheorem{construction}[construction]{Construction}
\aliascntresetthe{construction}
\theoremstyle{definition}
\newaliascnt{definition}{theorem}
\newtheorem{definition}[definition]{Definition}
\aliascntresetthe{definition}
\newaliascnt{convention}{theorem}
\newtheorem{convention}[convention]{Convention}
\aliascntresetthe{convention}
\newaliascnt{requirement}{theorem}
\newtheorem{requirement}[requirement]{Verification requirement}
\aliascntresetthe{requirement}
\newaliascnt{example}{theorem}

\aliascntresetthe{example}
\theoremstyle{remark}
\newaliascnt{remark}{theorem}
\newtheorem{remark}[remark]{Remark}
\aliascntresetthe{remark}

\crefname{lemma}{lemma}{lemmas}
\Crefname{lemma}{Lemma}{Lemmas}
\crefname{proposition}{proposition}{propositions}
\Crefname{proposition}{Proposition}{Propositions}
\crefname{corollary}{corollary}{corollaries}
\Crefname{corollary}{Corollary}{Corollaries}
\crefname{external}{external input}{external inputs}
\Crefname{external}{External input}{External inputs}
\crefname{construction}{construction}{constructions}
\Crefname{construction}{Construction}{Constructions}
\crefname{definition}{definition}{definitions}
\Crefname{definition}{Definition}{Definitions}
\crefname{convention}{convention}{conventions}
\Crefname{convention}{Convention}{Conventions}
\crefname{requirement}{verification requirement}{verification requirements}
\Crefname{requirement}{Verification requirement}{Verification requirements}
\crefname{example}{example}{examples}
\Crefname{example}{Example}{Examples}
\crefname{remark}{remark}{remarks}
\Crefname{remark}{Remark}{Remarks}

\newcommand{\CC}{\mathbb C}
\newcommand{\RR}{\mathbb R}
\newcommand{\ZZ}{\mathbb Z}
\newcommand{\QQ}{\mathbb Q}
\newcommand{\FF}{\mathbb F}

\newcommand{\Cl}{\operatorname{Cl}}
\newcommand{\Sk}{\operatorname{Sk}}
\newcommand{\Cone}{\operatorname{Cone}}
\newcommand{\Cyl}{\operatorname{Cyl}}
\newcommand{\im}{\operatorname{im}}
\newcommand{\rank}{\operatorname{rank}}
\newcommand{\Span}{\operatorname{span}}
\newcommand{\dep}{\operatorname{dep}}
\newcommand{\spec}{\operatorname{spec}}
\newcommand{\Proj}{\operatorname{Proj}}
\newcommand{\Dbase}{D_{\mathrm{base}}}
\newcommand{\Dpriv}{D_{\mathrm{priv}}}
\newcommand{\Qenc}{\mathscr Q}
\newcommand{\Eenc}{\mathscr E}
\newcommand{\Bbulk}{\mathscr B}
\newcommand{\Odown}{\mathscr O^{\downarrow}}
\newcommand{\Oup}{\mathscr O^{\uparrow}}

\title{A Complete Proof of $\mathsf{QMA}_1$-Hardness for Weighted Gapped Clique Homology}
\author{Tal Barak\thanks{The author acknowledges support by ERC Advanced Grant CCnoisyQC (grant number 101201649).}}
\date{October 2, 2026}

\begin{document}
\maketitle

\begin{abstract}
The computational complexity of simplicial homology has been studied since
the question was raised by Kaibel and Pfetsch~\cite{KaibelPfetsch}, and
clique homology was later shown to be $\mathsf{QMA}_1$-hard by Crichigno
and Kohler~\cite{CrichignoKohler2024}. King and Kohler subsequently
introduced the weighted, spectrally gapped version considered here and
developed a reduction intended to establish the same hardness result in
this setting~\cite{KingKohler2026}.

We give a complete proof of the hard direction.  The earlier King-Kohler
argument contains a gap in the analysis of the nontrivially folded local
gadgets, so the required topological and spectral properties have to be
re-established on the clique complexes actually produced by the reduction.
This is not a local repair of the published proof: the argument requires a
new analysis of the local topology, the filtered cochain complexes, and the
interaction of many overlapping gadgets.

Our proof uses exact finite certificates for the local combinatorial data
and relative homology to determine the effect of each gadget on the encoded
qubit space.  We then prove the finite-dimensional filtered-Hodge statement
needed to recover the corresponding rank-one low-energy penalty and derive
a quantitative global estimate that transfers the source-Hamiltonian
promise gap to the final weighted Hodge Laplacian.  Consequently, for fixed
efficiently computable $k(n)$ and inverse-polynomial $\gamma(n)$, weighted
gapped clique homology is $\mathsf{QMA}_1(\mathcal G)$-hard.  The previously
known $\mathsf{QMA}$-containment argument is logically independent of the
gap and remains unaffected.
\end{abstract}

\tableofcontents

\section{Introduction}
\label{sec:introduction}

\subsection{From the complexity of holes to quantum complexity}

Homology is among the most basic invariants in topology.  Given a space,
it detects patterns with no boundary, such as loops in dimension one or
closed surfaces in dimension two, and asks which of them can be filled by
higher-dimensional pieces.  Such boundaryless patterns are called cycles;
those that can be filled are boundaries.  Homology records, in each
dimension, the cycles that are not boundaries.

For a finite simplicial complex whose simplices are explicitly listed,
computing homology is ultimately a problem in finite linear algebra: one
writes down the boundary matrices and computes their kernels and images.
The computational picture becomes much less transparent when the complex
is described indirectly by a much shorter object, which may implicitly
specify exponentially many simplices.  Such a representation is called
\emph{succinct}.  This distinction led Kaibel and Pfetsch to ask, more than
two decades ago, for the complexity of computing the homology of a
simplicial complex given by a succinct description
\cite{KaibelPfetsch}.

A particularly natural succinct representation is provided by
\emph{clique complexes}.  A graph \(G\) determines a simplicial complex
\(\Cl(G)\) by declaring every clique to be a simplex.  The graph itself has
a polynomial-size description, while \(\Cl(G)\) may contain exponentially
many simplices.  Clique complexes are also central objects in topological
data analysis, where one seeks global topological information from
combinatorial data.

A connection with quantum complexity emerged from a different direction.
The correspondence between supersymmetric zero-energy states and
cohomology goes back to Witten~\cite{Witten1982}.  Cade and Crichigno used
this connection in the study of the computational complexity of
(co)homological problems~\cite{CadeCrichigno}.  Write \(H_k(K;\CC)\) for
the \(k\)-dimensional homology of a simplicial complex \(K\), and
\(\Delta_k\) for the corresponding Hodge Laplacian.  For a finite complex,
\[
   H_k(K;\CC)\neq0
   \qquad\Longleftrightarrow\qquad
   0\in\spec\Delta_k .
\]
Thus the existence of a topological hole can be viewed as the existence of
a zero-energy state of a particular Hamiltonian.  This makes it possible
to bring tools from Hamiltonian complexity to problems that appear, on
their face, to be entirely classical.

Crichigno and Kohler subsequently proved that deciding whether a clique
complex has nontrivial homology is \(\mathsf{QMA}_1\)-hard
\cite{CrichignoKohler2024}.  Their result showed that even the apparently
classical task of detecting a hole in a graph-defined simplicial complex
can encode the full difficulty of quantum verification with perfect
completeness.

There is, however, an important distinction between merely deciding
whether zero lies in the spectrum and requiring a quantitative separation
from zero when it does not.  A NO instance of the ordinary homology problem
may have trivial homology while the smallest positive eigenvalue of its
Hodge Laplacian is arbitrarily small.  Such an instance is topologically
hole-free, but spectrally almost indistinguishable from one containing a
hole.

King and Kohler introduced the weighted \emph{gapped clique-homology}
problem considered here to impose precisely this quantitative distinction
\cite{KingKohler2026}.  In the YES case the prescribed homology group is
nonzero.  In the NO case it vanishes and the corresponding weighted Hodge
Laplacian is promised to have an inverse-polynomial gap above zero.  They
developed a hardness reduction based on a homological encoding of qubits,
weighted topological gadgets for local Hamiltonian constraints, and
spectral sequences for controlling the resulting low-energy Hodge
spectrum.

The present article revisits the hard direction of that reduction.  Our
relation to King and Kohler and the issue in the published argument are
discussed below.

\subsection{The basic idea of the reduction}

We first describe the mechanism at an intuitive level.  No knowledge of
homology or spectral sequences is needed for this discussion.

The basic qubit gadget is a small graph whose clique complex has two
independent one-dimensional holes, which represent the computational basis
states \(|0\rangle\) and \(|1\rangle\).  The topological join combines
these building blocks in the same way that a tensor product combines
independent quantum systems.  Taking joins of \(N\) copies therefore
produces \(2^N\) distinguished homology classes, one for each computational
basis state of
\[
   (\CC^2)^{\otimes N}.
\]
An exponentially large quantum state space is thereby encoded in the
homology of a graph of polynomial size.

A local Hamiltonian term penalizes one direction in this state space.  If
\[
   P=|\phi\rangle\langle\phi|
\]
is a rank-one projector, then \(P\) assigns energy to the component of a
state along \(|\phi\rangle\), while every state orthogonal to
\(|\phi\rangle\) lies in its kernel.  Topologically, the corresponding
gadget should fill exactly the encoded hole representing
\(|\phi\rangle\).  If it fills that class and no other encoded class, its
effect on the encoded homology is the analogue of the projector
\(P\).

For a Hamiltonian with many local terms, many such gadgets are attached to
the same encoded base.  The classes that survive all of the attachments
should then correspond precisely to states satisfying all of the source
constraints.

For the gapped problem this topological statement is not enough.  In a NO
instance one must also prove an inverse-polynomial spectral separation from
zero.  The private vertices of a gadget are therefore assigned a small
weight \(\lambda\).  A class that can be removed only through several
layers of private vertices gives rise to a positive mode at a correspondingly
high power of \(\lambda\).  Filtering simplices by the number of private
vertices organizes these successive orders, and the associated spectral
sequence records the first order at which an encoded class encounters an
obstruction.  In this sense the filtration plays a role analogous to
perturbation theory.

The local gadgets therefore carry two responsibilities at once: they must
have exactly the intended effect on encoded homology, and their low-energy
Hodge spectrum must reproduce the corresponding local Hamiltonian
penalty.  Both properties have to survive when polynomially many gadgets
are attached to the same base.

\subsection{Relation to King and Kohler and the gap in the hardness proof}
\label{subsec:introduction-failure}
\label{subsec:relation-king-kohler}

\paragraph{Relation to King and Kohler.}
The present proof builds very substantially on the framework developed by
King and Kohler~\cite{KingKohler2026}.  They introduced the weighted gapped
clique-homology problem and developed the central architecture of the
hardness reduction: the bow-tie encoding of qubits in homology, weighted
local gadgets for Hamiltonian constraints, the thickening, coning, and
folding constructions underlying those gadgets, the use of small vertex
weights together with a filtration and spectral sequence to control the
low-energy Hodge spectrum, and the local-to-global strategy for combining
many constraints.  Many of the concrete gadget constructions used here
also originate in their work and its companion files.  The connection they
developed between quantum constraints, topological fillings, and small
Hodge-Laplacian eigenvalues is an elegant and powerful aspect of their
construction, and it remains the conceptual backbone of the proof below.

\paragraph{The gap in the hard direction.}
The difficulty lies in the topology of the nontrivially folded local
gadgets.  The published proof uses preservation claims for the folding step
that do not hold in the required generality, leaving the topology of the
resulting clique complexes insufficiently controlled.  Since these local
topological conclusions are then used in the filtered and spectral
analysis, this creates a fundamental gap in the hard-direction proof.
This concerns the proof, not the truth of the hardness theorem, and King
and Kohler's independent $\mathsf{QMA}$-containment argument is unaffected.

The proof below addresses this issue by analyzing the actual clique
complexes produced by the reduction.  Exact signed chain identities,
relative homology, and a finite set of certified local conditions establish
the required topology, after which the spectral analysis is carried out on
those complexes using the same broad filtration and spectral-sequence
framework.  \Cref{sec:king-kohler-gap} gives the detailed diagnosis,
identifies the specific lemmas involved, and states precisely what must be
re-established.

\subsection{Main result}

The main result is the following.

\begin{theorem}[Main theorem]
\label{thm:introduction-main}
Fix the perfect-completeness gate set
\[
   \mathcal G=\{\mathrm{CNOT},U_{\mathrm{Pyth}}\}
\]
used below.  There exist functions
\[
   k:\ZZ_{\ge2}\longrightarrow\ZZ_{\ge1},
   \qquad
   \gamma:\ZZ_{\ge2}\longrightarrow\QQ_{>0},
\]
computable in deterministic polynomial time, with
\(\gamma(n)\ge1/\operatorname{poly}(n)\), such that weighted gapped clique
homology with degree \(k(n)\) and promise gap \(\gamma(n)\) is
\(\mathsf{QMA}_1(\mathcal G)\)-hard.
\end{theorem}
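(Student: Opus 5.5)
The plan is a polynomial-time many-one reduction from a $\mathsf{QMA}_1(\mathcal G)$-complete quantum $k$-SAT-type problem. Concretely, I would start from the clock-Hamiltonian (Feynman--Kitaev with perfect completeness) form of $\mathsf{QMA}_1(\mathcal G)$ with the Pythagorean gate set. An instance is a sum $H=\sum_{j=1}^m P_j$ of rank-one projectors $P_j=|\phi_j\rangle\langle\phi_j|$ on $N=\operatorname{poly}(n)$ qubits, with rational (Pythagorean) amplitudes. The promise is that either $\ker H\neq0$, or $\lambda_{\min}(H)\ge 1/\operatorname{poly}(n)$. Rank one can be arranged by splitting each local projector into rank-one pieces in a rational basis; this keeps perfect completeness and changes the gap only by a constant factor.

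The reduction then builds the encoded base: the join of $N$ bow-tie qubit gadgets. Its homology in degree $2N-1$ (so $k(n)=2N-1$, up to the dimension convention) is $(\CC^2)^{\otimes N}$ via the Künneth/join formula. For each term $P_j$, I would attach the folded, coned gadget whose private vertices carry weight $\lambda$. The resulting graph $G$ and its vertex weights are explicit and of polynomial size. This gives the computability of $k$ and, once the final bound is fixed, of $\gamma$.

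The proof then has three steps.

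\textbf{(i) Local topology.} For each gadget, I would use relative homology of the pair (gadget $\cup$ base, base). The goal is to show that attaching it kills exactly the encoded class $|\phi_j\rangle$ and creates no new homology in degrees $k$ and $k\pm1$. This uses the exact signed chain identities and the finite certified local conditions. I would rely on these certificates rather than the folding-preservation claims criticized in \cref{sec:king-kohler-gap}.

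\textbf{(ii) Local spectrum.} This is the finite-dimensional filtered-Hodge statement. I would filter cochains by the number of private vertices. The spectral-sequence page at which $|\phi_j\rangle$ dies fixes an exponent $p$. The low-lying part of the weighted Laplacian, restricted to the harmonic space of the base, is then $c\,\lambda^{2p}P_j+O(\lambda^{2p+1})$ with an explicit constant $c>0$.

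\textbf{(iii) Global gluing.} Distinct gadgets share only the base, so their private simplices interact only through base cochains. Using a Schur-complement or Feshbach projection onto the base harmonic space, I would show that the effective operator on $(\CC^2)^{\otimes N}$ is $c\lambda^{2p}H+E$, with $\|E\|\le \operatorname{poly}(n)\,\lambda^{2p+1}$. I would also show that the complement of that space has gap $\Omega(\lambda^{2p-2})$ or better.

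The two cases then follow directly. In the YES case, a frustration-free ground state of $H$ lifts, by the exactness in (i), to a genuine homology class, so $H_k\neq0$. In the NO case, choose $\lambda=1/\operatorname{poly}(n)$ small enough that $\|E\|\le \tfrac12 c\lambda^{2p}\lambda_{\min}(H)$. Then $\Delta_k\ge\gamma(n):=\tfrac12 c\lambda^{2p}/\operatorname{poly}(n)$, which is inverse-polynomial and rational.

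The main obstacle I expect is step (iii). Polynomially many overlapping gadgets must not produce cross terms that accumulate beyond $\operatorname{poly}(n)\lambda^{2p+1}$. Also, the topological exactness must hold jointly, not just gadget by gadget. The first thing I would try is a Mayer--Vietoris argument over the gadgets, since they are pairwise glued only along subcomplexes of the base. This would give joint exactness of the relative homology. For the spectrum, I would bound cross terms by a Gershgorin-type count: each base simplex meets only $\operatorname{poly}(n)$ gadgets, and each cross entry is $O(\lambda)$ relative to its diagonal term.
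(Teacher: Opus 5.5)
\textbf{Step (iii) does not close as formulated.} Your steps (i) and (ii) follow the paper's architecture. Your Mayer--Vietoris idea for joint exactness is essentially the paper's direct-sum splitting of the relative chains of \((\widehat Y(H),X_N)\). It is valid because C2, together with the absence of edges between private vertices of different occurrences, prevents clique completion from creating mixed simplices.

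There are two problems with step (iii).
\begin{enumerate}
\item Bravyi's terms have localities \(2,3,4\), so the rank-one penalties sit at different orders \(\varepsilon_a(\lambda)\asymp\lambda^{4m_a+2}\), namely \(\lambda^{10},\lambda^{14},\lambda^{18}\). There is no single \(c\lambda^{2p}H\). An additive error at the next order of the lowest-order gadgets, \(\operatorname{poly}(n)\lambda^{11}\), can never be pushed below \(\lambda^{18}g\).
\item More seriously, a Feshbach reduction onto \(J_N\Qenc_N\) starts from a compression of size up to \(\lambda^2t\), because every occurrence attaches weight-\(\lambda\) private cofaces to the same base cochains. It must cancel this down to order \(\lambda^{2p}\) through a resolvent whose range contains the \(\Theta(\lambda^2)\) bulk sectors of all \(t\) occurrences. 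Private cochains of distinct occurrences interact through common base \((q-1)\)-faces with coefficients \(O(\lambda^2)\). Moreover, the local low-energy spaces of all occurrences are \(O(\lambda)\)-perturbations of the same space \(J_N\Qenc_N\), hence far from orthogonal.
\end{enumerate}
A Gershgorin count controls none of this, and the complement gap you assert is exactly the unproved point.

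The paper builds no effective operator. It bounds \(\mathcal E(u)\) from below by contradiction, using the following ingredients:
\begin{enumerate}
\item the up-energy splits exactly over occurrences, since no top coface crosses occurrences;
\item apex-containing outputs are isolated, so injectivity of the certified matrix \(T_{a,0}\) (C4) bounds each local bulk component by \(\lambda^{-2}\mathcal E(u)\) plus already-controlled terms;
\item leakage into the base is \(O(\lambda)\), which gives the high-sector estimate \(\lambda^2t^2\);
\item \(\partial_jP_j^{\mathrm{lift}}=0\), from the spectral exclusion in the adjacent degree;
\item the counting identity \(\sum_j\|u_j\|^2=\|u\|^2+(t-1)\|\mathsf C_0u\|^2\), combined with \(\sum_j\langle u,J_N(I-P_j)J_N^*u\rangle\le(t-g)\|Q_Nu\|^2\).
\end{enumerate}
With \(\lambda=c_0g/t\), these contradict \(\mathcal E(u)<c_0\lambda^{4M+2}g\). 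Using the worst exponent \(4M+2\) in the lifted bound is what absorbs the mixed orders.

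\textbf{The parameters \(k\) and \(\gamma\) are not yet functions of \(n\).} The statement requires \(k\) and \(\gamma\) to be single functions of the output vertex count \(n=|V(G)|\), fixed before the source problem \(L\in\mathsf{QMA}_1(\mathcal G)\) is chosen. Your \(k=2N-1\) and \(\gamma=\tfrac12c\lambda^{2p}/\operatorname{poly}\) are instance quantities. The vertex count \(7N+\sum_j|V_{a_j}^{\mathrm{priv}}|\) does not determine \(N\), and the achievable NO-gap depends on \(g\) and \(t\). The paper adjoins isolated unit-weight vertices, which change neither degree-\(\ge1\) homology nor the weighted Laplacian. It does so until \(n=2+\langle N,R\rangle\) (Cantor pairing), with \(R\ge\max\{n^0,t,\lceil g^{-1}\rceil,2N\}\) and \(n^0\) the unpadded vertex count. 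Then \(k(n)=2u(n)-1\) and \(\gamma(n)=c_0^{19}/(v(n)+1)^{37}\) work for every instance and every \(L\).

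\textbf{Prescribed input bits.} You should also say how auxiliary qubits prescribed to \(|1\rangle\) are handled. Bravyi's input terms assume zero initialization, and \(\mathcal G\) contains no \(X\). The paper conjugates by \(X_B\) and re-expresses the resulting input terms inside the same thirteen-projector catalogue, so no new gadget type needs certification.
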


Combined with the independent \(\mathsf{QMA}\)-containment argument, this
recovers the intended complexity statement for the weighted gapped
problem.

\subsection{Outline of the proof}
\label{subsec:introduction-contributions}
\label{subsec:introduction-roadmap}

We briefly describe the proof, using notation that is introduced formally
later.

Starting from a \(\mathsf{QMA}_1(\mathcal G)\) verifier, we obtain a
positive semidefinite Hamiltonian
\[
   H_x=\sum_{j=1}^{t_x}P_j
\]
whose terms, up to ordered placement, belong to a fixed catalogue of
thirteen rank-one projectors.  In a YES instance \(H_x\) has nonzero
kernel, while in a NO instance
\[
   H_x\succeq g_xI
\]
for an inverse-polynomial \(g_x\).

The \(N\)-qubit space is encoded in the distinguished homology of an
\(N\)-fold join of bow-tie complexes:
\[
   H_{q_N}(X_N;\CC)\cong(\CC^2)^{\otimes N},
   \qquad
   q_N=2N-1.
\]
For each catalogue projector we attach a local graph to the corresponding
bow-tie factors and work with the full clique complex of the resulting
graph.  An explicit signed filling makes the prescribed encoded cycle a
boundary.  The relative homology of
\[
   (Y_a^{\mathrm{loc}},X_{m_a})
\]
then determines the full effect of the attachment on the encoded base.
The finite certificate package verifies the local combinatorial
properties needed for this argument and for the spectral analysis.

After placement and padding, the global topology gives
\[
   \Qenc_N\Big/\sum_j\operatorname{im}P_j
\]
as the surviving encoded quotient.  Since the \(P_j\) are orthogonal
projections,
\[
   \left(\sum_j\operatorname{im}P_j\right)^\perp
   =
   \bigcap_j\ker P_j
   =
   \ker H_x,
\]
which gives the exact YES-case correspondence.

For the spectral analysis, base vertices have weight \(1\), while private
gadget vertices have weight \(\lambda\).  In weighted orthonormal
coordinates the local coboundary splits as
\[
   D_\lambda
   =
   D_{\mathrm{base}}
   +
   \lambda D_{\mathrm{priv}}.
\]
Filtering by private depth produces a finite spectral sequence.  The
certified local data determine its relevant page structure, and the
filtered-Hodge theorem converts the surviving higher-page differential
into a rank-one positive eigenvalue in the projector direction while
separating the remaining local modes.

The global step is not obtained by simply summing the single-gadget
energies.  All gadget occurrences share the same encoded base, and their
down-boundary contributions can meet on common base simplices.  The
many-gadget estimate controls these shared-base terms and compares the
resulting low-energy form with the source Hamiltonian \(H_x\).  Choosing
\[
   \lambda
   =
   c_0\frac{g_x}{t_x}
\]
then gives an inverse-polynomial lower bound for the final Hodge
Laplacian in every NO instance.

Finally, isolated-vertex padding makes both the relevant homological
degree and the promise threshold functions only of the number of vertices
in the output graph.

The proof is organized accordingly.  After the preliminaries, source
Hamiltonian, and bow-tie encoding, \cref{sec:king-kohler-gap} gives the
detailed diagnosis of the earlier hardness argument.
Sections~\ref{sec:local-gadget} and~\ref{sec:global-graph} define the local
and global constructions, \cref{sec:certificates} records the finite
certificate conditions, and \cref{sec:topological-correctness} determines
their effect on homology.  The local and global spectral estimates are
proved in \cref{sec:single-gadget-spectrum,sec:global-spectrum}, and
\cref{sec:reduction} completes the reduction.  The appendices contain the
filtered-Hodge proof, the certificate protocols and reproducibility
information, and the remaining finite construction details.

\paragraph{Acknowledgments.}
The author is deeply grateful to Thomas Vidick, who first introduced him to
this problem and whose guidance, teaching, encouragement, and support were
invaluable.  Part of this work began during the author's visit to Vidick's
group at EPFL in 2025.  The author is also very grateful to Dorit Aharonov
for encouraging him to write this work up, and for her continued support,
guidance, teaching, and helpful comments on the manuscript.  He would also
like to thank Robbie King for generously discussing the original
construction with him, for being open about the issue, and for encouraging
him to try to resolve it.

\section{Promise problems, quantum verification, and hardness}
\label{sec:quantum-complexity}

We use the standard notions of promise problems, polynomial time, and
many-one reductions.  A promise problem is a pair
$L=(L_{\mathrm{YES}},L_{\mathrm{NO}})$ of disjoint subsets of
$\{0,1\}^*$.  For $N\ge0$, write
$\Qenc_N=(\CC^2)^{\otimes N}$, with its computational basis
$\{|b_1\cdots b_N\rangle:b_i\in\{0,1\}\}$ and the convention
$\Qenc_0=\CC$.

The fixed gate choice used below follows Crichigno and
Kohler~\cite[Section~3.1]{CrichignoKohler}.

\begin{definition}[Fixed quantum gate set]
\label{def:gate-set}
A fixed quantum gate set is a finite labelled collection
$\mathcal G=\{G_1,\ldots,G_s\}$ of unitary matrices of bounded arity,
independent of the input.  Throughout the article,
\[
  \mathcal G=\{\mathrm{CNOT},U_{\mathrm{Pyth}}\},
  \qquad
  U_{\mathrm{Pyth}}=\frac15
  \begin{pmatrix}3&4\\-4&3\end{pmatrix}.
\]
The gate $U_{\mathrm{Pyth}}$ acts on one qubit, while
$\mathrm{CNOT}|a,b\rangle=|a,a\oplus b\rangle$ on two qubits.  Both
matrices have rational entries.  Computational-basis initialization and
measurement are allowed separately from the gates in $\mathcal G$.
\end{definition}

\begin{definition}[Tensor-factor placement]
\label{def:tensor-placement}
Let $S=(s_1,\ldots,s_k)$ be an ordered tuple of distinct elements of
$\{1,\ldots,N\}$, and write its complement in increasing order as
$S^c=(r_1,\ldots,r_{N-k})$.  The placement unitary
$W_S:\Qenc_k\otimes\Qenc_{N-k}\to\Qenc_N$ is determined on computational
basis vectors by placing the first tensor factor on the qubits in $S$, in
the displayed order, and the second on the qubits in $S^c$.  Explicitly,
\[
 W_S\bigl(|b_1\cdots b_k\rangle\otimes|c_1\cdots c_{N-k}\rangle\bigr)
 =|x_1\cdots x_N\rangle,
\]
where $x_{s_i}=b_i$ and $x_{r_j}=c_j$.  For an operator $A$ on $\Qenc_k$,
its placement on $S$ is
\[
  A_S^{(N)}:=W_S\bigl(A\otimes I_{\Qenc_{N-k}}\bigr)W_S^*.
\]
\end{definition}

A quantum circuit over $\mathcal G$ on $N$ qubits is an ordered list
$C=((g_1,S_1),\ldots,(g_L,S_L))$ of gate instructions, with each $S_j$
matching the arity of $g_j$.  In the placement notation above it implements
\[
  U_C=(g_L)_{S_L}^{(N)}\cdots(g_1)_{S_1}^{(N)}.
\]
Its size is $L$.  Since $\mathcal G$ is fixed and has bounded arity, the
usual binary encoding of the gate list has length polynomial in $L$ and
$\log N$.  A circuit family $\{V_x\}$ is polynomial-time uniform if a
deterministic polynomial-time algorithm outputs this description from $x$,
and has polynomial size if its numbers of qubits and gates are bounded by a
polynomial in $|x|$.

\begin{definition}[$\mathsf{QMA}_1(\mathcal G)$]
\label{def:qma1}
A promise problem $L=(L_{\mathrm{YES}},L_{\mathrm{NO}})$ belongs to
$\mathsf{QMA}_1(\mathcal G)$ if there is a polynomial-time uniform family
of polynomial-size circuits $\{V_x\}$ over $\mathcal G$ with the following
properties.  The circuit $V_x$ acts on a witness register, supplied in an
arbitrary unit state $|\psi\rangle$, and an auxiliary register initialized
in a computational-basis state $|b_x\rangle$ computable from $x$ in
polynomial time.  One output qubit is measured in the computational basis,
with outcome $0$ interpreted as acceptance, and
\begin{enumerate}[label=(\roman*)]
\item if $x\in L_{\mathrm{YES}}$, some unit witness is accepted with
probability $1$;
\item if $x\in L_{\mathrm{NO}}$, every unit witness is accepted with
probability at most $1/3$.
\end{enumerate}
Allowing mixed witnesses is equivalent, since the acceptance probability is
affine in the witness state.
\end{definition}

A polynomial-time many-one reduction from a promise problem $L$ to a
promise problem $P$ is a deterministic polynomial-time computable map $f$
with $f(L_{\mathrm{YES}})\subseteq P_{\mathrm{YES}}$ and
$f(L_{\mathrm{NO}})\subseteq P_{\mathrm{NO}}$.  A problem is
$\mathsf{QMA}_1(\mathcal G)$-hard if every problem in
$\mathsf{QMA}_1(\mathcal G)$ admits such a reduction to it.  No condition
is imposed outside the respective promises.

\section{Clique complexes, simplicial homology, and Hodge theory}
\label{sec:clique-hodge}

We fix the simplicial and Hodge-theoretic conventions used below.  The
underlying topological material is standard; see, for example,
\cite[Chapter~2]{Hatcher}.  The weighted conventions are stated explicitly
because the reduction depends on the corresponding matrix coefficients.

\subsection{Simplicial complexes and clique complexes}
\label{subsec:simplicial-clique}

\begin{definition}[Finite abstract simplicial complex]
\label{def:simplicial-complex}
A finite abstract simplicial complex $K$ on a finite vertex set $V$ is a
collection of subsets of $V$ closed under taking subsets.  Its $r$-simplices
are \(K_r:=\{\sigma\in K:|\sigma|=r+1\}\), and for $W\subseteq V$ the induced subcomplex is
$K[W]:=\{\sigma\in K:\sigma\subseteq W\}$.
\end{definition}

\begin{definition}[Geometric realization]
\label{def:geometric-realization}
For $v\in V(K)$ let $e_v$ be the corresponding standard basis vector of
$\RR^{V(K)}$.  The geometric realization of $K$ is
\[
  |K|:=\bigcup_{\sigma\in K}\operatorname{conv}\{e_v:v\in\sigma\},
\]
where $\operatorname{conv}$ denotes the convex hull.
\end{definition}

\begin{center}
\begin{minipage}[t]{0.47\textwidth}
\centering
\begin{tikzpicture}[
  vertex/.style={circle,fill=black,inner sep=1.7pt},
  edge/.style={draw=black,line width=1.05pt},
  every node/.style={font=\small}
]
\coordinate (v0) at (-1.7,0);
\coordinate (v1) at ( 1.7,0);
\coordinate (v2) at ( 0,2.8);

\draw[edge] (v0)--(v1)--(v2)--cycle;

\foreach \p in {v0,v1,v2}
  \node[vertex] at (\p) {};

\node[below left]  at (v0) {$v_0$};
\node[below right] at (v1) {$v_1$};
\node[above]       at (v2) {$v_2$};

\node at (0,-0.65) {$|K|\cong S^1$};
\end{tikzpicture}

\refstepcounter{figure}
\label{fig:triangle-boundary-realization}
\smallskip

{\small
\textbf{Figure~\thefigure.}
The complex \(K\) contains the three vertices and the three edges joining
them, but not the two-simplex \(\{v_0,v_1,v_2\}\).  Its geometric
realization therefore consists only of the boundary of the triangle and is
homeomorphic to \(S^1\). }
\end{minipage}\hfill
\begin{minipage}[t]{0.47\textwidth}
\centering
\begin{tikzpicture}[
  vertex/.style={circle,fill=black,inner sep=1.7pt},
  edge/.style={draw=black,line width=1.05pt},
  every node/.style={font=\small}
]
\coordinate (v0) at (-1.7,0);
\coordinate (v1) at ( 1.7,0);
\coordinate (v2) at ( 0,2.8);

\fill[blue!13] (v0)--(v1)--(v2)--cycle;
\draw[edge] (v0)--(v1)--(v2)--cycle;

\foreach \p in {v0,v1,v2}
  \node[vertex] at (\p) {};

\node[below left]  at (v0) {$v_0$};
\node[below right] at (v1) {$v_1$};
\node[above]       at (v2) {$v_2$};

\node at (0,1.1) {$\{v_0,v_1,v_2\}$};
\node at (0,-0.65) {$|K'|\cong D^2$};
\end{tikzpicture}

\refstepcounter{figure}
\label{fig:filled-triangle-realization}
\smallskip

{\small
\textbf{Figure~\thefigure.}
The complex \(K'\) also contains the two-simplex \(\{v_0,v_1,v_2\}\).  Its geometric realization therefore contains the
interior of the triangle and is homeomorphic to the closed disk \(D^2\). }
\end{minipage}
\end{center}

\begin{definition}[One-skeleton]
\label{def:one-skeleton}
The one-skeleton $\Sk_1(K)$ is the graph with vertex set $V(K)$ and edge
set $K_1$.
\end{definition}

\begin{definition}[Clique complex]
\label{def:clique-complex}
For a finite simple graph $G=(V,E)$, its clique complex (or flag complex) is
\[
  \Cl(G):=\{\sigma\subseteq V:\text{ every two distinct vertices of
  $\sigma$ are adjacent in $G$}\}.
\]
Thus the simplices of $\Cl(G)$ are the cliques of $G$, and
$\Sk_1(\Cl(G))=G$.
\end{definition}

\subsection{Chains, boundaries, and homology}
\label{subsec:chains-homology}

Choose an orientation of every simplex.  We write
$[v_0,\ldots,v_r]$ for the orientation represented by the displayed order;
a permutation changes this vector by its sign.  The chain space
$C_r(K;\CC)$ is the complex vector space with these oriented $r$-simplices
as an orthonormal basis.  Thus a chain is
$c=\sum_{\sigma\in K_r}c_\sigma[\sigma]$ and
$\langle c,d\rangle=\sum_\sigma\overline{c_\sigma}d_\sigma$.

\begin{definition}[Boundary operator]
\label{def:boundary-operator}
For $r\ge1$, the boundary operator
$\partial_r:C_r(K;\CC)\to C_{r-1}(K;\CC)$ is defined by
\[
  \partial_r[v_0,\ldots,v_r]
  :=\sum_{i=0}^r(-1)^i[v_0,\ldots,\widehat v_i,\ldots,v_r],
\]
and $\partial_0:=0$.  The alternating signs give
$\partial_{r-1}\partial_r=0$.
\end{definition}

Write
\[
  Z_r(K;\CC):=\ker\partial_r,
  \qquad
  B_r(K;\CC):=\operatorname{im}\partial_{r+1}.
\]
Elements of $Z_r$ are cycles, elements of $B_r$ are boundaries, and
$B_r\subseteq Z_r$.  The $r$-th simplicial homology is
\[
  H_r(K;\CC):=Z_r(K;\CC)/B_r(K;\CC),
\]
with the class of a cycle $z$ denoted by $[z]$.  Unless stated otherwise,
homology in this article has coefficients in $\CC$.

\subsection{Cochains and cohomology}
\label{subsec:cochains-cohomology}

The cochain space is the dual
$C^r(K;\CC):=\operatorname{Hom}_{\CC}(C_r(K;\CC),\CC)$.  We denote by
$[\sigma]^*$ the basis dual to the oriented simplex basis.  The coboundary
$D_r:C^r(K;\CC)\to C^{r+1}(K;\CC)$ is dual to the boundary:
\[
  (D_r\alpha)(c):=\alpha(\partial_{r+1}c).
\]
Hence $D_{r+1}D_r=0$.  We write
\[
  Z^r(K;\CC):=\ker D_r,
  \qquad
  B^r(K;\CC):=\operatorname{im}D_{r-1},
  \qquad
  H^r(K;\CC):=Z^r(K;\CC)/B^r(K;\CC).
\]
Unless stated otherwise, cohomology also has coefficients in $\CC$.

\subsection{Hodge Laplacians and harmonic cochains}
\label{subsec:hodge-theory}

Equip $C^r(K;\CC)$ with the inner product for which the dual simplex basis
is orthonormal, and let $D_r^*$ be the Hilbert-space adjoint of $D_r$.
Under the unweighted simplex-basis identification of chains and cochains,
$D_r^*=\partial_{r+1}$.

The down-, up-, and full degree-$r$ Hodge Laplacians are
\[
  \Delta_r^{\downarrow}:=D_{r-1}D_{r-1}^*,
  \qquad
  \Delta_r^{\uparrow}:=D_r^*D_r,
  \qquad
  \Delta_r:=\Delta_r^{\downarrow}+\Delta_r^{\uparrow},
\]
with the missing map in degree zero interpreted as zero.

\begin{lemma}[Hodge energy identity]
\label{lem:hodge-energy}
For every $u\in C^r(K;\CC)$,
\[
  \langle u,\Delta_ru\rangle
  =\|D_{r-1}^*u\|^2+\|D_ru\|^2.
\]
Consequently $\Delta_r$ is self-adjoint and positive semidefinite, and
\[
  \ker\Delta_r=\ker D_r\cap\ker D_{r-1}^*.
\]
\end{lemma}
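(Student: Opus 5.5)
The argument is a direct expansion of the definitions in \cref{subsec:hodge-theory}. Fix $u\in C^r(K;\CC)$ and write $\Delta_r=D_{r-1}D_{r-1}^*+D_r^*D_r$. Using only the defining property of the Hilbert-space adjoint, $\langle x,Ay\rangle=\langle A^*x,y\rangle$, on the finite-dimensional inner product spaces $C^{r-1}$, $C^r$, $C^{r+1}$, we get
\[
  \langle u,D_{r-1}D_{r-1}^*u\rangle=\langle D_{r-1}^*u,D_{r-1}^*u\rangle=\|D_{r-1}^*u\|^2,
  \qquad
  \langle u,D_r^*D_ru\rangle=\langle D_ru,D_ru\rangle=\|D_ru\|^2 .
\]
Adding these gives the identity. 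In degree $r=0$ the map $D_{-1}$ is interpreted as zero, so the first term vanishes and the formula still holds.

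Self-adjointness does not depend on the identity. It follows from $(AB)^*=B^*A^*$ and $A^{**}=A$, which give $(D_{r-1}D_{r-1}^*)^*=D_{r-1}D_{r-1}^*$ and $(D_r^*D_r)^*=D_r^*D_r$; a sum of self-adjoint operators is self-adjoint. Positive semidefiniteness then follows at once, because the right-hand side of the identity is a sum of squared norms and hence nonnegative.

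For the kernel, the inclusion $\ker D_r\cap\ker D_{r-1}^*\subseteq\ker\Delta_r$ is immediate from the formula for $\Delta_r$. Conversely, if $\Delta_ru=0$, then $\langle u,\Delta_ru\rangle=0$. The identity then forces both nonnegative terms $\|D_{r-1}^*u\|^2$ and $\|D_ru\|^2$ to vanish, so $u\in\ker D_r\cap\ker D_{r-1}^*$.

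No step is a real obstacle. The one point needing care is to use the energy identity only in the direction $\Delta_ru=0\Rightarrow\langle u,\Delta_ru\rangle=0$. This avoids any appeal to the spectral theorem, although for a positive semidefinite operator the two conditions $\Delta_r u=0$ and $\langle u,\Delta_r u\rangle=0$ are in fact equivalent.
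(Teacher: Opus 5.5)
Your proof is correct and follows essentially the same route as the paper: apply the adjoint property to each summand of $\Delta_r$ to get a sum of two squared norms, then read off positivity and the kernel. Your explicit check of self-adjointness via $(AB)^*=B^*A^*$ and your two-inclusion kernel argument simply spell out steps the paper's one-line proof leaves implicit.
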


\begin{proof}
Apply the defining property of the adjoint to the two summands of
$\Delta_r$.  The displayed quadratic form is a sum of two squared norms,
which vanishes exactly when both terms vanish.
\end{proof}

A cochain in $\ker\Delta_r$ is called harmonic; write
$\mathscr H^r(K):=\ker\Delta_r$.

The following finite-dimensional Hodge theorem is due to
Eckmann~\cite{Eckmann1944}.

\begin{theorem}[Combinatorial Hodge theorem]
\label{thm:combinatorial-hodge}
For every finite simplicial complex $K$ and $r\ge0$,
\[
  C^r(K;\CC)
  =\operatorname{im}D_{r-1}\oplus\mathscr H^r(K)\oplus\operatorname{im}D_r^*
\]
orthogonally, and the map $u\mapsto[u]$ induces an isomorphism
$\mathscr H^r(K)\cong H^r(K;\CC)$.
\end{theorem}

\begin{proof}
Since $D_rD_{r-1}=0$, one has
$\operatorname{im}D_{r-1}\subseteq\ker D_r$.  Finite-dimensional
orthogonal decomposition gives
\[
 C^r=\ker D_r\oplus\operatorname{im}D_r^*,
 \qquad
 \ker D_r=\operatorname{im}D_{r-1}\oplus
 (\ker D_r\cap\ker D_{r-1}^*).
\]
The last intersection is $\mathscr H^r(K)$ by
\cref{lem:hodge-energy}, and quotienting $\ker D_r$ by
$\operatorname{im}D_{r-1}$ gives the cohomology statement.
\end{proof}

\begin{corollary}[Homology and zero eigenvalues]
\label{cor:homology-zero-eigenvalues}
For every finite simplicial complex $K$ and $r\ge0$,
\[
  \dim\ker\Delta_r
  =\dim H^r(K;\CC)
  =\dim H_r(K;\CC).
\]
In particular, $H_r(K;\CC)\ne0$ if and only if $0\in\spec\Delta_r$, with
multiplicity $\dim H_r(K;\CC)$.
\end{corollary}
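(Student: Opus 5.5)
The plan is to chain \cref{thm:combinatorial-hodge} with the standard duality between homology and cohomology over a field. Taking dimensions in the Hodge theorem gives $\dim\ker\Delta_r=\dim\mathscr H^r(K)=\dim H^r(K;\CC)$ directly, since $\mathscr H^r(K)=\ker\Delta_r$ by definition and the map $u\mapsto[u]$ is an isomorphism $\mathscr H^r(K)\cong H^r(K;\CC)$. What remains is the equality $\dim H^r(K;\CC)=\dim H_r(K;\CC)$.

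For that second equality I would stay entirely within finite-dimensional linear algebra and avoid the universal coefficient theorem. Identify $C^r$ with $C_r$ through the orthonormal simplex basis. Under this identification $D_r^*=\partial_{r+1}$, as recorded in \cref{subsec:hodge-theory}, so $D_r$ is the conjugate transpose of $\partial_{r+1}$ and $D_{r-1}$ is the conjugate transpose of $\partial_r$. Rank is preserved under conjugate transposition, so $\rank D_r=\rank\partial_{r+1}$ and $\rank D_{r-1}=\rank\partial_r$. Writing $n_r=\dim C_r$, we get
\[
\dim H^r=(n_r-\rank D_r)-\rank D_{r-1}=(n_r-\rank\partial_r)-\rank\partial_{r+1}=\dim H_r .
\]
The degree-zero conventions ($\partial_0=0$ and the missing $D_{-1}$ taken as zero) match, so the case $r=0$ needs no separate treatment. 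Alternatively, one could observe that $\ker\Delta_r=\ker\partial_r\cap\ker\partial_{r+1}^*$ is also the harmonic space for the chain complex and apply the same orthogonal decomposition argument to $\partial$.

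The final claim follows at once. $\Delta_r$ is self-adjoint by \cref{lem:hodge-energy}, so it is diagonalizable and the multiplicity of the eigenvalue $0$ equals $\dim\ker\Delta_r$. Hence $0\in\spec\Delta_r$ exactly when $H_r(K;\CC)\neq0$, with multiplicity $\dim H_r(K;\CC)$.

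The only point needing care is checking the transpose identification under the stated sign and orientation conventions; it is routine, and I see no real obstacle.
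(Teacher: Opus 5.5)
Your argument is correct. For the first equality it is identical to the paper's: both read $\dim\ker\Delta_r=\dim H^r(K;\CC)$ off \cref{thm:combinatorial-hodge}.

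The difference is in $\dim H^r(K;\CC)=\dim H_r(K;\CC)$. The paper obtains this by citing the universal coefficient theorem over the field $\CC$. You prove it by hand instead. The matrix of $D_{r-1}$ (resp.\ $D_r$) in the simplex bases is the transpose of the real matrix of $\partial_r$ (resp.\ $\partial_{r+1}$), so the ranks agree. Rank--nullity then gives $n_r-\rank\partial_r-\rank\partial_{r+1}$ for both groups, and the degree-zero conventions are consistent.

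This is exactly the field case of the universal coefficient theorem made explicit. It keeps the section self-contained and transfers verbatim to the relative and augmented complexes used later. What the citation gives in addition is the natural evaluation isomorphism $H^r(K;\CC)\cong\operatorname{Hom}_{\CC}(H_r(K;\CC),\CC)$, compatible with induced maps. A pure dimension count does not provide this, and the paper does use that duality later, in the proof of \cref{thm:page-Ra-differential}. For the present corollary, however, dimensions are all that is needed.

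Your alternative remark is also valid: $\ker\Delta_r=\ker\partial_r\cap\ker\partial_{r+1}^*$ is the chain-level harmonic space, which gives a direct isomorphism $\ker\Delta_r\cong H_r(K;\CC)$. Your explicit use of self-adjointness to equate the multiplicity of $0$ with $\dim\ker\Delta_r$ fills in a step the paper leaves implicit.
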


\begin{proof}
The first equality is \cref{thm:combinatorial-hodge}; the second follows
from the universal coefficient theorem over the field $\CC$.
\end{proof}

\subsection{Vertex weights and weighted Hodge Laplacians}
\label{subsec:weighted-hodge}

Let $G=(V,E)$ be a finite simple graph and $K=\Cl(G)$.  A positive vertex
weighting is a function $w:V\to\RR_{>0}$, extended multiplicatively to
simplices by
\[
  w(\sigma):=\prod_{v\in\sigma}w(v),
  \qquad w(\varnothing):=1.
\]
On $C^r(K;\CC)$ we use the weighted inner product for which distinct
oriented simplices are orthogonal and
$\langle\sigma,\sigma\rangle_w=w(\sigma)^2$.  Thus \(|\widehat\sigma\rangle:=w(\sigma)^{-1}|\sigma\rangle\) is the weighted orthonormal simplex basis.

The algebraic coboundary $D_r$ is independent of the weights.  Its matrix
in the weighted orthonormal basis is not.

\begin{proposition}[Coboundary in weighted orthonormal coordinates]
\label{prop:weighted-coboundary-matrix}
For an oriented $r$-simplex $\sigma$,
\[
  D_r|\widehat\sigma\rangle
  =\sum_{\substack{v\notin\sigma\\ \sigma\cup\{v\}\in K}}
  \epsilon(\sigma,\sigma\cup\{v\})\,w(v)\,
  |\widehat{\sigma\cup\{v\}}\rangle,
\]
where $\epsilon(\sigma,\tau)\in\{\pm1\}$ is the incidence sign of
$\sigma$ in $\partial\tau$.
\end{proposition}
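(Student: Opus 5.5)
The proof is a direct change of basis.

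First we compute $D_r$ on the unweighted dual basis $[\sigma]^*$. Under the identification $[\sigma]^*\leftrightarrow|\sigma\rangle$ fixed in the statement, we have $(D_r[\sigma]^*)(\tau)=[\sigma]^*(\partial_{r+1}\tau)$ for each oriented $(r+1)$-simplex $\tau$. By \cref{def:boundary-operator}, $\partial_{r+1}\tau$ contains the oriented face $\sigma$ with coefficient $\epsilon(\sigma,\tau)\in\{\pm1\}$ exactly when $\sigma\subset\tau$. It contains no other oriented simplex supported on the vertex set of $\sigma$, since distinct faces of $\tau$ have distinct vertex sets. Hence
\[
  D_r|\sigma\rangle=\sum_{\tau\supset\sigma,\ \tau\in K_{r+1}}\epsilon(\sigma,\tau)\,|\tau\rangle .
\]
Every such $\tau$ has the form $\sigma\cup\{v\}$ with $v\notin\sigma$ and $\sigma\cup\{v\}\in K$, and $v$ is uniquely determined by $\tau$. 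So the sum reindexes over $v$.

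Next we substitute the weighted basis $|\widehat\sigma\rangle=w(\sigma)^{-1}|\sigma\rangle$. Linearity of $D_r$ gives $D_r|\widehat\sigma\rangle=w(\sigma)^{-1}\sum_v\epsilon(\sigma,\sigma\cup\{v\})\,|\sigma\cup\{v\}\rangle$. Multiplicativity of the weights gives $w(\sigma\cup\{v\})=w(\sigma)w(v)$ for $v\notin\sigma$, so
\[
  |\sigma\cup\{v\}\rangle=w(\sigma)\,w(v)\,|\widehat{\sigma\cup\{v\}}\rangle .
\]
The factor $w(\sigma)$ cancels and leaves exactly the coefficient $\epsilon(\sigma,\sigma\cup\{v\})\,w(v)$ claimed in the statement.

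The only delicate point is the convention, not the computation. Orthonormality is irrelevant here, because the coboundary is weight-independent and only the basis vectors are rescaled. What matters is that the cochain $|\sigma\rangle$ is the same vector in both bases, so that $D_r$ acts on it exactly as in the unweighted case. I would state this identification explicitly. I would also check the sign convention: $\epsilon$ must be the coefficient of $\sigma$ in $\partial\tau$ relative to the chosen orientations, which is consistent with the defining duality $(D_r\alpha)(c)=\alpha(\partial_{r+1}c)$.
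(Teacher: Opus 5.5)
Your proposal is correct and follows the same route as the paper. Like the paper, you write $D_r|\sigma\rangle=\sum_{\tau\supset\sigma}\epsilon(\sigma,\tau)|\tau\rangle$ in the unweighted simplex basis and then rescale using $|\widehat\sigma\rangle=w(\sigma)^{-1}|\sigma\rangle$ and $w(\sigma\cup\{v\})=w(\sigma)w(v)$; your derivation of the unweighted coface formula from the duality $(D_r\alpha)(c)=\alpha(\partial_{r+1}c)$ spells out a step the paper states without comment.
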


\begin{proof}
In the ordinary simplex basis,
$D_r|\sigma\rangle=\sum_{\tau\supset\sigma}\epsilon(\sigma,\tau)|\tau\rangle$.
Since $|\widehat\sigma\rangle=w(\sigma)^{-1}|\sigma\rangle$ and every
coface $\tau=\sigma\cup\{v\}$ satisfies
$w(\tau)/w(\sigma)=w(v)$, the displayed formula follows.
\end{proof}

Equivalently, if $W_r=\operatorname{diag}(w(\sigma):\sigma\in K_r)$, then
\[
  [D_r]_w=W_{r+1}[D_r]_{\mathrm{simp}}W_r^{-1}.
\]
Let $D_r^{*,w}$ denote the adjoint with respect to the weighted inner
products.  The weighted Hodge Laplacian is
\[
  \Delta_{r,w}:=D_{r-1}D_{r-1}^{*,w}+D_r^{*,w}D_r,
\]
with down- and up-parts given by the two displayed summands.  In the
weighted orthonormal basis,
\[
  D_r^{*,w}|\widehat\tau\rangle
  =\sum_{v\in\tau}
  \epsilon(\tau\setminus\{v\},\tau)\,w(v)\,
  |\widehat{\tau\setminus\{v\}}\rangle.
\]

\begin{lemma}[Weighted Hodge energy identity]
\label{lem:weighted-hodge-energy}
For every $u\in C^r(K;\CC)$,
\[
  \langle u,\Delta_{r,w}u\rangle_w
  =\|D_{r-1}^{*,w}u\|_w^2+\|D_ru\|_w^2.
\]
Consequently $\Delta_{r,w}$ is self-adjoint and positive semidefinite, and
\[
  \ker\Delta_{r,w}=\ker D_r\cap\ker D_{r-1}^{*,w}.
\]
\end{lemma}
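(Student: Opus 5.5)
The plan is to copy the proof of \cref{lem:hodge-energy}, using the weighted inner product $\langle\cdot,\cdot\rangle_w$ on each $C^r(K;\CC)$ in place of the unweighted one. The adjoints $D_r^{*,w}$ are Hilbert-space adjoints for the weighted inner products, and these are genuine inner products because every $w(\sigma)>0$. So for $u\in C^r$ the defining property of the adjoint gives
\[
  \langle u,D_{r-1}D_{r-1}^{*,w}u\rangle_w=\langle D_{r-1}^{*,w}u,D_{r-1}^{*,w}u\rangle_w=\|D_{r-1}^{*,w}u\|_w^2,
  \qquad
  \langle u,D_r^{*,w}D_ru\rangle_w=\|D_ru\|_w^2 .
\]
Adding these two identities gives the displayed energy identity. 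In degree $r=0$ the down-term is the zero map, so its contribution is simply $0$.

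Self-adjointness needs a separate argument, since it does not follow from the quadratic form alone. Each summand of $\Delta_{r,w}$ has the form $AA^{*,w}$ or $A^{*,w}A$, so each is self-adjoint with respect to $\langle\cdot,\cdot\rangle_w$, and hence so is their sum. Positive semidefiniteness then follows because the energy identity writes $\langle u,\Delta_{r,w}u\rangle_w$ as a sum of squared norms. Equivalently, one can pass to the weighted orthonormal basis $|\widehat\sigma\rangle$. There $\Delta_{r,w}$ is represented by the Hermitian matrix $[D_{r-1}]_w[D_{r-1}]_w^{\dagger}+[D_r]_w^{\dagger}[D_r]_w$, using $[D_r]_w=W_{r+1}[D_r]_{\mathrm{simp}}W_r^{-1}$.

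For the kernel, first suppose $u\in\ker D_r\cap\ker D_{r-1}^{*,w}$. Then both summands of $\Delta_{r,w}$ kill $u$, so $u\in\ker\Delta_{r,w}$. Conversely, if $\Delta_{r,w}u=0$, then the energy identity forces $\|D_{r-1}^{*,w}u\|_w^2+\|D_ru\|_w^2=0$. Each norm is positive definite, so both terms vanish, and $u$ lies in the intersection.

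There is no substantive obstacle here. The only point that needs care is that every norm and adjoint is taken with respect to the same weighted inner products, which are positive definite because $w>0$.
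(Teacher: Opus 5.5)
Your proposal is correct and matches the paper's proof, which is the same adjoint calculation as in the unweighted Lemma (\cref{lem:hodge-energy}) carried out with the weighted inner products. One small correction to a side remark: over $\CC$, self-adjointness does follow from the quadratic form alone, since $\langle u,Au\rangle\in\RR$ for all $u$ forces $A=A^*$ by polarization; your direct argument via the $AA^{*,w}+A^{*,w}A$ structure is valid regardless.
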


\begin{proof}
This is the same adjoint calculation as in \cref{lem:hodge-energy}, now
with the weighted inner products.
\end{proof}

\begin{proposition}[Weights preserve cohomology and zero modes]
\label{prop:weights-preserve-cohomology}
The map $u\mapsto[u]$ induces an isomorphism
\[
  \ker\Delta_{r,w}\cong H^r(K;\CC).
\]
Consequently
\[
  \dim\ker\Delta_{r,w}=\dim H^r(K;\CC)=\dim H_r(K;\CC),
\]
so positive vertex weights do not change the multiplicity of the zero
eigenvalue.
\end{proposition}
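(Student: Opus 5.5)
The plan is to rerun the argument of \cref{thm:combinatorial-hodge}, with the weighted inner product $\langle\cdot,\cdot\rangle_w$ in place of the unweighted one. Two facts make this possible. First, the coboundary $D_r$ is defined algebraically and does not depend on $w$. Hence the spaces $Z^r(K;\CC)=\ker D_r$ and $B^r(K;\CC)=\operatorname{im}D_{r-1}$, and so the cohomology $H^r(K;\CC)$, are the same for every positive weighting. Second, $\langle\cdot,\cdot\rangle_w$ is a genuine inner product, because it is diagonal in the simplex basis with positive entries $w(\sigma)^2$. So $C^r(K;\CC)$ is a finite-dimensional Hilbert space, and the adjoints $D_r^{*,w}$ exist.

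The steps are as follows.

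\begin{enumerate}
\item By \cref{lem:weighted-hodge-energy}, $\ker\Delta_{r,w}=\ker D_r\cap\ker D_{r-1}^{*,w}$.
\item In the weighted Hilbert space, the orthogonal complement of $\operatorname{im}D_{r-1}$ is $\ker D_{r-1}^{*,w}$. Since $D_rD_{r-1}=0$, the subspace $\operatorname{im}D_{r-1}$ lies inside $\ker D_r$. Taking the orthogonal complement of $\operatorname{im}D_{r-1}$ within $\ker D_r$ therefore gives the $w$-orthogonal decomposition
\[
  \ker D_r=\operatorname{im}D_{r-1}\oplus\bigl(\ker D_r\cap\ker D_{r-1}^{*,w}\bigr)=\operatorname{im}D_{r-1}\oplus\ker\Delta_{r,w}.
\]
\item Restrict the quotient map $\ker D_r\to H^r(K;\CC)$, $u\mapsto[u]$, to $\ker\Delta_{r,w}$. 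The decomposition shows that this restriction is surjective. It is injective because $\ker\Delta_{r,w}\cap\operatorname{im}D_{r-1}=0$, by orthogonality.
\end{enumerate}

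This yields the isomorphism $\ker\Delta_{r,w}\cong H^r(K;\CC)$. The dimension statement then follows from \cref{cor:homology-zero-eigenvalues}, which already gives $\dim H^r=\dim H_r$.

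The only step needing care is the identification of orthogonal complements in step 2, since it is made with respect to $\langle\cdot,\cdot\rangle_w$ rather than the standard inner product. The check is the usual one: $\langle D_{r-1}a,u\rangle_w=\langle a,D_{r-1}^{*,w}u\rangle_w$ for all $a$. So $u$ is $w$-orthogonal to $\operatorname{im}D_{r-1}$ exactly when $D_{r-1}^{*,w}u=0$.

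As a cross-check, one could instead use the conjugation formula $[D_r]_w=W_{r+1}[D_r]_{\mathrm{simp}}W_r^{-1}$. In weighted orthonormal coordinates, the weighted complex is the unweighted one conjugated by invertible diagonal matrices. Such a conjugation is an isomorphism of cochain complexes, so it preserves the dimensions of kernels and images. Combined with \cref{cor:homology-zero-eigenvalues} applied to the Laplacian built from these conjugated matrices, this gives the same dimension count.
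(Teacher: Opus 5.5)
Your proof is correct and follows the paper's own argument: the coboundary does not depend on the weights, $\ker D_r$ splits $w$-orthogonally as $\operatorname{im}D_{r-1}\oplus(\ker D_r\cap\ker D_{r-1}^{*,w})$, \cref{lem:weighted-hodge-energy} identifies the second summand with $\ker\Delta_{r,w}$, and $\dim H^r=\dim H_r$ comes from the universal coefficient theorem. The conjugation cross-check is unnecessary, and as written it is not a separate argument: \cref{cor:homology-zero-eigenvalues} concerns only the unweighted Laplacian, so applying it to the conjugated matrices already requires the abstract finite-dimensional Hodge argument you gave above.
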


\begin{proof}
The maps $D_r$ do not depend on the weights.  With the weighted inner
product,
\[
  \ker D_r
  =\operatorname{im}D_{r-1}\oplus
   (\ker D_r\cap\ker D_{r-1}^{*,w}).
\]
The second summand is $\ker\Delta_{r,w}$ by
\cref{lem:weighted-hodge-energy}, so quotienting by
$\operatorname{im}D_{r-1}$ gives $H^r(K;\CC)$.  The equality of homology
and cohomology dimensions again follows from the universal coefficient
theorem over $\CC$.
\end{proof}

\section{The target problem and main result}
\label{sec:target-problem}

\subsection{Weighted gapped clique homology}
\label{subsec:target-definition}

Fix functions
\[
  k:\ZZ_{\geq2}\longrightarrow\ZZ_{\geq1},
  \qquad
  \gamma:\ZZ_{\geq2}\longrightarrow\QQ_{>0}
\]
that are exactly computable in polynomial time and satisfy, for some
polynomial $p$,
\[
  1\leq k(n)<n,
  \qquad
  \frac1{p(n)}\leq\gamma(n)\leq1.
\]
We also require the binary representation of $\gamma(n)$ to have length
polynomial in $n$.

\begin{definition}[Weighted gapped clique homology]
\label{def:weighted-gapped-clique-homology}
An input is the standard binary encoding $\langle G,w\rangle$ of a finite
simple graph $G$ on $n\geq2$ labelled vertices and rational vertex weights
$0<w(v)\leq1$.  The encoding records the graph and the numerator and
denominator of each weight in binary; the input size is
$|\langle G,w\rangle|$.

Put $K:=\Cl(G)$ and let $\Delta_{k(n),w}$ be the degree-$k(n)$ weighted
Hodge Laplacian.  For a positive semidefinite operator $A$ on a nonzero
finite-dimensional space, write $\lambda_{\min}(A)$ for its smallest
eigenvalue; on the zero-dimensional space set $\lambda_{\min}=+\infty$.
The input is promised to satisfy exactly one of
\begin{enumerate}[label=(\roman*)]
\item \emph{YES:} $H_{k(n)}(K;\CC)\neq0$;
\item \emph{NO:} $H_{k(n)}(K;\CC)=0$ and
\[
  \lambda_{\min}(\Delta_{k(n),w})\geq\gamma(n).
\]
\end{enumerate}
The problem is to decide which case holds.
\end{definition}

By \cref{cor:homology-zero-eigenvalues,prop:weights-preserve-cohomology},
$H_{k(n)}(K;\CC)\neq0$ exactly when $\Delta_{k(n),w}$ has a zero mode.  The
NO promise therefore imposes an inverse-polynomial separation from zero
when the relevant homology group vanishes.

\subsection{Main result}
\label{subsec:main-result}

The reduction below uses one fixed pair $(k,\gamma)$ for every source
problem and every input; only the output graph varies.

\begin{theorem}[Main hardness theorem]
\label{thm:main-hardness}
There exist fixed functions $k$ and $\gamma$ satisfying
\cref{def:weighted-gapped-clique-homology} such that weighted gapped clique
homology is $\mathsf{QMA}_1(\mathcal G)$-hard.  More explicitly, for every \(L=(L_{\mathrm{YES}},L_{\mathrm{NO}}) \in\mathsf{QMA}_1(\mathcal G)\) there is a deterministic polynomial-time reduction sending an input $x$ to
a finite simple graph $G_x$ with rational weights
$0<w_x(v)\leq1$ such that
\[
  x\in L_{\mathrm{YES}}
  \quad\Longrightarrow\quad
  H_{k(|V(G_x)|)}(\Cl(G_x);\CC)\neq0,
\]
whereas
\[
  x\in L_{\mathrm{NO}}
  \quad\Longrightarrow\quad
  H_{k(|V(G_x)|)}(\Cl(G_x);\CC)=0
\]
and
\[
  \lambda_{\min}\!\left(\Delta_{k(|V(G_x)|),w_x}\right)
  \geq\gamma(|V(G_x)|).
\]
\end{theorem}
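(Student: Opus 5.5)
The proof follows the roadmap of the introduction, in four stages: a source Hamiltonian, a homological encoding with local gadgets, a spectral estimate, and padding.

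\textbf{Source Hamiltonian.} Fix $L\in\mathsf{QMA}_1(\mathcal G)$ and an input $x$. The first step is a Kitaev-type circuit-to-Hamiltonian construction adapted to perfect completeness over the rational gate set $\mathcal G$ of \cref{def:gate-set}. The rationality of $U_{\mathrm{Pyth}}$ makes the clock, propagation, input and output terms rational. After splitting, each term can be written as a placement $P_{S}^{(N)}$, in the sense of \cref{def:tensor-placement}, of one of finitely many rank-one projectors $|\phi\rangle\langle\phi|$. This gives $H_x=\sum_{j=1}^{t_x}P_j$ with the following properties:
\begin{itemize}[label={}]
\item in a YES instance, $\ker H_x\neq0$, since the history state of an accepting witness has energy exactly zero;
\item in a NO instance, $H_x\succeq g_xI$ with $g_x\ge1/\operatorname{poly}(|x|)$, by the standard geometric-lemma lower bound.
\end{itemize}
Here $N$ and $t_x$ are polynomial in $|x|$.

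\textbf{Encoding and topological correctness.} Build the $N$-fold join $X_N$ of bow-tie graphs, so that $H_{q_N}(X_N)\cong\Qenc_N$ with $q_N=2N-1$. For each $P_j$, attach a copy of the catalogue gadget graph to the bow-tie factors indexed by $S_j$, giving each copy fresh private vertices, and let $G$ be the union. The topological claim is that
\[
H_{q_N}(\Cl(G))\cong\Qenc_N\Big/\textstyle\sum_j\operatorname{im}P_j\cong\ker H_x .
\]
The plan for this claim has three parts:
\begin{enumerate}[label=(\roman*)]
\item For each gadget, exhibit an explicit signed filling of the encoded cycle $|\phi\rangle$.
\item Compute the relative homology $H_*(Y^{\mathrm{loc}}_a,X_{m_a})$ from finite certified data, showing it is concentrated so that the long exact sequence of the pair kills exactly $\operatorname{span}\{\phi\}$ and creates nothing in degree $q_N$.
\item Globalize by excision. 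Distinct gadgets meet only inside $X_N$, so the relative homology of $(\Cl(G),X_N)$ splits as a direct sum over gadgets. Joining a local gadget with the untouched bow-ties tensors it with the identity, which turns each local image into $\operatorname{im}P_j$.
\end{enumerate}
The YES direction then follows from $(\sum_j\operatorname{im}P_j)^\perp=\ker H_x$.

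\textbf{Spectral gap.} Assign weight $1$ to base vertices and weight $\lambda$ to private vertices. In weighted orthonormal coordinates (\cref{prop:weighted-coboundary-matrix}) the coboundary splits as $D_\lambda=\Dbase+\lambda\Dpriv$. Filter by private depth and use a filtered-Hodge lemma, a finite-dimensional perturbation statement: if the spectral sequence degenerates at page $E_{p+1}$, then the eigenvalues of $\Delta_\lambda$ split as follows.
\begin{itemize}[label={}]
\item Modes killed at page $r$ have eigenvalue $\Theta(\lambda^{2r})$, with explicit constants.
\item The surviving differential on encoded classes reproduces $c\lambda^{2p}|\phi\rangle\langle\phi|$.
\item All other local modes are bounded below by a constant.
\end{itemize}
For the global estimate, decompose a test cochain $u$ into its harmonic-encoded part $\psi\in\Qenc_N$ and an orthogonal complement. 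Show that
\[
\langle u,\Delta u\rangle\ \ge\ c\lambda^{2p}\langle\psi,H_x\psi\rangle-O(t_x\lambda^{2p+1})\|u\|^2 ,
\]
while the complement has energy at least $\Omega(\lambda^{2p})$. The cross terms come from down-boundaries of different gadgets landing on shared base simplices. They are controlled by a Cauchy--Schwarz bound whose size is governed by the number of gadgets meeting a base simplex, giving an error of order $t_x\lambda$ relative to the main term. Choosing $\lambda=c_0g_x/t_x$ therefore gives $\lambda_{\min}\ge\Omega(\lambda^{2p}g_x)$, which is inverse-polynomial in every NO instance.

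\textbf{Padding.} Add isolated vertices of weight $1$ until $|V(G_x)|=n$ equals a fixed polynomial in the source size. An isolated vertex adds only $H_0$ and leaves $\Delta_k$ unchanged for $k\ge1$. Define $k(n)$ and $\gamma(n)$ as functions of $n$ alone by choosing canonical sizes in which the padded graph is realized.

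\textbf{Main obstacle.} The step I expect to be hardest is the many-gadget spectral estimate: the shared base simplices prevent the single-gadget energies from simply adding. A close second is certifying the local relative homology of the folded gadgets, which is exactly where the King--Kohler argument has its gap; there I would rely on exhaustive finite computation.
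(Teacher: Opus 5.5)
Your source, topological, and padding stages follow the paper's route. That route is Bravyi's perfect-completeness clock construction reduced to a finite projector catalogue, the long exact sequence with the relative chain complex of $(\Cl(G),X_N)$ splitting over occurrences, join padding, and isolated vertices. Your final scaling, $\lambda=c_0g_x/t_x$ with a gap of order $\lambda^{2p}g_x$ and $2p=4m_a+2$, is also the paper's.

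The genuine gap is the many-gadget spectral estimate: the mechanism you describe cannot work at the required scale.
\begin{itemize}
\item \emph{The leakage is too large for Cauchy--Schwarz.} Private-to-base leakage (a down-boundary deleting the unique private vertex of a simplex) has amplitude of order $\lambda$. A Cauchy--Schwarz bound on down-boundaries of different gadgets meeting on shared base simplices therefore gives energy errors of order $\lambda^2$ times a power of $t_x$, not $t_x\lambda^{2p+1}$. Since $p=2m_a+1\ge5$, this swamps the penalty $\lambda^{2p}g_x$.
\item \emph{A global split does not decouple energies.} Splitting $u$ into $J_N\psi$ plus an orthogonal complement does not separate the energy. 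Along a target line, $J_N\psi$ alone has energy $\lambda^2\|\Dpriv J_N\psi\|^2$, of order $\lambda^2$. The $\lambda^{2p}$ penalty appears only after gadget-specific private corrections, so the cross terms are of order $\lambda^2$ and must cancel exactly.
\item \emph{Your third bullet is false.} Every local Laplacian has a cluster of order $\lambda^2$, coming from the page-one drop on the top-depth relative term $C^d(P_a,A_a)$. These bulk modes are exactly the ones whose down-residuals could cancel among gadgets through the shared base. So ``the complement has energy $\Omega(\lambda^{2p})$'' does not follow from the local spectra.
\end{itemize}

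The paper never adds energies at the $\lambda^{2p}$ scale; it argues by norm accounting. It assumes $\mathcal E(u)<c_0\lambda^{4M+2}g$ for a unit $u$ and decomposes each local view $u_j=(\mathsf C_0+\mathsf C_j)u$ into its own harmonic, lifted, bulk, and high sectors.
\begin{itemize}
\item \emph{Lifted sector.} Lifted eigenvectors have zero down-boundary, because the degree-$(q-1)$ Laplacian has no spectrum in the lifted interval and $\partial_j$ intertwines. Their energy is therefore pure up-energy, and up-energies add exactly because top cofaces never cross occurrences.
\item \emph{Bulk sector.} The bulk norm is controlled by the outputs containing the private apex, which are isolated per occurrence. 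This uses the certified full column rank of $T_{a,0}$ (condition C4).
\item \emph{High sector.} Leakage plus Cauchy--Schwarz is harmless against an $O(1)$ gap and gives $O(\lambda^2t^2)$.
\item \emph{Harmonic sector.} Projection stability gives $P_j^{\mathrm{har}}=J_N(I-P_j)J_N^*+O(\lambda)$. The counting identity $\sum_j\|u_j\|^2=\|u\|^2+(t-1)\|\mathsf C_0u\|^2$, together with $\sum_j(I-P_j)=tI-H\preceq(t-g)I$, yields the master inequality $1\le1-g+C_*(\cdots)$, which the choice of $\lambda$ makes contradictory.
\end{itemize}
The relative error of order $t\lambda$ that you anticipated does appear, but as a norm error from projection stability, not as an energy cross term.

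Three smaller points:
\begin{itemize}
\item $k$ and $\gamma$ must be fixed before the source problem $L$, so padding to ``a fixed polynomial in the source size'' is not enough. The paper encodes $(N_x,R_x)$ in the vertex count by Cantor pairing.
\item The source step must handle auxiliary bits prescribed to be $1$ without enlarging the certified catalogue. The paper uses $X$-conjugation and a spectator qubit.
\item The single-gadget page structure needs the intermediate-depth exactness C3 and the one-dimensional relative cohomology of $(P_a,A_a)$ in C4, not only $H_*(Y_a^{\mathrm{loc}},X_{m_a})$.
\end{itemize}
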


\subsection{\texorpdfstring{$\mathsf{QMA}$}{QMA} containment}
\label{subsec:qma-containment}

The containment result is not new.  The sparse-Hamiltonian and
phase-estimation method for gapped homological problems was developed by
Cade and Crichigno~\cite{CadeCrichigno}, applied to clique homology by
Crichigno and Kohler~\cite{CrichignoKohler2024}, and proved for the weighted
gapped problem by King and Kohler~\cite[Theorem~11.2]{KingKohler2026}.  We
record the argument in the present notation.

\begin{proposition}[$\mathsf{QMA}$ containment]
\label{prop:qma-containment}
For any fixed $k$ and $\gamma$ satisfying
\cref{def:weighted-gapped-clique-homology}, weighted gapped clique homology
belongs to $\mathsf{QMA}$.
\end{proposition}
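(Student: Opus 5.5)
The plan follows the standard sparse-Hamiltonian and phase-estimation route of Cade and Crichigno. Given $\langle G,w\rangle$ with $n$ vertices, put $k=k(n)$ and $K=\Cl(G)$. The witness is a state on the register $\CC^{\binom{n}{k+1}}$, realised on $n$ qubits by encoding a $k$-simplex as the indicator string of its vertex set, which has Hamming weight $k+1$. The verifier first measures the Hamming weight and checks that the string is a clique of $G$. Both checks are efficient, since a clique test on a classical string only inspects $O(n^2)$ adjacencies. After this step the state is supported on $C^k(K;\CC)$, written in the weighted orthonormal basis $|\widehat\sigma\rangle$. The verifier then runs phase estimation for $e^{-i\Delta_{k,w}t}$ and accepts iff the estimated eigenvalue lies below $\gamma(n)/2$.

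The key step is to show that $\Delta_{k,w}$ is efficiently row-computable and sparse in this basis. From \cref{prop:weighted-coboundary-matrix} and the formula for $D_r^{*,w}$ we read off the matrix entries. The off-diagonal entries $\langle\widehat\tau|\Delta_{k,w}|\widehat\sigma\rangle$ are nonzero only when $|\sigma\cap\tau|=k$. For such a pair write $\sigma=\rho\cup\{a\}$ and $\tau=\rho\cup\{b\}$. The entry is then a signed sum of $w(a)w(b)$, coming from the down-part, and $w(a)w(b)$ times the indicator that $\sigma\cup\tau\in K$, coming from the up-part. The diagonal entry is
\[
\sum_{v\in\sigma}w(v)^2+\sum_{v\notin\sigma,\ \sigma\cup\{v\}\in K}w(v)^2 .
\]
Each row therefore has at most $(k+1)n$ nonzero entries, each computable in $\operatorname{poly}(n)$ time together with the signs $\epsilon$. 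Because $0<w\le1$, every entry is bounded by $O(n)$, so $\|\Delta_{k,w}\|\le O(n^2)$ by Gershgorin. Standard sparse Hamiltonian simulation then implements $e^{-i\Delta_{k,w}t}$ to inverse-polynomial precision in polynomial time. Phase estimation with $O(\log(n\|\Delta\|/\gamma))$ ancillas resolves eigenvalues to accuracy $\gamma/4$.

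For completeness: in a YES instance, \cref{prop:weights-preserve-cohomology} gives a harmonic cochain, and the prover sends it. Phase estimation then reports a value in $[0,\gamma/4]$ with probability at least $2/3$. Since the statement is $\mathsf{QMA}$ and not $\mathsf{QMA}_1$, imperfect completeness is acceptable.

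For soundness: in a NO instance every eigenvalue of $\Delta_{k,w}$ is at least $\gamma(n)$. Any witness, after projection onto the clique subspace, decomposes into eigenvectors with eigenvalues at least $\gamma$. A witness rejected by the projection step is simply rejected. Phase estimation therefore outputs a value below $\gamma/2$ with probability at most $1/3$. This uses the standard tail bound for phase estimation, obtained by median amplification over repetitions with fresh copies of the witness, or by in-place amplification à la Marriott--Watrous. The resulting completeness/soundness gap of $2/3$ versus $1/3$ places the problem in $\mathsf{QMA}$.

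The main obstacle is the precise accounting of sparse-access oracles and simulation error. Weights are arbitrary rationals with poly-length binary encodings, so entries must be computed to polynomially many bits; sizes from \cref{def:weighted-gapped-clique-homology} keep this polynomial. The accumulated simulation error is absorbed by taking precision $\gamma/\operatorname{poly}(n)$. Since $\gamma\ge1/p(n)$, this costs only polynomial time.
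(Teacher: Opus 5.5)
Your proposal is correct and follows essentially the same route as the paper. Both arguments embed $C^k(K;\CC)$ into $n$ qubits via indicator strings in the weighted orthonormal basis, establish efficient sparse access with an $O(n^2)$ norm bound, run phase estimation with a threshold of order $\gamma(n)$, and supply a weighted harmonic cochain as the YES witness. The one difference is how you treat the complement of the simplex subspace: you remove non-simplex basis strings by a coherent Hamming-weight and clique test with rejection, whereas the paper extends $\Delta_{k,w}$ by the identity on the orthogonal complement and rescales by $B_n=2n^2+1$, so that the complement automatically lies above the threshold. Either choice handles the complement adequately.
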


\begin{proof}
Let $(G,w)$ be a promised input on $n$ vertices, set $K=\Cl(G)$ and
$k=k(n)$, and let $\Delta_{k,w}$ be the weighted Hodge Laplacian.  In the
weighted orthonormal simplex basis, embed
\[
  C^k(K;\CC)\hookrightarrow(\CC^2)^{\otimes n},
  \qquad
  |\widehat\sigma\rangle\longmapsto|z_\sigma\rangle,
\]
where $z_\sigma$ is the indicator string of the $k$-simplex $\sigma$.
Extend $\Delta_{k,w}$ by the identity on the orthogonal complement of this
simplex space, obtaining $\widetilde\Delta_{k,w}$.

The operator $\widetilde\Delta_{k,w}$ has efficient polynomially sparse
access: clique membership and every nonzero incidence matrix element are
computable in polynomial time from $(G,w)$.  Since $0<w(v)\leq1$, each
weighted coboundary row and column has at most $n$ nonzero entries of
absolute value at most $1$, and hence
\[
  \|\Delta_{k,w}\|\leq2n^2,
  \qquad
  \|\widetilde\Delta_{k,w}\|\leq B_n:=2n^2+1.
\]
Thus $H_G:=\widetilde\Delta_{k,w}/B_n$ is an efficiently simulable sparse
Hamiltonian of norm at most $1$~\cite{BerryChildsKothari2015}.  The promised
separation $\delta_n:=\gamma(n)/B_n$ is inverse-polynomial, so phase
estimation to error below $\delta_n/4$, accepting below $\delta_n/2$,
separates the two cases.

In a YES instance, Hodge theory gives a zero-energy eigenstate that the
prover can supply.  In a NO instance, $\Delta_{k,w}\succeq\gamma(n)I$ on
the simplex space, while the orthogonal complement has eigenvalue $1$;
therefore $H_G\succeq\delta_n I$.  Standard phase-estimation error
reduction gives the required soundness.
\end{proof}

The containment proof applies to an arbitrary promised weighted graph and
uses no part of the hardness construction.

\section{From quantum verifiers to a finite catalogue of local projections}
\label{sec:source-side}

The source Hamiltonians have the form \(H=\sum_{j=1}^{t}P_j\), where each term $P_j$ is obtained by placing, in the sense of
\cref{def:tensor-placement}, one member of a fixed catalogue of thirteen
rank-one projections on an ordered tuple of qubits.

\subsection{The finite catalogue}
\label{subsec:projector-catalogue}

The following catalogue is the rank-one projector list of
\cite[Table~1]{CrichignoKohler}, with a fixed ordering and the
normalizations specified below.

\begin{definition}[Canonical catalogue]
\label{def:catalogue}
Let \(v^{(1)},\ldots,v^{(13)}\) be the following thirteen nonzero vectors.

The five four-qubit vectors are
\[
\begin{gathered}
  |1011\rangle-|1000\rangle,
  \qquad
  |0110\rangle-|0101\rangle,
  \qquad
  |0010\rangle-|0001\rangle,\\
  |1101\rangle-|1010\rangle,
  \qquad
  |1011\rangle-|1100\rangle.
\end{gathered}
\]

The two three-qubit vectors are
\[
  -5|011\rangle+4|100\rangle+3|101\rangle
\]
and
\[
  -5|010\rangle+3|100\rangle-4|101\rangle.
\]

The remaining six computational-basis vectors are
\[
  |00\rangle,
  \qquad
  |11\rangle,
  \qquad
  |011\rangle,
  \qquad
  |1100\rangle,
  \qquad
  |0111\rangle,
  \qquad
  |0001\rangle.
\]

For each $a\in\{1,\ldots,13\}$, let $m_a\in\{2,3,4\}$ be the number of
qubits of $v^{(a)}$, and define
\[
  \mathcal Z_a:=\sqrt{\langle v^{(a)}|v^{(a)}\rangle},
  \qquad
  |\phi^{(a)}\rangle:=\mathcal Z_a^{-1}|v^{(a)}\rangle,
  \qquad
  \Pi_a:=|\phi^{(a)}\rangle\langle\phi^{(a)}|
  \quad\text{on }\Qenc_{m_a}.
\]
Then $\Pi_a$ is the orthogonal projection onto
$\CC\phi^{(a)}$.
\end{definition}

All coordinates of the vectors $v^{(a)}$ are integers.  Consequently,
all matrix entries of the projections $\Pi_a$ are rational and can be
represented exactly.

\paragraph{Placing a catalogue vector.}
Let $S=(s_1,\ldots,s_m)$ be an ordered tuple of distinct qubits.  By
\cref{def:tensor-placement}, the unitary $W_S$ sends
$|c_1\cdots c_m\rangle\otimes|e\rangle$ to the computational basis vector
whose bit at qubit $s_i$ is $c_i$ for every $i$, with the bits of $e$ on
the remaining qubits.  Hence $(\Pi_a)_S^{(N)}$ acts on the qubits
$s_1,\ldots,s_m$ as the projection onto the vector obtained from
$\phi^{(a)}$ by writing, in each of its basis strings, the $i$-th bit on
qubit $s_i$, and as the identity on every other qubit.

We use this in the following form.  Let $R$ be an ordered tuple of $m$
qubits and let $v$ be a nonzero vector on those qubits, written in the
order of $R$.  If, for some reordering $S$ of $R$, writing the bits of
$\phi^{(a)}$ onto $S$ produces a nonzero multiple of $v$, then the
projection onto $\CC v$, placed on $R$, equals $(\Pi_a)_S^{(N)}$.  In particular, a computational-basis projection $|e\rangle\langle e|$
is a placement of a catalogue basis projection whenever the catalogue
contains a basis string of the same length and Hamming weight, since two
such strings differ only in the positions of their ones.  All basis
projections required by the source construction satisfy this condition, as
verified in \cref{app:catalogue-closure}.

\subsection{Finite-catalogue source presentations}
\label{subsec:source-presentations}

\begin{definition}[Finite-catalogue source presentation]
\label{def:source-family}
Let $N,t\geq1$.  For every $j\in\{1,\ldots,t\}$, choose a catalogue label \(a_j\in\{1,\ldots,13\}\) and an ordered tuple \(S_j=(s_{j,1},\ldots,s_{j,m_{a_j}})\) of distinct elements of $\{1,\ldots,N\}$.

The corresponding \emph{finite-catalogue source Hamiltonian} is
\[
  H
  :=
  \sum_{j=1}^{t}P_j,
  \qquad
  P_j
  :=
  \bigl(\Pi_{a_j}\bigr)_{S_j}^{(N)}.
\]
The data \(\bigl(N,t,(a_j,S_j)_{j=1}^{t}\bigr)\) are called a \emph{finite-catalogue source presentation} of $H$.

The integer $t$ is the length of the displayed list, counted with
multiplicity.  In particular, repeated occurrences of the same placed
projection are counted separately.
\end{definition}

Every $P_j$ is an orthogonal projection, and hence \(H\succeq0\). Moreover, for every $\psi\in\Qenc_N$,
\[
  \langle\psi,H\psi\rangle
  =
  \sum_{j=1}^{t}\|P_j\psi\|^2.
\]
It follows that
\[
  \ker H
  =
  \bigcap_{j=1}^{t}\ker P_j.
\]

\subsection{The source theorem}
\label{subsec:finite-catalogue-source-theorem}

We state Bravyi's circuit-to-$4$-SAT construction in the form used here,
record its decomposition into the catalogue, and then give the modification
needed to accommodate the prescribed auxiliary bits allowed by
\cref{def:qma1}, without enlarging the catalogue.

\paragraph{Clock notation.}
A \emph{clock particle} is a pair of qubits.  Its four computational basis
states are named
\[
  |u\rangle:=|00\rangle,\qquad
  |a_1\rangle:=|01\rangle,\qquad
  |a_2\rangle:=|10\rangle,\qquad
  |d\rangle:=|11\rangle,
\]
read as unborn, active in phase one, active in phase two, and dead.  A
subscript $j$ on an operator indicates that it acts on the $j$-th clock
particle; a subscript $i$ on a one-qubit operator indicates the data qubit
$i$.

\begin{theorem}[
  Circuit-to-$4$-SAT construction of Bravyi
  {\cite[Section~III]{Bravyi}};
  explicit terms as in {\cite[Appendix~A]{CrichignoKohler}}
]
\label{thm:cited-source-reduction}
Let $C=\bigl((g_1,T_1),\ldots,(g_L,T_L)\bigr)$ be a circuit with $L\geq1$ instructions on a data
register of $N_d$ qubits, in which each $g_t$ is an arbitrary unitary
acting on the $|T_t|\in\{1,2\}$ qubits listed in $T_t$.  The data qubits
are partitioned into a witness part $\mathsf W$ and an auxiliary part
$\mathsf A$; the auxiliary qubits are initialized in $|0\rangle$; one data
qubit $o$ is the output qubit, and outcome $0$ means acceptance.

Adjoin $L$ clock particles.  There is a Hamiltonian
\[
  H(C)
  =
  H_{\mathrm{clock}}
  +H_{\mathrm{in}}
  +H_{\mathrm{out}}
  +\sum_{t=1}^{L}H_{\mathrm{prop},t}
  +\sum_{t=1}^{L-1}H'_{\mathrm{prop},t},
\]
each summand a sum of orthogonal projections acting on at most four
qubits, of the following form:
\begin{enumerate}[label=(\alph*)]
\item $H_{\mathrm{clock}}$ is a sum of computational-basis projections,
each acting on one clock particle or on two clock particles; these are
listed explicitly in \cref{app:catalogue-closure};
\item $H_{\mathrm{in}}=\sum_{i\in\mathsf A}
|a_1\rangle\langle a_1|_1\otimes|1\rangle\langle1|_i$;
\item $H_{\mathrm{out}}=|a_2\rangle\langle a_2|_L\otimes|1\rangle\langle1|_o$;
\item for each $t$, with $g_t$ acting on the qubits $T_t$,
\[
  H_{\mathrm{prop},t}
  =\tfrac12\Bigl[
     \bigl(|a_1\rangle\langle a_1|+|a_2\rangle\langle a_2|\bigr)_t\otimes I
     -|a_2\rangle\langle a_1|_t\otimes g_t
     -|a_1\rangle\langle a_2|_t\otimes g_t^*
   \Bigr];
\]
\item $H'_{\mathrm{prop},t}$ is the projection onto
$\CC\bigl(|a_2u\rangle-|da_1\rangle\bigr)$ on clock particles $t,t+1$.
\end{enumerate}
Moreover, there is a fixed polynomial $q\ge1$ with nonnegative
coefficients, depending only on the construction and the soundness
constant $1/3$, such that:
\begin{enumerate}[label=(\roman*)]
\item if some witness $\psi$ satisfies
$\Pr[C\text{ accepts }\psi\otimes|0\rangle_{\mathsf A}]=1$, then
$\ker H(C)\neq\{0\}$;
\item if every witness $\psi$ satisfies
$\Pr[C\text{ accepts }\psi\otimes|0\rangle_{\mathsf A}]\le\frac13$, then
$H(C)\succeq \frac{1}{q(L+N_d)}I$.
\end{enumerate}
\end{theorem}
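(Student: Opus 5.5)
This is Bravyi's circuit-to-4-SAT theorem, so the plan is to recall the structure of his proof and check that the stated form matches it, not to invent a new argument.

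\textbf{Step 1: the clock subspace.} First identify the kernel of $H_{\mathrm{clock}}$. The computational-basis penalties on single clock particles and on adjacent pairs forbid illegal configurations. What survives are the legal clock strings: a prefix of dead particles, at most one active particle in phase $a_1$ or $a_2$, and a suffix of unborn particles. Indexing these configurations by $s\in\{0,\dots,2L\}$ gives a chain of $2L+1$ legal clock states.

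\textbf{Step 2: restrict to the legal subspace.} On this subspace $H_{\mathrm{prop},t}$ and $H'_{\mathrm{prop},t}$ act as the hopping terms of a one-dimensional path. Each link between consecutive legal states carries a unitary: $g_t$ on the phase transition $a_1\to a_2$ of particle $t$, and the identity on the handoff $a_2u\to da_1$. Conjugating by the controlled-unitary isometry
\[
  |s\rangle\otimes\psi\mapsto|s\rangle\otimes U_s\psi,
\]
where $U_s$ is the product of the gates applied before step $s$, turns the propagation part into $\tfrac12$ times the path-graph Laplacian tensored with the identity on the data register.

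\textbf{Step 3: completeness.} If $\psi$ is accepted with probability $1$, take the history state
\[
  \frac{1}{\sqrt{2L+1}}\sum_s|s\rangle\otimes U_s(\psi\otimes|0\rangle_{\mathsf A}).
\]
It lies in the kernel of the clock and propagation terms. $H_{\mathrm{in}}$ annihilates it because the auxiliary qubits are $0$ at the $a_1$ step of the first particle. $H_{\mathrm{out}}$ annihilates it because at the last step the output qubit is $|0\rangle$ exactly.

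\textbf{Step 4: soundness.} Use Kitaev's geometric (projection) lemma twice. First split off the clock-illegal space: $H_{\mathrm{clock}}$ has gap at least $1$ there, and the remaining terms have norm $O(L)$. Within the legal space, $H_{\mathrm{prop}}$ has spectral gap $\Omega(1/L^2)$ above its kernel, the span of history states. The projector $H_{\mathrm{in}}+H_{\mathrm{out}}$ has kernel meeting the history-state space at an angle $\theta$ with
\[
  \sin^2\theta=\Omega\!\left(\frac{1-\text{acceptance probability}}{L}\right),
\]
which is $\Omega(1/L)$ when every witness has acceptance probability at most $1/3$. The lemma then gives a lower bound
\[
  \min(\text{gap}_1,\text{gap}_2)\,\sin^2(\theta/2)=\Omega(1/L^3).
\]
The same applies to the outer clock-illegal decomposition, with a further polynomial loss. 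This yields $H(C)\succeq 1/q(L+N_d)\,I$ for a fixed polynomial $q$ with nonnegative coefficients.

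\textbf{Main obstacle.} The delicate point is the bookkeeping of the clock construction, and I would quote \cite{Bravyi} for it. Specifically, one must confirm that the listed two-particle basis projections of \cref{app:catalogue-closure} exactly carve out the legal configurations. One must also confirm that $H'_{\mathrm{prop},t}$, being a single rank-one projector, correctly links the legal chain without adding leakage to illegal states. Since the two-phase clock is designed so that this leakage is killed by the legality penalties, I expect this to reduce to a finite check of the transition rules.
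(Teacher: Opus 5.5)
\textbf{Overall assessment.} The paper gives no proof of this theorem. It imports it from Bravyi and Crichigno--Kohler, and its only contribution of its own is the index-range remark in the appendix: the sum in $H^{(3)}_{\mathrm{clock}}$ must run over $j<k$, since otherwise every history state is penalized. Your outline is the standard argument behind that citation: restrict to the legal clock space, conjugate to a path Laplacian, use the history state, and apply Kitaev's angle lemma. Two steps, however, are stated incorrectly.

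\textbf{The legal clock space.} The legal strings are not ``dead prefix, at most one active particle, unborn suffix''. The terms $H^{(1)}_{\mathrm{clock}}$, $H^{(2)}_{\mathrm{clock}}$ and the adjacent-pair term $H^{(6)}_{\mathrm{clock}}$ penalize $u$ on particle $1$, $d$ on particle $L$, and any adjacent $du$. So every string $d^ku^{L-k}$ is excluded. The kernel of $H_{\mathrm{clock}}$ is then spanned, tensored with the data register, by the $2L$ strings $d^{t-1}a_iu^{L-t}$ with $1\le t\le L$ and $i\in\{1,2\}$. There are $2L$ of them, not $2L+1$, so the history state has normalization $(2L)^{-1/2}$. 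Also, $H^{(3)}$ through $H^{(5)}$ range over all pairs $j<k$, not only adjacent pairs.

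This is not cosmetic. A legal string with no active particle is annihilated by every propagation, input and output term, so it would be a zero mode in every NO instance. Likewise, a chain of $2L+1$ states could not be connected by the $L+(L-1)$ links that the statement provides.

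\textbf{The outer step.} Your first application of a projection lemma does not work as justified. With unit weights, the gap $1$ of $H_{\mathrm{clock}}$ does not dominate the $O(L+N_d)$ norm of the other terms. The large-penalty projection lemma needs the penalty gap to exceed twice that norm, so it does not apply. Kitaev's angle lemma would instead need a gap for the remaining terms on the illegal sector, which you never establish.

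The correct step is exact invariance, and it costs nothing. Each $H_{\mathrm{prop},t}$ only interchanges $a_1\leftrightarrow a_2$ on particle $t$. Each $H'_{\mathrm{prop},t}$ only interchanges $a_2u\leftrightarrow da_1$ on particles $t,t+1$. In both cases the new string is legal exactly when the old one is; for the second move, both are legal precisely when the other particles read $d^{t-1}$ before and $u^{L-t-1}$ after. All remaining terms are diagonal in the clock basis.

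Hence $H(C)$ is block diagonal for the legal/illegal splitting, and $H(C)\succeq H_{\mathrm{clock}}\succeq I$ on the illegal block. The only estimate needed is the $\Omega(1/L^3)$ angle bound on the legal block, which you have correctly. This is also the accurate form of your ``leakage'' concern: there is no leakage to be suppressed by penalties, because the legal subspace is exactly invariant.
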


The polynomial $q$ is universal for this construction: it is independent
of the circuit $C$, its length, its gates, and the source promise problem.

\begin{lemma}[Catalogue closure]
\label{lem:catalogue-closure}
Let $C$ be a circuit whose gates lie in the gate set $\mathcal G$ of
\cref{def:gate-set}.  Each summand of $H(C)$, and each operator
\[
  |a_1\rangle\langle a_1|_1\otimes|0\rangle\langle0|_i\otimes I_s
\]
for a data qubit $i$ and an additional qubit $s$, is a sum of at most four
pairwise orthogonal rank-one projections, each equal to
$(\Pi_a)_S^{(N)}$ for some catalogue label $a$ and some reordering $S$ of
the qubits on which the summand acts.
\end{lemma}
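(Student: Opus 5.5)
The proof is a finite case check, organized by the type of summand in \cref{thm:cited-source-reduction}.

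\textbf{Basis-projection summands.} These are $H_{\mathrm{clock}}$, $H_{\mathrm{in}}$, $H_{\mathrm{out}}$, and the extra operators $|a_1\rangle\langle a_1|_1\otimes|0\rangle\langle0|_i\otimes I_s$. Each is a diagonal projection in the computational basis on at most four qubits. I would expand it as a sum of rank-one basis projections $|e\rangle\langle e|$ on the full support, extending by $I$ where a qubit is absent. Distinct basis strings give pairwise orthogonal terms, and each such sum has at most four terms.

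I would then apply the closing observation of \cref{subsec:projector-catalogue}: such a $|e\rangle\langle e|$ is a placement of a catalogue basis projection once the catalogue contains a basis string of the same length and Hamming weight. The available pairs (length, weight) are $(2,0)$, $(2,2)$, $(3,2)$, $(4,2)$, $(4,3)$ and $(4,1)$. I would also use the three- and four-qubit superposition vectors, or a suitable extension by an identity qubit, where needed.

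Some cases need care. $H_{\mathrm{in}}$ is $|01\rangle\langle01|\otimes|1\rangle\langle1|$, a string $011$ of weight $2$, which is catalogued. $H_{\mathrm{out}}$ is $|10\rangle\langle10|\otimes|1\rangle\langle1|$, weight $2$ on three qubits, which is fine. The extra operator becomes $|010\rangle\langle010|\otimes I_s=|0100\rangle\langle0100|+|0101\rangle\langle0101|$. These are four-qubit strings of weights $1$ and $2$, matched by $0001$ and $1100$.

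For $H_{\mathrm{clock}}$ I would list the terms from \cref{app:catalogue-closure}. Any term whose basis string has a length and weight not in the catalogue would be padded with an extra qubit, $|e\rangle\langle e|\otimes I=|e0\rangle\langle e0|+|e1\rangle\langle e1|$, and the padded strings matched instead. I expect this padding step to be what forces the bound of four terms, so I would check it term by term.

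\textbf{Clock-transition and gate terms.} $H'_{\mathrm{prop},t}$ is the projection onto $|a_2u\rangle-|da_1\rangle=|1000\rangle-|1101\rangle$. Up to sign this is the catalogue vector $|1101\rangle-|1010\rangle$ after the reordering of clock qubits that swaps the last two positions. Indeed, reordering $1010\mapsto1001$ is wrong, so I would instead locate the exact match, namely the permutation sending $1000\to1011$-type strings. Concretely, I would solve, for each catalogue four-qubit vector, for a permutation $S$ carrying its two strings onto $\{1000,1101\}$. The vector $|1011\rangle-|1000\rangle$ fixes $1000$ and sends $1011\mapsto1101$ by the cyclic shift of the last three positions, which works.

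For $H_{\mathrm{prop},t}$ I would diagonalize the operator. The block on the clock states $a_1,a_2$ tensored with the data space is $\tfrac12\begin{pmatrix}I&-g^*\\-g&I\end{pmatrix}$, and it equals the projection onto $\{|a_1\rangle\psi-|a_2\rangle g\psi\}$. Choosing an orthonormal basis of $\psi$ gives pairwise orthogonal rank-one projections onto $|a_1\rangle|\psi\rangle-|a_2\rangle g|\psi\rangle$, which number $2$ for one-qubit gates and $4$ for CNOT.

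For CNOT with $\psi=|ab\rangle$, the vectors are $|01ab\rangle-|10a(a\oplus b)\rangle$. These give the four strings $0100-1000$, $0101-1001$, $0110-1011$ and $0111-1010$. I would match each to one of the five four-qubit catalogue vectors by a qubit permutation, checking the Hamming weights first; for example, $0110-1011$ matches $|1101\rangle-|1010\rangle$ after a reordering.

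For $U_{\mathrm{Pyth}}$ with $\psi=|0\rangle,|1\rangle$, the vectors are $|010\rangle-\tfrac15(3|100\rangle-4|101\rangle)$ and $|011\rangle-\tfrac15(4|100\rangle+3|101\rangle)$. Multiplying by $-5$ turns these exactly into the two three-qubit catalogue vectors. This is where the normalization of the catalogue comes from, and no reordering is needed.

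\textbf{Main obstacle.} The main obstacle is bookkeeping: verifying that every CNOT string pair and every clock term admits a qubit permutation into the catalogue, without exceeding four terms. I would carry this out as an explicit finite table in \cref{app:catalogue-closure}.
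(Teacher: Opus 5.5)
Your proposal is correct and follows essentially the same finite case check as the paper. It uses length/Hamming-weight matching for the basis-projection terms, the decomposition $H_{\mathrm{prop},t}=\sum_e\tfrac12|v_e\rangle\langle v_e|$ with $v_e=|01\rangle|e\rangle-|10\rangle g|e\rangle$, explicit reorderings for the four CNOT vectors and for $H'_{\mathrm{prop},t}$ (your cyclic-shift reordering works as well as the paper's $(q_1,q_3,q_2,q_4)$), and rescaling by $-5$ for the Pythagorean vectors.

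One correction: the extra-qubit padding you anticipate for clock terms is never needed, since every clock string is $00$, $11$, or a four-qubit string of weight $1$, $2$ or $3$. Padding would in any case violate the requirement that $S$ reorder the summand's own qubits. The bound of four is attained by $H^{(3)}$ and the CNOT propagation term, not forced by padding.
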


\begin{proof}
For the summands of $H(C)$ this is the decomposition tabulated in
\cite[Table~1]{CrichignoKohler}.  A complete proof, including the
additional operator, is given in \cref{app:catalogue-closure}.
\end{proof}

\begin{proposition}[Reduction to the thirteen-projector catalogue]
\label{prop:source-catalogue-identification}
Let $V$ be a circuit over the gate set $\mathcal G$ of
\cref{def:gate-set} with $L\ge0$ gate instructions, acting on a data
register of $N_d$ qubits, partitioned into a witness part $\mathsf W$ and
an auxiliary part $\mathsf A$.  Let $b\in\{0,1\}^{\mathsf A}$ assign a
prescribed bit $b_i$ to each auxiliary qubit $i\in\mathsf A$, and let $o$
be the output qubit, with outcome $0$ meaning acceptance, as in
\cref{def:qma1}.

From the description of $(V,b,o)$ one can compute, in time polynomial in
$L+N_d$, a finite-catalogue source presentation of a Hamiltonian $H^+$ on \(N^+:=2L+N_d+4\) qubits, with \(t^+\le16\bigl((L+1)^2+N_d\bigr)\) terms, such that, with $q$ the polynomial of
\cref{thm:cited-source-reduction}:
\begin{enumerate}[label=(\roman*)]
\item if some witness $\psi$ satisfies
$\Pr[V\text{ accepts }\psi\otimes|b\rangle]=1$, then
$\ker H^+\neq\{0\}$;
\item if every witness $\psi$ satisfies
$\Pr[V\text{ accepts }\psi\otimes|b\rangle]\le\frac13$, then
$H^+\succeq \frac{1}{q(L+N_d+2)}I$.
\end{enumerate}
\end{proposition}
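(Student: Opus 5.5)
The plan is to reduce to \cref{thm:cited-source-reduction} by building a modified circuit $V'$ in which every auxiliary qubit starts in $|0\rangle$, and then to decompose the resulting Hamiltonian using \cref{lem:catalogue-closure}.

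\textbf{Step 1: the circuit $V'$.} The prescribed bits $b_i$ must become zero-initialized auxiliaries. Naively one would prepend an $X$ on each $i$ with $b_i=1$, but $X\notin\mathcal G$. Instead I add a constant number of extra qubits, which accounts for the ``$+4$'' and ``$+2$'' in the statement. A fresh qubit $s$ is forced to $|1\rangle$ using a pure Hamiltonian term rather than a gate. Then $\mathrm{CNOT}$ from $s$ onto each $i$ with $b_i=1$ realizes $|0\rangle\mapsto|1\rangle$ on those qubits.

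I expect the forcing to use the extra operator $|a_1\rangle\langle a_1|_1\otimes|0\rangle\langle0|_i\otimes I_s$ of \cref{lem:catalogue-closure}. This suggests the following arrangement. Put two extra data qubits $s,s'$, with $s'$ an auxiliary qubit initialized to $|0\rangle$ by $H_{\mathrm{in}}$. Treat $s$ as a witness qubit, and penalize $s$ in $|0\rangle$ at time one by the operator above, with the spare qubit as the padding factor. This gives $N_d+2$ data qubits. If at most two gates are prepended beyond the CNOTs, the clock count stays within $2L'$ for $L'=L+|\{i:b_i=1\}|$.

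To match $N^+=2L+N_d+4$ exactly, I would more likely route all the required flips through the single forced qubit. Alternatively I would re-index so that the extra CNOTs act at the clock level; I would adjust the construction to the stated counts, padding with unused qubits if needed.

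\textbf{Step 2: completeness and soundness.} If $\psi$ is accepted by $V$ with probability one, then $\psi\otimes|1\rangle_s$ is accepted by $V'$ with probability one. The additional penalty term annihilates the history state, because $s$ is $|1\rangle$ at time one. This gives (i).

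For (ii), consider any witness of $V'$. Its component with $s=|0\rangle$ at clock $a_1$ is penalized. The component with $s=|1\rangle$ behaves exactly as $V$ on $|b\rangle$ and is accepted with probability at most $1/3$. Soundness follows by applying the argument of \cref{thm:cited-source-reduction}(ii) to the extended Hamiltonian $H(V')+\sum|a_1\rangle\langle a_1|_1\otimes|0\rangle\langle 0|_s$. This yields a bound $1/q(L'+N_d')$, which must then be related to $q(L+N_d+2)$.

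\textbf{Main obstacle.} The hardest point is exactly this soundness transfer. The cited theorem only covers auxiliaries initialized to $|0\rangle$, so the forced-witness qubit needs its own argument. The cleanest fix is to make $s$ a genuine $\mathsf A$-qubit and place the ``$0$'' penalty so that it acts as an input constraint in the flipped basis, i.e.\ conjugate by a $\mathrm{CNOT}$ from the known $|0\rangle$ qubit. If that fails, I would reprove the Kitaev-type geometric lemma with the enlarged input penalty. The hypotheses only change the input subspace, so the same bound $1/q$ should hold, given that $q$ has nonnegative coefficients and is therefore monotone.

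\textbf{Step 3: counting and complexity.} Each summand of $H(V')$, together with the extra operators, splits by \cref{lem:catalogue-closure} into at most four catalogue placements. The numbers of summands are:
\begin{itemize}[label={}]
\item the clock summands: $O(L'^2)$, with at most $(L+1)^2$ pairs;
\item the input summands: $N_d$;
\item the propagation summands: $O(L')$.
\end{itemize}
Multiplying by at most four placements each gives $t^+\le16((L+1)^2+N_d)$, and everything is computable in polynomial time.
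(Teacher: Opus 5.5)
\textbf{Gap: parameters and soundness.} Your route cannot give the proposition with its stated parameters, and its soundness step is left open. Write $B:=\{i\in\mathsf A:b_i=1\}$. Setting these qubits to $|1\rangle$ with genuine $\mathcal G$-gates needs at least $|B|/2$ extra instructions: each instruction touches at most two qubits, and every qubit of $B$ must be touched. Each instruction also adds a clock particle. This has three consequences:
\begin{enumerate}
\item $N^+$ grows with $|B|$, so $N^+=2L+N_d+4$ fails once $|B|>4$.
\item The pair sums $H^{(3)},H^{(4)},H^{(5)}$ contribute on the order of $(L+|B|)^2$ catalogue pieces. For $|B|\approx N_d\gg L$ this violates $t^+\le16\bigl((L+1)^2+N_d\bigr)$.
\item Even granting soundness, \cref{thm:cited-source-reduction} yields only $1/q(L'+N_d')$ with $L'+N_d'>L+N_d+2$ whenever $B\neq\varnothing$. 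Since $q$ is monotone, $1/q(L'+N_d')\le1/q(L+N_d+2)$, which is weaker than what (ii) requires.
\end{enumerate}
Padding with unused qubits only increases these counts, and routing all flips through one forced qubit does not reduce the number of gates. Independently, your extended Hamiltonian treats $s$ as a witness qubit with an extra time-one penalty. That lies outside the hypotheses of \cref{thm:cited-source-reduction}, so deferring it to a re-proof of the geometric lemma leaves (ii) unproved.

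\textbf{Missing idea: conjugate, do not add gates.} \cref{thm:cited-source-reduction} accepts \emph{arbitrary} one- and two-qubit unitaries, and unitary conjugation preserves spectra. The paper's proof runs as follows.
\begin{enumerate}
\item Append one $\mathrm{CNOT}(o,o')$ onto a fresh zero-initialized output qubit $o'$. This one gate and one qubit are the ``$+2$''. It ensures $o'\notin B$, so the acceptance event and $H_{\mathrm{out}}$ are untouched by the later conjugation.
\item With $X_B:=\bigotimes_{i\in B}X_i$, the circuit $C'$ with gates $X_Bg_tX_B$ and all auxiliaries at $|0\rangle$ has the same acceptance probabilities. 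Apply the cited theorem to $C'$.
\item Conjugate $H(C')$ back by $I\otimes X_B$. Term by term, the propagation terms revert to those of the original $\mathcal G$-gates and the clock and output terms are unchanged. The input term of each $i\in B$ becomes $|a_1\rangle\langle a_1|_1\otimes|0\rangle\langle0|_i$. The kernel and the bound $1/q(L+N_d+2)$ transfer verbatim.
\item Tensor on one spectator qubit. The three-qubit string $010$ has Hamming weight one, so it is not a placement of $|011\rangle$. By contrast, $|010\rangle\langle010|\otimes I$ splits into placements of $|0001\rangle$ and $|1100\rangle$. This gives $N^+=2(L+1)+(N_d+1)+1$ exactly.
\end{enumerate}
Your forced-qubit penalty is exactly such a flipped input term. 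Once you conjugate by $X$ rather than by a $\mathrm{CNOT}$, your prepended $\mathrm{CNOT}$s and the witness qubit $s$ become unnecessary.
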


\begin{proof}
\emph{Step 1: moving the output.}  Adjoin a fresh auxiliary qubit $o'$
with prescribed bit $0$, append the instruction
$(\mathrm{CNOT},(o,o'))$, and declare $o'$ the output qubit.  Call the
resulting circuit $\widehat V$.  It has $L+1\ge1$ instructions, all in
$\mathcal G$, and $N_d+1$ data qubits.  Since
$\mathrm{CNOT}(\varphi\otimes|0\rangle_{o'})
=\sum_{c\in\{0,1\}}(|c\rangle\langle c|_o\varphi)\otimes|c\rangle_{o'}$,
measuring $o'$ after $\widehat V$ has the same outcome distribution as
measuring $o$ after $V$.  Hence every witness is accepted by $\widehat V$
with the same probability as by $V$.  Let $B$ be the set of auxiliary
qubits whose prescribed bit is $1$; then $o'\notin B$.

\emph{Step 2: conjugation to zero initialization.}  Let
$X=\begin{psmallmatrix}0&1\\1&0\end{psmallmatrix}$ and
$X_B:=\bigotimes_{i\in B}X_i$ on the data register.  Let $C'$ be the
circuit with instructions $(X_Bg_tX_B,T_t)$, where $(g_t,T_t)$ are the
instructions of $\widehat V$.  Since $X_i$ for $i\notin T_t$ cancels,
$X_Bg_tX_B$ is a unitary on the qubits $T_t$.  The unitary of $C'$ is
$X_BU_{\widehat V}X_B$, so
\[
  U_{C'}\bigl(\psi\otimes|0\rangle\bigr)
  =X_BU_{\widehat V}\bigl(\psi\otimes|b\rangle\bigr).
\]
Because $o'\notin B$, the acceptance projection on $o'$ commutes with
$X_B$.  Hence $C'$, with all auxiliary qubits initialized to $0$, accepts
every witness with the same probability as $\widehat V$ does with the
prescribed bits $b$, and hence as $V$ does.  The gates of $C'$ need not
lie in $\mathcal G$; \cref{thm:cited-source-reduction} allows arbitrary
one- and two-qubit unitaries.

\emph{Step 3: the cited construction.}  Apply
\cref{thm:cited-source-reduction} to $C'$, which has $L+1$ gates and
$N_d+1$ data qubits.

\emph{Step 4: undoing the conjugation.}  Let $\mathcal X:=I\otimes X_B$,
acting trivially on the clock, and put
$H_b:=\mathcal XH(C')\mathcal X$.  This is unitarily equivalent to
$H(C')$, so it has the same spectrum.  Term by term: the clock terms act
only on the clock and are unchanged; $H_{\mathrm{out}}$ acts on the clock
and on $o'\notin B$ and is unchanged; by linearity in the gate and
$X_B^2=I$, the propagation term for $X_Bg_tX_B$ becomes the propagation
term for $g_t\in\mathcal G$; and the input term for qubit $i$ becomes
$|a_1\rangle\langle a_1|_1\otimes|1\oplus b_i\rangle\langle1\oplus b_i|_i$.
Thus $H_b$ equals $H(\widehat V)$ except that the input term of each
$i\in B$ penalizes $|0\rangle$ instead of $|1\rangle$.

\emph{Step 5: the spectator qubit.}  Adjoin one further qubit $s$ and put
$H^+:=H_b\otimes I_s$, on $2(L+1)+(N_d+1)+1=N^+$ qubits.  Every term of
$H_b$ becomes the same term placed on the same tuple, and for $i\in B$ the
input term becomes
$|a_1\rangle\langle a_1|_1\otimes|0\rangle\langle0|_i\otimes I_s$.  If
$\{u_k\}$ is an orthonormal eigenbasis of $H_b$, then
$\{u_k\otimes|c\rangle:c\in\{0,1\}\}$ is an orthonormal eigenbasis of
$H^+$ with the same eigenvalues, each occurring twice.  Hence
$\ker H^+=\ker H_b\otimes\CC^2$, and for every real $g$, $H^+\succeq gI$
if and only if $H_b\succeq gI$.

\emph{Step 6: catalogue presentation.}  By Steps 4 and 5, every term of
$H^+$ is either a summand of $H(\widehat V)$, whose gates lie in
$\mathcal G$, or an operator of the additional form in
\cref{lem:catalogue-closure}.  Replacing each term by its decomposition
from that lemma gives a finite-catalogue source presentation of $H^+$,
exactly.  The bound on $t^+$ is counted in \cref{app:catalogue-closure}.

\emph{Promises.}  If some witness is accepted by $V$ with probability one,
it is accepted by $C'$ with probability one, so $\ker H(C')\neq\{0\}$ and
hence $\ker H^+\neq\{0\}$.  If every witness is accepted with probability
at most $\frac13$, the same holds for $C'$, so
$H(C')\succeq \frac{1}{q(L+N_d+2)}I$; unitary equivalence and Step 5
transfer this bound to $H^+$.  Every step is a polynomial-time
computation on the description of $(V,b,o)$.
\end{proof}

\begin{remark}[Prescribed input bits]
\label{rem:prescribed-bits}
In \cite[Definition~7.2]{KingKohler2026} the verifier receives the input
string $x$ as part of its input, while the history-state construction
used there initializes the input qubits to zero
\cite[Appendix~C]{KingKohler2026}, and the tabulated projections
\cite[Table~1]{KingKohler2026} contain no term enforcing a prescribed bit
equal to one.  Steps 1-5 above supply this case with no new local
projection type.
\end{remark}

\begin{corollary}[Finite-catalogue source theorem]
\label{cor:finite-catalogue-source}
Let \(L=(L_{\mathrm{YES}},L_{\mathrm{NO}}) \in\mathsf{QMA}_1(\mathcal G)\), and fix a polynomial-time uniform verifier family $V=\{V_x\}$ for $L$ as in
\cref{def:qma1}.  There exist a deterministic polynomial-time algorithm
$A_V$ and a polynomial $p_V\ge1$ such that, for every input $x$, the
algorithm $A_V$ produces

\begin{enumerate}[label=(\roman*)]
\item integers $N(x),t(x)\geq1$;

\item catalogue labels \(a_j\in\{1,\ldots,13\}, \, 1\leq j\leq t(x)\);

\item ordered support tuples \(S_j = (s_{j,1},\ldots,s_{j,m_{a_j}})\) of distinct elements of $\{1,\ldots,N(x)\}$;
\end{enumerate}

such that
\[
  H_x
  :=
  \sum_{j=1}^{t(x)}
  \bigl(\Pi_{a_j}\bigr)_{S_j}^{(N(x))}
\]
satisfies
\[
  x\in L_{\mathrm{YES}}
  \quad\Longrightarrow\quad
  \ker H_x\neq\{0\},
\]
and
\[
  x\in L_{\mathrm{NO}}
  \quad\Longrightarrow\quad
  H_x\succeq\frac{1}{p_V(|x|)}I.
\]

Moreover,
\[
  N(x)\leq p_V(|x|),
  \qquad
  t(x)\leq p_V(|x|),
\]
and the complete binary description of the finite-catalogue source
presentation has length at most $p_V(|x|)$.  The resulting rational gap is
$g_x:=p_V(|x|)^{-1}\in(0,1]$.
\end{corollary}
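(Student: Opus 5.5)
The plan is to compose the uniform verifier family with \cref{prop:source-catalogue-identification}, then absorb the polynomial bounds into a single polynomial $p_V$.

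\emph{Verifier data.}  Given $x$, run the uniform generator of $\{V_x\}$ from \cref{def:qma1}.  It outputs, in time polynomial in $|x|$:
\begin{itemize}
\item a circuit over $\mathcal G$ with $L(x)$ instructions on $N_d(x)$ data qubits;
\item the witness/auxiliary partition;
\item the prescribed auxiliary string $b_x$ (computable from $x$ in polynomial time);
\item the output qubit $o$, with outcome $0$ meaning acceptance.
\end{itemize}
Since the family has polynomial size, there is a polynomial $r_V$ with $L(x)+N_d(x)\le r_V(|x|)$.  We may take $r_V$ with nonnegative coefficients and $r_V\ge1$, so that it is monotone.

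\emph{Applying the proposition.}  Apply \cref{prop:source-catalogue-identification} to $(V_x,b_x,o)$.  This yields, in time polynomial in $L+N_d$ and hence in $|x|$, a finite-catalogue source presentation of $H^+$.  Set $H_x:=H^+$, $N(x):=N^+=2L+N_d+4$ and $t(x):=t^+\le16((L+1)^2+N_d)$, with the labels and ordered tuples read off the presentation.  Both $N(x)$ and $t(x)$ are at least $1$: $N^+\ge4$, and $t^+\ge1$ because the Hamiltonian contains at least the output term.

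\emph{Promises.}  The YES implication is item (i) of \cref{prop:source-catalogue-identification}, because a YES witness is accepted with probability $1$.  The NO case is item (ii): $H_x\succeq q(L+N_d+2)^{-1}I$.  Since $q$ has nonnegative coefficients, it is monotone on nonnegative integers, so
\[
  q(L+N_d+2)\le q\bigl(r_V(|x|)+2\bigr).
\]

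\emph{Choosing $p_V$.}  Let $\ell(x)$ be the length of the binary description.  It is $O\bigl(t(x)\log N(x)\bigr)$: each term needs a label of constant size and at most four indices of $O(\log N)$ bits.  This is bounded by a polynomial in $r_V(|x|)$.  Now define $p_V$ as the sum of four polynomials with nonnegative coefficients, each bounding one of
\[
  q\bigl(r_V+2\bigr),\qquad 2r_V+4,\qquad 16\bigl((r_V+1)^2+r_V\bigr),\qquad \ell .
\]
Then $p_V\ge1$ and it dominates all four quantities.  Weakening the NO bound gives $H_x\succeq p_V(|x|)^{-1}I$, and $g_x=p_V(|x|)^{-1}\in(0,1]$ is rational.

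\emph{Main obstacle.}  The only point needing care is this bookkeeping.  All required bounds must be stated with monotone, nonnegative-coefficient polynomials, so that replacing the exact quantities $L+N_d$ by their upper bound $r_V(|x|)$ is legitimate.  Weakening the gap to $1/p_V$ is then harmless.
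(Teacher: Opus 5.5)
Your proposal is correct and follows essentially the same route as the paper: compute the verifier data by uniformity, apply \cref{prop:source-catalogue-identification}, and choose a single nonnegative-coefficient polynomial $p_V\ge1$ that dominates $N^+$, $t^+$, the description length, and $q(L+N_d+2)$ via monotonicity of $q$. The only differences are cosmetic. You bound $L+N_d$ by one polynomial rather than $L$ and $N_d$ separately, and you spell out $N(x),t(x)\ge1$ and the $O(t\log N)$ description length, both of which the paper leaves implicit.
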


\begin{proof}
By uniformity, compute from $x$ the description of $V_x$, its prescribed
auxiliary bits $b_x$, and its output qubit, in time polynomial in $|x|$.
Let $r_V$ be a polynomial with nonnegative coefficients bounding the number
$L(x)$ of instructions and the number $N_d(x)$ of data qubits of $V_x$.
Apply \cref{prop:source-catalogue-identification} to $(V_x,b_x,o)$.
The resulting $N^+$, $t^+$, and presentation length are bounded by
polynomials in $r_V(|x|)$.  Choose a polynomial $p_V\ge1$ with
nonnegative coefficients dominating these bounds and satisfying
$p_V(|x|)\ge q\bigl(2r_V(|x|)+2\bigr)\ge q\bigl(L(x)+N_d(x)+2\bigr)$ for
every $x$.  The two promises follow from
\cref{prop:source-catalogue-identification}.
\end{proof}

For each fixed verifier family $V$, the algorithm $A_V$ and polynomial
$p_V$ work for all inputs $x$; the dependence of $p_V$ on $V$ is only
through the size bound $r_V$.  The reduction below starts from the
finite-catalogue presentation of \cref{cor:finite-catalogue-source}.

\section{Further topological preliminaries}
\label{sec:further-topological-preliminaries}

We collect the standard constructions needed for the gadget analysis:
augmentation, relative homology, simplicial maps, and joins.  Our conventions
follow the usual simplicial ones; see \cite{Hatcher}.

\subsection{Augmented chains and reduced homology}
\label{subsec:augmented-chains}

\begin{definition}[Augmented simplicial chain complex]
\label{def:augmented-chain-complex}
For a finite complex $K$ containing the empty simplex, set
$\widetilde C_r(K;\CC)=C_r(K;\CC)$ for $r\ge0$ and
$\widetilde C_{-1}(K;\CC)=\CC[\varnothing]$.  The augmented boundary agrees
with \cref{def:boundary-operator} in positive degrees and satisfies
\[
  \widetilde\partial_0[v]=[\varnothing],
  \qquad
  \widetilde\partial_{-1}=0.
\]
Its homology is the reduced homology $\widetilde H_r(K;\CC)$.  For
$r\ge1$ this agrees with ordinary homology; if a nonempty $K$ has $c$
connected components, then
\[
  H_0(K;\CC)\cong\CC^c,
  \qquad
  \widetilde H_0(K;\CC)\cong\CC^{c-1},
  \qquad
  \widetilde H_{-1}(K;\CC)=0.
\]
For $K=\{\varnothing\}$, one has
$\widetilde H_{-1}(K;\CC)=\CC[\varnothing]$ and all other reduced groups
vanish.
\end{definition}

\begin{definition}[Augmented cochains and Hodge Laplacian]
\label{def:augmented-hodge-laplacian}
Let $\widetilde C^r(K;\CC)=\operatorname{Hom}_{\CC}(\widetilde
C_r(K;\CC),\CC)$ and let $\widetilde D$ be dual to
$\widetilde\partial$.  Identifying $\widetilde C^{-1}$ with $\CC$,
\[
  \widetilde D_{-1}(1)=\sum_{v\in K_0}v^*.
\]
Give $[\varnothing]$ and its dual norm one.  The augmented Hodge Laplacian
is
\[
  \widetilde\Delta_r
  :=\widetilde D_{r-1}\widetilde D_{r-1}^*
   +\widetilde D_r^*\widetilde D_r,
  \qquad r\ge-1,
\]
with nonexistent maps interpreted as zero.  In the weighted setting the
empty simplex has weight $1$.
\end{definition}

\begin{convention}[Augmentation in join, prism, cone, and depth calculations]
\label{conv:augmented-use}
For joins, prisms, cones, and the private-depth filtered complexes we use
the augmented chain and cochain complexes above, usually suppressing the
tildes on $C,\partial,D,\Delta$.  Thus in these calculations
$\partial_0[v]=[\varnothing]$ and $D_{-1}(1)=\sum_vv^*$.  Ordinary
homology denoted by $H$ remains ordinary unless a reduced group
$\widetilde H$ is displayed.  Relative pairs below contain the empty
simplex in both terms, so it cancels in the quotient.
\end{convention}

\subsection{Relative homology}
\label{subsec:relative-homology}

For a subcomplex $A\subseteq K$, the relative chain complex is
$C_r(K,A;\CC):=C_r(K;\CC)/C_r(A;\CC)$ with the induced boundary.  Its
homology is denoted by $H_r(K,A;\CC)$.  A chain $c\in C_r(K;\CC)$
represents a relative cycle exactly when $\partial c\in C_{r-1}(A;\CC)$.

\begin{definition}[Relative cochains]
\label{def:relative-cochains}
The relative cochain space is the dual of the relative chain space:
\[
  C^r(K,A;\CC):=\operatorname{Hom}_{\CC}(C_r(K,A;\CC),\CC).
\]
It is identified with the cochains on $K$ that vanish on $C_r(A;\CC)$.  In the oriented
simplex basis it is therefore spanned by the dual basis vectors
corresponding to simplices of $K\setminus A$.  The coboundary preserves
this subspace, and its cohomology is denoted by $H^r(K,A;\CC)$.
\end{definition}

\begin{theorem}[Long exact sequence of a pair {\cite[Section~2.1]{Hatcher}}]
\label{thm:long-exact-sequence-pair}
For every inclusion $\iota:A\hookrightarrow K$ there is an exact sequence
\[
  \cdots\to H_{r+1}(K,A;\CC)
  \xrightarrow{\delta_{r+1}}H_r(A;\CC)
  \xrightarrow{\iota_*}H_r(K;\CC)
  \xrightarrow{\pi_*}H_r(K,A;\CC)
  \xrightarrow{\delta_r}H_{r-1}(A;\CC)\to\cdots .
\]
This is the \emph{long exact sequence of the pair} $(K,A)$.
Here $\iota_*$ is induced by the inclusion $\iota$, $\pi_*$ by the quotient
map $C_*(K;\CC)\to C_*(K,A;\CC)$, and $\delta_r$ is the connecting
homomorphism.  Exactness means that, for every $r$,
\[
  \operatorname{im}\delta_{r+1}=\ker\iota_* ,\qquad
  \operatorname{im}\iota_*=\ker\pi_* ,\qquad
  \operatorname{im}\pi_*=\ker\delta_r.
\]
If $c\in C_{r+1}(K;\CC)$ represents a relative cycle, so that
$\partial c\in C_r(A;\CC)$, then
\[
  \delta_{r+1}([c]_{\mathrm{rel}})=[\partial c]\in H_r(A;\CC).
\]
\end{theorem}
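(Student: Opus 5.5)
The plan is to follow the standard construction of the long exact sequence from the short exact sequence of chain complexes
\[
  0\to C_*(A;\CC)\xrightarrow{\iota_\#}C_*(K;\CC)\xrightarrow{\pi_\#}C_*(K,A;\CC)\to0,
\]
which is exact in each degree because $C_r(K,A;\CC)$ is by definition the quotient $C_r(K;\CC)/C_r(A;\CC)$ and $\iota_\#$ is injective. Both $\iota_\#$ and $\pi_\#$ commute with $\partial$: for $\iota_\#$ because the boundary of a simplex of $A$ is a chain in $A$, and for $\pi_\#$ because the relative boundary is the induced map. Hence both induce maps $\iota_*$ and $\pi_*$ on homology. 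Over the field $\CC$ this is a split short exact sequence of vector spaces, but the splitting need not commute with $\partial$, so I will not use it beyond choosing lifts.

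Next I will define $\delta_{r+1}$ by the zigzag. Take a relative cycle, represented by $c\in C_{r+1}(K;\CC)$ with $\partial c\in C_r(A;\CC)$. Then $\partial c$ is a cycle in $A$, since $\partial\partial c=0$ and $\iota_\#$ is injective. I set $\delta_{r+1}[c]_{\mathrm{rel}}=[\partial c]$, which is exactly the formula displayed in the statement. For well-definedness, changing $c$ to $c+a+\partial e$ with $a\in C_{r+1}(A)$ and $e\in C_{r+2}(K)$ changes $\partial c$ by $\partial a$, which is a boundary in $A$. Linearity is immediate.

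Exactness then has to be checked at the three positions; each is a short diagram chase.
\begin{enumerate}[label=(\roman*)]
\item At $H_r(K)$: $\pi_*\iota_*=0$ is clear. Conversely, if $z$ is a cycle in $K$ whose image is a relative boundary, then $z=\partial e+a$ with $a\in C_r(A)$. Thus $a$ is a cycle in $A$, and $[z]=\iota_*[a]$.
\item At $H_r(K,A)$: $\delta\pi_*=0$ because $\partial z=0$. Conversely, if $\partial c=\partial a'$ with $a'\in C_{r+1}(A)$, then $c-a'$ is an absolute cycle representing the same relative class.
\item At $H_r(A)$: $\iota_*\delta=0$ because $\partial c$ bounds in $K$. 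Conversely, if a cycle $a$ in $A$ satisfies $a=\partial c$ with $c\in C_{r+1}(K)$, then $c$ is a relative cycle and $\delta[c]=[a]$.
\end{enumerate}

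Indexing at the bottom end ($r=0$, with the empty simplex present in both terms by \cref{conv:augmented-use}) needs only the remark that missing groups are zero.

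No step is a real obstacle. The only point requiring care is well-definedness of $\delta$, in particular that the lift ambiguity lies in $C_*(A)$ and contributes only boundaries in $A$. Alternatively one may simply cite \cite[Section~2.1]{Hatcher}, since the statement is standard.
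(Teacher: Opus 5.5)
Your argument is correct and is the standard one: the paper gives no proof of its own and simply cites Hatcher's Section~2.1, where the sequence is obtained from exactly the short exact sequence $0\to C_*(A;\CC)\to C_*(K;\CC)\to C_*(K,A;\CC)\to0$, using the zigzag definition of $\delta$ and the three diagram chases you outline. There is one small indexing slip in (ii): for a relative cycle $c\in C_r(K;\CC)$ with $\delta_r[c]_{\mathrm{rel}}=0$, the correct statement is $\partial c=\partial a'$ with $a'\in C_r(A;\CC)$, not $a'\in C_{r+1}(A;\CC)$.
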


\subsection{Simplicial maps}
\label{subsec:simplicial-maps}

A simplicial map $f:K\to L$ is a vertex map satisfying $f(\sigma)\in L$
for every $\sigma\in K$.

\begin{definition}[Image complex]
\label{def:image-complex}
The image complex of a simplicial map $f:K\to L$ is
\[
  f(K):=\{f(\sigma):\sigma\in K\},
\]
a simplicial subcomplex of $L$.
\end{definition}

\begin{definition}[Normalized induced chain map]
\label{def:normalized-chain-map}
For a simplicial map $f:K\to L$, define
\[
  f_\#[v_0,\ldots,v_r]
  :=
  \begin{cases}
  [f(v_0),\ldots,f(v_r)],&f(v_0),\ldots,f(v_r)\text{ distinct},\\
  0,&\text{otherwise},
  \end{cases}
\]
with the usual orientation sign.  Then
\[
  \partial f_\#=f_\#\partial,
\]
so $f_\#$ induces $f_*:H_r(K;\CC)\to H_r(L;\CC)$.  Dually,
$(f^*\alpha)(c):=\alpha(f_\#c)$ defines the induced pullback on cochains
and cohomology.
\end{definition}

A simplicial map that is injective on vertices is a simplicial embedding;
a bijective simplicial map with simplicial inverse is a simplicial
isomorphism.

\subsection{Graph joins and simplicial joins}
\label{subsec:joins}

For finite simple graphs $G$ and $L$ on disjoint vertex sets, their graph
join $G*L$ is obtained from their disjoint union by adding every edge
between $V(G)$ and $V(L)$.  For simplicial complexes $K,L$ on disjoint
vertex sets, their simplicial join is
\[
  K*L:=\{\sigma\cup\tau:\sigma\in K,\ \tau\in L\},
\]
where the empty simplex is included in each factor as in
\cref{conv:augmented-use}.

\begin{lemma}[Clique complexes and joins]
\label{lem:clique-complex-join}
For finite simple graphs $G$ and $L$ on disjoint vertex sets,
\[
  \Cl(G*L)=\Cl(G)*\Cl(L).
\]
\end{lemma}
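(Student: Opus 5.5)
The plan is to prove the two inclusions $\Cl(G*L)\subseteq\Cl(G)*\Cl(L)$ and $\Cl(G)*\Cl(L)\subseteq\Cl(G*L)$ directly from \cref{def:clique-complex} and the definition of the simplicial join, treating simplices as vertex subsets. The empty simplex is included in each factor, as in \cref{conv:augmented-use}. The argument uses only the edge description of $G*L$: two distinct vertices are adjacent in $G*L$ exactly when both lie in $V(G)$ and are adjacent in $G$, or both lie in $V(L)$ and are adjacent in $L$, or one lies in each vertex set. Since $V(G)$ and $V(L)$ are disjoint, every subset $\rho\subseteq V(G)\sqcup V(L)$ splits uniquely as $\rho=\sigma\cup\tau$ with $\sigma=\rho\cap V(G)$ and $\tau=\rho\cap V(L)$.

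\emph{First inclusion.} Let $\rho$ be a clique of $G*L$ and split it as above. Any two distinct vertices of $\sigma$ are adjacent in $G*L$ and both lie in $V(G)$. The only edges of $G*L$ inside $V(G)$ are those of $G$, so $\sigma$ is a clique of $G$, that is, $\sigma\in\Cl(G)$. By the same argument $\tau\in\Cl(L)$, and therefore $\rho=\sigma\cup\tau\in\Cl(G)*\Cl(L)$. This includes the degenerate cases where $\sigma$ or $\tau$ is empty.

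\emph{Second inclusion.} Let $\sigma\in\Cl(G)$ and $\tau\in\Cl(L)$. Take two distinct vertices of $\sigma\cup\tau$. If both lie in $\sigma$, they are adjacent in $G$ and hence in $G*L$. If both lie in $\tau$, they are adjacent in $L$ and hence in $G*L$. If one lies in $\sigma$ and the other in $\tau$, they are adjacent because the join adds all cross edges. So $\sigma\cup\tau$ is a clique of $G*L$.

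There is no real obstacle here. The only point needing care is bookkeeping: the empty simplex must be allowed in each factor, so that $\Cl(G)$ and $\Cl(L)$ themselves sit inside the join, and disjointness of the vertex sets must be used to ensure that the restriction of $G*L$ to $V(G)$ is exactly $G$. With these two points in place, the equality holds as an equality of sets of vertex subsets, and hence as simplicial complexes.
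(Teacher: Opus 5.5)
Your proposal is correct and follows the same argument as the paper: a vertex set is a clique of $G*L$ exactly when its intersections with $V(G)$ and $V(L)$ are cliques of $G$ and $L$ respectively. The paper states this in one line, and your two inclusions simply spell it out.
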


\begin{proof}
A set $\omega\subseteq V(G)\sqcup V(L)$ is a clique in $G*L$ exactly when
$\omega\cap V(G)$ is a clique in $G$ and $\omega\cap V(L)$ is a clique in
$L$.
\end{proof}

Fix an ordering in which every vertex of $K$ precedes every vertex of $L$.
For oriented simplices $x\in\widetilde C_a(K;\CC)$ and
$y\in\widetilde C_b(L;\CC)$, write $x*y$ for their concatenated oriented
join.  Then
\[
  \widetilde\partial(x*y)
  =(\widetilde\partial x)*y+(-1)^{a+1}x*(\widetilde\partial y),
\]
and hence
\[
  \widetilde C_r(K*L;\CC)
  \cong\bigoplus_{a+b=r-1}
  \widetilde C_a(K;\CC)\otimes\widetilde C_b(L;\CC).
\]

\begin{theorem}[Join K\"unneth formula over a field {\cite[Theorem~3B.5]{Hatcher}}]
\label{thm:join-kunneth}
For finite simplicial complexes $K$ and $L$,
\[
  \widetilde H_r(K*L;\CC)
  \cong
  \bigoplus_{a+b=r-1}
  \widetilde H_a(K;\CC)\otimes\widetilde H_b(L;\CC).
\]
\end{theorem}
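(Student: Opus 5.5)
The plan is to prove the join K\"unneth formula at the chain level, using the decomposition of augmented join chains displayed just above the statement, together with the algebraic K\"unneth theorem over a field. The first step is to make that decomposition into an isomorphism of chain complexes. Define
\[
  \Phi:\bigoplus_{a+b=r-1}\widetilde C_a(K;\CC)\otimes\widetilde C_b(L;\CC)\longrightarrow\widetilde C_r(K*L;\CC),
  \qquad
  \Phi(x\otimes y):=x*y .
\]
Every simplex of $K*L$ is uniquely $\sigma\cup\tau$ with $\sigma\in K$ and $\tau\in L$, where either factor may be empty. With the fixed ordering putting $V(K)$ before $V(L)$, $\Phi$ sends basis vectors bijectively to oriented basis vectors, so it is an isomorphism in each degree.

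Next I check that $\Phi$ is a chain map once the domain is given a suitable differential. Shift degrees by setting $\Sigma\widetilde C(K)_p:=\widetilde C_{p-1}(K)$, and similarly for $L$. Then $x\otimes y$ with $x\in\widetilde C_a(K)$ and $y\in\widetilde C_b(L)$ has total shifted degree $(a+1)+(b+1)=r+1$. The displayed boundary formula
\[
  \widetilde\partial(x*y)=(\widetilde\partial x)*y+(-1)^{a+1}x*(\widetilde\partial y)
\]
is exactly the Koszul sign rule for the tensor product differential in shifted degrees. To confirm it, I would verify it on basis simplices by tracking which positions of the concatenated vertex list the removed vertex occupies. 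The edge cases where $x=[\varnothing]$ or $y=[\varnothing]$ need separate attention, since there $\widetilde\partial_0[v]=[\varnothing]$ and the empty simplex is absorbed under concatenation. Consequently $\widetilde C_\bullet(K*L)$ is isomorphic, with a degree shift by one, to the tensor product complex $\Sigma\widetilde C(K)\otimes\Sigma\widetilde C(L)$.

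Finally I apply the algebraic K\"unneth theorem over the field $\CC$. Its Tor terms vanish, so the homology of a tensor product of complexes is the tensor product of the homologies. Since $H_p(\Sigma\widetilde C(K))=\widetilde H_{p-1}(K)$, this gives
\[
  \widetilde H_r(K*L)\cong\bigoplus_{(a+1)+(b+1)=r+1}\widetilde H_a(K)\otimes\widetilde H_b(L),
\]
that is, the sum runs over $a+b=r-1$, which is the claimed formula. If the reader does not want to invoke the algebraic K\"unneth theorem, it can be proved directly. Over a field, each complex splits as its homology (with zero differential) plus contractible two-term pieces $\CC\xrightarrow{\ \cong\ }\CC$. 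The tensor product of a contractible piece with anything is contractible, so only the tensor product of the homology pieces contributes.

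The main obstacle is the sign and edge-case bookkeeping in the boundary formula, particularly the interaction with the empty simplex, since the rest is formal. I would handle it by checking the simplex-level identity directly. For the concatenated list $[u_0,\dots,u_a,v_0,\dots,v_b]$, deleting $v_j$ carries the sign $(-1)^{a+1+j}$. This yields $(-1)^{a+1}x*(\widetilde\partial y)$, and it remains valid when $a=-1$ and when $b=0$.
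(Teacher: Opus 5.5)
Your proposal is correct and takes essentially the paper's route: the paper displays exactly your chain-level identification $\widetilde C_r(K*L;\CC)\cong\bigoplus_{a+b=r-1}\widetilde C_a(K;\CC)\otimes\widetilde C_b(L;\CC)$, with the boundary sign $(-1)^{a+1}$, just before the statement, and then cites the K\"unneth theorem over a field from Hatcher instead of giving a proof. Your degree-shift bookkeeping and your explicit check of the empty-simplex edge cases supply details the paper leaves implicit.
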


\begin{proposition}[Weighted Hodge Laplacian of a join]
\label{prop:weighted-join-laplacian}
Give $K*L$ the product simplex weights
$w(\sigma\cup\tau)=w_K(\sigma)w_L(\tau)$.  Under the orthogonal
decomposition
\[
  \widetilde C^r(K*L;\CC)
  \cong\bigoplus_{a+b=r-1}
  \widetilde C^a(K;\CC)\otimes\widetilde C^b(L;\CC),
\]
each summand is preserved by the augmented weighted Hodge Laplacian, and
on the summand with $a+b+1=r$,
\[
  \widetilde\Delta_{K*L,r}
  =\widetilde\Delta_{K,a}\otimes I
   +I\otimes\widetilde\Delta_{L,b}.
\]
The factor Laplacians remain augmented when $a$ or $b$ is $-1$ or $0$.
\end{proposition}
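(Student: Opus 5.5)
The plan is to work entirely at the level of weighted augmented cochains and verify the identity on the orthonormal product basis; no homological input is needed.

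First, fix the identification. Choose the global vertex order in which all vertices of $K$ precede those of $L$. For an oriented $a$-simplex $\sigma$ of $K$ and an oriented $b$-simplex $\tau$ of $L$ (with $a,b\ge-1$, the empty simplex allowed), send $\widehat\sigma^*\otimes\widehat\tau^*$ to $\widehat{(\sigma*\tau)}^*$. Every $r$-simplex of $K*L$ is uniquely $\sigma\cup\tau$ with $|\sigma|+|\tau|=r+1$, that is, $a+b=r-1$. The weight $w(\sigma\cup\tau)=w_K(\sigma)w_L(\tau)$ is a product. Together these show that the map is a unitary from $\bigoplus_{a+b=r-1}\widetilde C^a(K)\otimes\widetilde C^b(L)$ onto $\widetilde C^r(K*L)$ with its weighted inner product, and the summands are mutually orthogonal.

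Second, dualize the join Leibniz rule $\widetilde\partial(x*y)=(\widetilde\partial x)*y+(-1)^{a+1}x*\widetilde\partial y$, or read it off directly from \cref{prop:weighted-coboundary-matrix}. A coface of $\sigma\cup\tau$ adds either a vertex $v$ of $K$ or a vertex $u$ of $L$, with factor $w(v)$ or $w(u)$ respectively. The incidence sign in the first case equals the $K$-sign. In the second case it acquires $(-1)^{|\sigma|}=(-1)^{a+1}$ from moving $u$ past the $a+1$ vertices of $\sigma$. This gives
\[
  D_{K*L}=D_K\otimes I+\Sigma\otimes D_L,
\]
where $\Sigma$ acts on $\widetilde C^a(K)$ as $(-1)^{a+1}$. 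Taking weighted adjoints, which are compatible with the tensor inner product, gives $D^*_{K*L}=D_K^*\otimes I+\Sigma\otimes D_L^*$, using that $\Sigma$ is self-adjoint. The augmented cases are built in: $D_{-1}(1)=\sum_v w(v)\widehat v^*$ with the empty simplex of weight $1$ is exactly the augmented factor coboundary.

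Third, expand $\Delta=DD^*+D^*D$. The diagonal terms are $D_KD_K^*\otimes I+D_K^*D_K\otimes I+\Sigma^2\otimes(D_LD_L^*+D_L^*D_L)$, and $\Sigma^2=I$. The cross terms are
\[
  (D_K\Sigma+\Sigma D_K)^{*\text{-type}}\otimes D_L^*\quad\text{and}\quad (D_K^*\Sigma+\Sigma D_K^*)\otimes D_L .
\]
These vanish because $D_K$ raises degree by one, so $\Sigma D_K=-D_K\Sigma$, and likewise for $D_K^*$. Hence
\[
  \widetilde\Delta_{K*L}=\widetilde\Delta_K\otimes I+I\otimes\widetilde\Delta_L .
\]
Each operator on the right preserves the bidegree $(a,b)$, which gives both the invariance of each summand and the formula on it.

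The main point requiring care is the sign bookkeeping in the second step, together with confirming that the weighted adjoint of $\Sigma\otimes D_L$ is $\Sigma\otimes D_L^{*}$ in the weighted inner product. The latter holds because the weighted orthonormal basis factorizes as a tensor product; I would check it explicitly on basis vectors, including the boundary cases $a=-1$ and $b=-1$.
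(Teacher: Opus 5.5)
Your proposal is correct and is essentially the paper's proof: the product weights make the tensor identification unitary, the coboundary splits as $D_{K*L}=D_K\otimes I+\Sigma\otimes D_L$ with $\Sigma=(-1)^{a+1}$ on $\widetilde C^a(K;\CC)$, and because $\Sigma$ anticommutes with $D_K$ and $D_K^*$ the mixed terms in $DD^*+D^*D$ cancel. Your sign check from the weighted coboundary formula and your remarks on the augmented degrees only make explicit what the paper leaves implicit; note that the first cross term is simply $(D_K\Sigma+\Sigma D_K)\otimes D_L^*$, so the ``$*$-type'' decoration is unnecessary.
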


\begin{proof}
The product rule for weights makes the tensor-product identification
unitary.  Let $S_Ku=(-1)^{a+1}u$ on
$\widetilde C^a(K;\CC)$.  Then
\[
  \widetilde D_{K*L}
  =\widetilde D_K\otimes I+S_K\otimes\widetilde D_L.
\]
Since $S_K$ anticommutes with both $\widetilde D_K$ and
$\widetilde D_K^*$, the mixed terms cancel in
$\widetilde D_{K*L}\widetilde D_{K*L}^*+
 \widetilde D_{K*L}^*\widetilde D_{K*L}$, leaving the stated tensor sum.
\end{proof}

\section{The bow-tie encoding}
\label{sec:bowtie}

The $N$-qubit encoding begins with a seven-vertex graph that has two
independent one-dimensional homology classes representing the computational
basis states $|0\rangle$ and $|1\rangle$.  Taking graph joins then turns
these two classes into the $2^N$ basis classes of an $N$-qubit space.
This encoding follows the bow-tie construction of
King and Kohler~\cite{KingKohler2026}; we include its homological and harmonic
properties explicitly.

\subsection{The single-qubit bow-tie}
\label{subsec:single-bowtie}

\begin{definition}[Single-qubit bow-tie]
\label{def:single-bowtie}
Let $B$ be the finite simple graph with vertex set \(V(B):=\{1,2,3,4,5,6,7\}\) and edge set
\[
  E(B)
  :=
  \bigl\{
    \{1,7\},\{7,3\},\{3,6\},\{6,1\},
    \{1,4\},\{4,2\},\{2,5\},\{5,1\}
  \bigr\}.
\]
The graph consists of two four-cycles sharing the single vertex $1$.
Its clique complex is denoted \(X_1:=\Cl(B)\).

Orient the two four-cycles as
\[
  z_0
  :=
  [1,7]+[7,3]+[3,6]+[6,1]
\]
and
\[
  z_1
  :=
  [1,4]+[4,2]+[2,5]+[5,1].
\]
\end{definition}

\begin{figure}[htbp]
\centering
\begin{tikzpicture}[
  scale=1.05,
  every node/.style={
    circle,
    draw=black,
    fill=white,
    inner sep=1.5pt,
    minimum size=6.5mm,
    font=\small
  },
  leftedge/.style={draw=blue!65!black,line width=0.9pt},
  rightedge/.style={draw=orange!80!black,line width=0.9pt}
]
\node (v1) at (0,0) {1};
\node (v7) at (-1.25,1.05) {7};
\node (v3) at (-2.5,0) {3};
\node (v6) at (-1.25,-1.05) {6};
\node (v4) at (1.25,1.05) {4};
\node (v2) at (2.5,0) {2};
\node (v5) at (1.25,-1.05) {5};
\draw[leftedge]  (v1)--(v7)--(v3)--(v6)--(v1);
\draw[rightedge] (v1)--(v4)--(v2)--(v5)--(v1);
\end{tikzpicture}
\caption{The single-qubit bow-tie graph.  The left four-cycle represents
$z_0$ and the right four-cycle represents $z_1$.  The cycles share the
vertex $1$ but have no edge in common.}
\label{fig:bow-tie}
\end{figure}

\begin{proposition}[Homology of one bow-tie]
\label{prop:single-bowtie-homology}
The chains $z_0$ and $z_1$ are cycles, and their homology classes form a
basis of the first homology of $X_1$:
\[
  H_1(X_1;\CC)
  =
  \CC\langle[z_0],[z_1]\rangle
  \cong
  \CC^2.
\]
Moreover,
\[
  \widetilde H_r(X_1;\CC)=0
  \qquad
  (r\neq1).
\]
\end{proposition}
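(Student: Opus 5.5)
The bow-tie graph $B$ has no triangles: its edges form two disjoint four-cycles glued at vertex $1$, and no three of its vertices are pairwise adjacent. So $X_1=\Cl(B)$ is one-dimensional, with $7$ vertices, $8$ edges and no $2$-simplices, and the whole computation reduces to linear algebra on the boundary map $\partial_1:C_1\to C_0$. Since $C_2(X_1;\CC)=0$, we have $B_1=0$ and $H_1(X_1;\CC)=Z_1=\ker\partial_1$. Likewise $\widetilde H_r(X_1;\CC)=0$ for all $r\ge2$, because the chain groups in those degrees vanish.

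First I will check directly that $z_0$ and $z_1$ are cycles. In $\partial z_0$, each vertex $1,7,3,6$ appears once with sign $+$ and once with sign $-$, so $\partial z_0=0$; the same telescoping argument gives $\partial z_1=0$. Next, $z_0$ and $z_1$ are linearly independent in $C_1$, since they are supported on disjoint edge sets. So it remains to show that $\dim Z_1=2$, which gives $Z_1=\CC\langle z_0,z_1\rangle$, and hence the stated basis because $B_1=0$.

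To get the dimension, I will compute the rank of $\partial_1$. The graph $B$ is connected, since every vertex lies on a four-cycle through $1$. The image of the ordinary $\partial_1$ is therefore the $6$-dimensional space of $0$-chains with coefficient sum zero. A spanning tree gives $6$ edges whose boundaries are independent, and every boundary has coefficient sum zero. Rank–nullity then gives $\dim Z_1=8-6=2$. For the reduced degrees $0$ and $-1$, I use \cref{def:augmented-chain-complex}. Since $\widetilde\partial_0$ is surjective onto $\CC[\varnothing]$, we get $\widetilde H_{-1}=0$. Since $B$ is connected, $\widetilde H_0\cong\CC^{c-1}=0$. Equivalently, $\ker\widetilde\partial_0$ is the sum-zero subspace, which equals $\operatorname{im}\partial_1$.

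There is no real obstacle here. The only points needing care are the check that $B$ contains no triangle, which can be done by listing the neighbourhoods of the vertices, and the connectivity and rank count. An Euler-characteristic check confirms the result: $\chi=7-8=-1=1-\dim H_1$.
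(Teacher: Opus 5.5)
Your proposal is correct and matches the paper's proof essentially step for step. The absence of triangles gives $\operatorname{im}\partial_2=0$, a spanning-tree rank count gives $\dim\ker\partial_1=8-6=2$, disjoint edge supports make $z_0,z_1$ independent, and connectedness handles reduced degrees $0$ and $-1$; your Euler-characteristic check is a harmless extra confirmation not in the paper.
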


\begin{proof}
The alternating boundary of each four-cycle vanishes, so \(\partial_1z_0=0, \, \partial_1z_1=0\). The graph $B$ is connected and has seven vertices and eight edges.  The
six edges of any spanning tree have linearly independent boundaries,
while every edge boundary has total vertex coefficient zero.  Therefore \(\rank\partial_1=6\) and hence \(\dim\ker\partial_1=8-6=2\). The cycles $z_0$ and $z_1$ are linearly independent because their edge
supports are disjoint.

The graph contains no triangles, so its clique complex has no
two-simplices.  Consequently, \(\im\partial_2=\{0\}\). It follows that the two cycle classes form a basis of
$H_1(X_1;\CC)$.  Connectedness gives
$\widetilde H_0(X_1;\CC)=0$, and there are no simplices in dimensions
greater than one.
\end{proof}

\subsection{The \texorpdfstring{$N$}{N}-qubit base complex}
\label{subsec:n-qubit-base}

\begin{definition}[$N$-qubit bow-tie base]
\label{def:bowtie-base}
Let $N\geq1$, and let \(B_1,\ldots,B_N\) be labelled copies of $B$ on pairwise disjoint vertex sets.  Define the
graph and its clique complex by
\[
  G_N
  :=
  B_1*\cdots*B_N,
  \qquad
  X_N
  :=
  \Cl(G_N).
\]
By \cref{lem:clique-complex-join}, \(X_N = X_1^{(1)}*\cdots*X_1^{(N)}\), where $X_1^{(r)}:=\Cl(B_r)$.

The \emph{encoded degree} is \(q_N:=2N-1\). For $i\in\{0,1\}$, let $z_i^{(r)}$ denote the copy of $z_i$ in the
$r$-th factor.  For a bit string \(b=b_1\cdots b_N\in\{0,1\}^N\), define the join cycle
\[
  z_b
  :=
  z_{b_1}^{(1)}*\cdots*z_{b_N}^{(N)}
  \in
  C_{q_N}(X_N;\CC).
\]
\end{definition}

\begin{proposition}[Homology of the $N$-qubit base]
\label{prop:bowtie-base-homology}
The reduced homology of $X_N$ is concentrated in degree $q_N$:
\[
  \widetilde H_r(X_N;\CC)=0
  \qquad
  (r\neq q_N).
\]
In degree $q_N$,
\[
  H_{q_N}(X_N;\CC)
  =
  \CC\langle[z_b]:b\in\{0,1\}^N\rangle
\]
and therefore \(\dim_{\CC}H_{q_N}(X_N;\CC)=2^N\).
\end{proposition}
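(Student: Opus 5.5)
We argue by induction on $N$, using the join Künneth formula (\cref{thm:join-kunneth}) together with \cref{prop:single-bowtie-homology}. For $N=1$ the statement is exactly \cref{prop:single-bowtie-homology}, since $q_1=1$ and $z_b$ for $b\in\{0,1\}$ is just $z_0$ or $z_1$.

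For the inductive step, write $X_N=X_{N-1}*X_1^{(N)}$. By \cref{thm:join-kunneth},
\[
  \widetilde H_r(X_N;\CC)\cong\bigoplus_{a+b=r-1}\widetilde H_a(X_{N-1};\CC)\otimes\widetilde H_b(X_1^{(N)};\CC).
\]
By induction the first factor vanishes unless $a=q_{N-1}=2N-3$, and the second vanishes unless $b=1$. Hence only $r=2N-3+1+1=2N-1=q_N$ survives. In that degree the group has dimension $2^{N-1}\cdot2=2^N$. For $N\ge1$ we have $q_N\ge1$, so reduced and ordinary homology agree in degree $q_N$.

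It remains to identify the basis. For this we use the chain-level join formula
\[
  \partial(x*y)=(\partial x)*y+(-1)^{a+1}x*(\partial y).
\]
It implies that a join of cycles is a cycle, so each $z_b$ is a cycle. It also implies that the Künneth isomorphism is induced by $[x]\otimes[y]\mapsto[x*y]$; this is the form in which the cross product in \cite[Theorem~3B.5]{Hatcher} is realized simplicially. For $N\ge2$ none of the factor cycles live in augmented degree $-1$, so the ordinary join of chains agrees with the augmented one. Under this identification, the inductive basis $\{[z_{b'}]\}_{b'\in\{0,1\}^{N-1}}$ of $H_{q_{N-1}}(X_{N-1})$, tensored with $\{[z_0^{(N)}],[z_1^{(N)}]\}$, maps to $\{[z_{b'}*z_i^{(N)}]\}=\{[z_b]\}$.

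The main point to verify is that the Künneth map is indeed induced by the join of cycles. If we want to avoid relying on this from the citation, we can argue directly for linear independence. First note that $X_N$ has no simplices of dimension above $2N-1$: a clique contains at most two vertices from each bow-tie factor, since $B$ is triangle-free. Therefore $B_{q_N}=\operatorname{im}\partial_{q_N+1}=0$, and $H_{q_N}=Z_{q_N}$. The $z_b$ have pairwise disjoint supports, because they use different edge sets in at least one factor, so they are linearly independent as chains. This gives $2^N$ independent classes in a space of dimension $2^N$, so they form a basis.
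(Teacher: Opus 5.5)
Your proof is correct and follows the paper's route: repeated application of the join K\"unneth formula to the bow-tie factors, the degree count $q_N=2N-1$, and the identification of the tensor-product basis with the join classes $[z_b]$. Your supplementary argument is a worthwhile addition: there are no simplices above dimension $q_N$ because $B$ is triangle-free, so $H_{q_N}(X_N;\CC)=Z_{q_N}(X_N;\CC)$, and the $z_b$ are nonzero cycles with pairwise disjoint supports. This makes the basis identification self-contained, whereas the paper simply asserts that the K\"unneth tensor basis corresponds to the join classes.
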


\begin{proof}
By \cref{prop:single-bowtie-homology}, every factor $X_1^{(r)}$ has
reduced homology only in degree one, where it is two-dimensional with
basis $[z_0^{(r)}],[z_1^{(r)}]$.  Applying the join K\"unneth formula of
\cref{thm:join-kunneth} repeatedly, the only nonzero reduced-homology
term of the $N$-fold join occurs in degree \(\underbrace{1+\cdots+1}_{N\text{ terms}}+(N-1)=2N-1=q_N\). In that degree,
\[
  \widetilde H_{q_N}(X_N;\CC)
  \cong
  \bigotimes_{r=1}^N
  \widetilde H_1(X_1^{(r)};\CC).
\]
The tensor-product basis corresponds exactly to the join classes \([z_{b_1}^{(1)}*\cdots*z_{b_N}^{(N)}] =[z_b], \, b\in\{0,1\}^N\). Since $q_N\geq1$, reduced and ordinary homology agree in this degree.
\end{proof}

\begin{definition}[Homological encoding]
\label{def:homological-encoding}
The \emph{homological encoding map} is the linear isomorphism
\[
  \mathfrak e_N:
  H_{q_N}(X_N;\CC)
  \longrightarrow
  \Qenc_N=(\CC^2)^{\otimes N}
\]
defined on the basis of \cref{prop:bowtie-base-homology} by
\[
  \mathfrak e_N([z_b])
  :=
  |b\rangle,
  \qquad
  b\in\{0,1\}^N.
\]
\end{definition}

\subsection{The harmonic encoding}
\label{subsec:harmonic-encoding}

For the spectral analysis, normalize the harmonic representatives of the
encoded classes to obtain an isometric map into the cochain Hilbert space.

Using the oriented-simplex basis to identify edge chains with edge
cochains, for $z_0,z_1$ from \cref{def:single-bowtie}, define \(h_0:=\frac12z_0, \, h_1:=\frac12z_1\).

\begin{lemma}[Normalized harmonic basis of one bow-tie]
\label{lem:single-bowtie-harmonic-basis}
The cochains $h_0$ and $h_1$ form an orthonormal basis of \(\mathscr H^1(X_1) = \ker\Delta_1(X_1)\).
\end{lemma}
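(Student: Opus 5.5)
The proof reduces to three checks on the cochains \(h_0,h_1\): each lies in \(\ker\Delta_1(X_1)\), the pair is orthonormal, and the harmonic space has dimension exactly two. We work with the unweighted inner product of \cref{subsec:hodge-theory}, in which the dual oriented-edge basis is orthonormal. By \cref{lem:hodge-energy}, \(\ker\Delta_1=\ker D_1\cap\ker D_0^*\), so harmonicity splits into a coboundary condition and a co-divergence condition.

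For the coboundary condition, we use that \(B\) contains no triangles, as observed in the proof of \cref{prop:single-bowtie-homology}. Hence \(X_1\) has no two-simplices, \(C^2(X_1;\CC)=0\), and \(D_1=0\). Thus every one-cochain lies in \(\ker D_1\), and the up-Laplacian vanishes.

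For the co-divergence condition, we use the identification \(D_0^*=\partial_1\) under the simplex-basis correspondence between chains and cochains. This gives
\[
  D_0^*h_i=\tfrac12\,\partial_1 z_i=0
\]
by the first assertion of \cref{prop:single-bowtie-homology}. Concretely, at each vertex of the four-cycle the incoming and outgoing oriented edges contribute opposite signs. At the shared vertex \(1\), each cycle separately contributes \(+1\) from its outgoing edge and \(-1\) from its incoming edge. Therefore \(h_0,h_1\in\mathscr H^1(X_1)\).

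For orthonormality, each \(z_i\) is a sum of four distinct oriented edges with coefficients \(\pm1\). So \(\|z_i\|^2=4\) and \(\|h_i\|=1\). The edge supports of \(z_0\) and \(z_1\) are disjoint, which gives \(\langle h_0,h_1\rangle=0\).

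Finally, \(\dim\mathscr H^1(X_1)=\dim H_1(X_1;\CC)=2\) by \cref{cor:homology-zero-eigenvalues} together with \cref{prop:single-bowtie-homology}. Two orthonormal vectors in a two-dimensional space form a basis of it.

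There is no genuine obstacle in this argument. The only point needing care is the sign bookkeeping in \(D_0^*\): the identification \(D_0^*=\partial_1\) holds for the unweighted inner product used here. If the lemma were later applied with vertex weights, it would have to be rechecked, since \(D_0^{*,w}\) carries weight factors. The bow-tie base vertices all have weight \(1\), so the statement as posed is unaffected.
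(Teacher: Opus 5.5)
Your proposal is correct and matches the paper's proof essentially step for step. It uses the same three checks: $D_1=0$ because $X_1$ has no two-simplices; $D_0^*h_i=\tfrac12\partial_1 z_i=0$ under the unweighted chain-cochain identification; and orthonormality from $\|z_i\|^2=4$ and disjoint edge supports. It then concludes with $\dim\mathscr H^1(X_1)=\dim H_1(X_1;\CC)=2$ via finite Hodge theory and \cref{prop:single-bowtie-homology}.
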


\begin{proof}
Because $X_1$ has no two-simplices, \(D_1h_i=0\). Under the unweighted chain-cochain identification, \(D_0^*=\partial_1\). Since $z_i$ is a cycle, \(D_0^*h_i = \frac12\partial_1z_i =0\). The Hodge energy identity therefore gives \(\Delta_1h_i=0\).

Each $z_i$ is the sum of four orthonormal oriented-edge vectors, so \(\|z_i\|^2=4 \,\text{and hence}\, \|h_i\|=1\). Their edge supports are disjoint, and therefore \(\langle h_0,h_1\rangle=0\). Finally, the combinatorial Hodge theorem and
\cref{prop:single-bowtie-homology} give \(\dim\mathscr H^1(X_1) = \dim H_1(X_1;\CC) =2\). Thus $h_0,h_1$ form the asserted orthonormal basis.
\end{proof}

For $b=b_1\cdots b_N\in\{0,1\}^N$, define
\[
  h_b
  :=
  h_{b_1}^{(1)}*\cdots*h_{b_N}^{(N)}
  \in
  C^{q_N}(X_N;\CC),
\]
where the join is understood through the chain-cochain simplex-basis
identification.

\begin{proposition}[Encoded harmonic basis]
\label{prop:encoded-harmonic-basis}
The cochains \(\{h_b:b\in\{0,1\}^N\}\) form an orthonormal basis of
\[
  \Eenc_N
  :=
  \mathscr H^{q_N}(X_N)
  =
  \ker\Delta_{q_N}(X_N).
\]
Moreover, under the simplex-basis identification, \(h_b=2^{-N}z_b\).
\end{proposition}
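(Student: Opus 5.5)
The plan is to reduce everything to the single-factor statement \cref{lem:single-bowtie-harmonic-basis} via the join decomposition of \cref{prop:weighted-join-laplacian}, here in the unweighted case (all weights $1$, so the product-weight hypothesis holds trivially). First, the identity $h_b=2^{-N}z_b$ is immediate from the definitions. Since $h_i=\tfrac12 z_i$ and the oriented join $x*y$ is bilinear in its arguments, we get
\[
  h_b=\bigl(\tfrac12\bigr)^N z_{b_1}^{(1)}*\cdots*z_{b_N}^{(N)}=2^{-N}z_b .
\]
Here the join on cochains is transported through the simplex-basis identification, which commutes with the concatenation of simplices.

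Next, orthonormality. Under the decomposition
\[
  \widetilde C^{q_N}(X_N;\CC)\cong\bigoplus_{a_1+\cdots+a_N+(N-1)=q_N}\widetilde C^{a_1}\otimes\cdots\otimes\widetilde C^{a_N},
\]
obtained by iterating the two-factor identification, the cochain $h_b$ lies in the summand with all $a_r=1$ and equals the elementary tensor $h_{b_1}\otimes\cdots\otimes h_{b_N}$. The identification is unitary: distinct oriented join simplices are orthonormal, and a join simplex $\sigma_1\cup\cdots\cup\sigma_N$ determines its factors. Hence
\[
  \langle h_b,h_{b'}\rangle=\prod_r\langle h_{b_r},h_{b'_r}\rangle=\delta_{bb'},
\]
by \cref{lem:single-bowtie-harmonic-basis}. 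The only care needed is orientation signs. Fixing the vertex order with each factor preceding the next makes the concatenation sign-free on basis elements, so the inner product factorizes exactly.

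Then harmonicity. Iterating \cref{prop:weighted-join-laplacian}, the augmented Laplacian on this summand equals $\sum_r I\otimes\cdots\otimes\widetilde\Delta_{X_1^{(r)},1}\otimes\cdots\otimes I$. In degree $1\ge1$ the augmented and ordinary Laplacians of $X_1$ agree, because augmentation only alters the degree $-1$ and $0$ pieces. So each term annihilates $h_b$, and $\widetilde\Delta_{q_N}h_b=0$. Since $q_N\ge1$, the augmented Laplacian in degree $q_N$ coincides with the ordinary $\Delta_{q_N}(X_N)$. This last point should be checked: the down-part $D_{q_N-1}D_{q_N-1}^*$ involves only degrees $q_N-1\ge0$ and $q_N$, and the augmented $D_{q_N-1}$ equals the ordinary one whenever $q_N-1\ge0$. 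Thus $h_b\in\Eenc_N$.

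Finally, spanning. By \cref{prop:weights-preserve-cohomology} (or \cref{cor:homology-zero-eigenvalues}) together with \cref{prop:bowtie-base-homology},
\[
  \dim\Eenc_N=\dim H_{q_N}(X_N;\CC)=2^N.
\]
The $2^N$ orthonormal vectors $h_b$ are linearly independent, so they form a basis. The main obstacle, mild as it is, is justifying that the iterated join identification is unitary with consistent signs and that augmented and ordinary Laplacians agree in degree $q_N$. If sign bookkeeping becomes awkward, a direct alternative is available. Orthonormality follows from the fact that the $z_b$ have pairwise disjoint supports consisting of $4^N$ simplices with coefficients $\pm1$. Harmonicity follows because $z_b$ is a cycle by the join Leibniz rule, giving $D^*h_b=0$, and because $D h_b=0$. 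The latter holds since any coface $\tau\supset\sigma$ of a simplex $\sigma$ in the support of $z_b$ adds a vertex in some factor $r$, and the $\Cl(B_r)$ part is then a triangle, which does not exist.
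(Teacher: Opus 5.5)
Your proposal is correct and follows essentially the same route as the paper: harmonicity from iterating the join-Laplacian formula with unit weights, orthonormality from the factorized tensor inner product, spanning from the Hodge dimension count $\dim\Eenc_N=\dim H_{q_N}(X_N;\CC)=2^N$, and $h_b=2^{-N}z_b$ from multilinearity of the join. Your explicit check that the augmented and ordinary Laplacians agree in degree $q_N\ge1$ spells out a point the paper leaves implicit, and your direct fallback is also valid: $X_N$ has no $(q_N+1)$-simplices, and $\partial z_b=0$ by the join Leibniz rule.
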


\begin{proof}
By \cref{lem:single-bowtie-harmonic-basis}, each factor $h_{b_r}^{(r)}$
is harmonic.  Repeated application of the join-Laplacian formula in
\cref{prop:weighted-join-laplacian}, with all vertex weights equal to one,
gives \(\Delta_{q_N}(X_N)h_b=0\). The tensor-product inner product gives
\[
  \langle h_b,h_c\rangle
  =
  \prod_{r=1}^N
  \langle h_{b_r}^{(r)},h_{c_r}^{(r)}\rangle
  =
  \delta_{bc}.
\]
There are $2^N$ such cochains.  By the combinatorial Hodge theorem and
\cref{prop:bowtie-base-homology},
\[
  \dim\mathscr H^{q_N}(X_N)
  =
  \dim H_{q_N}(X_N;\CC)
  =2^N.
\]
They therefore form an orthonormal basis.  Finally, multilinearity of the
join and $h_i=z_i/2$ give \(h_b=2^{-N}z_b\).
\end{proof}

\begin{definition}[Encoded harmonic isometry and projection]
\label{def:encoded-harmonic}
Define \(J_N: \Qenc_N \longrightarrow \Eenc_N\) on computational-basis vectors by \(J_N|b\rangle:=h_b\). By \cref{prop:encoded-harmonic-basis},
\[
  \langle J_N\psi,J_N\varphi\rangle
  =
  \langle\psi,\varphi\rangle
  \qquad
  (\psi,\varphi\in\Qenc_N),
\]
and $J_N$ is a unitary isomorphism from $\Qenc_N$ onto $\Eenc_N$.

The \emph{encoded harmonic projection} is
\[
  Q_N
  :=
  J_NJ_N^*
  =
  \Proj_{\Eenc_N}.
\]
It is the orthogonal projection from $C^{q_N}(X_N;\CC)$ onto the encoded
harmonic subspace.
\end{definition}

\subsection{Target states and target cycles}
\label{subsec:target-cycles}

\begin{definition}[Local state with integer coordinates and target cycle]
\label{def:target-cycle}
Let $m\geq1$, and let
\[
  |\phi\rangle
  =
  \frac1{\mathcal Z}
  \sum_{b\in\{0,1\}^m}a_b|b\rangle,
  \qquad
  a_b\in\ZZ,
\]
where the coefficients are not all zero and \(\mathcal Z := \left(\sum_ba_b^2\right)^{1/2}\). The associated \emph{unnormalized target cycle} is
\[
  j_\phi
  :=
  \sum_{b\in\{0,1\}^m}a_bz_b
  \in
  Z_{q_m}(X_m;\CC).
\]

\end{definition}

Under the homological encoding,
\[
  \mathfrak e_m([j_\phi])
  =
  \sum_ba_b|b\rangle
  =
  \mathcal Z|\phi\rangle.
\]
Consequently, the homology line \(\CC[j_\phi] \subseteq H_{q_m}(X_m;\CC)\) corresponds exactly to the quantum-state line \(\CC|\phi\rangle \subseteq \Qenc_m\). Likewise, under the chain-cochain simplex-basis identification,
\[
  J_m|\phi\rangle
  =
  \frac1{\mathcal Z}\sum_ba_bh_b
  =
  \frac{1}{2^m\mathcal Z}j_\phi.
\]
Thus the homological and harmonic encodings represent the same
one-dimensional state space, with different normalizations.

\section{The gap in the hardness proof of King and Kohler}
\label{sec:king-kohler-gap}

The point at issue is the passage from an auxiliary simplicial filling to
the clique complex of the graph produced by the reduction.  The bow-tie
encoding, the use of local fillings and small vertex weights, and the
spectral-sequence strategy are retained here.  The problem is the argument
used to transfer topological information through the noninjective folding
of the local gadgets.

Unless another version is specified, we use the numbering of
\cite{KingKohler2026}.  Lemmas~8.3 and~8.4 provide the preservation
statements used by Lemma~8.5; Theorems~8.7 and~8.8 then apply this framework
to the nontrivially folded entangled gadgets.  Because these conclusions
are the local input for the subsequent spectral and global arguments, the
failure of the preservation statements invalidates the hard-direction proof
as a whole.

\subsection{Folded fillings and the three complexes}

Fix a local projector $P=|\phi\rangle\langle\phi|$ whose encoded range is
represented by a cycle $j$ in the bow-tie base, so that $[j]$ corresponds
to the line $\CC|\phi\rangle$.  The King-Kohler construction begins with
an auxiliary complex $K$, a cycle $c$ on $K$, and a vertex map
$f:V(K)\to V(X)$ whose signed image is intended to be $j$.

There is a minor ambiguity in the original description of this map.  In
Section~8.2 of the arXiv version \cite{KingKohler}, the relation $R$ is
written using vertices of the auxiliary complex $K$, whereas Equation~(9)
defines a map with values in the target cycle $J$.  We interpret the two
vertex sets as separately labelled and $f$ as a map from the former to the
latter.  This resolves the notation but not the preservation issue below.

The auxiliary complex is thickened to a prism and its upper rim is coned
off.  Before folding there is therefore a chain $\Gamma$ with
$\partial\Gamma=c$.  If $q$ is the induced map on the thickened-and-coned
complex and the lower-rim map satisfies the signed identity $f_\#c=j$, then
\[
  \partial(q_\#\Gamma)=q_\#(\partial\Gamma)=f_\#c=j.
\]
Thus the prescribed cycle remains a boundary after folding.

The reduction, however, involves three distinct simplicial complexes: the
unfolded filling, its folded simplicial image, and the clique complex of the
final graph.  The first passage is induced by a generally noninjective
vertex map; the second takes the full clique completion after gluing to the
base.  Neither passage preserves the required topology automatically.

\subsection{Failure of the preservation claims}

A four-vertex path already shows both failures.  Let $P$ have edges
$[v_0v_1],[v_1v_2],[v_2v_3]$ and their faces, and identify the endpoints by
$q(v_0)=q(v_3)=x$.  The image has edges
$[xv_1],[v_1v_2],[v_2x]$, so
\[
  H_1(q(P);\CC)\cong\CC,
\]
even though $P$ has no positive-dimensional homology.  Vertex
identification can therefore create a new cycle.

The same image is not flag.  Its one-skeleton is the triangle on
$\{x,v_1,v_2\}$, but the simplex $[xv_1v_2]$ is absent.  Hence
\[
  q(P)\ne\Cl(\Sk_1q(P)).
\]
This is the obstruction to the general acyclicity and clique-complex
preservation assertions in Lemmas~8.3 and~8.4.

A separate completion issue appears after the gadget is glued to the base.
For graphs $G_1,G_2$ with common vertices one need not have
$\Cl(G_1\cup G_2)=\Cl(G_1)\cup\Cl(G_2)$.  For example, if a private vertex
$p$ is adjacent in the gadget graph to base vertices $b_1,b_2$, while the
edge $[b_1b_2]$ belongs to the base graph, then the clique complex of the
union contains the simplex $[pb_1b_2]$ whether or not it belonged to either
simplicial piece separately.

What does survive folding is the chain-level statement used above: if
$z=\partial u$, then $q_\#z=\partial(q_\#u)$.  Applied to the prescribed
filling, this proves that $[j]$ maps to zero.  It does not determine the
whole kernel \(\ker\bigl(H_*(X)\to H_*(Y)\bigr)\), and in particular does
not prove that this kernel is exactly $\CC[j]$.  The latter statement is
what is needed to identify the surviving encoded subspace.  In the present
proof it follows from the long exact sequence of the relative pair after
the completed local clique complex has been analyzed directly.

\subsection{Signed chain data}

For a nonbasis target state
$|\phi\rangle=\sum_ba_b|b\rangle$, the folding map must realize the signed
chain identity
\[
  f_\#c=\sum_ba_bz_b.
\]
Matching only the unoriented support is insufficient: permutation of the
image vertices changes orientation signs, and a simplex whose image has a
repeated vertex contributes zero to the normalized chain map.  The present
construction therefore records and verifies the complete coefficient-wise
identities for every catalogue gadget, including the orientation reversals
needed for the two Pythagorean gadgets.  The explicit signed calculations
are recorded in Appendix~\ref{app:diagnosis}.

The signed identities below concern the explicit maps of
\cref{constr:local-gadget}; their graph-builder provenance is recorded in
\cref{app:reproducibility}.

\subsection{Dependence of the hard-direction argument}

Lemma~8.5 uses the preservation claims of Lemmas~8.3 and~8.4 to analyze the
glued gadget, and Theorems~8.7 and~8.8 invoke that result for the entangled
gadgets.  Because these conclusions are the local input for the subsequent
spectral and global arguments, their failure invalidates the hard-direction
proof as a whole.

The local topology then supplies the cochain bases and filtered complex
used in the spectral analysis.  If the completed local complex has
additional simplices or homology, the killed base classes, the
fixed-depth kernels and images, the later spectral-sequence pages, the long
differential, and the small Hodge eigenvalues can all change.  These local
spectral statements are subsequently used in padding and in the global
many-gadget estimate.

Consequently, the King-Kohler argument does not establish the hard
direction.  This conclusion concerns the proof, not the truth of the
hardness theorem: the present article proves the hard direction after
replacing the unsupported preservation step by calculations on the output
clique complexes.  The computational-basis gadgets can be checked directly,
and the independent $\mathsf{QMA}$-containment argument is unaffected.

\subsection{Requirements for the replacement proof}
\label{subsec:king-kohler-gap-obligations}

For each local catalogue gadget it remains to establish, on the clique
complex of the graph produced by the reduction:
\begin{enumerate}[label=(\arabic*)]
\item the intended target chain is filled with the correct orientation;
\item the kernel of the map from encoded base homology is exactly the
one-dimensional projector range;
\item the full flag completion contains the displayed base as the required
induced subcomplex, with distinct private occurrences isolated for the
global construction;
\item the filtered cochain complex has the required intermediate vanishings
and a unique surviving private class;
\item the long differential gives the intended rank-one low-energy penalty
and no additional local low-energy modes;
\item these statements persist under relabelling, placement, padding, and
the combination of polynomially many gadget occurrences.
\end{enumerate}

Condition C1 supplies (1); C0-C2 and the long exact sequence give (2)-(3);
C3 and C4 give the finite filtered input for (4); the filtered-Hodge
analysis proves (5); and the placement, join-padding, and global estimates
establish (6).
\section{The local gadget construction}
\label{sec:local-gadget}

Recall that the catalogue consists of the local rank-one projections \(\Pi_a=|\phi^{(a)}\rangle\langle\phi^{(a)}|\) with localities \(m_a\). Under the bow-tie encoding, \(\CC\phi^{(a)}\) is represented in \(X_{m_a}\) by a cycle \(j_a\in Z_{q_{m_a}}(X_{m_a};\ZZ)\).  The local
gadget is required to contain a chain \(\Gamma_a\) with \(\partial\Gamma_a=j_a\); Section~\ref{sec:topological-correctness} shows
that no other encoded line is removed.

We use the thickening, coning, and folding procedure of King and Kohler
\cite[Section~8.2]{KingKohler}, with the source-cycle data and signed
maps specified below.

Start with a finite source complex \(K_a\), a cycle \(c_a\), and a
simplicial map \(f_a:K_a\to X_{m_a}\) with \((f_a)_\#c_a=j_a\).  Join two
copies of \(K_a\) by a triangulated prism, cone off the top copy, and attach
the bottom copy to \(X_{m_a}\).  The definitions below fix the source data,
orientations, folding map, and final clique complex.

\subsection{The source complex and its cycle}
\label{subsec:source-cycle-data}

\begin{definition}[Oriented triangulated sphere]
\label{def:oriented-triangulated-sphere}
Let \(d\ge1\), and let \(K\) be a finite simplicial complex whose realization is homeomorphic to \(S^d\) (as in \cref{def:geometric-realization}).  An orientation of \(K\) is a coherent orientation of its \(d\)-simplices: the two top-dimensional simplices incident to each \((d-1)\)-simplex induce opposite orientations on their common face.  The
oriented sum of all top-dimensional simplices is called the \emph{fundamental
cycle}:
\[
  [K]:=\sum_{\sigma\in K_d}[\sigma]\in C_d(K;\ZZ),
\]
where every top-dimensional simplex carries its chosen orientation.  Then \(\partial[K]=0, \, H_d(K;\ZZ)=\ZZ\langle[K]\rangle\).
\end{definition}

\begin{definition}[Source-cycle data]
\label{def:source-cycle-data}
For a catalogue index \(a\), put \(m:=m_a\) and \(d:=q_m=2m-1\), and let \(j_a\in Z_d(X_m;\ZZ)\) be the target cycle associated with \(\phi^{(a)}\).  The source-cycle data are \((K_a,c_a,f_a,\prec_a)\), where:
\begin{enumerate}[label=(\roman*)]
\item \(K_a\) is a finite simplicial complex;

\item \(\prec_a\) is a total order on \(V(K_a)\), which will be used below
to triangulate the prism over \(K_a\);

\item \(c_a\in C_d(K_a;\QQ)\) is a nonzero rational cycle: \(\partial c_a=0\); and

\item \(f_a:K_a\to X_m\) is a simplicial map satisfying \((f_a)_\#c_a=j_a\).
\end{enumerate}

For each catalogue type \(a\), we fix these finite data and verify their
identities by the certificates in \cref{sec:certificates}.  The map \((f_a)_\#\) uses the orientation and degeneracy convention of
\cref{def:normalized-chain-map}.  The construction applies to arbitrary
finite source complexes, including those whose maximal simplices have
different dimensions; sphere representatives are an optional choice.
\end{definition}

\begin{proposition}[Source-sphere representatives]
\label{prop:source-sphere-existence}
Let \(m\ge1\), put \(d:=q_m=2m-1\), and let \(X_m\) be the \(m\)-fold join of bow-tie complexes.  For every cycle \(j\in Z_d(X_m;\ZZ)\) with integer coefficients,
there exist a finite oriented triangulation \(K\) of \(S^d\) and a
simplicial map \(f:K\to X_m\) with \(f_\#[K]=j\). After \(K\) has been chosen, its vertex set may be equipped with any
total order.  Thus a source sphere is one possible choice of source-cycle
data for every catalogue target \(j_a\); the explicit certified choices
above need not be spheres.
\end{proposition}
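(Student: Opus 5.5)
We reduce to the basis cycles and then assemble a sphere by connected sums. Write \(j=\sum_b a_b z_b\) with \(a_b\in\ZZ\); this is possible because \(Z_d(X_m;\ZZ)\) modulo boundaries is spanned by the \([z_b]\). However, \(j\) is a cycle, not necessarily equal to \(\sum a_bz_b\) on the nose. The first task is therefore to record that \(X_m\) has dimension \(d=2m-1\): each bow-tie has no triangles, so a top simplex of the join is a union of \(m\) edges. Hence \(B_d(X_m)=0\) and \(Z_d(X_m;\ZZ)=H_d(X_m;\ZZ)\). By \cref{prop:bowtie-base-homology} (over \(\CC\), but the \(z_b\) have disjoint supports with coefficients \(\pm1\), so integrality follows), \(j=\sum_b a_b z_b\) exactly as chains with \(a_b\in\ZZ\).

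\textbf{Basic spheres.} Each four-cycle \(z_i\) is the image of the boundary of a square, i.e.\ the identity map of a triangulated \(S^1\) with four vertices. The join of \(m\) such circles is the join of \(m\) copies of \(S^1\), a triangulated \(S^{2m-1}=S^d\). The join of the inclusions gives an embedding \(f_b:K_b\to X_m\) with \((f_b)_\#[K_b]=z_b\), where \([K_b]\) is the join of the fundamental cycles. This uses the join boundary formula of \cref{subsec:joins} to see that the join of fundamental cycles is a coherent fundamental cycle. A coefficient \(-1\) is realized by reversing the orientation, and coefficient \(0\) terms are dropped. A coefficient \(a\) with \(|a|>1\) is handled by taking \(|a|\) copies.

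\textbf{Connected sums.} Next we combine finitely many oriented spheres \(K_1,\dots,K_r\) with maps \(f_i\) into a single one. If \(j=0\) we need some sphere with \(f_\#[K]=0\); take any \(K_b\) mapped constantly to one vertex, which kills every simplex under the normalized chain map.

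For \(r\ge2\), form the connected sum along top simplices. Pick top simplices \(\sigma_1\subset K_1\) and \(\sigma_2\subset K_2\). Here lies the main obstacle: the identification \(\sigma_1\cong\sigma_2\) must be compatible with \(f\), meaning matched vertices must have equal images, since otherwise the glued vertex map is undefined. To get around this, first stellarly subdivide \(\sigma_i\) with a new vertex \(p_i\), and then take a further stellar subdivision of one small simplex \(\tau_i\) containing \(p_i\). Map all new vertices to a single fixed vertex \(x\in X_m\), say vertex \(1\) of the first bow-tie. Every new simplex then contains at least two vertices mapped to \(x\), or else is a small simplex whose vertices all go to \(x\) and one more point, so it is degenerate in the relevant cases.

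More cleanly, subdivide the simplex \(\sigma_i\) as a stellar subdivision twice, so that a top simplex \(\tau_i\) appears all of whose vertices are new and hence map to \(x\). Stellar subdivision preserves the fundamental class. The subdivided map, sending new vertices to \(x\), satisfies \(f'_\#[K_i']=f_\#[K_i]\). To check this, note that the chain map induced on the subdivision agrees with \(f_\#\) composed with the subdivision operator on chains, and all simplices inside \(\sigma_i\) with two vertices mapped to \(x\) vanish; this equality is what needs care. Once it holds, \(\tau_1\) and \(\tau_2\) can be identified by any bijection reversing orientation appropriately, because all their vertices map to \(x\).

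Finally, delete the interiors and glue to get a triangulated \(S^d\) with the combined map. Its fundamental cycle is \([K_1']+[K_2']-\tau_1-\tau_2\), and each \(\tau_i\) maps to \(0\). Hence \(f_\#[K]=f_{1\#}[K_1]+f_{2\#}[K_2]\). Iterating yields \(f_\#[K]=j\). We also need the glued complex to be a simplicial complex, so that no two simplices share a vertex set. This holds because the \(\tau_i\) are interior to subdivided simplices, whose links are boundaries of simplices. The final sentence of the statement, that any total order may be chosen, is immediate, since the data of \cref{def:source-cycle-data} impose no condition on \(\prec\).
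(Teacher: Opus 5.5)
Your strategy (join-of-squares basic spheres, then connected sums) is a genuinely different route from the paper's, and it can be made to work. As written, however, the connected-sum step fails in two places.

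\textbf{The map is not simplicial for an arbitrary choice of \(\sigma_i\).} Write \((r,u)\) for vertex \(u\) of the \(r\)-th bow-tie. After subdividing an \emph{arbitrary} top simplex \(\sigma_i\), sending every new vertex to \(x=(1,1)\) need not give a simplicial map. The vertex \((1,1)\) is not adjacent to \((1,2)\) or \((1,3)\). Suppose the first-factor edge of \(\sigma_i\subset K_b\) is \([7,3]\). Then the new simplex \(p_i*F\), for a facet \(F\) containing the vertex over \((1,3)\), is sent to a non-clique. More generally, when \(f(\sigma_i)\) is a nondegenerate \(d\)-simplex, simpliciality forces \(x\in f(\sigma_i)\): otherwise \(\{x\}\cup f(\sigma_i)\) would be a \((d+1)\)-clique in the \(d\)-dimensional flag complex \(X_m\). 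Your remark that new simplices have two vertices over \(x\) does not reach these simplices \(p_i*F\), which have a single new vertex. The identity \(f'_\#[K_i']=f_\#[K_i]\), which you flag as needing care, is therefore left unproved.

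\textbf{Two subdivisions are not enough.} Two stellar subdivisions introduce only two new vertices. So for \(d\ge2\), every top simplex inside \(\sigma_i\) still has at least \(d-1\) old vertices, and no \(\tau_i\) with all vertices new exists. You need \(d+1\) nested subdivisions.

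\textbf{Repair.} Choose \(\sigma_i\) to contain a vertex mapped to \(x\). This is possible in every \(K_b\), because \((1,1)\) lies on both \(z_0^{(1)}\) and \(z_1^{(1)}\), and it remains possible after each gluing. Then perform \(d+1\) nested stellar subdivisions. Every new simplex now maps into the closed simplex \(f(\sigma_i)\), so \(f'\) is simplicial. Both \(f'_\#\) of the subdivided \(\sigma_i\) and \(f_\#\sigma_i\) are \(d\)-chains on that simplex with the same boundary \(f_\#\partial\sigma_i\). Such chains are multiples of \(f(\sigma_i)\), or zero if \(f(\sigma_i)\) is degenerate, and \(\partial\) is injective on them for \(d\ge1\); hence the two chains coincide. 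The rest of your argument is fine: the glued complex is simplicial, its realization is \(S^d\) because \(\partial\tau_i\) lies inside a top simplex, and each \(\tau_i\mapsto0\).

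\textbf{Comparison with the paper.} The paper uses the \((2m-2)\)-connectivity of the join and Hurewicz (surjectivity of \(\pi_1\to H_1\) when \(m=1\)) to represent \([j]\) by a map \(S^d\to|X_m|\). It then applies simplicial approximation and uses \(C_{d+1}(X_m;\ZZ)=0\) to upgrade ``homologous'' to ``equal as chains.'' Both arguments rest on this top-dimensionality. Your repaired argument avoids homotopy theory entirely and yields an explicit sphere with \(O\bigl(4^m\sum_b|a_b|\bigr)\) top simplices. That is exactly the size control the paper's subsequent remark says its existence proof does not provide.
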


\begin{proof}
See \cref{app:source-sphere-existence}.
\end{proof}

\subsection{Prism, cone, and folding}
\label{subsec:prism-cone}

The chain operators in this subsection use
\cref{conv:augmented-use}, including in degrees zero and $-1$.

\begin{definition}[Standard simplicial prism]
\label{def:prism}
Let \(K\) be a finite simplicial complex with a fixed total order on its
vertices.  For each \(v\in V(K)\), introduce copies \(v^0:=(v,0)\) and \(v^1:=(v,1)\). The standard simplicial prism \(\Cyl(K)\) is generated by the simplices
\[
  [v_0^0,\ldots,v_i^0,v_i^1,\ldots,v_r^1],
  \qquad 0\leq i\leq r,
\]
for every ordered simplex \([v_0<\cdots<v_r]\) of \(K\), together with all
their faces.  Write \(i_0(v)=v^0, \, i_1(v)=v^1\) for the bottom and top inclusions.  The prism operator is
\[
  \mathsf{Prism}_r[v_0,\ldots,v_r]
  :=
  \sum_{i=0}^{r}(-1)^i
  [v_0^0,\ldots,v_i^0,v_i^1,\ldots,v_r^1].
\]
In degree $-1$, set $\mathsf{Prism}_{-1}[\varnothing]=0$.
\end{definition}
\begin{lemma}[Prism identity]
\label{lem:prism-identity}
For every chain \(c\in C_r(K;\CC)\), \[
  \partial\mathsf{Prism}(c)
  +\mathsf{Prism}(\partial c)
  =(i_1)_\#c-(i_0)_\#c.
\]
\end{lemma}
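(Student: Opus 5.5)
We prove the identity by linearity. Because the total order on \(V(K)\) is fixed, every oriented \(r\)-simplex is \(\pm[v_0,\ldots,v_r]\) with \(v_0<\cdots<v_r\). Both sides of the identity are linear in \(c\). It therefore suffices to check it on a single increasingly ordered simplex \(\sigma=[v_0,\ldots,v_r]\), with \(\mathsf{Prism}\) extended to other orderings by the orientation sign.

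\emph{Low degrees.} These are dispatched first under \cref{conv:augmented-use}.

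For \(r=-1\) both prism terms vanish, and \((i_1)_\#[\varnothing]-(i_0)_\#[\varnothing]=[\varnothing]-[\varnothing]=0\).

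For \(r=0\) we have \(\mathsf{Prism}[v]=[v^0,v^1]\), with augmented boundary \([v^1]-[v^0]\). Also \(\mathsf{Prism}(\partial[v])=\mathsf{Prism}_{-1}[\varnothing]=0\), so the identity holds.

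\emph{General degree.} For \(r\ge1\) we follow the standard computation (as in \cite[Section~2.1]{Hatcher}). Write \(\tau_i=[v_0^0,\ldots,v_i^0,v_i^1,\ldots,v_r^1]\), whose vertices occupy positions \(0,\ldots,r+1\). Then
\[
  \partial\,\mathsf{Prism}(\sigma)=\sum_{i=0}^r\sum_{k=0}^{r+1}(-1)^{i+k}\,\tau_i\setminus(\text{position }k).
\]
We split the faces into two kinds.

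\emph{Doubled-vertex faces.} These have \(k\in\{i,i+1\}\), so one of \(v_i^0,v_i^1\) is deleted. Deleting position \(i\) from \(\tau_i\) and deleting position \(i\) (the vertex \(v_{i-1}^1\)) from \(\tau_{i-1}\) give the same ordered simplex \([v_0^0,\ldots,v_{i-1}^0,v_i^1,\ldots,v_r^1]\). The signs are \((-1)^{2i}=+1\) and \((-1)^{2i-1}=-1\), so these contributions telescope. Only two survive:
\begin{itemize}
\item deleting position \(0\) from \(\tau_0\) gives \(+[v_0^1,\ldots,v_r^1]=(i_1)_\#\sigma\);
\item deleting position \(r+1\) from \(\tau_r\) gives \((-1)^{2r+1}[v_0^0,\ldots,v_r^0]=-(i_0)_\#\sigma\).
\end{itemize}

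\emph{Remaining faces.} These delete a vertex \(v_j\) with \(j\neq i\). Such a face is the \(i\)-th or \((i-1)\)-th prism simplex over the face \(\sigma_j=[v_0,\ldots,\widehat v_j,\ldots,v_r]\), which is still increasingly ordered. That is why a fixed total order makes the prism triangulation compatible with faces. We then compare signs. For \(j>i\), the vertex \(v_j^1\) sits at position \(j+1\), with sign \((-1)^{i+j+1}\). For \(j<i\) it sits at position \(j\), and the index in \(\sigma_j\)'s prism is \(i-1\), again giving sign \((-1)^{i+j+1}\) relative to \((-1)^j(-1)^{i'}\). Collecting terms, these faces sum to \(-\sum_j(-1)^j\mathsf{Prism}(\sigma_j)=-\mathsf{Prism}(\partial\sigma)\).

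\emph{Conclusion and main difficulty.} Combining the two groups yields the identity. The only real obstacle is this sign bookkeeping in the last step: checking that the index shift \(i\mapsto i-1\) for \(j<i\) produces exactly one extra factor of \(-1\). I would verify it by writing out both cases \(j<i\) and \(j>i\) explicitly.
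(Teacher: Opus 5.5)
Your proposal is correct and follows the same route as the paper. Both reduce to one increasingly ordered simplex, expand $\partial\tau_i$, keep the top face of $\tau_0$ and the bottom face of $\tau_r$, and regroup the faces deleting some $v_j$ with $j\neq i$ into $-\mathsf{Prism}(\partial\sigma)$; your explicit telescoping of the doubled-vertex faces and your augmented checks in degrees $-1$ and $0$ fill in steps the paper leaves implicit. One wording fix: for $j<i$ the deleted vertex is $v_j^0$ at position $j$, so the face sign is $(-1)^{i+j}=-(-1)^j(-1)^{i-1}$, which is exactly the extra factor of $-1$ you need.
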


\begin{proof}
It is enough to consider one oriented simplex \(\sigma=[v_0,\ldots,v_r]\). For \(0\le i\le r\), write \(\tau_i := [v_0^0,\ldots,v_i^0,v_i^1,\ldots,v_r^1]\). Then \(\mathsf{Prism}(\sigma) =\sum_{i=0}^r(-1)^i\tau_i\). Expanding the boundary gives
\[
\begin{aligned}
  \partial\mathsf{Prism}(\sigma)
  &=
  \sum_{i=0}^r(-1)^i\partial\tau_i \\
  &=
  (i_1)_\#\sigma-(i_0)_\#\sigma
  -
  \sum_{j=0}^r(-1)^j
  \mathsf{Prism}
  [v_0,\ldots,\widehat{v_j},\ldots,v_r].
\end{aligned}
\]
Indeed, the top face of \(\tau_0\) gives \((i_1)_\#\sigma\), the bottom
face of \(\tau_r\) gives \(-(i_0)_\#\sigma\), and the remaining faces
group into the final sum.  Since \(\partial\sigma = \sum_{j=0}^r(-1)^j [v_0,\ldots,\widehat{v_j},\ldots,v_r]\), we obtain
\[
  \partial\mathsf{Prism}(\sigma)
  =
  (i_1)_\#\sigma-(i_0)_\#\sigma
  -\mathsf{Prism}(\partial\sigma).
\]
Rearranging proves the identity.
\end{proof}

\begin{definition}[Simplicial cone]
\label{def:cone}
For a new vertex \(v_*\), define
\[
  \Cone_{v_*}(K):=K*\{\varnothing,\{v_*\}\}.
\]
The cone operator \(\mathsf C_{v_*}:C_r(K;\CC) \longrightarrow C_{r+1}(\Cone_{v_*}(K);\CC)\) is given by
\[
  \mathsf C_{v_*}[w_0,\ldots,w_r]
  :=[v_*,w_0,\ldots,w_r].
\]
With the augmented convention \(\mathsf C_{v_*}[\varnothing]=[v_*]\).
\end{definition}

\begin{lemma}[Cone identity]
\label{lem:cone-identity}
For every chain \(c\in C_r(K;\CC)\), \[
  \partial\mathsf C_{v_*}(c)
  =c-\mathsf C_{v_*}(\partial c).
\]
\end{lemma}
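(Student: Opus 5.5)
By linearity in \(c\) (both sides are linear and \(\mathsf C_{v_*}\) is defined on basis simplices), it suffices to verify the identity on a single oriented simplex \(\sigma=[w_0,\ldots,w_r]\), including the augmented case \(r=-1\).

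For \(r\ge0\) we expand the boundary of the cone simplex \([v_*,w_0,\ldots,w_r]\) with \cref{def:boundary-operator}, using the augmented convention of \cref{conv:augmented-use}. There are two kinds of faces.
\begin{itemize}
\item The face omitting \(v_*\) sits in position \(0\), so it carries sign \(+1\) and contributes \([w_0,\ldots,w_r]=\sigma\).
\item The face omitting \(w_i\) sits in position \(i+1\), so it carries sign \((-1)^{i+1}\) and contributes \((-1)^{i+1}[v_*,w_0,\ldots,\widehat{w_i},\ldots,w_r]\).
\end{itemize}
Summing the second kind over \(i\) gives
\[
  -\sum_{i=0}^r(-1)^i\,\mathsf C_{v_*}[w_0,\ldots,\widehat{w_i},\ldots,w_r]
  =-\mathsf C_{v_*}(\partial\sigma).
\]
Hence \(\partial\mathsf C_{v_*}(\sigma)=\sigma-\mathsf C_{v_*}(\partial\sigma)\).

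The edge cases need separate attention.

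When \(r=0\), the augmented boundary gives \(\partial[w_0]=[\varnothing]\), and by definition \(\mathsf C_{v_*}[\varnothing]=[v_*]\). We check that \(\partial[v_*,w_0]=[w_0]-[v_*]\), which agrees with the formula.

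When \(r=-1\), the chain is \(c=[\varnothing]\). Then \(\partial\mathsf C_{v_*}[\varnothing]=\partial[v_*]=[\varnothing]\), while \(\partial[\varnothing]=0\). So the right-hand side is \([\varnothing]-0\), as required.

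The only real obstacle is bookkeeping: keeping the sign shift from the prepended vertex \(v_*\) straight, and handling the augmented conventions in degrees \(0\) and \(-1\) consistently. Because the general computation reads the face signs directly from \cref{def:boundary-operator}, orientation conventions on \(\sigma\) are respected automatically.
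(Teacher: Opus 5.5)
Your proposal is correct and follows the paper's own proof: reduce by linearity to one oriented simplex, use the augmented convention $\partial[v_*]=[\varnothing]$ for $r=-1$, and for $r\ge0$ read off the $+1$ sign of the face opposite $v_*$ and the shifted signs $(-1)^{i+1}$ of the remaining faces. Your separate $r=0$ check is consistent with this, though it is already covered by the general computation once $\mathsf C_{v_*}[\varnothing]=[v_*]$ is used.
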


\begin{proof}
For $r=-1$, the identity is
$\partial[v_*]=[\varnothing]$, which is the augmented convention.
For $r\ge0$, it is enough to consider one oriented simplex \(\sigma=[w_0,\ldots,w_r]\).  Then
\[
\begin{aligned}
  \partial\mathsf C_{v_*}(\sigma)
  &=
  \partial[v_*,w_0,\ldots,w_r] \\
  &=
  [w_0,\ldots,w_r]
  -
  \sum_{j=0}^r(-1)^j
  [v_*,w_0,\ldots,\widehat{w_j},\ldots,w_r] \\
  &=
  \sigma-\mathsf C_{v_*}(\partial\sigma).
\end{aligned}
\]
The result for arbitrary chains follows by linearity.
\end{proof}

\begin{center}
\begin{minipage}{0.98\textwidth}
\centering

\begin{tikzpicture}[
  line cap=round,
  line join=round,
  vertex/.style={circle,fill=black,inner sep=1.5pt},
  baseedge/.style={draw=black,line width=0.9pt},
  prismedge/.style={draw=blue!65!black,line width=0.85pt},
  coneedge/.style={draw=orange!75!black,line width=0.9pt},
  vlabel/.style={font=\scriptsize,fill=white,inner sep=1pt},
  paneltitle/.style={font=\small\bfseries}
]


\coordinate (k0) at (-5.8,0);
\coordinate (k1) at (-3.8,0);
\coordinate (k2) at (-4.8,1.75);

\draw[baseedge] (k0)--(k1)--(k2)--cycle;

\foreach \p in {k0,k1,k2}
  \node[vertex] at (\p) {};

\node[vlabel,anchor=north east] at (k0) {$v_0$};
\node[vlabel,anchor=north west] at (k1) {$v_1$};
\node[vlabel,anchor=south] at (k2) {$v_2$};

\node[paneltitle] at (-4.8,2.35)
  {input cycle $K$};

\node[font=\small] at (-4.8,-0.55)
  {fundamental cycle $[K]$};


\coordinate (p0) at (-1.7,0);
\coordinate (p1) at (1.7,0);
\coordinate (p2) at (0,2.65);

\coordinate (q0) at (-0.68,0.65);
\coordinate (q1) at (0.68,0.65);
\coordinate (q2) at (0,1.75);

\fill[blue!5]  (p0)--(p1)--(q1)--cycle;
\fill[blue!10] (p0)--(q1)--(q0)--cycle;

\fill[blue!5]  (p1)--(p2)--(q2)--cycle;
\fill[blue!10] (p1)--(q2)--(q1)--cycle;

\fill[blue!5]  (p0)--(p2)--(q2)--cycle;
\fill[blue!10] (p0)--(q2)--(q0)--cycle;

\draw[baseedge]  (p0)--(p1)--(p2)--cycle;
\draw[prismedge] (q0)--(q1)--(q2)--cycle;

\draw[prismedge]
  (p0)--(q0)
  (p1)--(q1)
  (p2)--(q2);

\draw[prismedge]
  (p0)--(q1)
  (p1)--(q2)
  (p0)--(q2);

\foreach \p in {p0,p1,p2,q0,q1,q2}
  \node[vertex] at (\p) {};

\node[vlabel,anchor=north east] at (p0) {$v_0^0$};
\node[vlabel,anchor=north west] at (p1) {$v_1^0$};
\node[vlabel,anchor=south] at (p2) {$v_2^0$};

\node[vlabel,anchor=north east] at (q0) {$v_0^1$};
\node[vlabel,anchor=north west] at (q1) {$v_1^1$};
\node[vlabel,anchor=south west] at (q2) {$v_2^1$};

\node[paneltitle] at (0,3.35)
  {prism $\Cyl(K)$};

\node[font=\small] at (0,-0.55)
  {$i_0(K)$ outside; $i_1(K)$ inside};


\coordinate (c0) at (4.0,0);
\coordinate (c1) at (6.0,0);
\coordinate (c2) at (5.0,1.75);
\coordinate (ca) at (5.0,0.66);

\fill[orange!8]  (ca)--(c0)--(c1)--cycle;
\fill[orange!14] (ca)--(c1)--(c2)--cycle;
\fill[orange!5]  (ca)--(c2)--(c0)--cycle;

\draw[baseedge] (c0)--(c1)--(c2)--cycle;

\draw[coneedge]
  (ca)--(c0)
  (ca)--(c1)
  (ca)--(c2);

\foreach \p in {c0,c1,c2,ca}
  \node[vertex] at (\p) {};

\node[vlabel,anchor=north east] at (c0) {$v_0$};
\node[vlabel,anchor=north west] at (c1) {$v_1$};
\node[vlabel,anchor=south] at (c2) {$v_2$};
\node[vlabel,anchor=south west] at (ca) {$v_*$};

\node[paneltitle] at (5.0,2.35)
  {cone $\Cone_{v_*}(K)$};

\node[font=\small] at (5.0,-0.55)
  {$v_*$ fills the cycle $[K]$};

\end{tikzpicture}

\refstepcounter{figure}
\label{fig:prism-and-cone-planar}
\smallskip

{\small\textbf{Figure~\thefigure.}
Prism and cone over the triangle boundary, used here as a simplicial
example with source order $v_0<v_1<v_2$.  The planar annulus displays the
prism's six triangles and two boundary copies.  The cone is the filled
triangle with apex at its centre.}

\end{minipage}
\end{center}

For a cycle \(c\), these two pictures have the boundary identities
\[
  \partial\mathsf{Prism}(c)=(i_1)_\#c-(i_0)_\#c,
  \qquad
  \partial\mathsf C_{v_*}(c)=c.
\]
Coning the top copy and subtracting the prism therefore leaves the
bottom copy as boundary.

\begin{construction}[Local thickening-coning-folding gadget]
\label{constr:local-gadget}
Fix a catalogue index \(a\) and the source data of
\cref{def:source-cycle-data}.  Put \(m:=m_a, \, d:=q_m, \, X:=X_m, \, K:=K_a\).

\begin{enumerate}[label=(\roman*)]

\item Form the unfolded prism-cone attachment
\[
  U_a
  :=
  \Cyl(K)\cup_{i_1(K)}
  \Cone_{v_a^*}(i_1(K)),
\]
where \(v_a^*\) is a new apex.  The notation means that the cone is
attached to the prism along their common copy \(i_1(K)\), which is both
the top of the prism and the base of the cone.

\item Fold the bottom copy of \(K\) into \(X\).  More precisely, on the
disjoint union \(V(X)\sqcup V(U_a)\) impose exactly the relations
\[
  i_0(v)\sim f_a(v)
  \qquad
  (v\in V(K)),
\]
and no others.  Let \(q_a\) be the resulting quotient vertex map and let \(\widetilde K_a:=q_a(U_a)\) be the folded simplicial image of \(U_a\), as defined in
\cref{def:image-complex}.

\item Define the local gadget graph and its actual clique complex by
\[
  G_a^{\mathrm{loc}}
  :=
  G_m\cup\Sk_1(\widetilde K_a),
  \qquad
  Y_a^{\mathrm{loc}}
  :=
  \Cl(G_a^{\mathrm{loc}}).
\]
Here \(\Sk_1(\widetilde K_a)\) is the one-skeleton graph defined in
\cref{def:one-skeleton}, and the first formula is a graph union on the
common folded base vertices.  Clique completion may add simplices beyond \(X_m\cup\widetilde K_a\).

\item The prism-cone construction also supplies the filling chain before folding
\[
  \Gamma_a^{\mathrm{raw}}
  :=
  \mathsf C_{v_a^*}\bigl((i_1)_\#c_a\bigr)
  -\mathsf{Prism}(c_a)
  \in C_{d+1}(U_a;\CC).
\]
Its folded image is
\[
  \Gamma_a
  :=
  (q_a)_\#\Gamma_a^{\mathrm{raw}}
  \in C_{d+1}(Y_a^{\mathrm{loc}};\CC).
\]

\end{enumerate}

The private vertex set is \(V_a^{\mathrm{priv}} := V(Y_a^{\mathrm{loc}})\setminus V(X_m)\).

The final object used in the rest of the proof is the pair \(X_m\subseteq Y_a^{\mathrm{loc}} =\Cl(G_a^{\mathrm{loc}})\), together with the filling chain \(\Gamma_a\); the complexes \(U_a\) and \(\widetilde K_a\) and the map \(q_a\) record how this object was
constructed.
\end{construction}


\begin{figure}[htbp]
\centering
\resizebox{\textwidth}{!}{%
\begin{tikzpicture}[
  bv/.style={circle,fill=black,minimum size=3.4pt,inner sep=0pt},
  pv/.style={circle,fill=blue!65!black,minimum size=3.2pt,inner sep=0pt},
  pe/.style={draw=blue!65!black,line width=0.6pt},
  ph/.style={draw=blue!55!black,line width=0.45pt,densely dashed},
  wl/.style={draw=blue!35!black,line width=0.4pt},
  wh/.style={draw=blue!35!black,line width=0.35pt,densely dashed},
  be/.style={draw=black,line width=0.95pt},
  bh/.style={draw=black,line width=0.95pt,densely dashed},
  gl/.style={draw=black!35,line width=0.45pt},
  gd/.style={draw=black!22,line width=0.35pt,densely dotted},
  fl/.style={draw=blue!70!black,line width=0.9pt},
  re/.style={draw=red!80!black,line width=1.3pt},
  ttl/.style={font=\small},
  lab/.style={font=\tiny},
  slab/.style={font=\tiny,text=blue!55!black}
]


\begin{scope}[shift={(0,0)}]
  \node[ttl] at (0,2.45) {1.\ the source data};

  \foreach \k in {0,...,7}{
    \pgfmathsetmacro{\A}{45*\k+22.5}
    \coordinate (s\k) at
      ({-2.35+0.95*cos(\A)},{0.35+0.95*sin(\A)});
  }

  \draw[pe]
    (s0)--(s1)--(s2)--(s3)--(s4)--(s5)--(s6)--(s7)--cycle;

  \foreach \k in {0,...,7}{
    \node[pv] at (s\k){};
  }

  \node[slab,right=1pt]       at (s0) {$\hat x$};
  \node[slab,left=1pt]        at (s4) {$\hat x_1$};
  \node[slab,above right=0pt] at (s1) {$\hat a_3$};
  \node[slab,below right=0pt] at (s7) {$\hat b_3$};

  \node[lab] at (-2.35,-1.05)
    {illustrative source \(K_a\), a triangulated \(S^1\)};

  \coordinate (c)  at ( 1.95, 0.35);
  \coordinate (l1) at ( 1.30, 0.80);
  \coordinate (l2) at ( 0.55, 0.35);
  \coordinate (l3) at ( 1.30,-0.10);
  \coordinate (r1) at ( 2.60,-0.10);
  \coordinate (r2) at ( 3.35, 0.35);
  \coordinate (r3) at ( 2.60, 0.80);

  \draw[re] (c)--(l1)--(l2)--(l3)--(c);
  \draw[re] (c)--(r1)--(r2)--(r3)--(c);

  \foreach \p in {c,l1,l2,l3,r1,r2,r3}{
    \node[bv] at (\p){};
  }

  \node[lab,below=1pt]      at (c)  {$x$};
  \node[lab,above=1pt]      at (l1) {$a_3$};
  \node[lab,below left=0pt] at (l2) {$a_2$};
  \node[lab,below=1pt]      at (l3) {$a_4$};
  \node[lab,below=1pt]      at (r1) {$b_4$};
  \node[lab,right=1pt]      at (r2) {$b_2$};
  \node[lab,above=1pt]      at (r3) {$b_3$};

  \node[lab] at (1.95,-1.05)
    {base \(X_1\), target \(j_a=z_0-z_1\)};

  \draw[-{Stealth[length=4pt]},black]
    (-1.20,0.35)--(0.05,0.35);

  \node[lab,above=1pt] at (-0.48,0.42) {$f_a$};
  \node[lab,below=2pt,align=center] at (-0.48,0.30)
    {$\hat x,\hat x_1\mapsto x$};
\end{scope}

%

\begin{scope}[shift={(8.6,0)}]
  \node[ttl] at (0,2.45) {2.\ the prism};

  \foreach \k in {0,...,7}{
    \pgfmathsetmacro{\A}{45*\k}
    \coordinate (p\k) at ({1.85*cos(\A)},{-0.95+0.52*sin(\A)});
    \coordinate (q\k) at ({1.45*cos(\A)},{ 0.95+0.40*sin(\A)});
  }

  \fill[blue!7]
    (p4)--(p5)--(p6)--(p7)--(p0)--
    (q0)--(q7)--(q6)--(q5)--(q4)--cycle;

  \foreach \k in {0,...,7}{
    \pgfmathsetmacro{\A}{45*\k}
    \ifdim \A pt > 0pt
      \ifdim \A pt < 180pt
        \draw[wh] (p\k)--(q\k);
      \else
        \draw[wl] (p\k)--(q\k);
      \fi
    \else
      \draw[wl] (p\k)--(q\k);
    \fi
  }

  \foreach \a/\b in {0/1,1/2,2/3,4/5,5/6,6/7}{
    \draw[wl] (p\a)--(q\b);
  }
  \draw[wh] (p3)--(q4);
  \draw[wl] (p0)--(q7);

  \draw[be] (q4)--(q5)--(q6)--(q7)--(q0);
  \draw[bh] (q0)--(q1)--(q2)--(q3)--(q4);

  \draw[be] (p4)--(p5)--(p6)--(p7)--(p0);
  \draw[bh] (p0)--(p1)--(p2)--(p3)--(p4);

  \foreach \k in {0,...,7}{
    \node[pv] at (q\k){};
    \node[bv] at (p\k){};
  }

  \node[lab,right=1pt] at (q0) {$i_1(K_a)$};
  \node[lab,right=1pt] at (p0) {$i_0(K_a)$};

  \node[lab,align=center] at (0,-2.05)
  {
    $\partial\mathsf{Prism}(c_a)
      =(i_1)_\#c_a-(i_0)_\#c_a$\\
    both rims are exposed
  };
\end{scope}


\begin{scope}[shift={(0,-6.1)}]
  \node[ttl] at (0,2.75) {3.\ cone on the top copy};

  \foreach \k in {0,...,7}{
    \pgfmathsetmacro{\A}{45*\k}
    \coordinate (p\k) at ({1.85*cos(\A)},{-1.05+0.52*sin(\A)});
    \coordinate (q\k) at ({1.45*cos(\A)},{ 0.75+0.40*sin(\A)});
  }

  \coordinate (ap) at (0,2.15);

  \foreach \a/\b in {0/1,1/2,2/3,3/4,4/5,5/6,6/7,7/0}{
    \fill[orange!13] (q\a)--(q\b)--(ap)--cycle;
  }

  \fill[blue!7]
    (p4)--(p5)--(p6)--(p7)--(p0)--
    (q0)--(q7)--(q6)--(q5)--(q4)--cycle;

  \foreach \k in {0,...,7}{
    \pgfmathsetmacro{\A}{45*\k}
    \ifdim \A pt > 0pt
      \ifdim \A pt < 180pt
        \draw[wh] (p\k)--(q\k);
        \draw[wh] (q\k)--(ap);
      \else
        \draw[wl] (p\k)--(q\k);
        \draw[wl] (q\k)--(ap);
      \fi
    \else
      \draw[wl] (p\k)--(q\k);
      \draw[wl] (q\k)--(ap);
    \fi
  }

  \foreach \a/\b in {0/1,1/2,2/3,4/5,5/6,6/7}{
    \draw[wl] (p\a)--(q\b);
  }
  \draw[wh] (p3)--(q4);
  \draw[wl] (p0)--(q7);

  \draw[pe] (q4)--(q5)--(q6)--(q7)--(q0);
  \draw[ph] (q0)--(q1)--(q2)--(q3)--(q4);

  \draw[be] (p4)--(p5)--(p6)--(p7)--(p0);
  \draw[bh] (p0)--(p1)--(p2)--(p3)--(p4);

  \foreach \k in {0,...,7}{
    \node[pv] at (q\k){};
    \node[bv] at (p\k){};
  }

  \node[pv] at (ap){};
  \node[slab,right=1pt] at (ap) {$v_a^*$};

  \node[lab,align=center] at (0,-2.15)
  {
    $\partial\Gamma_a^{\mathrm{raw}}
      =(i_0)_\#c_a$\\
    only the bottom rim is exposed
  };
\end{scope}


\begin{scope}[shift={(8.6,-6.1)}]
  \node[ttl] at (0,2.75) {4.\ fold the bottom rim onto the base};

  \foreach \k in {0,...,7}{
    \pgfmathsetmacro{\A}{45*\k+22.5}
    \coordinate (q\k) at ({1.55*cos(\A)},{0.80+0.45*sin(\A)});
  }

  \coordinate (ap) at (0,2.15);

  \coordinate (c)  at ( 0.00,-1.45);
  \coordinate (l1) at (-0.72,-1.12);
  \coordinate (l2) at (-1.55,-1.38);
  \coordinate (l3) at (-0.85,-1.78);
  \coordinate (r1) at ( 0.72,-1.78);
  \coordinate (r2) at ( 1.55,-1.38);
  \coordinate (r3) at ( 0.75,-1.12);

  \foreach \a/\b in {0/1,1/2,2/3,3/4,4/5,5/6,6/7,7/0}{
    \fill[orange!13] (q\a)--(q\b)--(ap)--cycle;
  }

  \foreach \k in {0,...,7}{
    \draw[wl] (q\k)--(ap);
  }

  \draw[fl] (q0)--(c);
  \draw[gl] (q1)--(l1);
  \draw[gl] (q2)--(l2);
  \draw[gl] (q3)--(l3);
  \draw[fl] (q4)--(c);
  \draw[gl] (q5)--(r1);
  \draw[gl] (q6)--(r2);
  \draw[gl] (q7)--(r3);

  \draw[gd] (c) --(q1);
  \draw[gd] (l1)--(q2);
  \draw[gd] (l2)--(q3);
  \draw[gd] (l3)--(q4);
  \draw[gd] (c) --(q5);
  \draw[gd] (r1)--(q6);
  \draw[gd] (r2)--(q7);
  \draw[gd] (c) --(q7);

  \draw[pe]
    (q0)--(q1)--(q2)--(q3)--(q4)--(q5)--(q6)--(q7)--cycle;

  \draw[re] (c)--(l1)--(l2)--(l3)--(c);
  \draw[re] (c)--(r1)--(r2)--(r3)--(c);

  \foreach \k in {0,...,7}{
    \node[pv] at (q\k){};
  }

  \node[pv] at (ap){};

  \foreach \p in {c,l1,l2,l3,r1,r2,r3}{
    \node[bv] at (\p){};
  }

  \node[slab,right=1pt]       at (ap) {$v_a^*$};
  \node[slab,above right=0pt] at (q0) {$\hat x$};
  \node[slab,above left=0pt]  at (q4) {$\hat x_1$};
  \node[lab,fill=white,inner sep=0.6pt] at (-0.30,-1.40) {$x$};

  \node[lab,align=center] at (0,-2.55)
  {
    $\partial\Gamma_a=(f_a)_\#c_a=j_a$\\
    $\hat x$ and $\hat x_1$ both land on $x$
  };
\end{scope}

\end{tikzpicture}%
}

\caption{An illustration of the single-qubit example in
\cite[Section~8.3.2]{KingKohler}, drawn in four stages of the
prism-cone-fold construction for the target line \(\CC(|0\rangle-|1\rangle)\).  In the first panel, the two
distinct source vertices \(\hat x\) and \(\hat x_1\) are both sent to the
shared bow-tie vertex \(x\).  The image of the oriented source cycle is
therefore \(z_0-z_1=j_a\).  The second panel joins a bottom and a top copy
of \(K_a\) by a simplicial prism.  In the third panel, the cone fills the
top copy, leaving the bottom copy as the only boundary.  The fourth panel
attaches that bottom boundary to \(X_1\) by \(f_a\).  Consequently, the
resulting chain satisfies \(\partial\Gamma_a=j_a\).  Black vertices belong
to the bow-tie base, blue vertices are private, orange triangles form the
cone, and red edges show the support of the target cycle.}
\label{fig:gadget-steps}
\end{figure}

\subsection{The filling identity}
\label{subsec:filling-identity}

\begin{proposition}[Folded filling identity]
\label{prop:folded-filling}
For every catalogue index \(a\), the chain \(\Gamma_a\) defined in
item~\textup{(iv)} of the preceding construction satisfies
\[
  \Gamma_a
  \in C_{q_{m_a}+1}(Y_a^{\mathrm{loc}};\CC),
  \qquad
  \partial\Gamma_a=j_a.
\]
Consequently, \(\Gamma_a\) determines a relative homology class \([\Gamma_a] \in H_{q_{m_a}+1}(Y_a^{\mathrm{loc}},X_{m_a};\CC)\), and the connecting homomorphism satisfies
\[
  \delta[\Gamma_a]=[j_a]
  \in H_{q_{m_a}}(X_{m_a};\CC).
\]
\end{proposition}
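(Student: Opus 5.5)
The plan is to compute the boundary of \(\Gamma_a^{\mathrm{raw}}\) on the unfolded complex \(U_a\), push it through the normalized chain map of the folding, and then read off the relative class and its connecting image.

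\emph{Step 1: the raw boundary.} In \(U_a\), apply \cref{lem:cone-identity} to the chain \((i_1)_\#c_a\). Since \(i_1\) is a simplicial embedding it commutes with \(\partial\), so \(\partial(i_1)_\#c_a=(i_1)_\#\partial c_a=0\), and hence
\[
\partial\,\mathsf C_{v_a^*}\bigl((i_1)_\#c_a\bigr)=(i_1)_\#c_a .
\]
Next apply \cref{lem:prism-identity} to the cycle \(c_a\). Because \(\mathsf{Prism}(\partial c_a)=0\), this gives \(\partial\mathsf{Prism}(c_a)=(i_1)_\#c_a-(i_0)_\#c_a\). Subtracting, we get
\[
\partial\Gamma_a^{\mathrm{raw}}=(i_0)_\#c_a .
\]
One point needs checking here: \(c_a\) has degree \(d=q_m\ge1\), so the augmented conventions in degrees \(0\) and \(-1\) never enter.

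\emph{Step 2: folding.} The map \(q_a\) sends \(U_a\) simplicially onto \(\widetilde K_a\). This complex sits inside \(Y_a^{\mathrm{loc}}\): every simplex of \(\widetilde K_a\) is a clique in \(\Sk_1(\widetilde K_a)\subseteq G_a^{\mathrm{loc}}\), and every clique of the graph lies in its clique complex. So \(q_a\) may be regarded as a simplicial map \(U_a\to Y_a^{\mathrm{loc}}\). By \cref{def:normalized-chain-map}, \((q_a)_\#\) commutes with \(\partial\), so \(\Gamma_a\in C_{d+1}(Y_a^{\mathrm{loc}};\CC)\) and
\[
\partial\Gamma_a=(q_a)_\#(i_0)_\#c_a .
\]
On vertices, \(q_a\circ i_0\) equals \(f_a\) followed by the inclusion \(X_m\hookrightarrow Y_a^{\mathrm{loc}}\). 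Indeed, the only relations imposed are \(i_0(v)\sim f_a(v)\), and they identify no two base vertices with each other; the base vertices are therefore untouched. Normalized chain maps are functorial, including the vanishing on degenerate simplices: a simplex degenerates under the composite exactly when it degenerates under \(f_a\). Hence \((q_a)_\#(i_0)_\#c_a=(f_a)_\#c_a=j_a\) by \cref{def:source-cycle-data}(iv).

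\emph{Step 3: relative class.} Since \(\partial\Gamma_a=j_a\in C_{d}(X_m;\CC)\), the chain \(\Gamma_a\) is a relative cycle of the pair \((Y_a^{\mathrm{loc}},X_{m_a})\). The explicit formula for the connecting homomorphism in \cref{thm:long-exact-sequence-pair} then gives \(\delta[\Gamma_a]=[\partial\Gamma_a]=[j_a]\).

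The main obstacle is the bookkeeping in Step 2. First, one must confirm that \(X_m\) embeds in \(Y_a^{\mathrm{loc}}\) with its vertices and orientations unchanged, which holds because no relation identifies two base vertices. Second, one must confirm that functoriality of normalized chain maps holds with the sign and degeneracy conventions, so that orientation signs from reordered image vertices agree on both sides. I would verify this second point directly on a single oriented simplex. Everything else is formal.
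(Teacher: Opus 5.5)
Your proposal is correct and follows the paper's proof essentially step for step. The cone and prism identities give \(\partial\Gamma_a^{\mathrm{raw}}=(i_0)_\#c_a\); naturality of the normalized chain map \((q_a)_\#\), together with \(q_a\circ i_0=f_a\), gives \(\partial\Gamma_a=j_a\); and the explicit connecting-map formula finishes. Your additional checks — that the fold identifies no two base vertices, and that normalized chain maps are functorial, including signs and degeneracies — are details the paper leaves implicit.
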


\begin{proof}
Write \(c=c_a\). Since \(\partial c=0\), the cone and prism identities from
\cref{lem:cone-identity,lem:prism-identity} give
\[
  \partial\mathsf C_{v_a^*}\bigl((i_1)_\#c\bigr)
  =(i_1)_\#c,
  \qquad
  \partial\mathsf{Prism}(c)
  =(i_1)_\#c-(i_0)_\#c.
\]
With our orientation conventions, the cone face opposite the apex occurs
with sign \(+1\), while the interior prism faces cancel in pairs.
Therefore
\[
  \partial\Gamma_a^{\mathrm{raw}}=(i_0)_\#c.
\]

Every simplex occurring in \(\Gamma_a^{\mathrm{raw}}\) lies in \(U_a\). After applying \(q_a\), each nondegenerate image spans a clique in \(G_a^{\mathrm{loc}}\), and therefore a simplex of \(Y_a^{\mathrm{loc}}=\Cl(G_a^{\mathrm{loc}})\). In normalized simplicial chains, a simplex whose image has repeated vertices
contributes zero. Hence
\[
  \Gamma_a=(q_a)_\#\Gamma_a^{\mathrm{raw}}
  \in C_{q_{m_a}+1}(Y_a^{\mathrm{loc}};\CC).
\]
Since normalized simplicial chain maps commute with the boundary,
\[
  \partial\Gamma_a
  =(q_a)_\#\partial\Gamma_a^{\mathrm{raw}}
  =(q_ai_0)_\#c
  =(f_a)_\#c
  =j_a.
\]

In particular, \(j_a\) is a cycle contained in \(X_{m_a}\).  Since the
boundary of \(\Gamma_a\) lies entirely in the base, \(\Gamma_a\) is a
relative cycle for the pair \((Y_a^{\mathrm{loc}},X_{m_a})\).  The
connecting homomorphism sends the class of such a relative cycle to the
homology class of its boundary in the base, and therefore
\[
  \delta[\Gamma_a]
  =[\partial\Gamma_a]
  =[j_a].
\]

\end{proof}

The full kernel of the base-inclusion map is determined in
\cref{thm:local-topology}.

\section{Placement, padding, and the global graph}
\label{sec:global-graph}

Section~\ref{sec:local-gadget} constructed a gadget for a catalogue
projection \(\Pi_a\) acting on \(m_a\) abstract qubits, starting from the
base graph \(G_{m_a}=B_1*\cdots*B_{m_a}\) with clique complex \(X_{m_a}\). In \cref{constr:local-gadget}, the local graph is \(G_a^{\mathrm{loc}}=G_{m_a}\cup\Sk_1(\widetilde K_a)\).

Now let \(H=\sum_{j=1}^{t}P_j\) act on \(N\) qubits.  Each term \(P_j\) is a copy of a catalogue projection \(\Pi_{a_j}\) placed on an ordered
tuple \(S_j=(s_{j,1},\ldots,s_{j,m_{a_j}})\) of distinct qubits.  We first
relabel its local gadget to match \(S_j\), then join the gadget with the bow-tie factors outside \(S_j\), and finally combine the resulting graphs over one common \(N\)-qubit
base.  We call these steps \emph{placement}, \emph{padding}, and
\emph{assembly}, respectively.

\subsection{Placing a local gadget}
\label{subsec:placed-local-gadgets}

Fix a catalogue type \(a\) and an ordered tuple \(S=(s_1,\ldots,s_{m_a})\) of distinct elements of \(\{1,\ldots,N\}\). The order matters: the \(r\)-th input qubit of \(\Pi_a\) is placed on
qubit \(s_r\).  Recall from \cref{def:single-bowtie,def:bowtie-base} that
each \(B_r\) is a labelled copy of the seven-vertex graph \(B\).  Write \((r,u)\) for the vertex of \(B_r\) corresponding to \(u\in V(B)=\{1,\ldots,7\}\).  Private vertices are assigned labels distinct from every bow-tie vertex
label.

\begin{definition}[\(S\)-placed local gadget]
\label{def:placed-local-gadget}
Define a vertex map \(\rho_{a,S}\) on \(G_a^{\mathrm{loc}}\) by
\[
  \rho_{a,S}(r,u):=(s_r,u)
  \quad (1\leq r\leq m_a,\ u\in V(B)),
  \qquad
  \rho_{a,S}(p):=p
  \quad (p\text{ private}).
\]
The \emph{\(S\)-placed local gadget} \(G_a^{\mathrm{loc}}(S)\) has
vertex set \(\rho_{a,S}(V(G_a^{\mathrm{loc}}))\) and edge set
\[
  E\bigl(G_a^{\mathrm{loc}}(S)\bigr)
  :=\bigl\{\{\rho_{a,S}(u),\rho_{a,S}(v)\}:
       \{u,v\}\in E(G_a^{\mathrm{loc}})\bigr\}.
\]
Its clique complex is \(Y_a^{\mathrm{loc}}(S) :=\Cl\bigl(G_a^{\mathrm{loc}}(S)\bigr)\). Its bow-tie subgraph is \(B_{s_1}*\cdots*B_{s_{m_a}}\).
\end{definition}

For example, if \(m_a=2\) and \(S=(3,7)\), then \(\rho_{a,S}(1,7)=(3,7)\) and \(\rho_{a,S}(2,4)=(7,4)\).  The join edge \(\{(1,7),(2,4)\}\) becomes \(\{(3,7),(7,4)\}\); a private vertex
keeps its label.  For distinct gadget occurrences, we take disjoint copies of their private
vertex sets.

The map \(\rho_{a,S}\) is a bijection onto its image, and the edge
formula shows that placement gives graph and simplicial isomorphisms
\[
  G_a^{\mathrm{loc}}(S)\cong G_a^{\mathrm{loc}},
  \qquad
  Y_a^{\mathrm{loc}}(S)\cong Y_a^{\mathrm{loc}}.
\]

\subsection{Adding the qubits outside the support}
\label{subsec:padded-single-gadgets}

Let \(S^c\) be the complement of \(\{s_1,\ldots,s_{m_a}\}\) in \(\{1,\ldots,N\}\), and put \(G_{S^c}:=*_{r\in S^c}B_r\) for the bow-tie
graph on the untouched qubits.  If \(S^c=\varnothing\), the graph \(G_{S^c}\) has no vertices and its clique complex is \(\{\varnothing\}\).  Joining with this graph leaves the placed gadget
unchanged.

\begin{definition}[Padded gadget]
\label{def:padded-gadget}
The \emph{padded gadget} for \(\Pi_a\) on \(S\) is the graph
\[
  G_{a,S}^{\mathrm{pad}}
  :=G_a^{\mathrm{loc}}(S)*G_{S^c}.
\]
Its clique complex satisfies
\[
  Y_{a,S}^{\mathrm{pad}}
  :=\Cl\bigl(G_{a,S}^{\mathrm{pad}}\bigr)
  =Y_a^{\mathrm{loc}}(S)*\Cl(G_{S^c}).
\]
\end{definition}

The join connects every vertex of an untouched bow-tie factor to every
vertex of the placed gadget, without changing edges within either graph.
Its bow-tie part is the full base graph: \((B_{s_1}*\cdots*B_{s_{m_a}})*G_{S^c}=G_N\).

On the encoded qubit space, the corresponding placement of the
projection is
\[
  P_{a,S}^{(N)}
  :=W_S\bigl(\Pi_a\otimes I_{\Qenc_{N-m_a}}\bigr)W_S^*,
\]
where \(W_S\) is the placement unitary of
\cref{def:tensor-placement}.  Let \(\iota_{a,S}:X_N\hookrightarrow Y_{a,S}^{\mathrm{pad}}\) be the inclusion of the base.  Let \(\mathfrak e_N:H_{q_N}(X_N;\CC)\to\Qenc_N\) be the homological encoding
map of \cref{def:homological-encoding}.  Then
\cref{thm:padded-topology} proves that
\[
  \ker(\iota_{a,S})_*
  =\mathfrak e_N^{-1}\!\left(\operatorname{im}P_{a,S}^{(N)}\right).
\]
Thus precisely the classes in this preimage become boundaries.

\subsection{Combining the gadgets for all Hamiltonian terms}
\label{subsec:global-graph}

For each term \(P_j=(\Pi_{a_j})_{S_j}^{(N)}\), take a copy \(G_j^{\mathrm{pad}}\) of \(G_{a_j,S_j}^{\mathrm{pad}}\).  All these copies are identified on the common base graph \(G_N\), while
their private vertex sets are kept pairwise disjoint.  Write \(V_j^{\mathrm{priv}} :=V(G_j^{\mathrm{pad}})\setminus V(G_N)\). Repeated terms receive separate private vertex sets.

\begin{definition}[Global graph and output complex]
\label{def:global-graph}
\label{def:global-clique-complex}
Define the global graph by
\[
  V(\widehat G(H))
  :=V(G_N)\sqcup\bigsqcup_{j=1}^{t}V_j^{\mathrm{priv}},
  \qquad
  E(\widehat G(H))
  :=E(G_N)\cup\bigcup_{j=1}^{t}E(G_j^{\mathrm{pad}}).
\]
Its associated clique complex is \(\widehat Y(H):=\Cl\bigl(\widehat G(H)\bigr)\).
\end{definition}

In particular, no edge joins private vertices belonging to distinct
occurrences.  Since every pair of vertices in a simplex of a clique
complex must form an edge, no simplex of \(\widehat Y(H)\) contains
private vertices from two different occurrences.  This separation is
used in \cref{sec:topological-correctness,sec:global-spectrum} to
analyze the combined homological and spectral effects of the gadgets.

\section{Exact finite certificates}
\label{sec:certificates}

The fixed catalogue has \(s=13\) local gadget types, one for each catalogue
projection.  For each type we fix one local gadget before assigning it to
particular qubit positions; we call this its \emph{canonical local gadget}.
This section states the exact finite properties verified for these thirteen
canonical gadgets.  These properties are the only machine-certified inputs used in the
remainder of the proof.

\subsection{Reduction to the finite catalogue}
\label{subsec:finite-certification-suffices}

Every source term is a placed copy of one of the thirteen catalogue
projections, and every locality satisfies \(m_a\leq4\).  Changing the source
size does not create a new local gadget: it only relabels its supported
bow-tie factors, joins it with untouched factors, or repeats it over the
shared base.

\begin{proposition}[Scope of the finite certificate list]
\label{prop:certificate-scope}
Let \(H\) have a finite-catalogue source presentation as in
\cref{def:source-family}, and let \(\widehat Y(H)\) be the output complex
of \cref{def:global-graph}.  Assume that, for each catalogue label \(a\), the induced subgraph of \(G_a^{\mathrm{loc}}\) on the bow-tie vertices \(V(G_{m_a})\) is exactly \(G_{m_a}\).  Then every gadget occurrence in \(\widehat Y(H)\) is obtained from its canonical local gadget by:
\begin{enumerate}[label=(\roman*)]
\item a weight- and depth-preserving relabelling of qubit factors;
\item joining the placed gadget with the bow-tie factors on qubits outside
its support; and
\item identifying its full base graph with the one shared copy of \(G_N\) in the global graph, while keeping its private vertices separate.
\end{enumerate}
Therefore no new finite certificate is required when \(N\), \(t\), the
supports \(S_j\), or the source circuit changes.
\end{proposition}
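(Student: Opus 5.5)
The plan is to unwind the definitions of placement, padding and assembly, and to check that each step only relabels, joins, or identifies along the base, so that the local data seen by any occurrence is an isomorphic copy of the canonical gadget. Fix an occurrence \(j\) with label \(a=a_j\) and support \(S=S_j\). By \cref{def:placed-local-gadget}, \(\rho_{a,S}\) is a bijection onto its image. It sends bow-tie vertices \((r,u)\) to \((s_r,u)\) and fixes private vertices, and by construction it maps edges onto edges. It is therefore a graph isomorphism \(G_a^{\mathrm{loc}}\cong G_a^{\mathrm{loc}}(S)\), and hence a simplicial isomorphism of clique complexes. Since weights are assigned by vertex class (base weight \(1\), private weight \(\lambda\)) and \(\rho_{a,S}\) preserves the base/private partition, it preserves weights. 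Private depth of a simplex is the number of private vertices it contains, so depth is preserved as well. This gives (i).

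For (ii) I will use \cref{def:padded-gadget} directly. Forming \(G_a^{\mathrm{loc}}(S)*G_{S^c}\) adds only the edges between the placed gadget and the untouched factors. By \cref{lem:clique-complex-join}, the clique complex is the join \(Y_a^{\mathrm{loc}}(S)*\Cl(G_{S^c})\). The bow-tie part is \(G_N\), so the private vertices of the padded gadget are exactly those of the placed gadget.

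Step (iii) is the substantive one. I must show that inserting \(G_j^{\mathrm{pad}}\) into \(\widehat G(H)\) changes nothing visible to that occurrence except the identification of its base with the shared \(G_N\). The hypothesis is used first here. The induced subgraph of \(G_a^{\mathrm{loc}}\) on \(V(G_{m_a})\) is \(G_{m_a}\). Placement and the join then give that the induced subgraph of \(G_j^{\mathrm{pad}}\) on \(V(G_N)\) is exactly \(B_{s_1}*\cdots*B_{s_{m_a}}*G_{S^c}=G_N\). Two base vertices from distinct factors are adjacent in \(G_N\) by the join, and within a factor the edges agree with \(B_r\). Consequently, the edge set \(E(G_N)\cup\bigcup_j E(G_j^{\mathrm{pad}})\) restricted to base vertices is \(E(G_N)\): no occurrence adds a base–base edge.

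By \cref{def:global-graph}, distinct private sets are disjoint and no edge joins them. Hence the induced subgraph of \(\widehat G(H)\) on \(V(G_N)\sqcup V_j^{\mathrm{priv}}\) is exactly \(G_j^{\mathrm{pad}}\). Since clique complexes commute with induced subgraphs, the full subcomplex \(\widehat Y(H)[V(G_N)\sqcup V_j^{\mathrm{priv}}]\) equals \(Y_{a,S}^{\mathrm{pad}}\). Every simplex of \(\widehat Y(H)\) meets at most one private set, so \(\widehat Y(H)\) is the union of these copies glued along \(X_N\).

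Finally, I will observe that nothing in this argument depends on \(N\), \(t\), the supports, or the circuit beyond the label \(a\). The only finite inputs are therefore the thirteen canonical gadgets, which gives the concluding claim.

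The main obstacle is the step in (iii) that rules out base–base edges. An extra base edge would make the clique completion add base simplices, which would alter \(X_N\) for every occurrence. This is exactly why the hypothesis is used there: it prevents the completion issue noted in \cref{sec:king-kohler-gap}.
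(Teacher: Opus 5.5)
Your proposal is correct and follows essentially the same route as the paper. In both, the hypothesis gives that each padded graph has induced base graph exactly \(G_N\), so no other occurrence contributes base--base edges or edges at the private vertices of occurrence \(j\); hence \(\widehat G(H)[V(G_N)\cup V_j^{\mathrm{priv}}]=G_j^{\mathrm{pad}}\) and the clique complexes agree. Your explicit treatment of (i) and (ii), via the bijection \(\rho_{a,S}\) and \cref{lem:clique-complex-join}, only spells out what the paper takes directly from the definitions.
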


\begin{proof}
Placement and padding are defined in
\cref{def:placed-local-gadget,def:padded-gadget}.  The hypothesis implies
that each padded graph has exactly \(G_N\) as its induced graph on the
base vertices.  In the global edge-set formula of
\cref{def:global-graph}, an occurrence other than \(j\) can therefore
add neither an edge between base vertices nor an edge incident to a
private vertex of occurrence \(j\).  Consequently,
\[
  \widehat G(H)[V(G_N)\cup V_j^{\mathrm{priv}}]
  =G_j^{\mathrm{pad}}.
\]
Here \(G[W]\) denotes the graph induced by the vertex set \(W\). Taking clique complexes shows that the simplices involving the private
vertices of occurrence \(j\), together with their shared base, are
exactly those of its padded gadget.  The later padding and many-gadget
theorems handle the effects of these operations; no instance-dependent
finite certificate is needed.
\end{proof}

\subsection{The finite objects being certified}
\label{subsec:finite-certificate-objects}

Fix a catalogue index \(a\), put \(m=m_a\), and consider the pair \(X_m\subseteq Y_a^{\mathrm{loc}}\) from \cref{constr:local-gadget}.
The private vertex set is \(V_a^{\mathrm{priv}}=V(Y_a^{\mathrm{loc}})\setminus V(X_m)\), as defined
there.

\paragraph{Source-data representation.}
The certificate package stores an integer-coefficient cycle \(\zeta_a\) and a nonzero integer \(\mu_a\) with \(c_a=\zeta_a/\mu_a\). For some gadget types, the initially chosen simplicial map produces the
desired underlying graph but the wrong orientation on one or more encoded
bow-tie cycles.  In those cases we compose it with explicit automorphisms
of the bow-tie base that reverse the required cycle orientations.
The certificate then verifies \((f_a)_\#c_a=j_a\) coefficient by
coefficient.

\begin{definition}[Private depth and split coboundary]
\label{def:base-private-weights}
\label{def:exact-depth-cochains}
Give every base vertex weight \(1\) and every private vertex weight \(0<\lambda<1\).  For a simplex \(\sigma\), let its \emph{private
depth} be the number of its private vertices: \(\dep(\sigma):= |\sigma\cap V_a^{\mathrm{priv}}|\). For every degree \(k\), define the exact-depth space
\[
  C^k_{(\ell)}(Y_a^{\mathrm{loc}})
  :=
  \operatorname{span}_{\CC}\bigl\{
    |\sigma\rangle:
    \sigma\in(Y_a^{\mathrm{loc}})_k,
    \ \dep(\sigma)=\ell
  \bigr\}.
\]
Here \(|\sigma\rangle\) is the simplex-basis cochain from
\cref{subsec:cochains-cohomology}.  The same span is obtained in the weighted
orthonormal basis.
In the weighted orthonormal simplex coordinates,
\[
  C^k(Y_a^{\mathrm{loc}})
  =\bigoplus_{\ell=0}^{k+1}C^k_{(\ell)},
  \qquad
  D_\lambda=\Dbase+\lambda\Dpriv,
\]
where
\[
  \Dbase:C^k_{(\ell)}\to C^{k+1}_{(\ell)},
  \qquad
  \Dpriv:C^k_{(\ell)}\to C^{k+1}_{(\ell+1)}.
\]
The operators \(\Dbase\) and \(\Dpriv\) add a base and a private vertex,
respectively, with their oriented incidence signs.
\end{definition}

\begin{definition}[Private simplices and their attachment to the base]
\label{def:all-private-pair}
Define the induced all-private complex, using the notation \(K[W]\) of
\cref{def:simplicial-complex}, by \(P_a:=Y_a^{\mathrm{loc}}[V_a^{\mathrm{priv}}]\). Thus \(P_a\) consists exactly of the simplices whose vertices are all
private.  Define
\[
  A_a
  :=
  \bigl\{\tau\in P_a:
    \text{some }b\in V(X_m)\text{ satisfies }
    \tau\cup\{b\}\in Y_a^{\mathrm{loc}}
  \bigr\}.
\]
Thus \(A_a\) consists of the private simplices that remain simplices after
adjoining at least one base vertex.  It is a subcomplex:
whenever \(\tau\) admits such an extension, so does every face of \(\tau\).  By the relative-cochain definition in
\cref{def:relative-cochains},
\[
  C^k(P_a,A_a;\CC)
  =
  \operatorname{span}_{\CC}\bigl\{
    |\tau\rangle:
    \tau\in(P_a)_k\setminus(A_a)_k
  \bigr\}.
\]
\end{definition}

\begin{definition}[Private core]
\label{def:private-core-class}
When \(H^{2m}(P_a,A_a;\CC)\) is one dimensional, let \(\kappa_a\in C^{2m}(P_a,A_a;\CC)\) be the unit coordinate cochain supported on the certified private \(2m\)-simplex.  The certificate checks that it is closed in the relative cochain complex
and that its class generates this cohomology group.  Write \(\CC[\kappa_a] := \{c[\kappa_a]:c\in\CC\} \subseteq H^{2m}(P_a,A_a;\CC)\) for the line spanned by its cohomology class.
\end{definition}

\begin{definition}[Private-sector incidence matrix]
\label{def:principal-bulk-matrix}
Put \(d=q_m=2m-1\), and let \(v_a^*\) be the apex.  Define \(\Bbulk_a^{\mathrm{loc}}:=C^d(P_a,A_a;\CC)\), \[
  \Odown_a
  :=
  \operatorname{span}_{\CC}\bigl\{
    |\rho\rangle:
    \rho\in(Y_a^{\mathrm{loc}})_{d-1},\ v_a^*\in\rho
  \bigr\},
\]
and
\[
  \Oup_a
  :=
  \operatorname{span}_{\CC}\bigl\{
    |\omega\rangle:
    \omega\in(Y_a^{\mathrm{loc}})_{d+1},\ v_a^*\in\omega
  \bigr\}.
\]
Let \(Q_a^\downarrow\) and \(Q_a^\uparrow\) be the corresponding coordinate
projections.  The \emph{private-sector incidence matrix} is
\[
  T_{a,0}:\Bbulk_a^{\mathrm{loc}}
  \longrightarrow\Odown_a\oplus\Oup_a,
  \qquad
  T_{a,0}u
  :=
  \bigl(Q_a^\downarrow\partial_du,
        Q_a^\uparrow D_du\bigr).
\]
Here \(D_d\) and \(\partial_d\) are the unit-weight incidence operators
in the ordinary oriented-simplex bases, with relative cochains embedded
by their simplex coordinates.  Thus \(T_{a,0}\) is an integer incidence
matrix independent of \(\lambda\).
\end{definition}

\subsection{The five local conditions}
\label{subsec:conditions-C0-C4}

C0 determines the relative topology of the completed local complex; C1
verifies the oriented filling; C2 checks the induced bow-tie base and
private isolation; C3 eliminates intermediate filtered classes; and C4
identifies the unique surviving private class and controls the remaining private
sector.

\begin{requirement}[C0: relative topology of the actual clique complex]
\label{req:C0}
For every \(a\), \[
  H_r(Y_a^{\mathrm{loc}},X_{m_a};\CC)
  \cong
  \begin{cases}
    \CC,&r=2m_a,\\
    0,&r\neq2m_a.
  \end{cases}
\]
This is the relative homology of the actual pair \(X_{m_a}\subseteq\Cl(G_a^{\mathrm{loc}})\), including any simplices formed when all cliques of the graph are included
in its clique complex, as in \cref{def:clique-complex}.
\end{requirement}

\begin{requirement}[C1: signed source and folded boundary]
\label{req:C1}
For every \(a\), \[
  \partial c_a=0,
  \qquad
  (f_a)_\#c_a=j_a,
  \qquad
  \partial\Gamma_a=j_a,
\]
as exact equalities of oriented chains over \(\QQ\), with \((f_a)_\#\) interpreted as the normalized induced chain map of
\cref{def:normalized-chain-map} (including orientation signs and the
zero image of a simplex made degenerate by the map).
The target cycle \(j_a\) is integral, while \(c_a\) and \(\Gamma_a\) may
have rational coefficients.
\end{requirement}

\begin{requirement}[C2: induced base and private isolation]
\label{req:C2}
For every unplaced local gadget of \cref{constr:local-gadget},
\[
  Y_a^{\mathrm{loc}}[V(X_{m_a})]=X_{m_a}.
\]
In the global graph,
\[
  \{u,v\}\notin E(\widehat G(H))
  \qquad
  \bigl(
    u\in V_i^{\mathrm{priv}},\ 
    v\in V_j^{\mathrm{priv}},\ 
    i\neq j
  \bigr).
\]
Thus forming the clique complex creates neither additional base simplices
locally nor simplices containing private vertices from different gadget
occurrences.
The induced-base equality is part of the finite local certificate.  The
cross-gadget statement follows directly from the definition of the global
edge set and is not recomputed for each instance.  The induced-base
equality supplies the hypothesis of \cref{prop:certificate-scope}.
\end{requirement}

The decreasing filtration associated with private depth is
\[
  U_\ell^k
  :=\bigoplus_{r\geq\ell}C^k_{(r)}.
\]
It is preserved by the unweighted incidence differential \(D=\Dbase+\Dpriv\).  The spectral-sequence conventions for this
filtration are fixed in Appendix~\ref{app:spectral-sequence}; the
fixed-depth exactness requirement is stated next.

\begin{requirement}[C3: fixed-depth exactness]
\label{req:C3}
For every \(a\), every degree \(k\), and every intermediate depth \(1\leq\ell\leq k\), \[
  \ker\!\left(
    \Dbase:C^k_{(\ell)}\to C^{k+1}_{(\ell)}
  \right)
  =
  \operatorname{im}\!\left(
    \Dbase:C^{k-1}_{(\ell)}\to C^k_{(\ell)}
  \right).
\]
Thus, at these depths, every \(\Dbase\)-closed cochain is the \(\Dbase\)-image of a cochain one degree lower.  The corresponding
spectral-sequence interpretation is given in \cref{thm:page-one}.
\end{requirement}

\begin{requirement}[C4: private core and exclusion of additional private low-energy modes]
\label{req:C4}
For every gadget of locality \(m=m_a\), \[
  H^r(P_a,A_a;\CC)
  \cong
  \begin{cases}
    \CC,&r=2m,\\
    0,&r\neq2m,
  \end{cases}
\]
and \(T_{a,0}\) has full column rank.  Consequently,
\[
  a_*:=\min_{1\leq a\leq s}s_{\min}(T_{a,0})>0.
\]
The certificates also verify the following three incidence properties of
the cone apex \(v_a^*\), which are used later:
\begin{enumerate}[label=(\roman*)]
\item
\[
  (P_a)_{q_m}\setminus(A_a)_{q_m}
  =\{\sigma\in(P_a)_{q_m}:v_a^*\in\sigma\};
\]
\item in the local graph \(G_a^{\mathrm{loc}}\), the apex has no edge to
any vertex of the local bow-tie graph \(G_m\);
\item if \(v_a^*\in\sigma\in(P_a)_{q_m}\), then \(\sigma\setminus\{v_a^*\}\in A_a\).
\end{enumerate}
\end{requirement}

\subsection{Certification and relabelling}
\label{subsec:finite-certification-theorem}

\begin{theorem}[Finite local certification theorem]
\label{thm:finite-local-certification}
The thirteen canonical local gadgets satisfy C0-C4.
\end{theorem}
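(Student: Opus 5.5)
The theorem is a statement about thirteen explicit finite objects, so the proof is a certified finite computation. The real content lies in arranging that computation so every claimed equality is exact and independently checkable. For each catalogue index \(a\) I would do the following.

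\textbf{Step 1: build the data.} First construct the canonical data \((K_a,c_a=\zeta_a/\mu_a,f_a,\prec_a)\). Form \(U_a\), the quotient map \(q_a\), the graph \(G_a^{\mathrm{loc}}\), and its full clique complex \(Y_a^{\mathrm{loc}}\). The last step enumerates all cliques of a graph of bounded size, which is feasible since \(m_a\le4\). All subsequent checks are carried out over \(\ZZ\) or \(\QQ\) with exact arithmetic, never floating point.

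\textbf{Step 2: the direct checks C1, C2 and the C4 incidence properties.}
\begin{itemize}
\item C1: verify \(\partial\zeta_a=0\) and \((f_a)_\#\zeta_a=\mu_a j_a\) coefficient by coefficient, using the normalized sign convention of \cref{def:normalized-chain-map}. The identity \(\partial\Gamma_a=j_a\) then follows from \cref{prop:folded-filling}; I would still recompute it as a consistency check.
\item C2: compare the induced subgraph of \(G_a^{\mathrm{loc}}\) on \(V(X_{m_a})\) with \(G_{m_a}\). Because a clique complex is determined by its graph, equality of induced graphs gives \(Y_a^{\mathrm{loc}}[V(X_{m_a})]=X_{m_a}\). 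The cross-gadget part of C2 is immediate from \cref{def:global-graph}.
\item C4 (i)–(iii): these are set-membership checks on the lists of simplices.
\end{itemize}

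\textbf{Step 3: the rank conditions C0, C3, C4.} C0, C3 and the cohomology part of C4 are all statements of the form rank of kernel equals rank of image, in explicit finite sparse integer complexes:
\begin{itemize}
\item C0 uses the relative chain complex spanned by simplices of \(Y_a^{\mathrm{loc}}\setminus X_{m_a}\).
\item C3 uses the fixed-depth complexes \((C^\bullet_{(\ell)},\Dbase)\) for \(1\le\ell\le k\).
\item C4 uses the relative cochains \(C^\bullet(P_a,A_a)\).
\end{itemize}

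For each of these I would compute ranks of the boundary matrices over \(\QQ\), and also modulo one or two large primes as a cross-check. The mod-\(p\) computation only gives a lower bound on the rational rank, so the certificate would pair each rank claim with witnesses. For a lower bound, give a nonsingular minor. For an upper bound, give an explicit kernel basis whose image under the matrix is verified to be zero. With both witnesses, each rank is verified exactly with no trust in the solver.

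For the C0 and C4 one-dimensional groups, I would additionally exhibit representatives:
\begin{itemize}
\item the relative cycle \(\Gamma_a\) for C0;
\item the coordinate cochain \(\kappa_a\) for C4.
\end{itemize}
For each, check closedness and check that it is not a (co)boundary, using an explicit dual witness whose pairing with the representative is nonzero and which vanishes on boundaries.

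\textbf{Step 4: full column rank of \(T_{a,0}\).} This requires an integer matrix and a nonzero maximal minor. Alternatively, certify that \(T_{a,0}^{\top}T_{a,0}\) is positive definite via an exact rational \(LDL^\top\) factorization. The factorization also gives a rational lower bound for \(s_{\min}(T_{a,0})\), hence \(a_*>0\) as a minimum over finitely many positive numbers.

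\textbf{Main obstacle.} I expect the main difficulty to be C3 and C0 for the four-qubit gadgets. There the clique completion \(Y_a^{\mathrm{loc}}\) of a folded prism-cone over a \(7\)-dimensional source may have very many simplices, so the relevant boundary matrices are large and rank computation over \(\QQ\) risks coefficient blowup. To control this, I would first apply discrete-Morse or elementary collapse reductions, recording each matching so the reduction itself is certifiable. Concretely, each matched pair uses a unit incidence coefficient, so the reduced complex is chain-homotopy equivalent to the original. Exact ranks are then computed only on the much smaller Morse complex.

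A secondary concern is the signs in C1 for the two Pythagorean gadgets. Here the bow-tie orientation-reversing automorphisms must be composed into \(f_a\) before the coefficient check.
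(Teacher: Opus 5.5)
Your proposal is correct and follows the paper's overall plan. The theorem is a certified exact finite computation in which C1, C2 and the apex incidences are checked directly, C0, C3 and C4 reduce to exact ranks of sparse integer incidence matrices, $[\kappa_a]$ is certified by a cycle--cocycle pairing, and $T_{a,0}$ by a full-rank certificate.

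The real difference is how you certify ranks. You propose two-sided witnesses: a nonsingular minor for the lower bound and a verified kernel basis for the upper bound. To keep the rational arithmetic feasible, you add certified discrete-Morse reductions. The paper avoids all of this with \cref{lem:rank-transfer}:
\begin{itemize}
\item Because the integer matrices satisfy $BA=0$, the rational ranks of two adjacent maps automatically sum to at most $n$.
\item Ranks modulo $p$ are lower bounds for rational ranks.
\item So if the two modular ranks sum to $n$, exactness over $\QQ$ (hence over $\CC$) is forced. No kernel witness and no computation over $\QQ$ are needed.
\end{itemize}
This certifies every vanishing degree in C0, C3 and C4. The single nonzero Betti number in C0 and C4 is then fixed by the integer Euler characteristic. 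Your representative $\Gamma_a$ for C0 is therefore harmless but redundant. C4 still needs the pairing, to show that the specific coordinate cochain $\kappa_a$ generates.

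What each route buys:
\begin{itemize}
\item Yours makes rank certification independent of the chain-complex structure and shrinks the complex. The cost is large rational kernel witnesses, plus Morse bookkeeping. The matching must be acyclic, not merely unit-coefficient, for the reduction to be a chain homotopy equivalence.
\item The paper's needs only modular column reduction (with Chen--Kerber clearing). This is what keeps the four-qubit and Pythagorean complexes tractable; for instance, $T_{a,0}$ has size $14153\times9538$ for the Pythagorean gadget.
\end{itemize}

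A minor point concerns the rational lower bound on $s_{\min}(T_{a,0})$ that you extract from an $LDL^\top$ factorization. It is not needed for this theorem, since full column rank already gives $a_*>0$. Moreover, a factorization of $T^\top T$ alone does not directly bound its smallest eigenvalue. For the later effectivity step, \cref{prop:C4-interface} instead tests $T^\top T-a^2I\succ0$ for candidate rational values of $a$.
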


\begin{proof}[Certificate record]
The companion certificate package reconstructs every canonical graph from
its source data, including the ordered prism, cone, folding map, and the
explicit bow-tie automorphisms used to correct the required orientations,
and performs all tests on the resulting clique complex.  It
checks C1 by exact rational chain arithmetic, including all orientation signs
and every simplex made degenerate by the fold.  It checks C0, C3, and the
relative-cohomology part of C4 using exact sparse integer incidence matrices.
Some ranks are computed exactly modulo a prime; the exactness criterion
proved in Appendix~\ref{app:certificate-protocols} shows that the rank and
exactness conclusions used here then hold over \(\QQ\), and hence over \(\CC\).

The induced-base part of C2 and the three apex-incidence properties are
checked by exact graph comparisons and exact comparisons of the enumerated
simplex lists; cross-gadget isolation follows from the global edge-set
definition.  Finally, an exact rank computation over a finite field,
together with the transfer lemma, certifies that every \(T_{a,0}\) has full
column rank, hence is injective over \(\QQ\) and over \(\CC\).  The
complete finite data and verification procedures are included in the
companion package.
Every pass/fail conclusion uses exact arithmetic.
\end{proof}

\begin{proposition}[Relabelling invariance]
\label{prop:certificate-relabel-invariance}
A permutation of the local qubit factors induces a weight- and
depth-preserving simplicial isomorphism of the corresponding local gadgets.
It preserves the pair \((Y_a^{\mathrm{loc}},X_{m_a})\), conditions C0-C4,
the filtered cochain complex, the weighted Laplacian spectrum, and the
singular values of \(T_{a,0}\).  Hence one canonical certificate applies to
every placed occurrence of that catalogue type.
\end{proposition}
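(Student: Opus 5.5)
The plan is to build the isomorphism explicitly from a permutation $\pi$ of $\{1,\ldots,m_a\}$ and then check, one item at a time, that each listed structure is defined only in terms of data that this isomorphism preserves. Let $\pi$ be such a permutation, and let $\rho_\pi$ be the vertex map sending the bow-tie vertex $(r,u)$ to $(\pi(r),u)$ and fixing every private vertex. This is exactly the placement map $\rho_{a,S}$ of \cref{def:placed-local-gadget} with $S=(\pi(1),\ldots,\pi(m_a))$ and $N=m_a$.

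\textbf{Step 1: the isomorphism and its basic properties.} The map $\rho_\pi$ is a bijection on vertices. By the edge formula of \cref{def:placed-local-gadget}, it carries $G_a^{\mathrm{loc}}$ isomorphically onto $G_a^{\mathrm{loc}}(S)$. Since a clique complex depends only on its graph, $\rho_\pi$ induces a simplicial isomorphism $Y_a^{\mathrm{loc}}\cong Y_a^{\mathrm{loc}}(S)$.

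The map sends bow-tie vertices to bow-tie vertices and private vertices to themselves. Its restriction to $G_{m_a}$ permutes the join factors $B_r$ and is therefore an automorphism of $G_{m_a}=B_1*\cdots*B_{m_a}$, since the join is symmetric. Hence $\rho_\pi$ restricts to an isomorphism $X_{m_a}\to X_{m_a}$. It follows that $\rho_\pi$ is an isomorphism of pairs.

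Weights are defined vertexwise: base vertices have weight $1$ and private vertices weight $\lambda$. Private depth is the count of private vertices. Both are therefore preserved, and so is every derived object: $P_a$, $A_a$, the exact-depth spaces $C^k_{(\ell)}$, and the filtration $U^k_\ell$. The apex $v_a^*$ is private and is fixed, so the spaces $\Odown_a$ and $\Oup_a$ are preserved as well.

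\textbf{Step 2: transport of the linear-algebraic structure.} Let $(\rho_\pi)_\#$ denote the normalized chain map of \cref{def:normalized-chain-map}. On a simplex it acts by a signed permutation of oriented basis vectors, because reordering vertices changes the orientation only by a sign. It therefore commutes with $\partial$, and its transpose commutes with $D$.

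The key observation is that $(\rho_\pi)_\#$ maps each weighted orthonormal basis vector to $\pm$ a weighted orthonormal basis vector, since weights are preserved. It is therefore unitary for the weighted inner products. It also respects the splitting $D_\lambda=\Dbase+\lambda\Dpriv$ depth by depth. Consequently the weighted Hodge Laplacians are unitarily conjugate and have the same spectrum.

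The remaining items follow by the same conjugation:
\begin{itemize}
\item The private-sector matrix $T_{a,0}$ is conjugated by signed permutation matrices on its domain and codomain, so its singular values are unchanged.
\item Conditions C0, C3 and C4 are statements about dimensions, ranks and kernels of operators. These are invariant under this conjugation.
\item The set-theoretic apex properties in C4(i)--(iii) and the induced-base equality in C2 transfer directly under the vertex bijection.
\end{itemize}

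\textbf{Step 3: the main obstacle, condition C1.} The one subtle point is C1, because the target cycle is not invariant under relabelling. The image $(\rho_\pi)_\#j_a$ is the target cycle of the reordered state, obtained by writing the bits of $\phi^{(a)}$ onto $S$. It may also carry a global sign coming from reordering the join factors of $z_b$.

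I would handle this in two parts:
\begin{itemize}
\item Check that $(\rho_\pi)_\#\Gamma_a$ fills $(\rho_\pi)_\#j_a$, which is immediate since $(\rho_\pi)_\#$ commutes with $\partial$.
\item Check that the sign is uniform in $b$, so that the filled homology line is $\CC$ times the placed state.
\end{itemize}
Uniformity should hold because every $z_{b_r}^{(r)}$ is an oriented $1$-chain. Commuting two such factors in a join therefore gives the sign $(-1)^{(1+1)(1+1)}=+1$ in the augmented join rule. So reordering the factors introduces no sign at all, independent of $b$.

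This identifies the filled line with $\mathfrak e_N^{-1}(\im P_{a,S})$, matching the placement conventions of \cref{def:tensor-placement}. Combined with Steps 1 and 2, this shows that a single canonical certificate covers every placed occurrence of the catalogue type $a$.
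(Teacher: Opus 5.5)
Your proposal is correct and follows essentially the same route as the paper: the relabelling induces signed-permutation chain maps that commute with the boundary, become unitary after weight rescaling, and preserve private depth. Hence the pair, C0--C4, the filtered complex, the Laplacian spectrum and the singular values of \(T_{a,0}\) all transfer, and C1 is carried along by pushing forward the folding data and the filling; your explicit check that reordering degree-one join factors introduces no sign is a detail the paper records in the proof of \cref{thm:padded-topology} rather than in this proof.
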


\begin{proof}
Relabel the entire gadget datum, including the base, private vertices, apex,
source order, and folding map.  In every degree the induced chain map is a
signed permutation matrix \(U_k\), and boundary naturality gives
\[
  B_k'U_k=U_{k-1}B_k.
\]
Because weights and private depths are preserved, these maps become unitary
after rescaling the simplex basis according to the vertex weights, and they
preserve the filtration, relative pairs, and filtered cohomology.  On the
private-sector coordinates entering \(T_{a,0}\), \[
  T_{a,0}'=U_{\mathrm{out}}T_{a,0}U_{\mathrm{in}}^*,
\]
so the singular values and full column rank are unchanged.  The signed C1
identities are transported with the source chains and folding maps.  Any
additional catalogue relabelling is included in the exact C1 certificate.
\end{proof}

Section~\ref{sec:topological-correctness} uses C0-C2 to determine the exact
homological effect of the attachments.
Section~\ref{sec:single-gadget-spectrum} uses C3-C4 to obtain the spectral
penalty produced by one gadget.  The subsequent global spectral argument then
combines polynomially many overlapping gadget occurrences.

\section{The topological effect of the attachments}
\label{sec:topological-correctness}

Sections~\ref{sec:local-gadget} and~\ref{sec:global-graph} constructed the
local and global clique complexes.  For an inclusion \(\iota:X\hookrightarrow Y\), we say that \([j]\in H_d(X;\CC)\) is
\emph{killed} when \(\iota_*[j]=0\).

\subsection{The effect of one local attachment}

The identity \(\partial\Gamma_a=j_a\) gives \([j_a]=0\) after the local
attachment.  The relative homology of the pair determines the full kernel.

\begin{lemma}[Relative gluing lemma]
\label{lem:relative-gluing}
Let $X\subseteq Y$ be finite simplicial complexes, let $d\ge0$, and suppose
\[
 H_r(Y,X;\CC)\cong
 \begin{cases}
  \CC,&r=d+1,\\
  0,&r\ne d+1.
 \end{cases}
\]
Let $[\Gamma]$ generate $H_{d+1}(Y,X;\CC)$, and suppose that the connecting homomorphism in the long exact sequence of the pair, recalled in
Theorem~\ref{thm:long-exact-sequence-pair}, satisfies
\[
 \delta[\Gamma]=[j]\ne0\in H_d(X;\CC).
\]
Then, for the inclusion $\iota:X\hookrightarrow Y$, \(\iota_*:H_r(X;\CC)\longrightarrow H_r(Y;\CC)\) is an isomorphism for $r\ne d$, while in degree $d$, \(\iota_*:H_d(X;\CC)\longrightarrow H_d(Y;\CC)\) is surjective with
\[
 \ker\iota_*=\CC\langle[j]\rangle.
\]
Equivalently,
\[
 H_d(Y;\CC)\cong H_d(X;\CC)/\CC\langle[j]\rangle.
\]
\end{lemma}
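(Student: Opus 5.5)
The plan is to read everything off the long exact sequence of the pair $(Y,X)$ from \cref{thm:long-exact-sequence-pair}, using the hypothesis that the relative homology is concentrated in degree $d+1$.

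\emph{Degrees $r\ne d$.} Consider the segment
\[
  H_{r+1}(Y,X;\CC)\xrightarrow{\delta_{r+1}}H_r(X;\CC)\xrightarrow{\iota_*}H_r(Y;\CC)\xrightarrow{\pi_*}H_r(Y,X;\CC).
\]
For $r\ne d$ and $r\ne d+1$, both outer relative groups vanish. Exactness then makes $\iota_*$ injective, since its kernel is $\im\delta_{r+1}=0$, and surjective, since its image is $\ker\pi_*=H_r(Y;\CC)$. Hence $\iota_*$ is an isomorphism in these degrees.

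\emph{Degree $d+1$.} Here $H_{d+2}(Y,X)=0$, so $\iota_*$ is still injective on $H_{d+1}$. For surjectivity, I need $\pi_*:H_{d+1}(Y)\to H_{d+1}(Y,X)$ to be zero. By exactness, $\operatorname{im}\pi_*=\ker\delta_{d+1}$. The group $H_{d+1}(Y,X)=\CC[\Gamma]$ is one dimensional, and $\delta_{d+1}[\Gamma]=[j]\neq0$, so $\delta_{d+1}$ is injective. Thus $\ker\delta_{d+1}=0$ and $\pi_*=0$, which gives $\operatorname{im}\iota_*=\ker\pi_*=H_{d+1}(Y)$. So $\iota_*$ is an isomorphism in degree $d+1$ as well.

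\emph{Degree $d$.} The relevant segment is
\[
  H_{d+1}(Y,X)\xrightarrow{\delta_{d+1}}H_d(X)\xrightarrow{\iota_*}H_d(Y)\to H_d(Y,X)=0.
\]
The vanishing of $H_d(Y,X)$ gives surjectivity of $\iota_*$. Exactness gives $\ker\iota_*=\operatorname{im}\delta_{d+1}=\CC\delta_{d+1}[\Gamma]=\CC\langle[j]\rangle$. The first isomorphism theorem then yields $H_d(Y)\cong H_d(X)/\CC\langle[j]\rangle$.

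One point needs care. If $d=0$, the degree-$(-1)$ and degree-$0$ terms are handled as in ordinary homology; the sequence of \cref{thm:long-exact-sequence-pair} is stated for ordinary homology and ends at $H_0(Y,X)\to0$, so no separate argument is needed. The one nonformal step is that the nonvanishing of $[j]$ forces $\pi_*=0$ in degree $d+1$, and I expect no genuine obstacle.
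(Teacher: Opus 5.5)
Your proposal is correct and follows essentially the paper's own proof: both read everything off the long exact sequence of the pair $(Y,X)$. Injectivity of $\delta$ on the one-dimensional group $\CC[\Gamma]$ gives the isomorphism in degree $d+1$ and the surjection with kernel $\CC\langle[j]\rangle$ in degree $d$, and vanishing of the flanking relative groups gives isomorphisms in all other degrees.
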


\begin{proof}
The long exact sequence of the pair
(Theorem~\ref{thm:long-exact-sequence-pair}) contains
\[
 0\longrightarrow H_{d+1}(X)
 \longrightarrow H_{d+1}(Y)
 \longrightarrow\CC[\Gamma]
 \xrightarrow{\delta}H_d(X)
 \xrightarrow{\iota_*}H_d(Y)
 \longrightarrow0.
\]

Since \(\delta[\Gamma]=[j]\ne0\), the map \(\delta\) from its
one-dimensional domain is injective.  Exactness therefore implies that \(H_{d+1}(X)\to H_{d+1}(Y)\) is an isomorphism, and that \(\iota_*\) is surjective with
\[
  \ker\iota_*=\CC[j].
\]
For every remaining degree \(r\ne d,d+1\), the hypothesis gives \(H_{r+1}(Y,X;\CC)=0 \,\text{and}\, H_r(Y,X;\CC)=0\). Hence the corresponding part of the long exact sequence is
\[
  0\longrightarrow H_r(X;\CC)
  \longrightarrow H_r(Y;\CC)
  \longrightarrow0,
\]
so the inclusion induces an isomorphism \(H_r(X;\CC)\cong H_r(Y;\CC)\).

The quotient description follows from the first isomorphism theorem.
\end{proof}
\begin{theorem}[One certified gadget kills exactly its target line]
\label{thm:local-topology}
For every catalogue index $a$, put $m=m_a$ and $d=q_m$.  The inclusion \(\iota_a:X_m\hookrightarrow Y_a^{\mathrm{loc}}\) induces isomorphisms in every degree other than $d$, and induces a
surjection in degree $d$ with
\[
 \ker(\iota_a)_*=\CC\langle[j_a]\rangle.
\]
Under the encoding $\mathfrak e_m$, this kernel is precisely
$\CC|\phi^{(a)}\rangle$.  Equivalently,
\[
 \dim H_d(Y_a^{\mathrm{loc}};\CC)=\dim H_d(X_m;\CC)-1.
\]
\end{theorem}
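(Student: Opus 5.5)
The plan is to apply \cref{lem:relative-gluing} to the pair \(X_m\subseteq Y_a^{\mathrm{loc}}\) with \(d=q_m=2m-1\), so that \(d+1=2m\). Its hypotheses come from the certified conditions and \cref{prop:folded-filling}.

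First, condition C0 (\cref{req:C0}) gives \(H_r(Y_a^{\mathrm{loc}},X_m;\CC)\cong\CC\) for \(r=2m=d+1\) and \(0\) otherwise, which is exactly the relative-homology hypothesis of the lemma. Condition C2 guarantees that \(X_m\) is the full induced subcomplex on the base vertices, so the pair is the intended one.

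Second, \cref{prop:folded-filling}, whose identity \(\partial\Gamma_a=j_a\) is certified by C1, shows that \(\Gamma_a\) is a relative cycle with \(\delta[\Gamma_a]=[j_a]\). We must check that \([j_a]\neq0\) in \(H_d(X_m;\CC)\). By \cref{def:target-cycle}, \(\mathfrak e_m([j_a])=\mathcal Z_a|\phi^{(a)}\rangle\neq0\), and \(\mathfrak e_m\) is an isomorphism (\cref{prop:bowtie-base-homology}), so \([j_a]\neq0\).

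Third, we need \([\Gamma_a]\) to generate the one-dimensional group \(H_{d+1}(Y_a^{\mathrm{loc}},X_m;\CC)\). This follows from the previous step. Since \(\delta[\Gamma_a]\neq0\), the class \([\Gamma_a]\) is nonzero, and a nonzero vector in a one-dimensional space spans it. This is the only point needing care. The lemma as stated assumes a generator, but nonvanishing of \(\delta\) on the class is what forces it to be one, so no separate certificate is needed.

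With the hypotheses verified, \cref{lem:relative-gluing} gives isomorphisms in all degrees \(r\neq d\), surjectivity in degree \(d\), and \(\ker(\iota_a)_*=\CC\langle[j_a]\rangle\). Applying \(\mathfrak e_m\) maps this kernel to \(\CC\mathcal Z_a|\phi^{(a)}\rangle=\CC|\phi^{(a)}\rangle\). The dimension formula is rank–nullity for the surjection \((\iota_a)_*\) with its one-dimensional kernel.

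I expect no real obstacle. The substantive content, namely the relative topology of the actual clique complex, is entirely carried by the certified condition C0 together with C1, and the remaining argument is formal.
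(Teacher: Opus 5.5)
Your proposal is correct and follows the paper's proof essentially step for step. Condition C0 (with C2 for the induced base) supplies the relative-homology hypothesis of the relative gluing lemma. The folded filling gives $\delta[\Gamma_a]=[j_a]\neq0$, so $[\Gamma_a]$ generates the one-dimensional relative group. The lemma together with $\mathfrak e_m([j_a])=\mathcal Z_a|\phi^{(a)}\rangle$ then yields all the stated conclusions.
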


\begin{proof}
By C2, \(X_m\) is the induced base subcomplex of \(Y_a^{\mathrm{loc}}\), and C0 supplies the relative-homology hypothesis of
\cref{lem:relative-gluing}.

By \cref{prop:folded-filling}, \(\Gamma_a\) is a relative cycle whose image under the connecting
homomorphism is \([j_a]\).  The image is nonzero because the base classes \([z_b]\) form a basis and
the target vector defining \(j_a\) is nonzero.  Hence \([\Gamma_a]\neq0\), and the one-dimensional relative group supplied by C0 is generated by \([\Gamma_a]\).

The gluing lemma therefore gives kernel \(\CC[j_a]\) and isomorphisms
in every other degree.

Finally, \(\mathfrak e_m([j_a]) = \mathcal Z_a|\phi^{(a)}\rangle, \, \mathcal Z_a>0\). Hence the homology line \(\CC[j_a]\) corresponds exactly to the target line \(\CC|\phi^{(a)}\rangle\).
\end{proof}

\subsection{The effect of padding by untouched qubits}

\begin{theorem}[Join-padding structure]
\label{thm:join-padding}
For a catalogue type $a$, an ordered support $S$ of size $m_a$, and
$N\ge m_a$, the graph and flag-complex equalities
\[
 G_{a,S}^{\mathrm{pad}}
 =G_a^{\mathrm{loc}}(S)*G_{S^c},
 \qquad
 Y_{a,S}^{\mathrm{pad}}
 \cong Y_a^{\mathrm{loc}}*X_{N-m_a}
\]
hold exactly.
\end{theorem}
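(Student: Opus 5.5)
The plan is to prove the two displayed equalities in turn.  The graph equality holds by definition, and the complex isomorphism comes from the join lemma combined with the placement isomorphism.

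First, the graph equality \(G_{a,S}^{\mathrm{pad}}=G_a^{\mathrm{loc}}(S)*G_{S^c}\) is literally \cref{def:padded-gadget}.  The only point to check is that the join is well defined, i.e.\ that the two vertex sets are disjoint.  The vertex set of \(G_a^{\mathrm{loc}}(S)\) consists of the bow-tie vertices \((s_r,u)\) with \(s_r\in S\), together with private labels.  By the labelling convention of \cref{subsec:placed-local-gadgets}, the private labels are distinct from every bow-tie vertex label.  The vertex set of \(G_{S^c}\) consists of the vertices \((r,u)\) with \(r\in S^c\).  These sets are disjoint because \(S\cap S^c=\varnothing\).  The degenerate case \(S^c=\varnothing\) (so \(N=m_a\)) is handled by the stated convention \(\Cl(G_{S^c})=\{\varnothing\}\).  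In that case the join with \(\{\varnothing\}\) is the identity, and \(X_0=\{\varnothing\}\) matches.

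Second, the clique-complex identity.  Since the vertex sets are disjoint, I apply \cref{lem:clique-complex-join} to obtain
\[
  Y_{a,S}^{\mathrm{pad}}=\Cl\bigl(G_a^{\mathrm{loc}}(S)\bigr)*\Cl(G_{S^c})=Y_a^{\mathrm{loc}}(S)*\Cl(G_{S^c}).
\]
The placement map \(\rho_{a,S}\) is a vertex bijection carrying edges to edges by construction, so it is a graph isomorphism.  It therefore induces a simplicial isomorphism \(Y_a^{\mathrm{loc}}\cong Y_a^{\mathrm{loc}}(S)\), because clique complexes are functorial under graph isomorphisms.

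Next, choose the increasing enumeration \(r_1<\dots<r_{N-m_a}\) of \(S^c\) and send \((r_j,u)\mapsto(j,u)\).  This identifies \(G_{S^c}=B_{r_1}*\cdots*B_{r_{N-m_a}}\) with \(G_{N-m_a}\).  The identification preserves edges both inside each bow-tie factor and between factors.  Applying \cref{lem:clique-complex-join} repeatedly then gives \(\Cl(G_{S^c})\cong X_{N-m_a}\).

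Finally, a join of simplicial isomorphisms on disjoint vertex sets is a simplicial isomorphism of joins, since \(\sigma\cup\tau\mapsto f(\sigma)\cup g(\tau)\).  This yields \(Y_{a,S}^{\mathrm{pad}}\cong Y_a^{\mathrm{loc}}*X_{N-m_a}\).  I would also note that this isomorphism preserves base/private status, and hence private depth and weights.  It sends the base \(X_N\) onto \(X_{m_a}*X_{N-m_a}\), which is what later uses of the theorem will need.

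No step is a genuine obstacle.  The only place requiring care is the bookkeeping for disjointness and for edges in the join.  Specifically, one must check that \(\rho_{a,S}\) maps join edges between the local factors \(B_r\) and \(B_{r'}\) onto join edges between \(B_{s_r}\) and \(B_{s_{r'}}\), and that it introduces no identifications.  Both follow from injectivity of \(\rho_{a,S}\), which in turn holds because the entries of \(S\) are distinct.
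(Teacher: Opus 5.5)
Your proposal is correct and takes essentially the same route as the paper: the graph identity is the definition of the padded gadget, and the complex identity follows from \cref{lem:clique-complex-join} together with the observation that relabelling is a simplicial isomorphism, with the case $N=m_a$ handled by $X_0=\{\varnothing\}$. Your explicit checks, namely disjointness of the vertex sets and the identification $\Cl(G_{S^c})\cong X_{N-m_a}$ via the increasing enumeration of $S^c$, fill in details the paper's proof leaves implicit.
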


\begin{proof}
Relabelling changes only vertex names and hence gives a simplicial
isomorphism.  By definition of the graph join, every vertex of the placed
local gadget is adjacent to every vertex of the untouched graph.  Therefore
a set of vertices is a clique in \(G_{a,S}^{\mathrm{pad}}\) exactly when its
intersection with each factor is a clique.  The clique-complex identity
follows from \cref{lem:clique-complex-join}.

When \(N=m_a\), the untouched factor is \(X_0=\{\varnothing\}\), so the join leaves the local complex unchanged.
\end{proof}

\begin{theorem}[Topological action of one padded projector]
\label{thm:padded-topology}
Let $\iota_{a,S}:X_N\hookrightarrow Y_{a,S}^{\mathrm{pad}}$ be the base
inclusion.  In encoded degree $q_N=2N-1$,
\[
 \ker(\iota_{a,S})_*
 =\mathfrak e_N^{-1}(\operatorname{im} P_{a,S}^{(N)})
 \subseteq H_{q_N}(X_N;\CC),
\]
and
\[
 H_{q_N}(Y_{a,S}^{\mathrm{pad}};\CC)
 \cong
 \Qenc_N/\operatorname{im} P_{a,S}^{(N)}.
\]
The quotient is identified with $\ker P_{a,S}^{(N)}$ by taking the
orthogonal complement of $\operatorname{im}P_{a,S}^{(N)}$ in the encoded
qubit space.  This identifies the surviving homology classes.  Their
weighted harmonic representatives are studied in
Section~\ref{sec:single-gadget-spectrum}.
\end{theorem}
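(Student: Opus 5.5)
The plan is to reduce to the local statement, \cref{thm:local-topology}, by way of the join structure of \cref{thm:join-padding} and the relative long exact sequence. Write \(Y:=Y_a^{\mathrm{loc}}(S)\), \(X_S:=X_{S}\cong X_{m_a}\) for its placed base, and \(Z:=\Cl(G_{S^c})\cong X_{N-m_a}\). Then \(Y_{a,S}^{\mathrm{pad}}=Y*Z\) and \(X_N=X_S*Z\). By \cref{prop:certificate-relabel-invariance}, the placed pair \((Y,X_S)\) inherits C0--C2, and the relabelled target cycle is \(\rho_{a,S\#}j_a\). Under \(\mathfrak e\) this cycle corresponds to \(\phi^{(a)}\) written with its \(r\)-th bit on qubit \(s_r\).

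The first step is to compute the relative homology of the padded pair. Since \(X_S*Z\subseteq Y*Z\), the relative chain complex \(C_*(Y*Z,X_S*Z)\) is, by the augmented join decomposition, isomorphic to \(\bigoplus_{a'+b=r-1}C_{a'}(Y,X_S)\otimes\widetilde C_b(Z)\). This holds because a join simplex \(\sigma\cup\tau\) lies in \(X_S*Z\) exactly when \(\sigma\in X_S\). The relative join Künneth formula over \(\CC\) (the same algebra as \cref{thm:join-kunneth}) then gives \(H_r(Y*Z,X_S*Z)\cong\bigoplus H_{a'}(Y,X_S)\otimes\widetilde H_b(Z)\). By C0 and \cref{prop:bowtie-base-homology}, this is one-dimensional in degree \(2m_a+(2(N-m_a)-1)+1=2N=q_N+1\) and zero elsewhere. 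A generator is \(\Gamma_a*z_e\), where \(e\in\{0,1\}^{N-m_a}\) is any fixed string, or more precisely a tensor product with one class of \(\widetilde H_{q_{N-m_a}}(Z)\). When \(N=m_a\) everything is trivial.

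The second step is to compute the connecting map. For a cycle \(z\) of \(Z\) of degree \(b\), the join Leibniz rule gives \(\partial(\Gamma_a*z)=j_a*z\pm\Gamma_a*\partial z=j_a*z\). Hence \(\delta[\Gamma_a*z_e]=[j_a*z_e]\), which is nonzero by \cref{thm:local-topology}. Thus the relative group is not one-dimensional as a \(\CC\)-line in the sense needed for \cref{lem:relative-gluing} directly. Its top degree is \(\CC\otimes\widetilde H(Z)\), of dimension \(2^{N-m_a}\). I would therefore generalize that lemma: in the sequence \(H_{q_N+1}(Y*Z,X_N)\xrightarrow{\delta}H_{q_N}(X_N)\to H_{q_N}(Y*Z)\to H_{q_N}(Y*Z,X_N)=0\), the map \(\delta\) sends \([\Gamma_a]\otimes[z_e]\mapsto[j_a*z_e]\). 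Since the classes \([z_b]\) form a basis and \(j_a\neq0\), \(\delta\) is injective with image \(\Span\{[j_a*z_e]\}\). Exactness then gives \(\ker(\iota_{a,S})_*=\Span_e\{[\rho_\#j_a*z_e]\}\), together with surjectivity.

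The final step is the encoding identification. Since \(\mathfrak e_N([z_b])=|b\rangle\) with \(z_b\) ordered by qubit index, the class \([\rho_\#j_a*z_e]\) maps, up to the reordering sign of join factors, to \(W_S(\mathcal Z_a\phi^{(a)}\otimes|e\rangle)\). This reordering sign is the main obstacle. Permuting join factors of odd degree \(1\) cycles in \(X_1^{(r)}\) (each \(z_i\) has augmented degree 1, so transposing two factors of degree \(1\) introduces \((-1)^{(1+1)(1+1)}=+1\) in the augmented join sign rule) produces no sign, so I expect the sign to be \(+1\) uniformly. I would verify this carefully from \(\widetilde\partial(x*y)\) conventions, noting that the shifted degrees \(a+1=2\) are even. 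Then \(\mathfrak e_N(\ker)=\Span_e W_S(\phi^{(a)}\otimes|e\rangle)=\im P_{a,S}^{(N)}\). The quotient description follows from surjectivity, and the orthogonal-complement identification is linear algebra.
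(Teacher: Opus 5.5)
Your proof is correct, but it is organized differently from the paper's.

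\textbf{The paper's route.} The paper works in absolute homology. By \cref{thm:join-padding}, the padded base inclusion is the join of $i_a:X_{m_a}\hookrightarrow Y_a^{\mathrm{loc}}$ with the identity of $X_{N-m_a}$. Naturality of the join K\"unneth isomorphism then identifies the degree-$q_N$ map with $(i_a)_*\otimes I$ on $H_{q_{m_a}}(X_{m_a})\otimes\widetilde H_{q_{N-m_a}}(X_{N-m_a})$. Finally, \cref{thm:local-topology}, together with exactness of $-\otimes_{\CC}\widetilde H_{q_{N-m_a}}(X_{N-m_a})$, gives surjectivity with kernel $\CC[j_a]\otimes\widetilde H_{q_{N-m_a}}(X_{N-m_a})$.

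\textbf{Your route.} You rerun the relative argument one level up. You use a relative join K\"unneth formula for $(Y*Z,X_S*Z)$, which is justified by your chain-level splitting. You compute the connecting map explicitly, $\delta[\Gamma_a*z_e]=[j_a*z_e]$, by the Leibniz rule. You then apply a version of \cref{lem:relative-gluing} in which the relative group is $2^{N-m_a}$-dimensional.

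\textbf{Comparison.} The paper's argument is shorter and uses the local theorem as a black box. Yours avoids the appeal to naturality of K\"unneth for joined maps. It also exhibits an explicit filling $\Gamma_a*z_e$ for every killed class, and it has the same shape as the proof of \cref{thm:global-topology}. The reordering-sign check, $(-1)^{(1+1)(1+1)}=1$, is identical in both.

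\textbf{Small repairs to your write-up.}
\begin{itemize}
\item Your first step calls the padded relative group one-dimensional. It is $\CC\otimes\widetilde H_{q_{N-m_a}}(Z)$, of dimension $2^{N-m_a}$. You notice and correct this yourself in the second step.
\item The nonvanishing and linear independence of the classes $[j_a*z_e]$ do not come from \cref{thm:local-topology}. They come from the basis $\{[z_b]\}$ of $H_{q_N}(X_N)$: distinct strings $e$ involve disjoint sets of basis classes, and some coefficient of $j_a$ is nonzero.
\item To use $[\Gamma_a]$ as the generator of the line supplied by C0, you need $[\Gamma_a]\neq0$ in $H_{2m_a}(Y,X_S)$. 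This follows from $\delta[\Gamma_a]=[j_a]\neq0$, exactly as in the proof of \cref{thm:local-topology}.
\end{itemize}
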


\begin{proof}
Put \(m=m_a\) and \(n=N-m\).  Under \cref{thm:join-padding}, the base inclusion
is the join of \(i_a:X_m\hookrightarrow Y_a^{\mathrm{loc}}\) with the identity of \(X_n\).  The reduced homology of \(X_m\) and \(Y_a^{\mathrm{loc}}\) is concentrated
in degree \(q_m\), while that of \(X_n\) is concentrated in degree \(q_n\).

The join K\"unneth formula of \cref{thm:join-kunneth} reduces here to
the tensor product of the homology groups in degrees \(q_m\) and \(q_n\), because the factor homology vanishes in all other degrees.  Under this
identification, the join map induces the tensor product of the maps on
homology.  Therefore, in degree \(q_N=q_m+q_n+1\), the base inclusion induces
\[
 (i_a)_*\otimes I:
 H_{q_m}(X_m)\otimes\widetilde H_{q_n}(X_n)
 \longrightarrow
 H_{q_m}(Y_a^{\mathrm{loc}})\otimes\widetilde H_{q_n}(X_n).
\]
For \(n=0\), the second factor is interpreted as \(\widetilde H_{-1}(X_0)=\CC\).

By \cref{thm:local-topology}, the first map is surjective with kernel \(\CC[j_a]\).  Tensoring over the field \(\CC\) preserves this exact
sequence.  Hence the displayed map is surjective with kernel \(\CC[j_a]\otimes\widetilde H_{q_n}(X_n)\).

The explicit encoding sends this kernel, after the factor placement \(W_S\) of \cref{def:tensor-placement}, to
\[
 W_S(\CC\phi^{(a)}\otimes\Qenc_n)
 =\operatorname{im}\bigl(W_S(\Pi_a\otimes I)W_S^*\bigr)
 =\operatorname{im} P_{a,S}^{(N)}.
\]
Thus the padded attachment kills exactly the image of the placed projector \(P_{a,S}^{(N)}\).

Each bow-tie cycle has degree one, so swapping two such join factors has sign \((-1)^{(1+1)(1+1)}=1\). Thus their tensor permutation introduces no extra sign in \([z_b]\).

The first isomorphism theorem proves the quotient statement.  The map \([x]\longmapsto(I-P_{a,S}^{(N)})x\) is its well-defined isomorphism with \(\ker P_{a,S}^{(N)}\), since its kernel before quotienting is exactly \(\operatorname{im}P_{a,S}^{(N)}\).
\end{proof}

\subsection{All attachments and the YES case}

The decisive point is that a nonbase simplex belongs to exactly one gadget
occurrence: it cannot contain private vertices from two different
occurrences.  Consequently the relative chain complex of the global pair \((\widehat Y(H),X_N)\) is the direct sum of the relative chain complexes of
the individual occurrences.  The connecting maps for those occurrences
therefore add, which is the mechanism behind the following theorem.

\begin{theorem}[Global topological kernel]
\label{thm:global-topology}
Let \(H=\sum_{j=1}^tP_j, \, P_j:=(\Pi_{a_j})_{S_j}^{(N)}\), and let $\widehat\iota:X_N\hookrightarrow\widehat Y(H)$ be the base
inclusion.  Then
\[
 \ker(\widehat\iota)_*
 =\mathfrak e_N^{-1}\!\left(\sum_{j=1}^t\operatorname{im} P_j\right)
 \subseteq H_{q_N}(X_N;\CC).
\]
Consequently the image of the base homology is naturally isomorphic to \(\Qenc_N\Big/\sum_{j=1}^t\operatorname{im} P_j\). Since the \(P_j\) are orthogonal projections, this quotient is naturally
identified with
\[
  \left(\sum_{j=1}^t\operatorname{im}P_j\right)^\perp
  =\bigcap_{j=1}^t\ker P_j
  =\ker H.
\]
Two vectors $\psi,\varphi\in\Qenc_N$ represent the same coset in this
quotient precisely when \(\psi-\varphi\in\sum_{j=1}^t\operatorname{im} P_j\).
\end{theorem}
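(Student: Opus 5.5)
The plan is to use the long exact sequence of the global pair \((\widehat Y(H),X_N)\) and reduce it to the single-gadget computation of \cref{thm:padded-topology}. The key structural input is C2 together with \cref{prop:certificate-scope}. No simplex of \(\widehat Y(H)\) contains private vertices from two different occurrences, and the induced complex on \(V(G_N)\cup V_j^{\mathrm{priv}}\) is exactly \(Y_j^{\mathrm{pad}}:=Y_{a_j,S_j}^{\mathrm{pad}}\). Hence every simplex of \(\widehat Y(H)\setminus X_N\) lies in exactly one \(Y_j^{\mathrm{pad}}\setminus X_N\), and \(\widehat Y(H)=\bigcup_jY_j^{\mathrm{pad}}\) with pairwise intersections equal to \(X_N\). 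I would first record the resulting chain-level splitting
\[
  C_*(\widehat Y(H),X_N;\CC)\;\cong\;\bigoplus_{j=1}^tC_*(Y_j^{\mathrm{pad}},X_N;\CC).
\]
This is an isomorphism of chain complexes, because the boundary of a simplex of \(Y_j^{\mathrm{pad}}\) stays in \(Y_j^{\mathrm{pad}}\). It gives \(H_*(\widehat Y(H),X_N)\cong\bigoplus_jH_*(Y_j^{\mathrm{pad}},X_N)\).

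Next I would show that the connecting map \(\widehat\delta\) of the global pair is the sum of the connecting maps \(\delta_j\) of the individual pairs. A relative cycle decomposes as \(c=\sum_jc_j\) with \(c_j\in C_*(Y_j^{\mathrm{pad}})\) and \(\partial c_j\in C_*(X_N)\) for each \(j\). Then \(\widehat\delta[c]=\sum_j[\partial c_j]=\sum_j\delta_j[c_j]\). Exactness at \(H_{q_N}(X_N)\) gives
\[
  \ker(\widehat\iota)_*=\operatorname{im}\widehat\delta=\sum_j\operatorname{im}\delta_j=\sum_j\ker(\iota_j)_*,
\]
where the last equality is exactness for each single pair \((Y_j^{\mathrm{pad}},X_N)\). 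By \cref{thm:padded-topology}, \(\ker(\iota_j)_*=\mathfrak e_N^{-1}(\operatorname{im}P_j)\). Since \(\mathfrak e_N\) is a linear isomorphism, the sum of these preimages equals \(\mathfrak e_N^{-1}(\sum_j\operatorname{im}P_j)\), which is the kernel formula in the statement.

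The remaining claims are linear algebra. The image of base homology is \(H_{q_N}(X_N)/\ker(\widehat\iota)_*\), and transporting by \(\mathfrak e_N\) identifies it with \(\Qenc_N/\sum_j\operatorname{im}P_j\). The coset criterion is the definition of equality in this quotient. Each \(P_j\) is an orthogonal projection, so \(\operatorname{im}P_j^\perp=\ker P_j\). Hence \((\sum_j\operatorname{im}P_j)^\perp=\bigcap_j\ker P_j\), which equals \(\ker H\) by the identity in \cref{subsec:source-presentations}. The orthogonal complement maps isomorphically onto the quotient.

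The main obstacle is the decomposition step. I need to justify that \(\widehat Y(H)\) is exactly the union of the padded complexes, so that clique completion creates no simplex mixing base simplices with several gadgets' private vertices. I also need each \(Y_j^{\mathrm{pad}}\) to be the full induced subcomplex on its vertex set. The first fact uses the private-isolation part of C2; the second is the induced-subgraph equality from \cref{prop:certificate-scope}. Once both are in place, the connecting-map additivity is routine.
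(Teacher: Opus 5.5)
Your proposal is correct and follows the paper's proof essentially step by step. Like the paper, you use C2 and the common induced base graph to split $C_\bullet(\widehat Y(H),X_N;\CC)$ as the direct sum of the relative chain complexes of the padded occurrences. You then show the global connecting map is the sum of the local ones, and combine exactness for each pair with \cref{thm:padded-topology} to get $\ker(\widehat\iota)_*=\mathfrak e_N^{-1}\bigl(\sum_j\operatorname{im}P_j\bigr)$, with the quotient and orthogonal-complement identifications handled the same way.
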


\begin{proof}
Write $X=X_N$, $Y=\widehat Y(H)$, $q=q_N$, and let $Y_j$ be the actual
padded clique complex of occurrence $j$, viewed inside $Y$.
We first verify the relevant simplex decomposition from the graph itself.
By condition C2, a clique contains private vertices from at most one
occurrence.
A clique containing none is a simplex of $X$, since every attachment has
the same induced base graph.  If it contains a private vertex of occurrence
$j$, all its edges incident to private vertices come from that occurrence;
its remaining edges are base edges.  It is therefore a simplex of $Y_j$.
Conversely, every simplex of $Y_j$ is a simplex of $Y$, by the definition
of the global edge set.  In particular, the inclusions $Y_j\hookrightarrow Y$
are simplicial, $Y_i\cap Y_j=X$ for $i\ne j$, and each nonbase simplex has
a unique occurrence label.

The nonbase oriented simplices are a basis of relative chains.  The preceding
partition gives, in every degree, an isomorphism
\[
 \Phi:\bigoplus_{j=1}^t C_\bullet(Y_j,X;\CC)
 \xrightarrow{\ \cong\ }C_\bullet(Y,X;\CC).
\]
This is a chain isomorphism: deleting a vertex either retains the same
private occurrence label or produces a base face, which is zero in the
relative quotient.  Thus no relative boundary changes occurrence label.
Taking homology gives the corresponding direct sum in every degree.

Let $\delta_j:H_{q+1}(Y_j,X)\to H_q(X)$ and
$\widehat\delta:H_{q+1}(Y,X)\to H_q(X)$ be the connecting maps.
For relative cycles represented by chains $c_j$ with $\partial c_j\in C_q(X)$,
the global connecting map is explicitly
\[
 \widehat\delta\,\Phi_*(([c_j])_j)
 =\left[\partial\sum_jc_j\right]
 =\sum_j[\partial c_j]
 =\sum_j\delta_j[c_j].
\]
Consequently $\operatorname{im}\widehat\delta=\sum_j\operatorname{im}\delta_j$.
Exactness of the long exact sequence of each pair, followed by
\cref{thm:padded-topology}, gives
\[
 \operatorname{im}\delta_j=\ker(H_q(X)\to H_q(Y_j))
 =\mathfrak e_N^{-1}(\operatorname{im} P_j).
\]
Exactness for $(Y,X)$ now yields
\[
 \ker\widehat\iota_*
 =\operatorname{im}\widehat\delta
 =\mathfrak e_N^{-1}\!\left(\sum_j\operatorname{im} P_j\right).
\]
The first isomorphism theorem identifies the image of $\widehat\iota_*$
with the stated quotient.  Finally, because every \(P_j\) is an orthogonal projection,
\[
  \ker H
  =\ker\!\left(\sum_jP_j\right)
  =\bigcap_j\ker P_j
  =\left(\sum_j\operatorname{im}P_j\right)^\perp.
\]
Orthogonal projection onto this complement gives the claimed identification
of the quotient with \(\ker H\).
\end{proof}

\begin{corollary}[Exact YES implication]
\label{cor:YES}
If $0\ne|\psi\rangle\in\Qenc_N$ satisfies \(P_j|\psi\rangle=0\,(j=1,\ldots,t)\), then
\[
 \widehat\iota_*\bigl(\mathfrak e_N^{-1}(|\psi\rangle)\bigr)
 \ne0\in H_{q_N}(\widehat Y(H);\CC).
\]
Hence
\[
 H_{q_N}(\widehat Y(H);\CC)\ne0,
 \qquad
 \lambda_{\min}(\widehat\Delta_{q_N})=0.
\]
\end{corollary}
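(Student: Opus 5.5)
The plan is to deduce the corollary directly from \cref{thm:global-topology} and then pass to the Laplacian through Hodge theory. Let \(0\ne|\psi\rangle\) satisfy \(P_j|\psi\rangle=0\) for every \(j\). Then \(|\psi\rangle\in\bigcap_j\ker P_j=\bigl(\sum_j\operatorname{im}P_j\bigr)^\perp\). Since \(|\psi\rangle\neq0\) lies in the orthogonal complement of \(\sum_j\operatorname{im}P_j\), it does not lie in that subspace. Indeed, a nonzero vector cannot be orthogonal to itself.

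Next, \(\mathfrak e_N\) is a linear isomorphism (\cref{def:homological-encoding}), so \(\mathfrak e_N^{-1}(|\psi\rangle)\) is a well-defined class in \(H_{q_N}(X_N;\CC)\). By the previous paragraph it does not belong to \(\mathfrak e_N^{-1}\bigl(\sum_j\operatorname{im}P_j\bigr)\). \Cref{thm:global-topology} identifies that preimage with \(\ker(\widehat\iota)_*\). Hence \(\widehat\iota_*\bigl(\mathfrak e_N^{-1}(|\psi\rangle)\bigr)\neq0\) in \(H_{q_N}(\widehat Y(H);\CC)\), and in particular this homology group is nonzero.

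Finally, write \(\widehat\Delta_{q_N}\) for the degree-\(q_N\) weighted Hodge Laplacian of \(\widehat Y(H)\), with any positive vertex weights. By \cref{prop:weights-preserve-cohomology} (or \cref{cor:homology-zero-eigenvalues} in the unweighted case), \(\dim\ker\widehat\Delta_{q_N}=\dim H_{q_N}(\widehat Y(H);\CC)>0\). Because \(\widehat\Delta_{q_N}\) is positive semidefinite by \cref{lem:weighted-hodge-energy}, having a nonzero kernel gives \(\lambda_{\min}(\widehat\Delta_{q_N})=0\).

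No step is a real obstacle, since all the work sits in \cref{thm:global-topology}. The only point to handle carefully is that the conclusion holds for every choice of positive weights, including the weights \(\lambda\) used later. This is exactly what the weight-independence of the zero-mode multiplicity in \cref{prop:weights-preserve-cohomology} provides.
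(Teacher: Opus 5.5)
Your proposal is correct and follows the paper's own proof. A nonzero $\psi\in\bigl(\sum_j\operatorname{im}P_j\bigr)^\perp$ cannot lie in $\sum_j\operatorname{im}P_j$, so by \cref{thm:global-topology} its class is not in $\ker\widehat\iota_*$; the weight-independent identification of $\ker\widehat\Delta_{q_N}$ with cohomology, together with positive semidefiniteness, then gives $\lambda_{\min}(\widehat\Delta_{q_N})=0$ for every positive weighting.
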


\begin{proof}
Each $P_j$ is an orthogonal projection, so
$\ker P_j=(\operatorname{im} P_j)^\perp$.  The hypotheses imply
\[
 \psi\in\bigcap_j\ker P_j
 =\left(\sum_j\operatorname{im} P_j\right)^\perp.
\]
Since $\psi\ne0$ and
$\psi\perp\sum_j\operatorname{im}P_j$, its coset in the quotient of
\cref{thm:global-topology} is nonzero, which gives the claimed nonzero
homology class.

For positive vertex weights, the weighted inner product is nondegenerate
and finite-dimensional Hodge theory identifies the weighted Laplacian
kernel with cohomology.  Over $\CC$, homology and cohomology have the same
dimension.  Hence the nonzero homology just proved gives a nonzero weighted
harmonic vector.  Positivity of the Hodge Laplacian implies that its
smallest eigenvalue is exactly zero.  This holds for every positive private weight.
\end{proof}

\section{The weighted spectrum of one gadget}
\label{sec:single-gadget-spectrum}

Section~\ref{sec:topological-correctness} determined the exact zero-energy
information: attaching the local gadget removes the line \(\CC|\phi^{(a)}\rangle\) from homology.  With private-vertex weight \(0<\lambda<1\), the local coboundary has the form
\[
  D_\lambda=\Dbase+\lambda\Dpriv.
\]
Here \(\Dbase\) preserves private depth and \(\Dpriv\) raises it by one.
The private-depth filtration therefore records the first power of \(\lambda\) at which an encoded cochain meets an obstruction.  An obstruction of order \(\lambda^r\) produces Laplacian energy of order \(\lambda^{2r}\).

This filtration determines a spectral sequence, whose successive pages
record which cohomology classes survive after obstructions of increasing
order have been taken into account.  We write \(E_{r,\ell}^k\) for the
page-\(r\) term in cochain degree \(k\) and private depth \(\ell\).  Its
differential has degree and depth shift
\[
  \delta_r:E_{r,\ell}^k\longrightarrow E_{r,\ell+r}^{k+1}.
\]
The page quotients and the representative formula for \(\delta_r\) are defined in Appendix~\ref{app:spectral-sequence}.

\subsection{The first two pages}

\begin{theorem}[Page-one structure]
\label{thm:page-one}
For every catalogue gadget, page zero is the exact-depth quotient
\[
 E_{0,\ell}^k=U_\ell^k/U_{\ell+1}^k\cong C^k_{(\ell)},
\]
with differential induced by $\Dbase$.  Therefore
\[
 E_{1,\ell}^k
 =\ker(\Dbase:C^k_{(\ell)}\to C^{k+1}_{(\ell)})
 \Big/
 \operatorname{im}(\Dbase:C^{k-1}_{(\ell)}\to C^k_{(\ell)}).
\]
Condition C3 gives
\[
 E_{1,\ell}^k=0\qquad(1\le\ell\le k).
\]
\end{theorem}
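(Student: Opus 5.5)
The statement is essentially formal once the filtration is set up correctly, so the plan is to verify three things in order: that $U_\bullet^k$ is a decreasing filtration of the cochain complex, that its associated graded is the exact-depth decomposition with differential $\Dbase$, and that the standard page-one formula then reduces to the C3 hypothesis.

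\emph{Step 1: the filtration.} I will first check that $U_\ell^k=\bigoplus_{r\ge\ell}C^k_{(r)}$ is a subcomplex for the unweighted differential $D=\Dbase+\Dpriv$. By \cref{def:exact-depth-cochains}, $\Dbase$ maps $C^k_{(r)}$ into $C^{k+1}_{(r)}$ and $\Dpriv$ maps it into $C^{k+1}_{(r+1)}$. Hence $D(U_\ell^k)\subseteq U_\ell^{k+1}$ and $U_{\ell+1}^k\subseteq U_\ell^k$. The filtration is finite, since depths range over $0\le\ell\le k+1$ in each degree, so the spectral sequence of Appendix~\ref{app:spectral-sequence} is defined.

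\emph{Step 2: page zero.} Because $U_\ell^k=C^k_{(\ell)}\oplus U_{\ell+1}^k$ as vector spaces, the quotient map restricted to $C^k_{(\ell)}$ gives an isomorphism $E_{0,\ell}^k=U_\ell^k/U_{\ell+1}^k\cong C^k_{(\ell)}$. Take $u\in C^k_{(\ell)}$. Then $Du=\Dbase u+\Dpriv u$, and the term $\Dpriv u$ lies in $U_{\ell+1}^{k+1}$, so it vanishes in the quotient. Therefore the induced page-zero differential $\delta_0$ is $\Dbase$ restricted to depth $\ell$. By the general definition $E_1=H(E_0,\delta_0)$, this yields the displayed kernel-mod-image formula.

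\emph{Step 3: vanishing.} For $1\le\ell\le k$, condition C3 says exactly that this kernel equals this image, so $E_{1,\ell}^k=0$.

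The only point needing care is that these identifications use simplex-basis spans rather than the weighted coordinates. \cref{def:exact-depth-cochains} already notes that the two bases give the same spans. Moreover, C3 is a statement about kernels and images of the weight-independent algebraic map, because rescaling by $\lambda$ on each fixed depth does not change kernels or images. I expect no real obstacle beyond matching the paper's appendix conventions for $E_0$ and $\delta_0$.
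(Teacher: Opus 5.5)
Your proposal is correct and follows the paper's proof: the exact-depth projection identifies $U_\ell^k/U_{\ell+1}^k$ with $C^k_{(\ell)}$, and the depth-raising part $\Dpriv$ vanishes in the quotient, so $\delta_0$ is induced by $\Dbase$. Condition C3 is then exactly the vanishing of $E_{1,\ell}^k$ for $1\le\ell\le k$; the paper's proof also notes, via C2, that $E_{1,0}^k\cong\widetilde H^k(X_m;\CC)$, which is used later but is not part of the displayed statement.
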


\begin{proof}
Taking the exact-depth component induces the isomorphism
$U_\ell^k/U_{\ell+1}^k\to C^k_{(\ell)}$.
The component $\Dpriv$ raises depth, so it is zero in the target quotient;
$\Dbase$ preserves depth and induces $\delta_0$.
Taking its cohomology gives the asserted formula for $E_{1,\ell}^k$.
The intermediate-depth vanishing is precisely condition C3 applied to those
kernel and image spaces.  At depth zero, condition C2 identifies the resulting complex
with the augmented cochain complex of $X_m$, so \(E_{1,0}^k\cong\widetilde H^k(X_m;\CC)\). In the positive degree $d=2m-1$ this is ordinary $H^d(X_m;\CC)$.
\end{proof}

\begin{theorem}[Top-depth relative term]
\label{thm:top-depth-relative-term}
For every degree $k$,
\[
 E_{1,k+1}^k\cong C^k(P_a,A_a;\CC),
\]
and, under this identification, the page-one differential $\delta_1$ is the
relative coboundary of $(P_a,A_a)$.  Hence
\[
 E_{2,k+1}^k\cong H^k(P_a,A_a;\CC).
\]
By condition C4 this term is zero unless $k=2m_a$, while
\[
 E_{2,2m_a+1}^{2m_a}=\CC[\kappa_a].
\]
\end{theorem}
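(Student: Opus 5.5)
The plan is to compute the top-depth column of page zero and page one directly from the definitions. In cochain degree $k$ a simplex has $k+1$ vertices. So depth $k+1$ means that every vertex is private, and $C^k_{(k+1)}$ is spanned by the all-private $k$-simplices, which are the simplices of $P_a$. Since $k+2$ exceeds the number of vertices, there is no depth $k+2$ in degree $k$, so $U^k_{k+2}=0$. On page zero, $\Dbase:C^k_{(k+1)}\to C^{k+1}_{(k+1)}$ adds one base vertex. I therefore first identify
\[
E_{1,k+1}^k=\ker\bigl(\Dbase|_{C^k_{(k+1)}}\bigr)\big/\operatorname{im}\bigl(\Dbase:C^{k-1}_{(k+1)}\to C^k_{(k+1)}\bigr).
\]
The image term vanishes, because a $(k-1)$-simplex has only $k$ vertices and so cannot have depth $k+1$.

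The key step is to show that $\ker\Dbase$ on $C^k_{(k+1)}$ equals the span of $|\tau\rangle$ for $\tau\in(P_a)_k\setminus(A_a)_k$, which is $C^k(P_a,A_a;\CC)$ by \cref{def:all-private-pair}. Write $u=\sum_\tau u_\tau|\tau\rangle$. Then $\Dbase u=\sum_\tau\sum_b u_\tau\,\epsilon(\tau,\tau\cup b)|\tau\cup b\rangle$, where $b$ runs over the base vertices with $\tau\cup\{b\}\in Y_a^{\mathrm{loc}}$. Each target simplex $\tau\cup\{b\}$ with $\dep=k+1$ has exactly one base vertex, so it determines its private face $\tau$ uniquely. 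Hence distinct pairs $(\tau,b)$ give distinct basis vectors, and there is no cancellation. It follows that $\Dbase u=0$ if and only if $u_\tau=0$ for every $\tau$ that admits a base extension, that is, for every $\tau\in A_a$. This proves $E_{1,k+1}^k\cong C^k(P_a,A_a;\CC)$. The identification is compatible with the weighted and unweighted bases, since the spans agree.

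Next I identify $\delta_1$. By the representative formula of Appendix~\ref{app:spectral-sequence}, $\delta_1:E^k_{1,k+1}\to E^{k+1}_{1,k+2}$ is induced by the depth-raising part $\Dpriv$ applied to a $\Dbase$-closed representative, and the result is taken modulo the page-one image. In the target, $E^{k+1}_{1,k+2}$ is again the top-depth term, with no quotient, by the same argument one degree up. For $u$ supported on $P_a\setminus A_a$, the cochain $\Dpriv u$ is the coboundary in $P_a$, since adding a private vertex keeps the simplex all-private. Because $A_a$ is a subcomplex, this restricts correctly to relative cochains, so $\delta_1$ is exactly the relative coboundary $D$ of $(P_a,A_a)$. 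I would also note that this $\Dpriv u$ is automatically $\Dbase$-closed, because $D^2=0$ and the depth-$(k+2)$ component of $D(Du)$ is $\Dbase\Dpriv u$ (as $\Dbase u=0$). This is consistent with the kernel description. The main obstacle is fixing signs and conventions so that the appendix's $\delta_1$ matches the relative coboundary on the nose; the essential point is only that they agree up to the common incidence signs.

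Taking cohomology of the top-depth column under $\delta_1$ then gives $E^k_{2,k+1}\cong H^k(P_a,A_a;\CC)$. Two points need checking here. First, no incoming $\delta_1$ arrives from depth $k$ that would alter this column: the incoming source is $E^{k-1}_{1,k}$, which is again a top-depth term, so the column complex is exactly the relative cochain complex. Second, the outgoing target is handled as above. Condition C4 then gives vanishing except in degree $k=2m_a$, where the group is one-dimensional and generated by the class of $\kappa_a$ (\cref{def:private-core-class}). This yields $E^{2m_a}_{2,2m_a+1}=\CC[\kappa_a]$.
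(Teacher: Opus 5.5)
Your proposal is correct and follows essentially the same route as the paper. You identify the top-depth column with $C^k(P_a)$, use the disjoint single-base-vertex output supports to get $\ker\Dbase=C^k(P_a,A_a;\CC)$ with zero incoming image, recognize $\delta_1$ as $\Dpriv$ (the relative coboundary, with exactly the original incidence signs, so no sign adjustment is needed), and then invoke C4. The paper also treats the augmented degree $k=-1$ explicitly, where the empty simplex lies in $A_a$ because the base is nonempty, but your uniform kernel argument already covers that case.
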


\begin{proof}
Fix $k\ge0$.  A degree-$k$ simplex of depth $k+1$ is entirely private;
hence $C^k_{(k+1)}(Y_a^{\mathrm{loc}})=C^k(P_a)$.
For such a simplex $\tau$, $\Dbase|\tau\rangle$ is the signed sum of
cofaces obtained by adding one base vertex.  It is nonzero exactly when
$\tau\in A_a$.  Each output simplex has a unique base vertex, and removing
it recovers $\tau$.  Distinct private simplices have disjoint output supports, so
\[
 \ker(\Dbase|_{C^k(P_a)})
 =\Span\{|\tau\rangle:\tau\in(P_a)_k\setminus(A_a)_k\}
 =C^k(P_a,A_a).
\]
There is no incoming degree-$(k-1)$ cochain of depth $k+1$, so the incoming
image is zero.

The set $A_a$ is closed under faces: any base vertex that extends a private
simplex also extends its faces.  Conversely, if $\tau\notin A_a$, then
every private coface of $\tau$ is outside $A_a$.  Hence $\Dpriv$ preserves
the displayed relative cochain space.  For a representative $u$ in this
space, $Du=\Dpriv u$ lies at depth $k+2$.  The representative formula for
$\delta_1$ in Appendix~\ref{app:spectral-sequence} is therefore exactly the
relative coboundary, with its original incidence signs.  The only
page-one source that can enter this target under $\delta_1$ is the
corresponding preceding term, so taking cohomology gives the asserted
$E_2$ term.

In augmented degree $-1$, the only coordinate is the empty simplex, which lies in $A_a$ because the base is nonempty;
its $\Dbase$ image is the nonzero sum of base vertices.  Both sides of the
page-one identification are consequently zero there.  Lower degrees
are zero by convention.  Condition C4 now gives the unique nonzero page-two class of this form
in degree $2m_a$.
\end{proof}

After page two, the calculation has only two possible nonzero terms in the
degrees relevant to the encoded space.  Put \(d:=2m_a-1\) and \(R_a:=2m_a+1\); they are
\[
 E_{2,0}^{d}=H^d(X_{m_a};\CC),
 \qquad
 E_{2,R_a}^{d+1}=\CC[\kappa_a].
\]
These are the encoded base space and the private core.  Their degrees and
depths determine the first possible differential between them.

\subsection{The only possible long differential}

\begin{theorem}[No earlier differential reaches an endpoint]
\label{thm:no-earlier-differential}
Fix $a$, write \(m=m_a, \, d=2m-1, \, R_a=2m+1\), and regard nonexistent negative-depth or over-depth terms as zero.  For every
$2\le r<R_a$,
\[
 E_{r,r}^{d+1}=0,
 \qquad
 E_{r,R_a-r}^{d}=0.
\]
Moreover,
\[
 E_{r,-r}^{d-1}=0,
 \qquad
 E_{r,R_a+r}^{d+2}=0
 \quad(r\ge2).
\]
Thus no page-$r$ differential with $2\le r<R_a$ can leave the depth-zero
base term or enter the private core, and no page-$r$ differential can enter
the base term or leave the core.  The terms
\[
 E_{r,0}^{d}\cong H^d(X_m;\CC),
 \qquad
 E_{r,R_a}^{d+1}=\CC[\kappa_a]
\]
therefore persist unchanged to page $R_a$.  The first page on which a map
between them is degree- and depth-compatible is page $R_a$.
\end{theorem}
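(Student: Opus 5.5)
The plan is to read each vanishing off the page-one and page-two information already established, and then use the fact that $E_r$ is a subquotient of $E_{r'}$ for $r'\le r$. In particular, a term that is zero on page one or page two stays zero on every later page. I will use the depth range $0\le\ell\le k+1$ for degree-$k$ cochains, the bidegree shift $\delta_r:E^k_{r,\ell}\to E^{k+1}_{r,\ell+r}$, and the vanishing results of \cref{thm:page-one} and \cref{thm:top-depth-relative-term}.

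\emph{The two endpoint families for $2\le r<R_a$.} For $E^{d+1}_{r,r}$ the depth satisfies $2\le r\le 2m=d+1$. This lies in the intermediate range $1\le\ell\le k$ with $k=d+1$, so C3 through \cref{thm:page-one} gives $E^{d+1}_{1,r}=0$, and hence $E^{d+1}_{r,r}=0$. For $E^{d}_{r,R_a-r}$ the depth is $\ell=2m+1-r$, so $1\le\ell\le 2m-1=d$. Again this is intermediate for degree $k=d$, so C3 gives zero on page one. The only boundary case to check is $r=2m$, where $\ell=1\ge1$, so it is covered.

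\emph{The remaining two terms.} The term $E^{d-1}_{r,-r}$ has negative depth and is therefore zero by convention. The term $E^{d+2}_{r,R_a+r}$ has depth $2m+1+r\ge 2m+3$. This exceeds the maximal depth $k+1=2m+2$ for degree $k=d+2$, so it is zero as well.

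\emph{Persistence of the endpoints.} Incoming and outgoing differentials at each endpoint vanish for $2\le r<R_a$, because they start or end at the zero groups just identified. Specifically, $\delta_r$ out of $E^d_{r,0}$ lands in $E^{d+1}_{r,r}=0$, and $\delta_r$ into $E^{d}_{r,0}$ starts from $E^{d-1}_{r,-r}=0$. Likewise, $\delta_r$ into $E^{d+1}_{r,R_a}$ starts from $E^d_{r,R_a-r}=0$, and $\delta_r$ out of it lands in $E^{d+2}_{r,R_a+r}=0$. So each endpoint satisfies $E_{r+1}=E_r$ up to page $R_a$. Together with the page-two identifications $E^d_{2,0}=H^d(X_m;\CC)$ (from \cref{thm:page-one} together with C2) and $E^{d+1}_{2,R_a}=\CC[\kappa_a]$ (from \cref{thm:top-depth-relative-term}), this gives the stated persistence.

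\emph{Compatibility.} A map from $(k,\ell)=(d,0)$ to $(d+1,R_a)$ requires a depth shift of exactly $R_a$, so it can occur only on page $R_a$.

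The only real care needed is bookkeeping: confirming that the depth bounds land inside C3's range $1\le\ell\le k$ at both extremes, and that the augmented convention does not introduce stray terms. I expect no genuine obstacle.
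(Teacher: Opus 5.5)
Your proposal is correct and follows essentially the same route as the paper: C3 kills the two intermediate-depth positions $E^{d+1}_{r,r}$ and $E^{d}_{r,R_a-r}$, and the depth-range conventions kill $E^{d-1}_{r,-r}$ and $E^{d+2}_{r,R_a+r}$; persistence and the bidegree count then finish the argument. The one step you cite rather than prove is $E^{d}_{2,0}\cong H^d(X_m;\CC)$, since \cref{thm:page-one} with C2 only gives the page-one term; the paper reaches page two by noting that the $\delta_1$ target $E^{d+1}_{1,1}$ vanishes by C3 (your own argument with $r=1$) and that the incoming source has negative depth.
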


\begin{proof}
First account for page one.  At depth zero,
$E_{1,0}^k=\widetilde H^k(X_m)$ by \cref{thm:page-one}.
For $k\ge0$ its $\delta_1$ target has degree $k+1$ and depth one, so condition C3
makes that target zero.  No differential enters from negative depth.
The degree-$-1$ base term is zero because $X_m$ is nonempty.
Thus the depth-zero terms persist to page two.  The base calculation in
Section~\ref{sec:bowtie} and \cref{thm:top-depth-relative-term} show that the
entire augmented page two has just the following possible nonzero terms:
\[
 E_{2,0}^{d}=H^d(X_m;\CC),\qquad
 E_{2,R_a}^{d+1}=\CC[\kappa_a].
\]
Every later page is cohomology of its predecessor, so a zero term stays
zero.

More explicitly, for $2\le r<R_a=d+2$, the outgoing base target
$(k,\ell)=(d+1,r)$ is an intermediate-depth position because
$1\le r\le d+1$.  The incoming core source is
$(k,\ell)=(d,R_a-r)$, with $1\le R_a-r\le d$.
Both vanish already on page one by condition C3.  An incoming base source has
negative depth.  An outgoing core target has degree $d+2$ and depth
$R_a+r\ge d+4$, exceeding its maximum depth $d+3$.
These observations prove every displayed vanishing and show that the two
terms persist unchanged to page $R_a$.  At that page, and only at that
page, the degree and depth shift permits a map from the base term to the
core.
\end{proof}

\begin{theorem}[The page-$R_a$ differential]
\label{thm:page-Ra-differential}
Under the harmonic encoding $H^d(X_m;\CC)\cong\Qenc_m$, the page-$R_a$
differential \(\delta_{R_a}:E_{R_a,0}^{d} \longrightarrow E_{R_a,R_a}^{d+1}=\CC[\kappa_a]\) has the form
\[
 \delta_{R_a}(|\psi\rangle)
 =\alpha_a\langle\phi^{(a)}|\psi\rangle[\kappa_a],
 \qquad \alpha_a\in\CC\setminus\{0\}.
\]
In particular,
\[
 \ker\delta_{R_a}=\ker\Pi_a=(\CC|\phi^{(a)}\rangle)^\perp.
\]
If $i_a:X_m\hookrightarrow Y_a^{\mathrm{loc}}$ denotes inclusion, the limiting
depth-zero term is equivalently
\[
 E_{\infty,0}^{d}
 =\operatorname{im}\bigl(i_a^*:H^d(Y_a^{\mathrm{loc}};\CC)
                 \to H^d(X_m;\CC)\bigr)
 =\ker\Pi_a.
\]
\end{theorem}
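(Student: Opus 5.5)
The plan is to combine the convergence of the finite spectral sequence with the topological computation of \cref{thm:local-topology}, and to let the structure of pages two through \(R_a\) force the stated shape of \(\delta_{R_a}\). The filtration \(U_\ell^\bullet\) is finite and bounded, since depths run from \(0\) to \(k+1\). So the spectral sequence of Appendix~\ref{app:spectral-sequence} converges to \(\widetilde H^\bullet(Y_a^{\mathrm{loc}};\CC)\). We will use the standard identification of the depth-zero edge term with the image of restriction to the bottom filtration quotient:
\[
  E_{\infty,0}^d=\operatorname{im}\bigl(i_a^*:H^d(Y_a^{\mathrm{loc}})\to H^d(X_m)\bigr).
\]
Here \(U_0/U_1\) is the cochain complex of the induced base, which equals \(X_m\) by C2.

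The key steps are as follows.
\begin{enumerate}[label=(\arabic*)]
\item By \cref{thm:no-earlier-differential}, \(E_{R_a,0}^d\cong H^d(X_m)\) and \(E_{R_a,R_a}^{d+1}=\CC[\kappa_a]\). Also, no differential enters the base term and none leaves the core on any page. Hence
\[
  E_{\infty,0}^d=\ker\delta_{R_a}.
\]
\item Because the target is one-dimensional, \(\delta_{R_a}\) is a linear functional times \([\kappa_a]\). So it has the form \(\delta_{R_a}(\psi)=\langle\chi|\psi\rangle[\kappa_a]\) for a unique vector \(\chi\in\Qenc_m\), after transporting through the harmonic encoding \(J_m\).
\item Determine \(\ker\delta_{R_a}\) from topology. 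By \cref{thm:local-topology}, \(i_a\) is an isomorphism in degrees other than \(d\) and kills exactly \(\CC[j_a]\) in degree \(d\). Dualising over \(\CC\) (universal coefficients), \(i_a^*\) in degree \(d\) is injective with image the annihilator of \(\CC[j_a]\). That is, the image consists of the classes \([u]\) with \(u(j_a)=0\).
\item Compute that annihilator in the encoding. A harmonic cochain \(J_m\psi=\sum_b\psi_bh_b\) pairs with \(j_a=\sum_ba_bz_b\), and \(h_b=2^{-N}z_b\) gives
\[
  \langle z_b,z_c\rangle=4^m\delta_{bc}.
\]
The pairing is therefore \(2^m\sum_ba_b\psi_b\), which is proportional to \(\langle\phi^{(a)}|\psi\rangle\) up to the conjugation convention for cochain evaluation. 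The coefficients are real integers, so conjugation is harmless.
\end{enumerate}

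It follows that \(\ker\delta_{R_a}=(\CC\phi^{(a)})^\perp\), so \(\chi=\bar\alpha_a\phi^{(a)}\) with \(\alpha_a\neq0\). We can also check nonvanishing independently: if \(\delta_{R_a}=0\), then \(E_\infty\) would retain all of \(H^d(X_m)\), contradicting the drop in dimension in \cref{thm:local-topology}. This is equivalent to C0 through the relative long exact sequence.

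The main obstacle is the edge-term identification in step 3, namely \(E_{\infty,0}^d=\operatorname{im}i_a^*\). This needs care with the decreasing-filtration convention, which has depth zero as the quotient \(U_0/U_1\) rather than a subcomplex. It also needs care with augmentation: in degree \(d\ge1\) reduced and ordinary cohomology agree. I would argue directly. A class of \(H^d(X_m)\) survives to \(E_\infty\) iff it has a representative \(u\in U_0^d\) with \(D u=0\) whose depth-zero component represents it, and such \(u\) is exactly a cocycle on \(Y_a^{\mathrm{loc}}\) restricting to the given class. A secondary point is to confirm, via the representative formula for \(\delta_{R_a}\), that the identification \(E_{R_a,R_a}^{d+1}=\CC[\kappa_a]\) is the one coming from \cref{thm:top-depth-relative-term}. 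This only affects the normalization of \(\alpha_a\), not its nonvanishing.
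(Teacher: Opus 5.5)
Your proposal is correct and follows the paper's proof essentially step for step. You dualize \cref{thm:local-topology} to identify $\operatorname{im} i_a^*$ with the annihilator of $[j_a]$, and you evaluate $J_m\psi$ on $j_a$ to get $2^m\mathcal Z_a\langle\phi^{(a)}|\psi\rangle$, using realness of the coefficients. You then identify $E_{\infty,0}^d$ with $\operatorname{im} i_a^*$ by the edge map, using C2, and with $\ker\delta_{R_a}$ by \cref{thm:no-earlier-differential}. Finally, you use that a functional into the one-dimensional target $\CC[\kappa_a]$ is determined up to a nonzero scalar by its kernel. Two small things to tidy: $h_b=2^{-N}z_b$ should read $h_b=2^{-m}z_b$, and you should state explicitly, as the paper does, that no differential leaves the base term after page $R_a$, because its target $E_{r,r}^{d+1}$ with $r\ge R_a+1=d+3$ exceeds the maximal depth $d+2$ in degree $d+1$.
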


\begin{proof}
For a finite complex over $\CC$, evaluation gives the natural duality
\[
 H^d(K;\CC)\cong\operatorname{Hom}_{\CC}(H_d(K;\CC),\CC).
\]
Under this duality, $i_a^*$ is dual to $(i_a)_*$.
By \cref{thm:local-topology}, $(i_a)_*$ is the quotient by $\CC[j_a]$.
Dualizing this quotient shows
\[
 \operatorname{im} i_a^*
 =\{[u]\in H^d(X_m;\CC):u(j_a)=0\}.
\]
To express this subspace in the harmonic encoding, write
$|\phi^{(a)}\rangle=\mathcal Z_a^{-1}\sum_b a_b|b\rangle$ with $a_b\in\ZZ$.
Under the explicit harmonic encoding, a vector
$|\psi\rangle=\sum_b\psi_b|b\rangle$ has representative
$J_m\psi=\sum_b\psi_bh_b$, with $h_b=2^{-m}z_b$.
The cycles $z_b$ have disjoint simplex supports and squared norm $4^m$.
Consequently the complex-linear cochain evaluation is
\[
 (J_m\psi)(j_a)
 =2^m\sum_ba_b\psi_b
 =2^m\mathcal Z_a\langle\phi^{(a)}|\psi\rangle.
\]
The last equality uses that the catalogue coefficients $a_b$ are real.
Hence $\operatorname{im} i_a^*=\ker\Pi_a=\phi^{(a)\perp}$ in encoded
coordinates.

By the depth-zero edge-map identification in
Appendix~\ref{app:spectral-sequence}, this is $E_{\infty,0}^d$.  By
\cref{thm:no-earlier-differential}, the base term
reaches page $R_a$ unchanged.  No differential enters it, and after page
$R_a$ any outgoing target has depth greater than the maximum $d+2$ in
degree $d+1$.  Therefore
\[
 \ker\delta_{R_a}=E_{R_a+1,0}^d
 =E_{\infty,0}^d=\phi^{(a)\perp}.
\]
This is a codimension-one kernel, so the differential has rank one and
is nonzero.  Its target is the one-dimensional space $\CC[\kappa_a]$.
The linear functionals with kernel $\phi^{(a)\perp}$ are exactly the
nonzero multiples of $\langle\phi^{(a)}|$.  Hence the differential has the
stated form for some $\alpha_a\ne0$, with the value depending on the chosen
core generator.
\end{proof}

Thus the first nonzero obstruction to the target line occurs on page \(R_a=2m_a+1\), corresponding to the eigenvalue scale \(\lambda^{2R_a}=\lambda^{4m_a+2}\).

\subsection{From page order to eigenvalue order}

The approximate harmonic spaces below describe cancellation of the
coboundary and adjoint-coboundary residuals through order \(\lambda^{r-1}\).

\begin{definition}[Approximate harmonic order]
\label{def:approximate-harmonic}
Let $D_\lambda=D_0+\lambda D_1$ be a finite-dimensional analytic
coboundary, with $D_0=\Dbase$ and $D_1=\Dpriv$.  Fix $r\ge1$.  A truncated
degree-$k$ correction through order $r-1$ is a vector-valued polynomial
\[
 u^{<r}(\lambda):=\sum_{j=0}^{r-1}\lambda^ju_j
 =u_0+\lambda u_1+\cdots+\lambda^{r-1}u_{r-1},
 \qquad u_j\in C^k.
\]
Its constant coefficient $u_0=u^{<r}(0)$ is called its leading coefficient;
$u_1,\ldots,u_{r-1}$ are correction coefficients.  Define $\mathcal V_r^k$
to be the set of leading coefficients $u_0$ for which the corrections can be
chosen so that
\[
 D_{\lambda,k}u^{<r}(\lambda)=O(\lambda^r),
 \qquad
 D_{\lambda,k-1}^*u^{<r}(\lambda)=O(\lambda^r).
\]
Equivalently, both residuals have zero Taylor coefficients in orders
$0,\ldots,r-1$.  Set $\mathcal V_0^k:=C^k$.  For an integer $s\ge0$, a
nonzero analytic eigenvalue branch $\mu(\lambda)$ has exact order $2s$ if
\[
 0<\lim_{\lambda\downarrow0}\frac{\mu(\lambda)}{\lambda^{2s}}<\infty.
\]
Thus exact order zero means that $\mu(\lambda)$ tends to a finite strictly
positive limit.
\end{definition}

The correspondence below is a finite-dimensional form of Forman's
Hodge-theoretic description of spectral sequences~\cite{Forman}; see also
the formulation used by King and Kohler~\cite[Section~9.2]{KingKohler}.
We give a self-contained proof with the depth splitting and page
conventions fixed here.

\begin{theorem}[Finite-dimensional filtered Hodge correspondence]
\label{thm:filtered-hodge}
Use the fixed orthogonal depth decomposition and the differential
$D_\lambda=\Dbase+\lambda\Dpriv$ of Section~\ref{sec:certificates}.
For each degree $k$ and integer $r\ge0$, the approximate harmonic space of
\cref{def:approximate-harmonic} is a Hilbert-space realization of page $r$:
\[
 \mathcal V_r^k\cong\bigoplus_\ell E_{r,\ell}^k.
\]
The Laplacian $\Delta_k(\lambda)$ has, near zero, an orthonormal eigenbasis
$u_i(\lambda)$ with corresponding eigenvalues $\mu_i(\lambda)$, both
depending real-analytically on $\lambda$.  We call each function
$\mu_i(\lambda)$ an eigenvalue branch.  Each eigenvalue branch is either
identically zero or has exact order $2s$ for an integer $s\ge0$, in the
sense of \cref{def:approximate-harmonic}.  Let
$P_{\ge r}^k(\lambda)$ be the orthogonal projection onto the span of the
eigenvectors whose branches have order at least $2r$, including the
identically zero branches.  Then
\[
 P_{\ge r}^k(0)=\Proj_{\mathcal V_r^k},
 \qquad
 \|P_{\ge r}^k(\lambda)-\Proj_{\mathcal V_r^k}\|_{\rm op}
 =O(|\lambda|).
\]
In particular, the number of branches of exact order $2r$ is
\[
 \dim\mathcal V_r^k-\dim\mathcal V_{r+1}^k
 =\sum_\ell\bigl(\dim E_{r,\ell}^k-\dim E_{r+1,\ell}^k\bigr),
\]
and the span of their eigenvectors at $\lambda=0$ is \(\mathcal V_r^k\cap(\mathcal V_{r+1}^k)^\perp\). The orthogonal projection onto the exact zero eigenspace converges, as
$\lambda\to0$, to the orthogonal projection onto \(\mathcal V_\infty^k:=\bigcap_r\mathcal V_r^k\).

More generally, let $J$ be any fixed set of eigenvalue branches that is
separated from the remaining spectrum for sufficiently small $\lambda$.
The orthogonal projection $P_J(\lambda)$ onto the span of the corresponding
eigenvectors depends analytically on $\lambda$ and satisfies
\[
 P_J(\lambda)=P_J(0)+O(|\lambda|),\qquad
 P_J(0)=\sum_{i\in J}|u_i(0)\rangle\langle u_i(0)|.
\]
The range of $P_J(0)$ may be a proper subspace of an eigenspace of
$\Delta_k(0)$.
\end{theorem}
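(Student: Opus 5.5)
The plan is to prove the statement in three layers: analytic perturbation theory, an algebraic identification of the approximate harmonic spaces with the spectral-sequence pages, and a matching of eigenvalue orders with page indices.

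\emph{Analytic layer.} Since $D_\lambda=\Dbase+\lambda\Dpriv$ is a polynomial in the real parameter $\lambda$ and the depth decomposition is fixed and orthogonal, $\Delta_k(\lambda)=D_{\lambda,k-1}D_{\lambda,k-1}^*+D_{\lambda,k}^*D_{\lambda,k}$ is a real-analytic family of Hermitian matrices. Rellich's theorem then supplies the analytic orthonormal eigenbasis $u_i(\lambda)$ and the analytic eigenvalue branches $\mu_i(\lambda)$. Each branch is either identically zero or has a lowest nonzero Taylor coefficient at some order $p$. That coefficient must be positive because $\mu_i\ge0$, and $p$ must be even because $\mu_i(\lambda)=\|D^*u_i\|^2+\|Du_i\|^2$ is a sum of squared norms of analytic vectors. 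Hence the exact order is $2s$. The same analyticity gives the final clause: for a separated group $J$, the Riesz projection $P_J(\lambda)=\frac{1}{2\pi i}\oint(z-\Delta_k(\lambda))^{-1}\,dz$ is analytic, so $P_J(\lambda)=P_J(0)+O(|\lambda|)$ and $P_J(0)=\sum_{i\in J}|u_i(0)\rangle\langle u_i(0)|$.

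\emph{Algebraic layer.} Next I identify $\mathcal V_r^k$ with $\bigoplus_\ell E_{r,\ell}^k$. Conjugation by the unitary $\lambda^{\dep}$ on exact-depth components turns $D_\lambda$ into $\lambda$-scaled copies of the unweighted filtered differential $D=\Dbase+\Dpriv$. Under this conjugation, a truncated correction $u^{<r}$ with $D_\lambda u^{<r}=O(\lambda^r)$ corresponds to the usual zig-zag representative: an element $x\in U_\ell$ with $Dx\in U_{\ell+r}$. The adjoint condition $D^*_\lambda u^{<r}=O(\lambda^r)$ is the analogous zig-zag condition for the adjoint filtered complex. The plan is to show by induction on $r$ that the leading coefficients satisfying both conditions form exactly the orthogonal complement, inside $\mathcal V_{r-1}^k$, of the images and coimages of the $\delta_{r-1}$ representatives. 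This is Forman's Hodge model of the spectral sequence, with $Z_r/B_r$ realized as $Z_r\cap B_r^{\perp}$. I would set up the inductive step through the Appendix~\ref{app:spectral-sequence} representative formula. Concretely, $\delta_{r-1}$ restricted to $\mathcal V_{r-1}^k$ is the order-$(r-1)$ Taylor coefficient of the residual, projected back onto $\mathcal V_{r-1}^{k+1}$. Its kernel intersected with the kernel of its adjoint is then $\mathcal V_r^k$, which yields the dimension count.

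\emph{Matching orders with pages.} Finally I relate the eigenvalue orders to the pages. If $u_i(0)\in\mathcal V_r^k$, then truncating the analytic vector $u_i(\lambda)$ gives approximate harmonic corrections, which forces $\mu_i=O(\lambda^{2r})$. Conversely, the energy identity gives $\langle u,\Delta u\rangle\ge c\lambda^{2r}\|u\|^2$ for $u$ orthogonal to $\mathcal V_{r}$. Comparing counts through the min–max principle then gives $P^k_{\ge r}(0)=\Proj_{\mathcal V_r^k}$. The $O(|\lambda|)$ estimate follows from analyticity of the grouped spectral projection. Since the branches of order at least $2r$ are not spectrally separated at a fixed scale, I would apply the rescaled Riesz projection to $\lambda^{-2r}\Delta_k$ restricted to the span of branches of order at least $2(r-1)$, by induction on $r$. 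The zero-eigenspace statement is the case of large $r$, since the filtration is finite and so the pages stabilize.

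The main obstacle is the inductive lower bound on the orthogonal complement of $\mathcal V_r$, namely that a vector orthogonal to $\mathcal V_r$ has energy at least of order $\lambda^{2r}$ rather than smaller. The mixing between eigenvectors of different orders at rate $O(\lambda)$ has to be controlled by a Schur-complement argument. I would try that first, using the nonzero blocks of the successive differentials as the invertible pieces.
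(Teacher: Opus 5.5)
Your analytic layer is fine: Rellich applied to $\Delta_k(\lambda)$, and even orders from $\mu_i=\|D_\lambda^*u_i\|^2+\|D_\lambda u_i\|^2$. The step linking branch orders to $\mathcal V_r^k$, however, is not closed, and you leave it open along a harder route than needed.

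\begin{itemize}
\item The implication ``if $u_i(0)\in\mathcal V_r^k$, then truncating $u_i(\lambda)$ gives approximate harmonic corrections'' is a non sequitur. Membership of $u_i(0)$ only says that \emph{some} corrections exist. The Taylor truncation of $u_i(\lambda)$ has both residuals $O(\lambda^r)$ exactly when $\mu_i=O(\lambda^{2r})$.
\item That equivalence, read the right way round, gives the easy inclusion $\Span\{u_i(0):\mu_i=O(\lambda^{2r})\}\subseteq\mathcal V_r^k$.
\item The reverse inclusion needs neither your lower bound for fixed vectors orthogonal to $\mathcal V_r^k$ nor a Schur complement. (With $\lambda^{2r}$ in place of $\lambda^{2r-2}$, that bound would not even control the order-$2r$ branches in a min--max count.) Instead, evaluate the quadratic form on the approximate harmonic germ itself. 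If $v=u^{<r}(0)$ with both residuals $O(\lambda^r)$, then $\sum_i\mu_i(\lambda)|\langle u_i(\lambda),u^{<r}(\lambda)\rangle|^2=O(\lambda^{2r})$. For a branch $\mu_i=\lambda^{2s}b_i(\lambda)$ with $s<r$ and $b_i(0)>0$, this forces $\langle u_i(0),v\rangle=0$.
\item Hence $P_{\ge r}^k(0)=\Proj_{\mathcal V_r^k}$ follows with no page theory at all.
\item The $O(|\lambda|)$ bounds follow because $P_{\ge r}^k(\lambda)$ and $P_J(\lambda)$ are finite sums of analytic rank-one projections $|u_i(\lambda)\rangle\langle u_i(\lambda)|$. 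A fixed-contour Riesz integral is not available when $J$ is separated only at a $\lambda$-dependent scale, as for the $\lambda^{4m+2}$ cluster used later. Your rescaled induction is also unnecessary.
\end{itemize}

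The identification $\mathcal V_r^k\cong\bigoplus_\ell E_{r,\ell}^k$ is the other load-bearing step, and your inductive Forman model is only asserted there. You would have to prove four things about the projected order-$(r-1)$ Taylor coefficient:
\begin{itemize}
\item it is independent of the chosen corrections;
\item it is adjoint to the corresponding coefficient of the $D_\lambda^*$-residual;
\item its joint kernel is exactly $\mathcal V_r^k$, i.e.\ the unprojected part of each residual can be absorbed at the next order without spoiling the other residual;
\item it is conjugate to the filtered $\delta_{r-1}$.
\end{itemize}

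The paper avoids this induction. Applying Rellich to $D_{t,k}^*D_{t,k}$ in every degree gives an analytic orthonormal basis of pairs $e_i\to f_i$, with $D_te_i=t^{s_i}a_i(t)f_i$ and $a_i(0)>0$, together with harmonic vectors $h_b$. In this basis three spaces are explicit coordinate spans: the leading-cycle space $\mathcal A_r^k$ (constant terms of germs with $D_tu=O(t^r)$), the leading-boundary space $\mathcal B_r^k$ (values at zero of $t^{-(r-1)}D_tw$ with $D_tw=O(t^{r-1})$), and $\mathcal V_r^k$. So $\mathcal A_r^k=\mathcal B_r^k\oplus\mathcal V_r^k$ orthogonally, and the eigenvalue orders are read off at the same time.

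The paper then proves $\mathcal A_r^k/\mathcal B_r^k\cong\bigoplus_\ell E_{r,\ell}^k$ directly for each $r$, with no induction. It splits germs according to the diagonal index $p-j$ of a monomial $t^jv$ with $v\in C^k_{(p)}$. It then matches the rescaled representatives $\sum_jt^j\pi_{\ell+j}z$ against the numerator and the two denominator terms of the pages.

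As written, neither the order matching nor the page identification is actually proved. You need either this argument or a complete proof of your inductive model.
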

\begin{proof}
Appendix~\ref{app:spectral-sequence} identifies the page quotients with
approximate harmonic leading coefficients and applies finite-dimensional
analytic perturbation theory, giving the stated form of Forman's
correspondence~\cite{Forman}.
\end{proof}

Thus a drop from page \(r\) to page \(r+1\) produces precisely the
eigenvalue branches whose leading size is \(\lambda^{2r}\).  In our local
gadget the only late drop occurs at \(r=R_a=2m_a+1\), so it produces one
eigenvalue of order \(\lambda^{4m_a+2}\).

The constants in $O(|\lambda|)$ may depend on the fixed complex.
Uniformity over the catalogue and under padding is established in
\cref{lem:uniform-intervals,thm:spectral-padding}.

In the next lemma and the padding argument below, a \emph{cluster} means the
eigenvalues contained in one of the separated intervals in the statement.

\begin{lemma}[Uniform spectral intervals for the finite catalogue]
\label{lem:uniform-intervals}
There exist computable constants
\[
 0<\lambda_*<1,
 \qquad
 0<c_{\rm spec}\le C_{\rm spec}<\infty,
 \qquad
 C_{\rm proj}<\infty,
\]
depending only on the finite catalogue, such that for every catalogue type,
each of the three degrees \(k\in\{2m_a-2,\,2m_a-1,\,2m_a\}\), and every $0<\lambda<\lambda_*$:
\begin{enumerate}[label=(\roman*)]
\item every nonzero branch of exact order $2r$ lies in \([c_{\rm spec}\lambda^{2r}, C_{\rm spec}\lambda^{2r}]\);
\item the intervals belonging to distinct orders are disjoint;
\item if $P_I(\lambda)$ is the orthogonal projection onto one such cluster
and $P_I(0)$ is the orthogonal projection onto the span of the
$\lambda\to0$ limits of its analytic eigenvectors, then
\[
 \|P_I(\lambda)-P_I(0)\|_{\rm op}\le C_{\rm proj}\lambda;
\]
\item all nonzero eigenvalues are exhausted by these clusters.
\item In local degree $2m_a-2$, there is no eigenvalue in \([c_{\rm spec}\lambda^{4m_a+2}, C_{\rm spec}\lambda^{4m_a+2}]\).
\end{enumerate}
\end{lemma}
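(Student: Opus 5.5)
The plan is to exploit finiteness: there are thirteen catalogue types, each with a fixed local complex \(Y_a^{\mathrm{loc}}\), and three degrees per type. So it suffices to produce, for each pair \((a,k)\), constants with properties (i)--(v). We then take minima and maxima over these \(39\) cases. For fixed \((a,k)\), \cref{thm:filtered-hodge} gives a real-analytic orthonormal eigenbasis \(u_i(\lambda)\) and eigenvalue branches \(\mu_i(\lambda)\) of \(\Delta_k(\lambda)\). Each branch is either identically zero or of exact order \(2s_i\). Since the dimension is finite, there are finitely many branches. Each nonzero branch has a Puiseux-free analytic expansion \(\mu_i(\lambda)=\lambda^{2s_i}(\beta_i+O(\lambda))\) with \(\beta_i>0\). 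Also, \(s_i\le \max\{r:E_{r}\neq E_{r+1}\}\), which is bounded, since the filtration has length at most \(k+2\) and the spectral sequence degenerates by page \(k+2\).

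First step: choose \(c_{\rm spec}<\tfrac12\min_i\beta_i\) and \(C_{\rm spec}>2\max_i\beta_i\). Then choose \(\lambda_1\) small enough that every nonzero branch lies in \([c_{\rm spec}\lambda^{2s_i},C_{\rm spec}\lambda^{2s_i}]\). This gives (i), and (iv) is automatic because every nonzero eigenvalue belongs to some branch.

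For (ii), shrink \(\lambda\) further so that \(C_{\rm spec}\lambda^{2(s+1)}<c_{\rm spec}\lambda^{2s}\). That is, take \(\lambda<\sqrt{c_{\rm spec}/C_{\rm spec}}\), which works uniformly for all \(s\). Then the intervals of distinct orders are disjoint.

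For (iii), each cluster \(I\) is a set of branches separated from the rest. The second part of \cref{thm:filtered-hodge} therefore gives \(P_I(\lambda)=P_I(0)+O(\lambda)\), analytic on a neighbourhood of \(0\). Its derivative is bounded on \([0,\lambda_2]\), which yields the constant \(C_{\rm proj}\). Computability follows because all data are exact finite matrices. One can compute the characteristic polynomial of \(\Delta_k(\lambda)\) in \(\QQ(\lambda)\) (after squaring, since the entries involve \(\lambda^2\) only), read off the \(\beta_i\) from its Newton polygon, and bound derivatives explicitly via resolvent estimates on contours around each cluster. I expect this to be the main technical obstacle. Making the \(O(\lambda)\) constants and the threshold \(\lambda_*\) explicit needs a quantitative Kato resolvent bound on circles of radius comparable to \(\lambda^{2s}\). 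Near such a circle the resolvent norm is only \(O(\lambda^{-2s})\), and this must be balanced against perturbation terms of size \(O(\lambda^{2s+1})\). The cleaner route is to work with the analytic eigenvectors directly and bound \(\|u_i'(\lambda)\|\), which stays bounded because the branches are analytic.

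Finally, (v) concerns local degree \(k=2m_a-2=d-1\). There we must show that no branch of exact order \(2R_a=4m_a+2\) exists, i.e. \(E_{R_a,\ell}^{d-1}=E_{R_a+1,\ell}^{d-1}\) for all \(\ell\). By the count in \cref{thm:filtered-hodge}, it suffices to show that every page-\(R_a\) term in degree \(d-1\) vanishes or is untouched by \(\delta_{R_a}\). The analysis of \cref{thm:page-one,thm:top-depth-relative-term,thm:no-earlier-differential} shows that the augmented page two is supported only at \((d,0)\) and \((d+1,R_a)\). So in degree \(d-1\) all terms are already zero from page two on, and there are no branches of order \(2r\) for any \(r\ge2\). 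Combined with disjointness (ii), no eigenvalue lies in the order-\(2R_a\) interval. Here \(c_{\rm spec},C_{\rm spec}\) are the common constants, and we check that lower-order clusters (orders \(0\) and \(2\)) stay outside that interval, which holds by (ii) once \(\lambda<\lambda_*\).
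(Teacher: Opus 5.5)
Your proposal is correct and follows essentially the paper's proof: finitely many analytic branches $\lambda^{2s_i}b_i(\lambda)$ with $b_i$ trapped between half and twice $b_i(0)$, the uniform threshold $C_{\rm spec}\lambda_*^2<c_{\rm spec}$ for disjointness of all order intervals, bounded derivatives of the analytic cluster projections for (iii), and (v) from the vanishing of page two in degree $2m_a-2$, so that no order-$2R_a$ branch exists there. The one divergence is your computability aside. The paper does not attempt Newton-polygon or resolvent bounds; it defers effective rational constants to \cref{prop:local-effectivity}, which uses real quantifier elimination over rational candidates. Also, the entries of $\Delta_k(\lambda)$ are not functions of $\lambda^2$ alone: mixed base/private incidences give $\pm\lambda$ entries. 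Only the characteristic polynomial is even in $\lambda$, by conjugation with $(-1)^{\mathrm{depth}}$.
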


\begin{proof}[Analytic bounds]
Put $d=2m_a-1$ and $R_a=d+2=2m_a+1$.
By \cref{thm:no-earlier-differential,thm:page-Ra-differential}, page two
has only the base term of dimension $2^{m_a}$ in degree $d$ and the
one-dimensional core in degree $d+1$.  These persist to page $R_a$.
The rank-one differential on that page removes one dimension from each;
afterward only the $2^{m_a}-1$ base classes remain.  Thus the only possible
strict page-dimension drops occur from page zero to one, from one to two,
and from $R_a$ to $R_a+1$.  By \cref{thm:filtered-hodge}, every positive
Laplacian branch has order in \(\{0,2,2R_a\}\). The order-$2R_a$ drop has multiplicity one in degrees $d,d+1$ and zero
in degree $d-1$.  This already establishes the order exclusion needed
for (v), independently of \cref{prop:local-clusters}.

For each nonzero analytic eigenvalue branch write \(\mu_i(\lambda)=\lambda^{2r_i}b_i(\lambda), \, b_i(0)>0\). There are finitely many such branches over the finite catalogue and the
three specified degrees.  On a sufficiently small common closed interval,
each $b_i$ is bounded between half its positive value at zero and twice
that value.  Taking the minimum and maximum over all branches gives
positive constants $c_{\rm spec},C_{\rm spec}$ and proves (i) and (iv).
Include the three candidate orders for each type when choosing the
interval, even when a candidate cluster is empty.  Decrease $\lambda_*$
until
\[
 C_{\rm spec}\lambda_*^2<c_{\rm spec}.
\]
For integers $s>r\ge0$, this implies
$C_{\rm spec}\lambda^{2s}<c_{\rm spec}\lambda^{2r}$ whenever
$0<\lambda<\lambda_*$.  It proves (ii) and, together with the order
exclusion above, proves (v).

Each order cluster follows a fixed finite set of analytic branches.
Its orthogonal projection extends analytically to zero by
\cref{thm:filtered-hodge}.  On a smaller common closed interval its
operator-norm derivative is bounded.  Integrating that derivative gives
\[
 \|P_I(\lambda)-P_I(0)\|_{\rm op}
 \le\lambda\sup_{0\le t\le\lambda_*}\|P_I'(t)\|_{\rm op}.
\]
A maximum over the finitely many projections proves (iii).  The same
argument also applies to the exact-zero branch projection and their
finite sums; we include those bounds in $C_{\rm proj}$ for use below.
All choices so far depend only on the catalogue.

This gives catalogue-uniform analytic constants.  Effective rational
choices are supplied by \cref{prop:local-effectivity}.
\end{proof}

\subsection{Complete local spectral statement}

\begin{proposition}[Three-degree spectral decomposition]
\label{prop:local-clusters}
Fix a catalogue type $a$, put $m=m_a$, $d=2m-1$, and let
$0<\lambda<\lambda_*$.  For the weighted local gadget:
\begin{enumerate}[label=(\roman*)]
\item in degree $d$, the zero eigenspace has dimension $2^m-1$, exactly one
eigenvalue lies in
\[
 [c_{\rm spec}\lambda^{4m+2},
  C_{\rm spec}\lambda^{4m+2}],
\]
and every other positive eigenvalue lies in
\[
 [c_{\rm spec}\lambda^2,C_{\rm spec}\lambda^2]
 \cup[c_{\rm spec},C_{\rm spec}];
\]
\item in degree $d-1$, the kernel and the
$\lambda^{4m+2}$ interval are both zero, and every eigenvalue belongs to the
$\lambda^2$ or constant interval above;
\item in degree $d+1$, the kernel is zero, exactly one eigenvalue belongs to
the $\lambda^{4m+2}$ interval, and every other eigenvalue belongs to the
$\lambda^2$ or constant interval.
\end{enumerate}
\end{proposition}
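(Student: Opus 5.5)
The plan is to read off all three statements from the page dimensions, via the filtered Hodge correspondence of \cref{thm:filtered-hodge}, and then to place the eigenvalues using the uniform intervals of \cref{lem:uniform-intervals}.

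\emph{Step 1: page dimensions.} First I record, for $k\in\{d-1,d,d+1\}$, the dimensions $\dim\mathcal V_r^k=\sum_\ell\dim E_{r,\ell}^k$ for every $r$.

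For $r\ge2$, the proof of \cref{thm:no-earlier-differential} shows that page two has only two nonzero terms:
\[
  E_{2,0}^d\cong H^d(X_m;\CC)\ \text{of dimension }2^m,
  \qquad
  E_{2,R_a}^{d+1}=\CC[\kappa_a].
\]
In particular, every degree-$(d-1)$ term already vanishes at page two. Both surviving terms persist to page $R_a$. By \cref{thm:page-Ra-differential}, $\delta_{R_a}$ has rank one, so on page $R_a+1$ and later:
\begin{itemize}
\item degree $d$ has dimension $2^m-1$;
\item degree $d+1$ has dimension $0$;
\item degree $d-1$ has dimension $0$.
\end{itemize}
Hence $\dim\mathcal V_\infty^k$ equals $2^m-1$, $0$, $0$ in degrees $d$, $d+1$, $d-1$ respectively.

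\emph{Step 2: kernels and branch counts.} These limits must match $\dim\ker\Delta_{k,\lambda}=\dim H^k(Y_a^{\mathrm{loc}};\CC)$. I will check this independently. \cref{prop:weights-preserve-cohomology} identifies the kernel with $H^k$. \cref{thm:local-topology}, together with $\widetilde H_*(X_m)$ being concentrated in degree $d$, gives $H_d(Y)\cong\CC^{2^m-1}$ and $H_{d\pm1}(Y)\cong H_{d\pm1}(X_m)=0$. (We have $d-1\ge1$ when $m\ge2$; every catalogue entry has $m_a\ge2$, so the augmentation does not interfere.)

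By \cref{thm:filtered-hodge}, the number of branches of exact order $2r$ equals $\dim\mathcal V_r^k-\dim\mathcal V_{r+1}^k$. By Step 1 this difference vanishes for all $r\ge2$ except $r=R_a$. There the drop is exactly $1$ in degrees $d$ and $d+1$, and $0$ in degree $d-1$. So each of degrees $d$ and $d+1$ has exactly one branch of order $2R_a=4m+2$. Every other nonzero branch has order $0$ or $2$, corresponding to drops from page $0$ to $1$ or from page $1$ to $2$.

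\emph{Step 3: placing the eigenvalues.} \cref{lem:uniform-intervals}(i),(ii),(iv) place each branch of order $2s$ in $[c_{\rm spec}\lambda^{2s},C_{\rm spec}\lambda^{2s}]$, these intervals are pairwise disjoint, and they exhaust the nonzero spectrum. Item (v) in degree $d-1$ is consistent with Step 2, and so is the count "exactly one" in the $\lambda^{4m+2}$ interval. Disjointness for $0<\lambda<\lambda_*$ ensures that no order-$0$ or order-$2$ branch falls into the $\lambda^{4m+2}$ interval. Statements (i)–(iii) then follow directly.

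\emph{Expected obstacle.} The only delicate point is the bookkeeping in Step 1. One has to confirm that no term other than the two endpoints survives to page two in degrees $d\pm1$ and $d$, including the depth-zero terms $E_{2,0}^{d\pm1}=\widetilde H^{d\pm1}(X_m)=0$ and the top-depth terms $H^{k}(P_a,A_a)=0$ for $k\ne 2m$. This is what \cref{thm:page-one}, \cref{thm:top-depth-relative-term}, C3 and C4 supply. One also has to make sure that "exactly one" counts eigenvalues with multiplicity, which holds because the drop is one-dimensional.
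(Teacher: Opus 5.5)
Your proposal is correct and follows the paper's own proof. Both arguments take the kernel dimensions $0,\,2^m-1,\,0$ from the topology of the local pair via Hodge theory. Both get the orders of the positive branches from the page-dimension drops in the filtered Hodge correspondence: only the drops from page $0$ to $1$ and from $1$ to $2$, plus the single rank-one drop at page $R_a$ in degrees $d$ and $d+1$. Both then place the eigenvalues with the uniform-intervals lemma.

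The only difference is cosmetic. You cite the proposition that positive weights preserve cohomology, where the paper uses the depth-rescaling chain isomorphism; these amount to the same fact.
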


\begin{proof}
For $\lambda>0$, multiplication by $\lambda^\ell$ on the depth-$\ell$
coordinates gives an invertible chain isomorphism from the unweighted
complex to the weighted one, as in Appendix~\ref{app:spectral-sequence}.
Consequently their cohomology dimensions agree.  By
\cref{thm:local-topology} and the base homology calculation, the reduced
cohomology dimensions in degrees $d-1,d,d+1$ are respectively \(0,\, 2^m-1,\, 0\). Here the augmented convention matters if $d-1=0$: the asserted kernel
is for the augmented Laplacian fixed in the preliminaries.
Finite Hodge theory identifies these dimensions with the exact kernel
multiplicities for every $\lambda>0$.

For positive eigenvalues use the multiplicity formula in
\cref{thm:filtered-hodge}.  The long differential has rank one from
degree $d$ to degree $d+1$, so the dimension drop from page $2m+1$ to
page $2m+2$ is one in each of those degrees and zero in degree $d-1$.
All other positive branches come from the page-zero or page-one drops,
as established in the proof of \cref{lem:uniform-intervals}; their orders
are respectively zero and two.  The mutually disjoint intervals and stated
multiplicities now follow from that lemma.
\end{proof}

The following isometry identifies the low-energy space with the encoded
qubit space and realizes its Laplacian as a positive multiple of \(\Pi_a\).

\begin{theorem}[Rank-one action on the local low-energy space]
\label{thm:local-rank-one}
For each catalogue type $a$ and $0<\lambda<\lambda_*$, let
$\mathscr L_a(\lambda)\subseteq C^{q_{m_a}}(Y_a^{\mathrm{loc}})$ be the spectral
subspace consisting of the zero eigenspace together with the unique
one-dimensional eigenspace in the $\lambda^{4m_a+2}$ interval.  Then \(\dim\mathscr L_a(\lambda)=2^{m_a}\). There is an isometry \(W_a(\lambda):\Qenc_{m_a}\longrightarrow\mathscr L_a(\lambda)\) and a scalar $\varepsilon_a(\lambda)>0$ such that
\[
 \|W_a(\lambda)-J_{m_a}\|_{\rm op}
 \le C_{\rm proj}\lambda,
\]
\[
 W_a(\lambda)^*\Delta_{q_{m_a}}(\lambda)W_a(\lambda)
 =\varepsilon_a(\lambda)\Pi_a,
\]
and
\[
 c_{\rm spec}\lambda^{4m_a+2}
 \le\varepsilon_a(\lambda)
 \le C_{\rm spec}\lambda^{4m_a+2}.
\]
In the first norm estimate, $J_{m_a}$ is followed by the coordinate inclusion
$C^{q_{m_a}}(X_{m_a})\hookrightarrow C^{q_{m_a}}(Y_a^{\mathrm{loc}})$.
\end{theorem}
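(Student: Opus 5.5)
We construct $W_a(\lambda)$ by transporting the encoded harmonic basis into $\mathscr L_a(\lambda)$ with the cluster projection, orthonormalizing, and then choosing the basis so that the Laplacian becomes diagonal with the correct eigenvector. Throughout, $m=m_a$ and $d=q_m$.

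\emph{Identifying the range.} By \cref{prop:local-clusters}(i), $\mathscr L_a(\lambda)$ is the direct sum of the $(2^m-1)$-dimensional kernel and one eigenline with eigenvalue $\varepsilon_a(\lambda)$ in $[c_{\rm spec}\lambda^{4m+2},C_{\rm spec}\lambda^{4m+2}]$. Hence $\dim\mathscr L_a(\lambda)=2^m$, and this also gives the two-sided bound on $\varepsilon_a$. The space $\mathscr L_a(\lambda)$ is the range of the sum of the zero-branch projection and one cluster projection. By \cref{thm:filtered-hodge} and \cref{lem:uniform-intervals}(iii), this projection $P_{\mathscr L}(\lambda)$ is analytic and satisfies $\|P_{\mathscr L}(\lambda)-P_{\mathscr L}(0)\|\le C_{\rm proj}\lambda$, where $C_{\rm proj}$ was enlarged to cover these finite sums.

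I will show $P_{\mathscr L}(0)=\Proj_{\Eenc_m}$, with $\Eenc_m$ sitting in depth zero. The branches in $\mathscr L_a$ have order at least $2R_a$, so $P_{\mathscr L}(0)=\Proj_{\mathcal V_{R_a}^d}$. By the page analysis (\cref{thm:no-earlier-differential}), $E_{R_a}^d$ is the depth-zero base term $H^d(X_m)$ only. The Hilbert realization of page $R_a$ in depth zero consists of leading coefficients in $C^d_{(0)}$ that are $\Dbase$-closed and $\Dbase^*$-coclosed. By C2, these are exactly the harmonic cochains of $X_m$, namely $\Eenc_m$. This identification, that the realization of $E_{R_a,0}^d$ is literally $\Eenc_m$ rather than merely isomorphic to it, is the step I expect to need the most care, and it relies on Appendix~\ref{app:spectral-sequence}.

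\emph{Diagonalizing.} Inside $\mathscr L_a(\lambda)$, let $e(\lambda)$ be the unit eigenvector for $\varepsilon_a(\lambda)$, chosen along the analytic branch. Its limit $e(0)$ lies in $\Eenc_m$ and is orthogonal to the limit of the zero eigenspace. That limiting zero eigenspace is $\mathcal V_\infty^d$, which corresponds to $E_\infty$, equal to $\ker\delta_{R_a}$. By \cref{thm:page-Ra-differential}, this kernel is $J_m(\phi^{(a)\perp})$. So $e(0)$ is a unit multiple of $J_m\phi^{(a)}$, and after fixing a phase, $e(0)=J_m\phi^{(a)}$.

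Now define $W_a(\lambda)$ on an orthonormal basis $\{\phi^{(a)},\psi_2,\ldots\}$ of $\Qenc_m$. Set $W_a\phi^{(a)}:=e(\lambda)$. On $\phi^{(a)\perp}$, apply the zero-eigenspace projection $P_0(\lambda)$ composed with $J_m$ and take the polar (unitary) part; concretely this is $P_0(\lambda)P_0(0)\bigl(P_0(0)P_0(\lambda)P_0(0)\bigr)^{-1/2}$ restricted to $J_m(\phi^{(a)\perp})$. The two pieces have orthogonal ranges, so $W_a$ is an isometry onto $\mathscr L_a(\lambda)$. It maps $\phi^{(a)\perp}$ into $\ker\Delta$ and $\phi^{(a)}$ to the $\varepsilon_a$-eigenvector, so $W_a^*\Delta W_a=\varepsilon_a\Pi_a$ exactly.

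\emph{Norm estimate.} For the zero part, the standard bound on the polar factor of a product of nearby projections gives a deviation from $J_m$ of at most $\|P_0(\lambda)-P_0(0)\|\le C_{\rm proj}\lambda$; for projections within distance less than $1$, this is the Kato transformation-function estimate. For the $e$-part, $\|e(\lambda)-e(0)\|$ is controlled by the rank-one cluster projection bound from \cref{lem:uniform-intervals}(iii), after choosing the phase so that $\langle e(0),e(\lambda)\rangle>0$. Combining the two orthogonal blocks may cost a factor $\sqrt2$ or $2$. If so, I will enlarge $C_{\rm proj}$ at the outset; it is a catalogue-level constant defined by finitely many analytic projections, so this is legitimate. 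The resulting bound is $\|W_a(\lambda)-J_m\|\le C_{\rm proj}\lambda$.
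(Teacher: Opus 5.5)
\textbf{There is one genuine gap; the rest of your plan matches the paper.} Your construction is the paper's. The paper sets $W_a=FG^{-1/2}$ with $F=P_h(\lambda)J_m(I-\Pi_a)+P_e(\lambda)J_m\Pi_a$. Since $G=F^*F$ is block diagonal for $\phi^{(a)\perp}\oplus\CC\phi^{(a)}$, this is exactly your polar part of $P_0(\lambda)$ on $J_m(\phi^{(a)\perp})$, together with a phase-normalized $e(\lambda)$. Your identification $\mathcal V_{R_a}^d=\Eenc_m$ is also the paper's. Strictly, $\mathcal V_{R_a}^d$ is only \emph{contained} in $\ker\Dbase\cap\ker\Dbase^*\cap C^d_{(0)}=\Eenc_m$, and the dimension count $2^m$ gives equality.

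The gap is the next step, $\mathcal V_\infty^d=J_m(\phi^{(a)\perp})$. Both $e(0)\in\CC J_m\phi^{(a)}$ and your use of $P_0(0)=\Proj_{J_m(\phi^{(a)\perp})}$ in the polar formula depend on it. You derive it from ``$\mathcal V_\infty^d$ corresponds to $E_\infty$'' and \cref{thm:page-Ra-differential}, but neither gives a literal subspace. That theorem describes $E_{\infty,0}^d$ only as a subspace of $H^d(X_m;\CC)$, namely the classes $[J_m\psi]$ with $\psi\perp\phi^{(a)}$. \cref{thm:filtered-hodge} gives only an abstract isomorphism, realized by $[z]\mapsto\Proj_{\mathcal V_\infty^d}\pi_0z$. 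Together they yield just $\mathcal V_\infty^d=\Proj_{\mathcal V_\infty^d}\bigl(J_m(\phi^{(a)\perp})\bigr)$, and this holds for every hyperplane of $\Eenc_m$ not containing $J_m\phi^{(a)}$. This is the same ``isomorphic versus literally equal'' issue you flagged for $\mathcal V_{R_a}^d$. Here it matters: a different hyperplane would give $W_a(0)\neq J_m$ and destroy the norm estimate, even though the compression identity would still hold by construction.

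\textbf{How to close it.} You need that each $v\in\mathcal V_\infty^d$ is literally the base restriction of a cocycle on $Y_a^{\rm loc}$. The paper gets this from an analytic harmonic germ $u(\lambda)=\sum_j\lambda^ju_j$ with $u(0)=v$ (Step~2 of Appendix~\ref{app:spectral-sequence}). The diagonal-index-zero part $z=\sum_p\pi_pu_p$ of the equation $D_\lambda u=0$ satisfies $Dz=0$ and $\pi_0z=v$.

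A more direct alternative also works. For $\lambda>0$, every $u\in\ker\Delta_d(\lambda)$ satisfies $D_\lambda u=0$. Hence $u=R_\lambda z$ with $R_\lambda=\lambda^\ell$ on depth $\ell$ and $Dz=0$, so $\pi_0u=\pi_0z$. Therefore $\pi_0u$ lies in the fixed subspace $\{\pi_0z:Dz=0\}=J_m(\phi^{(a)\perp})\oplus B^d(X_m;\CC)$. This equality uses C2 and the evaluation $(J_m\psi)(j_a)=2^m\mathcal Z_a\langle\phi^{(a)}|\psi\rangle$ from the proof of \cref{thm:page-Ra-differential}. Apply this to $u(\lambda)=P_0(\lambda)v\to v$, and use $v\in\Eenc_m\perp B^d(X_m;\CC)$, to get $v\in J_m(\phi^{(a)\perp})$. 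Equality of dimensions ($2^m-1$) then finishes. With this step supplied, the rest of your argument, including the polar-factor and eigenvector norm estimates, goes through.
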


\begin{proof}
Fix a catalogue type and abbreviate $J=J_m$, $\Pi=\Pi_a$, $d=2m-1$,
and $R=2m+1$.  Regard $J$ as taking values in the full local cochain space.
The realization in \cref{thm:filtered-hodge} sends the surviving
depth-zero terms on page two to a subspace of $C^d_{(0)}$ contained in
$\ker\Dbase\cap\ker\Dbase^*$.  At depth zero this is exactly the encoded
base harmonic space $\operatorname{im} J$.  Both spaces have dimension $2^m$, and page
two has no other degree-$d$ term.  Therefore
\[
 \mathcal V_2^d=\operatorname{im} J.
\]
If $v\in\mathcal V_\infty^d$, Step 2 of
Appendix~\ref{app:spectral-sequence} supplies an analytic harmonic germ
$u(\lambda)$ with $u(0)=v$.  Write $u(\lambda)=\sum_{j\ge0}\lambda^ju_j$
and let $\pi_p$ denote exact-depth projection.  The diagonal-index-zero
part of $D_\lambda u=0$ says that
\[
 z:=\sum_{p=0}^{d+1}\pi_pu_p\quad\hbox{satisfies}\quad Dz=0.
\]
Indeed, its depth-$p$ equation is
$\Dbase\pi_pu_p+\Dpriv\pi_{p-1}u_{p-1}=0$, including the final output
depth $d+2$.  Since $v\in\mathcal V_2^d=\operatorname{im} J$, restriction of $z$ to
the base equals $v$.  The edge-map calculation in
\cref{thm:page-Ra-differential} therefore puts $v$ in
$J\phi^{(a)\perp}$.  Both spaces have dimension $2^m-1$, so they agree.
Hence
\[
 \mathcal V_\infty^d=J\phi^{(a)\perp},
 \qquad
 \mathcal V_R^d\cap(\mathcal V_{R+1}^d)^\perp
 =\CC J\phi^{(a)}.
\]

Let $P_h(\lambda)$ and $P_e(\lambda)$ be the exact-zero and unique
order-$2R$ eigenprojections.  Their limits and analytic bounds are
\[
 P_h(0)=J(I-\Pi)J^*,\qquad P_e(0)=J\Pi J^*,
 \qquad \|P_h(\lambda)-P_h(0)\|_{\rm op}
       +\|P_e(\lambda)-P_e(0)\|_{\rm op}=O(\lambda).
\]
Define a map from the encoded space into $\mathscr L_a(\lambda)$ by
\[
 F(\lambda)=P_h(\lambda)J(I-\Pi)+P_e(\lambda)J\Pi,
 \qquad G(\lambda)=F(\lambda)^*F(\lambda).
\]
Then $F(0)=J$, $G(0)=I$, and $F(\lambda)=J+O(\lambda)$.
Since the two output projections are orthogonal, $G$ is block diagonal
for $\phi^{(a)\perp}\oplus\CC\phi^{(a)}$ and commutes with $\Pi$.
For a smaller common $\lambda_*$, $\|G-I\|_{\rm op}<1/2$, so set
\[
 W_a(\lambda):=F(\lambda)G(\lambda)^{-1/2}.
\]
The inverse square root is analytic near $I$ (its convergent binomial
series suffices).  Thus $W_a^*W_a=I$, $W_a(0)=J$, and
$W_a(\lambda)-J=O(\lambda)$.  Its range has dimension $2^m$ and lies in
$\mathscr L_a(\lambda)$, whose dimension is $2^m$ by
\cref{prop:local-clusters}; the range therefore equals that subspace.

Because $G^{-1/2}$ commutes with $\Pi$, this isometry sends
$\phi^{(a)\perp}$ into the exact kernel and the target line into the
unique lifted eigenline.  Writing its positive eigenvalue as
$\varepsilon_a(\lambda)$ gives
\[
 \Delta_d(\lambda)W_a(\lambda)
 =\varepsilon_a(\lambda)W_a(\lambda)\Pi.
\]
Multiplication by $W_a^*$ proves the exact compression identity.
The eigenvalue bounds are those of \cref{prop:local-clusters}.
Finally take a maximum of the finitely many derivative bounds for $W_a$
and enlarge $C_{\rm proj}$ accordingly, decreasing the common
$\lambda_*$ if necessary.  Earlier bounds remain valid, and the stated
norm estimate follows uniformly over catalogue types.  This argument
proves existence of the constants; their effective rational selection
is proved in \cref{prop:local-effectivity}.
\end{proof}

\subsection{Uniform identity padding}

\begin{definition}[Local-to-padded degree dictionary]
Let $a$ act on $m=m_a$ qubits in an $N$-qubit instance.  Tensoring a local
degree-$k$ cochain with a vector in \(\Eenc_{N-m}:=\mathscr H^{q_{N-m}}(X_{N-m})\) places it in join degree \(k+q_{N-m}+1=k+2(N-m)\). This space is the encoded harmonic space of the $N-m$ bow-tie factors on
which the projector acts as the identity.  When $N=m$, use the augmented
convention $X_0=\{\varnothing\}$ and $\Eenc_0=\CC$ in degree $-1$.
In particular,
\[
 2m-2\leftrightarrow2N-2,
 \quad
 2m-1\leftrightarrow2N-1,
 \quad
 2m\leftrightarrow2N.
\]
\end{definition}

\begin{theorem}[Padding preserves the local spectral intervals and penalty]
\label{thm:spectral-padding}
For every $a,S,N$ and $0<\lambda<\lambda_*$, the local clusters of
\cref{prop:local-clusters} tensored with the explicitly defined space
$\Eenc_{N-m_a}$ occur
in padded degrees $2N-2,2N-1,2N$ with the same endpoints and projection
constant on the reducing subspace with harmonic untouched factor.  Decrease $c_{\rm spec}$ once, if necessary, so that it is no
larger than the least positive eigenvalue among the finitely many degree
Laplacians of the single bow-tie complex.  Then every padded mode using a
positive-eigenvalue state of $X_{N-m_a}$ has eigenvalue at least
$c_{\rm spec}$.  The high eigenvalues are sums of factor eigenvalues and
may grow with $N$.

In degree $q_N$, there exist a $2^N$-dimensional low-energy space
$\mathscr L_{a,S}^{(N)}(\lambda)$ and an isometry \(W_{a,S}^{(N)}(\lambda):\Qenc_N \longrightarrow\mathscr L_{a,S}^{(N)}(\lambda)\) such that
\[
 \|W_{a,S}^{(N)}(\lambda)-J_N\|_{\rm op}
 \le C_{\rm proj}\lambda,
\]
and
\[
 (W_{a,S}^{(N)})^*\Delta_{q_N}(Y_{a,S}^{\mathrm{pad}})
 W_{a,S}^{(N)}
 =\varepsilon_a(\lambda)P_{a,S}^{(N)}.
\]
Here $J_N$ is viewed in the padded cochain space through the base-coordinate
inclusion.
All constants are independent of $N$, $S$, and the number of occurrences in
the eventual global instance.
\end{theorem}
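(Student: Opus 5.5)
The plan is to reduce everything to the join formula of \cref{prop:weighted-join-laplacian}.

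\emph{Reduction to a tensor product.} By \cref{thm:join-padding}, $Y_{a,S}^{\mathrm{pad}}\cong Y_a^{\mathrm{loc}}*X_{N-m}$. The untouched factors $X_{N-m}$ carry only base vertices, so their weights are $1$. Hence the padded weights are exactly the product weights, and the relabelling of \cref{prop:certificate-relabel-invariance} is unitary and depth preserving. Applying \cref{prop:weighted-join-laplacian} in padded degree $k+q_{N-m}+1$ gives an orthogonal decomposition over pairs $(k',b)$ with $k'+b+1=k+q_{N-m}+1$. On each summand
\[
 \widetilde\Delta^{\mathrm{pad}}=\widetilde\Delta_{Y,k'}(\lambda)\otimes I+I\otimes\widetilde\Delta_{X_{N-m},b}.
\]
Diagonalize $\widetilde\Delta_{X_{N-m},b}$, which is $\lambda$-independent. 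By \cref{prop:bowtie-base-homology}, its kernel is nonzero only for $b=q_{N-m}$, where it equals $\Eenc_{N-m}$. Every other eigenvector of it has a positive eigenvalue.

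\emph{The harmonic sector.} The subspace $C^{k}(Y_a^{\mathrm{loc}})\otimes\Eenc_{N-m}$ is reducing. On it the padded Laplacian is $\Delta_{Y,k}(\lambda)\otimes I$. So for $k\in\{2m-2,2m-1,2m\}$ the spectrum, the clusters and their projections are exactly those of \cref{prop:local-clusters} and \cref{lem:uniform-intervals}, tensored with $I_{\Eenc_{N-m}}$. Tensoring with the identity preserves the interval endpoints and the operator-norm bound $C_{\mathrm{proj}}\lambda$, so these constants do not depend on $N$ or $S$.

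\emph{The non-harmonic sectors.} For any summand with a positive eigenvalue $\nu$ of the untouched factor, the eigenvalues are $\mu+\nu\ge\nu$ with $\mu\ge0$. We need $\nu\ge c_{\rm spec}$ uniformly in $N$. Iterating \cref{prop:weighted-join-laplacian}, the spectrum of $X_{N-m}$ consists of sums of single-bow-tie augmented eigenvalues. A positive sum has at least one positive summand, so it is bounded below by the least positive eigenvalue of the finitely many degree Laplacians of $X_1$. Lowering $c_{\rm spec}$ once to this number therefore gives the $N$-independent bound. The statement allows the upper eigenvalues to grow with $N$, so no upper bound is needed here.

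\emph{The low-energy isometry.} In degree $q_N$, define
\[
 \mathscr L_{a,S}^{(N)}:=\mathscr L_a(\lambda)\otimes\Eenc_{N-m},
 \qquad
 W_{a,S}^{(N)}:=\Theta\,(W_a(\lambda)\otimes J_{N-m})\,W_S^*,
\]
where $\Theta$ is the unitary join identification composed with the placement relabelling. This space has dimension $2^m\cdot2^{N-m}=2^N$. By the previous steps, $\mathscr L_{a,S}^{(N)}$ is exactly the zero eigenspace plus the $\lambda^{4m+2}$ cluster of the padded Laplacian. The compression identity follows from \cref{thm:local-rank-one}:
\[
 (W_a^*\otimes J_{N-m}^*)(\Delta_Y\otimes I)(W_a\otimes J_{N-m})=\varepsilon_a\Pi_a\otimes I.
\]
Conjugating by $W_S$ turns $\varepsilon_a\Pi_a\otimes I$ into $\varepsilon_a P_{a,S}^{(N)}$. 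For the norm bound, use
\[
 \|W_a\otimes J_{N-m}-J_m\otimes J_{N-m}\|\le\|W_a-J_m\|\le C_{\rm proj}\lambda.
\]

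\emph{Main obstacle.} The delicate step is checking that $\Theta(J_m\otimes J_{N-m})W_S^*=J_N$ exactly, with no signs. This requires that join reordering of bow-tie factors is sign-free, which holds because each factor cycle has degree one, as in \cref{thm:padded-topology}. It also requires that $h_b=2^{-N}z_b$ factors correctly under placement. Some care is needed as well for the augmented conventions when $N=m$ or $k=0$.
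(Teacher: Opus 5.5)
Your proposal follows the paper's proof essentially step for step: the join decomposition of the padded Laplacian, the reducing subspace with harmonic untouched factor on which it acts as \(\Delta_Y\otimes I\), the \(N\)-independent lower bound on the complementary sector from the least positive single bow-tie eigenvalue, and the isometry \(\Theta(W_a\otimes J_{N-m})W_S^*\) together with the sign-free identity \(\Theta(J_m\otimes J_{N-m})W_S^*=J_N\). The one step you omit, which the paper makes explicit, is that after lowering \(c_{\rm spec}\) you must also shrink \(\lambda_*\) so that \(C_{\rm spec}\lambda_*^2<c_{\rm spec}\) still holds; this keeps the local intervals disjoint and forces \(C_{\rm spec}\lambda^{4m+2}<c_{\rm spec}\le\) the bow-tie gap, which you need when you assert that \(\mathscr L_{a,S}^{(N)}\) is exactly the kernel-plus-lifted spectral subspace.
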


\begin{proof}
Put $m=m_a$ and $n=N-m$.  First group the local and untouched factors,
postponing their placement in $S$.  By \cref{thm:join-padding}, the padded
complex is the join of the local and untouched complexes.  By
\cref{prop:weighted-join-laplacian}, tensor products of orthonormal
eigenvectors give an orthonormal eigenbasis in each homogeneous summand, and
the eigenvalues add:
\[
 \Delta_K u=\mu u,\quad\Delta_{X_n}v=\nu v
 \quad\Longrightarrow\quad
 \Delta_{K*X_n}(u*v)=(\mu+\nu)(u*v).
\]
All factor eigenvalues are nonnegative.  Let $\gamma_B>0$ be the least
positive eigenvalue over all augmented degree Laplacians of one bow-tie.
There are finitely many such matrices.  Applying the same tensor formula
to its $n$ copies shows that every positive eigenvalue of $X_n$ is a sum
containing at least one positive bow-tie eigenvalue.  It is therefore at
least $\gamma_B$, independently of $n$.  Moreover its kernel is precisely
$\Eenc_n$, concentrated in degree $q_n$; for $n=0$ the positive sector is
empty and the kernel is the one-dimensional unit factor.

On the reducing subspace with untouched factor in $\Eenc_n$, the
Laplacian is exactly $\Delta_K\otimes I_{\Eenc_n}$.  A local degree $k$
therefore contributes in degree $k+q_n+1=k+2n$, with identical eigenvalues
and multiplicities multiplied by $2^n$.  If $P(\lambda)$ is any local
cluster projection and $Q_n$ projects onto $\Eenc_n$, its projection in
this reducing subspace, extended by zero, is $P(\lambda)\otimes Q_n$.
Since $\|Q_n\|_{\rm op}=1$,
\[
 \|(P(\lambda)-P(0))\otimes Q_n\|_{\rm op}
 =\|P(\lambda)-P(0)\|_{\rm op},
\]
so the projection bound is uniform in $n$.
The local constant-order cluster is selected by this reducing subspace;
its eigenvalues may overlap those of other high modes.

On its orthogonal complement every tensor eigenvector has $\nu>0$, so
its eigenvalue is at least $\gamma_B$.  Replace $c_{\rm spec}$ by
$\min\{c_{\rm spec},\gamma_B\}$ and decrease the common $\lambda_*$,
if necessary, to retain $C_{\rm spec}\lambda_*^2<c_{\rm spec}$.
The bulk and lifted intervals then lie strictly below this high sector.

Let $U_S$ be the unitary cochain identification from the grouped join to
the labelled padded complex, including the orientation signs.  Define
\[
 W_{a,S}^{(N)}
 :=U_S(W_a\otimes J_n)W_S^*,\qquad
 \mathscr L_{a,S}^{(N)}
 :=U_S(\mathscr L_a\otimes\Eenc_n).
\]
Take $J_0=I_\CC$ when $n=0$.  The factor maps are isometries, so this is
an isometry onto the displayed $2^m2^n=2^N$ dimensional space.  By the
local spectral theorem and the high-sector lower bound, this is exactly
the spectral subspace comprising the kernel and the lifted interval in
degree $q_N$.  The encoding definitions and the even shifted degrees of
the bow-tie cycles give
\[
 U_S(J_m\otimes J_n)W_S^*=J_N.
\]
Unitary invariance and the tensor norm identity yield
\[
 \|W_{a,S}^{(N)}-J_N\|_{\rm op}
 =\|(W_a-J_m)\otimes J_n\|_{\rm op}
 =\|W_a-J_m\|_{\rm op}\le C_{\rm proj}\lambda.
\]
Finally $\Delta_{X_n}J_n=0$, so compression of the tensor-sum formula,
using \cref{thm:local-rank-one}, gives
\[
 (W_{a,S}^{(N)})^*\Delta_{q_N}W_{a,S}^{(N)}
 =\varepsilon_a(\lambda)W_S(\Pi_a\otimes I)W_S^*
 =\varepsilon_a(\lambda)P_{a,S}^{(N)}.
\]
Every constant used is local or comes from the single bow-tie, proving
independence of $N$, $S$, and the number of gadget occurrences.
\end{proof}

The preceding results establish constants that depend only on the fixed
catalogue.  For the reduction, we also need rational values that can be
computed once from the finite certificate data and then used in every
instance.  The final proposition records that effective choice.

\begin{proposition}[Effectivity of all local constants]
\label{prop:local-effectivity}
The constants $\lambda_*$, $c_{\rm spec}$, $C_{\rm spec}$, and
$C_{\rm proj}$ from \cref{lem:uniform-intervals}, together with the catalogue-wide
positive singular-value bound \(a_*=\min_{1\le a\le s}s_{\min}(T_{a,0})>0\) from condition C4, can be replaced by explicit rational bounds computed from the
finitely many certified integer matrices.  The single threshold $\lambda_*$
is chosen to satisfy every separation property stated in
\cref{lem:uniform-intervals} uniformly for every source instance.
The reduction may therefore hard-code one rational collection of valid
constants.
\end{proposition}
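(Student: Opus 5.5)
The plan is to make each existence argument of \cref{lem:uniform-intervals,thm:local-rank-one,thm:spectral-padding} quantitative, using only the finitely many certified integer matrices. For each catalogue type and each of the three degrees $k\in\{2m_a-2,2m_a-1,2m_a\}$, the weighted Laplacian is
\[
  \Delta_k(\lambda)=\Lambda(\lambda)^{-1}M_k(\lambda)\Lambda(\lambda)^{-1}.
\]
Here $M_k(\lambda)$ has entries that are integer polynomials in $\lambda$, and $\Lambda$ is the diagonal depth scaling. It is cleaner to work with $\Delta_k(\lambda)$ itself, whose entries are integer polynomials in $\lambda$ by \cref{prop:weighted-coboundary-matrix}. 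Every quantity we need is therefore a semialgebraic function of $\lambda$ defined over $\QQ$. The task reduces to exhibiting rational lower and upper bounds on $\lambda_*$, $c_{\rm spec}$, $C_{\rm spec}$, $C_{\rm proj}$ and $a_*$.

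The order of steps is as follows.

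\emph{First, $a_*$.} I take the rational Gram matrix $T_{a,0}^*T_{a,0}$. C4 gives that it is positive definite. A rational lower bound on its least eigenvalue follows from an exact rational $LDL^*$ factorisation, or from the bound $\det/\operatorname{tr}^{n-1}$. Then $a_*^2$ is at least the minimum of these bounds over $a$.

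\emph{Second, the eigenvalue clusters.} I use the characteristic polynomial
\[
  p_k(\lambda,\mu)=\det\bigl(\mu I-\Delta_k(\lambda)\bigr)\in\ZZ[\lambda,\mu],
\]
which is computed exactly. By \cref{prop:local-clusters} and \cref{thm:filtered-hodge}, after removing the exact kernel factor $\mu^{\dim\ker}$ the roots split into three groups of orders $1$, $\lambda^2$ and $\lambda^{2R_a}$. The multiplicities of these groups are known from the page dimensions. I substitute $\mu=\lambda^{2r}\nu$ and divide by the appropriate power of $\lambda$. The lowest-order coefficient polynomial in $\nu$ then has exactly the limiting values $b_i(0)$ as its roots. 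These are positive algebraic numbers, and their rational isolating intervals are computable, for example by Sturm sequences. For $\lambda$ below an explicit rational threshold, a Newton-polygon or Rouché-type argument confines the roots of the full polynomial to $[\tfrac12 b_{\min},2b_{\max}]\lambda^{2r}$. A Cauchy bound on the higher-order terms, with integer coefficients of explicitly bounded size, yields that threshold. A simpler alternative for the two lower clusters is min-max, using the exact certified subspaces together with $\|\Delta_k(\lambda)\|\le 2n^2$. I would try the polynomial route first because it is uniform.

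\emph{Third, the separation condition.} I shrink $\lambda_*$ rationally until $C_{\rm spec}\lambda_*^2<c_{\rm spec}$. For the padding step I also replace $c_{\rm spec}$ by a rational lower bound on the least positive eigenvalue of the finitely many bow-tie Laplacians. This is again an exact rational-matrix computation.

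\emph{Fourth, and the main obstacle, $C_{\rm proj}$.} The proofs above obtained it by bounding derivatives of analytic projections, which is not directly effective. I would use the Riesz formula
\[
  P_I(\lambda)=\frac{1}{2\pi i}\oint_{\Gamma}\bigl(z-\Delta(\lambda)\bigr)^{-1}\,dz
\]
along a contour that separates cluster $I$ from its neighbours.

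This works directly for the top cluster. For the small clusters, however, the relevant gap is of order $\lambda^{2r}$ while the perturbation is of order $\lambda$, so the naive resolvent estimate gives only $O(\lambda^{1-2r})$. To avoid this, I would work in the depth-rescaled Schur-complement form of the Appendix proof of \cref{thm:filtered-hodge}. There the projections onto the approximate harmonic spaces $\mathcal V_r$ are computed from certified exact matrices, for example the kernel bases of $\Dbase$ and the pseudo-inverses fixed by C3. The $O(\lambda)$ error then comes from an explicit Neumann series whose ratio is bounded by rational norms of these matrices.

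Finally, $W_a$ inherits an explicit constant from $\|G-I\|<1/2$ and the binomial series. I take the maximum of the resulting rational constants over the thirteen types and fix it. Since everything depends only on the catalogue, the reduction can hard-code these constants.
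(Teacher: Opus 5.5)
\textbf{There is a genuine gap at the constant $C_{\rm proj}$, the step you yourself call the main obstacle.}

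Your treatment of $a_*$, $c_{\rm spec}$, $C_{\rm spec}$ and $\lambda_*$ is sound, and more explicit than the paper's. The bound $\lambda_{\min}(T_{a,0}^*T_{a,0})\ge\det/\operatorname{tr}^{n-1}$ is valid. The substitution $\mu=\lambda^{2r}\nu$ in the bivariate characteristic polynomial, Rouch\'e around the isolated positive roots of the leading coefficient polynomial, reality of the spectrum, root counting, and continuity of the branches together do give explicit rational cluster intervals.

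For $C_{\rm proj}$ you correctly note that a Riesz integral only controls a projection across an $O(1)$ gap. So it handles the constant-order cluster and $P^k_{\ge1}(\lambda)$. It does not handle the harmonic, bulk and lifted projections, which are separated by gaps of size $\lambda^2$ and $\lambda^{4m_a+2}$. Your proposed replacement is not an argument, for two reasons.
\begin{itemize}
\item The appendix proof of \cref{thm:filtered-hodge} has no Schur-complement form. It rests on Rellich's theorem, which yields analytic eigenbases with no quantitative radius or derivative bound.
\item Condition C3 is only page-one exactness. The $\lambda$-independent pseudo-inverses of $\Dbase$ therefore see only the $O(1)$ gap. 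A Neumann series built from them cannot have ratio $O(\lambda)$ at the scales $\lambda^2,\dots,\lambda^{2R_a}$, where the relevant denominators live.
\end{itemize}
To make your route rigorous you would need an explicit iterated reduction, for every stage up to $R_a+1$:
\begin{itemize}
\item compress to the range of $P^k_{\ge r}(\lambda)$;
\item prove that the $\lambda^{-2r}$-rescaled compression is analytic, with leading term the Laplacian of $\delta_r$ on page $r$ and an explicitly bounded remainder;
\item apply Riesz at that level.
\end{itemize}
None of this is supplied. As written, the proposal does not establish an effective $C_{\rm proj}$, and hence not the enlarged constant for $W_a$ either.

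The paper avoids explicit estimates altogether. Every local Laplacian has entries in $\ZZ[\lambda]$. So for a candidate rational tuple $(\ell,c,C,K,a)$, each required property is a first-order sentence over $\RR$ with rational coefficients, and is decidable by quantifier elimination. These properties are:
\begin{itemize}
\item cluster localisation and multiplicities;
\item the forbidden interval;
\item $\|P_r(\lambda)-P_r(0)\|_{\rm op}\le K\lambda$ for all $0<\lambda<\ell$;
\item the bow-tie gap;
\item $s_{\min}(T_{a,0})\ge a$.
\end{itemize}
The nonconstructive existence results (\cref{lem:uniform-intervals,thm:local-rank-one,thm:spectral-padding} and C4) show that some real tuple satisfies these with strict margins. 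Hence some rational tuple does too, and enumerating candidates terminates.

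Applying this device only to $C_{\rm proj}$ would close your gap, since \cref{lem:uniform-intervals} already gives you that some such $K$ exists. Your explicit route for the remaining constants would then be a legitimate and more concrete alternative to the paper's uniform search.
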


\begin{proof}
In weighted orthonormal simplex coordinates, every local differential has
the form $D_0+\lambda D_1$ with integer matrices $D_0,D_1$.  Hence every
local Laplacian is a finite real symmetric matrix whose entries lie in
$\ZZ[\lambda]$.  The spaces $\mathcal V_r^k$ of
\cref{def:approximate-harmonic}, their dimensions, and their limiting
orthogonal projections are therefore computable exactly by rational
Gaussian elimination.

For a proposed rational tuple
$(\ell,c,C,K,a)$, the assertions
\[
  0<\lambda<\ell,
  \qquad
  c\lambda^{2r}\le\mu\le C\lambda^{2r},
  \qquad
  \|P_r(\lambda)-P_r(0)\|_{\rm op}\le K\lambda,
\]
together with the required multiplicities, the forbidden-interval
condition, the bow-tie gap, and the bound
$s_{\min}(T_{a,0})\ge a$, can all be expressed by finitely many polynomial
equalities and inequalities with rational coefficients.  Their truth is
decidable by effective quantifier elimination over the real closed field;
see \cite[Section~2.1]{Basu}.

The existence results in
\cref{lem:uniform-intervals,thm:local-rank-one,thm:spectral-padding} and
condition C4 show that some real tuple satisfies all these inequalities with
strict margins.  Consequently a rational tuple does as well.  Enumerating
rational candidates and applying the preceding decision procedure must
therefore terminate.  Since the catalogue is fixed and finite, the resulting
values may be computed once and hard-coded; they do not depend on $N$, on the
supports, or on the number of gadget occurrences.
\end{proof}

\section{The global spectral gap}
\label{sec:global-spectrum}

Section~\ref{sec:single-gadget-spectrum} describes the spectrum of each
padded gadget separately.  The remaining problem is global: all gadget
occurrences share the same base complex, so their down-Laplacian terms need
not be orthogonal.  We must show that private components belonging to
different occurrences cannot cancel through that common base and create a
new low-energy state.

The use of single-gadget restrictions and separate control of up- and
down-boundary contributions follows the strategy of King and Kohler
\cite[Section~10 and Appendix~B]{KingKohler}.  Here the local inputs and
quantitative estimates are established for the complexes defined above.

For a global cochain \(u\), restrict to the base together with one gadget
occurrence and decompose each restriction into the four local spectral
sectors from Section~\ref{sec:single-gadget-spectrum}.  The lifted, bulk,
and high sectors are controlled by the global Hodge energy, while the
harmonic sector is compared with the encoded space on which \(I-P_j\) acts.
Summing over occurrences brings in the source inequality \(\sum_{j=1}^{t}P_j\succeq gI\) to supply the global spectral gap.

\subsection{Global energy and the local views}
\label{subsec:global-energy}

\begin{definition}[Global weighted complex and energy]
\label{def:global-energy}
Fix a source presentation
\[
  H=\sum_{j=1}^{t}P_j,
  \qquad
  P_j=(\Pi_{a_j})_{S_j}^{(N)},
  \qquad
  t\geq1,
  \qquad
  M:=\max_{1\leq j\leq t}m_{a_j}\leq4.
\]
Give every base vertex of \(\widehat G(H)\) weight \(1\) and every private
vertex weight \(0<\lambda<1\).  Put \(q:=q_N=2N-1\) and write
\[
  \widehat\Delta_q(\lambda)
  :=
  \widehat D_{q-1}(\lambda)\widehat D_{q-1}(\lambda)^*
  +
  \widehat D_q(\lambda)^*\widehat D_q(\lambda)
\]
for the weighted degree-\(q\) Laplacian of \(\widehat Y(H)\).

For \(u\in C^q(\widehat Y(H);\CC)\), define its Hodge energy by
\[
  \mathcal E(u)
  :=
  \langle u,\widehat\Delta_q(\lambda)u\rangle
  =
  \|\widehat D_{q-1}(\lambda)^*u\|^2
  +
  \|\widehat D_q(\lambda)u\|^2.
\]
\end{definition}

Thus \(\mathcal E(u)\geq0\), and \(\mathcal E(u)=0 \,\Longleftrightarrow\, u\in\ker\widehat\Delta_q(\lambda)\). The smallest eigenvalue of \(\widehat\Delta_q(\lambda)\) is the minimum of \(\mathcal E(u)\) over unit degree-\(q\) cochains.

\begin{definition}[Base and private coordinate projections]
\label{def:coordinate-projections}
By condition C2, every simplex of \(\widehat Y(H)\) is either a base simplex
or contains private vertices belonging to exactly one gadget occurrence.
In degree \(q\), let \(\mathsf C_0,\mathsf C_1,\ldots,\mathsf C_t\) be the mutually orthogonal coordinate projections onto the base simplices
and the simplices private to occurrences \(1,\ldots,t\).  Hence
\[
  I=\mathsf C_0+\sum_{j=1}^{t}\mathsf C_j.
\]

For a global cochain \(u\), write
\[
  \omega_0:=\mathsf C_0u,
  \qquad
  \omega_j:=\mathsf C_ju,
  \qquad
  u_j:=(\mathsf C_0+\mathsf C_j)u=\omega_0+\omega_j.
\]
The vector \(u_j\) is the part of \(u\) visible in the \(j\)-th padded
single-gadget complex.  Denote the coboundary, weighted boundary, and
degree-\(q\) Laplacian of that padded complex by
\[
  D_j,
  \qquad
  \partial_j:=D_{j,q-1}^*,
  \qquad
  \Delta_j:=\partial_j^*\partial_j+D_{j,q}^*D_{j,q}.
\]
\end{definition}

The base component \(\omega_0\) appears in every \(u_j\); only the
private components \(\omega_j\) are mutually orthogonal.

\begin{definition}[The four local spectral sectors]
\label{def:local-sector-projections}
Fix \(0<\lambda<\lambda_*\).  For occurrence \(j\), let \(I_j^{\mathrm{har}}:=\{0\}\), \[
  I_j^{\mathrm{lift}}(\lambda)
  :=
  [c_{\mathrm{spec}}\lambda^{4m_{a_j}+2},
   C_{\mathrm{spec}}\lambda^{4m_{a_j}+2}],
\]
\[
  I_j^{\mathrm{bulk}}(\lambda)
  :=
  [c_{\mathrm{spec}}\lambda^2,
   C_{\mathrm{spec}}\lambda^2],
  \qquad
  I_j^{\mathrm{high}}
  :=
  [c_{\mathrm{spec}},\infty).
\]
Let \(P_j^{\mathrm{har}}, \, P_j^{\mathrm{lift}}, \, P_j^{\mathrm{bulk}}, \, P_j^{\mathrm{high}}\) be the corresponding spectral projections of \(\Delta_j\), extended by
zero outside the \(j\)-th padded cochain space.

By \cref{thm:spectral-padding}, these four sectors are pairwise orthogonal
and exhaust the degree-\(q\) padded cochain space:
\[
  \mathsf C_0+\mathsf C_j
  =
  P_j^{\mathrm{har}}
  +P_j^{\mathrm{lift}}
  +P_j^{\mathrm{bulk}}
  +P_j^{\mathrm{high}}.
\]
\end{definition}

The harmonic sector is the surviving encoded space.  The lifted sector is
the low-energy penalty corresponding
to \(\operatorname{im}P_j\); before padding, its local factor is one
dimensional.  The bulk sector has energy of order \(\lambda^2\), and the
high sector has energy bounded below by a positive constant.

For later comparison, view the encoded isometry \(J_N\) as taking values in
the base coordinates of the global cochain space.  Define
\[
  \overline P_j^{\mathrm{har}}
  :=
  J_N(I-P_j)J_N^*.
\]
Also define the unperturbed padded bulk space \(\overline{\mathscr B}_j := \Bbulk_{a_j}^{\mathrm{loc}} \otimes \Eenc_{N-m_{a_j}}\) after relabelling the local factor according to \(S_j\) and using the join
cochain identification.  Let \(\overline P_j^{\mathrm{bulk}} := \Proj_{\overline{\mathscr B}_j}\). Both projections are extended by zero away from their indicated subspaces.

\subsection{Isolation of private outputs}
\label{subsec:global-isolation}

The up-coboundary outputs of different occurrences are completely
disjoint.  Down-boundary outputs can meet on the base, so for the bulk
estimate we retain only outputs containing the private apex of the relevant
occurrence.

The following identity is the up-Laplacian decomposition of King and
Kohler~\cite[Claim~10.4]{KingKohler}, written in the present notation.
We include the proof to check it directly from the global graph definition.

\begin{lemma}[No crossing of top cofaces]
\label{lem:no-crossing}
For every \(u\in C^q(\widehat Y(H);\CC)\), \[
  \|\widehat D_qu\|^2
  =
  \sum_{j=1}^{t}\|D_{j,q}u_j\|^2.
\]
In contrast, one need not have
\[
  \|\widehat D_{q-1}^*u\|^2
  =
  \sum_{j=1}^{t}\|D_{j,q-1}^*u_j\|^2.
\]
\end{lemma}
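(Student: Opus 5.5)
The plan is to compute \(\widehat D_q u\) coordinate by coordinate in the weighted orthonormal basis of degree-\((q+1)\) simplices, and to sort those simplices by occurrence label using condition C2, as in the proof of \cref{thm:global-topology}. Every \((q+1)\)-simplex \(\omega\) of \(\widehat Y(H)\) is either a base simplex or contains private vertices of exactly one occurrence \(j\); in the second case \(\omega\) is a simplex of the padded complex \(Y_j\). By \cref{prop:weighted-coboundary-matrix}, the coefficient of \(|\widehat\omega\rangle\) in \(\widehat D_qu\) is
\[
  \sum_{v\in\omega}\epsilon(\omega\setminus\{v\},\omega)\,w(v)\,u_{\omega\setminus\{v\}},
\]
which involves only the coordinates of \(u\) on codimension-one faces of \(\omega\).

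The first key step concerns base simplices. A base \((q+1)\)-simplex \(\omega\) would have to be a simplex of \(X_N\), since \(\widehat Y(H)[V(G_N)]=X_N\) by C2. But \(X_N\) is a join of one-dimensional complexes on \(N\) factors, so its top dimension is \(2N-1=q\). There are therefore no base \((q+1)\)-simplices.

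The second step concerns simplices private to occurrence \(j\). For such an \(\omega\), each face \(\omega\setminus\{v\}\) is either a base simplex or a simplex private to occurrence \(j\). No face can carry another occurrence's label, because its vertices are a subset of \(\omega\). Hence the coefficient of \(\widehat D_q u\) at \(\omega\) depends only on \(u_j=\omega_0+\omega_j\). Moreover, the same incidence signs and vertex weights appear in \(D_{j,q}\), since \(Y_j\) is the induced subcomplex on \(V(G_N)\cup V_j^{\mathrm{priv}}\) (\cref{prop:certificate-scope}). This coefficient therefore equals the \(\omega\)-coefficient of \(D_{j,q}u_j\). Conversely, every \((q+1)\)-simplex of \(Y_j\) is nonbase by the first step, so it is private to \(j\).

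The degree-\((q+1)\) orthonormal basis is thus partitioned into disjoint blocks indexed by \(j\), and on block \(j\) the vector \(\widehat D_qu\) coincides with \(D_{j,q}u_j\). Pythagoras then gives the stated identity. I expect the only delicate point to be the first step, namely excluding base \((q+1)\)-simplices created by clique completion; the induced-base part of C2 handles it.

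For the contrast statement, it suffices to give an example. The down-boundary \(\widehat D_{q-1}^*u\) has outputs on \((q-1)\)-simplices, and base \((q-1)\)-simplices exist. A base \((q-1)\)-simplex \(\rho\) receives contributions from \(\omega_0\) and from every \(\omega_j\), coming from \(q\)-simplices \(\rho\cup\{p\}\) with \(p\) private to occurrence \(j\). Expanding the square on such a \(\rho\) produces cross terms \(\langle(\partial\omega_i)_\rho,(\partial\omega_j)_\rho\rangle\) for \(i\neq j\), together with repeated copies of the \(\omega_0\) term, which the sum \(\sum_j\|D_{j,q-1}^*u_j\|^2\) counts \(t\) times. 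For a concrete failure, I would take \(t=2\) identical occurrences and \(u=\omega_0\) a base cochain with \(\partial\omega_0\neq0\) on base simplices. Then the left side is \(\|\partial\omega_0\|^2\) (plus private-face contributions), while the right side counts it twice, so equality fails. I would state this example briefly rather than fully verifying it.
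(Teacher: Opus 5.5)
Your proof of the identity is essentially the paper's. There are no base \((q+1)\)-simplices because \(\dim X_N=q\), condition C2 places every top coface in a single occurrence whose padded complex gives the same coefficient with the same signs and weights, and disjointness of these supports yields the sum of squared norms.

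For the contrast, your explicit witness is correct and a little more concrete than the paper's remark that base down-outputs from different occurrences may coincide. Take \(t\ge2\) and \(u=\omega_0\) supported on a single base \(q\)-simplex. The left side is then exactly \(\|\partial_X\omega_0\|^2\ne0\), with no private-face terms at all, while the right side is \(t\|\partial_X\omega_0\|^2\).
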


\begin{proof}
The base complex \(X_N\) has dimension \(q\), so it has no \((q+1)\)-simplices.  Every global \((q+1)\)-simplex therefore contains a
private vertex.  Condition C2 implies that all of its private vertices
belong to one occurrence \(j\).  Every degree-\(q\) face of that simplex is
then either a base face or belongs to the same occurrence.  Its coefficient
in \(\widehat D_qu\) is consequently the corresponding coefficient in \(D_{j,q}u_j\), with the same incidence sign and weight.

The \((q+1)\)-simplex supports belonging to different occurrences are
disjoint, so the squared norms add.  A down-boundary can instead delete the
only private vertex and produce a base \((q-1)\)-simplex.  Such outputs from different occurrences may coincide.
\end{proof}

Selecting output simplices containing the cone apex also appears in
\cite[Section~10.2 and Appendix~B.3]{KingKohler}.  Below, the corresponding
lower bound is obtained from the certified private-sector incidence matrix
of \cref{def:principal-bulk-matrix}.

\begin{definition}[Selected outputs]
\label{def:selected-output}
Fix an occurrence \(j\), and put \(m=m_{a_j}\).  Under the join cochain
identification, define
\[
  \mathscr O_{j,N}^{\downarrow}
  :=
  \Odown_{a_j}\otimes\Eenc_{N-m}
  \subseteq
  C^{q-1}(Y_{a_j,S_j}^{\mathrm{pad}}),
\]
\[
  \mathscr O_{j,N}^{\uparrow}
  :=
  \Oup_{a_j}\otimes\Eenc_{N-m}
  \subseteq
  C^{q+1}(Y_{a_j,S_j}^{\mathrm{pad}}).
\]
Let \(S_j^\downarrow\) and \(S_j^\uparrow\) be their orthogonal
projections, extended by zero to the full global output spaces.  Define
\[
  T_j^{\mathrm{sel}}v
  :=
  \bigl(
    S_j^\downarrow\partial_jv,\,
    S_j^\uparrow D_{j,q}v
  \bigr).
\]
Every simplex coordinate supporting either selected output contains the
private apex of occurrence \(j\).
\end{definition}

\begin{lemma}[Selected-output isolation]
\label{lem:selected-isolation}
For every global degree-\(q\) cochain \(u\) and every occurrence \(j\), \[
  S_j^\downarrow\widehat D_{q-1}^*u
  =
  S_j^\downarrow\partial_ju_j,
  \qquad
  S_j^\uparrow\widehat D_qu
  =
  S_j^\uparrow D_{j,q}u_j.
\]
Moreover, selected output spaces belonging to different occurrences are
orthogonal.  Consequently,
\[
  \sum_{j=1}^{t}\|T_j^{\mathrm{sel}}u_j\|^2
  \leq
  \mathcal E(u).
\]
\end{lemma}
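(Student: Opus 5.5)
The argument works coordinate by coordinate on output simplices, and it rests on one structural fact. Every coordinate in the support of \(\mathscr O_{j,N}^{\downarrow}\) or \(\mathscr O_{j,N}^{\uparrow}\) is a simplex containing the private apex \(v_{a_j}^*\) of occurrence \(j\). This holds because, under the join cochain identification, such a coordinate is a local simplex containing the apex (by \cref{def:principal-bulk-matrix}) joined with base simplices of the untouched factors. By condition C2 and the occurrence partition used in \cref{thm:global-topology}, any simplex containing a private vertex of occurrence \(j\) contains no private vertex of another occurrence. Hence the selected output coordinates of distinct occurrences are disjoint sets of simplices, and in the weighted orthonormal simplex basis the selected output spaces are mutually orthogonal. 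This proves the second assertion.

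For the first identity, fix a \((q-1)\)-simplex \(\rho\) containing \(v_{a_j}^*\). The coefficient of \(\widehat D_{q-1}^*u\) at \(\rho\) is the weighted sum over the \(q\)-cofaces \(\tau=\rho\cup\{v\}\) in \(\widehat Y(H)\) of the signed coefficients \(u_\tau\). Each such \(\tau\) contains \(v_{a_j}^*\), so it is neither a base simplex nor private to another occurrence. It is therefore a simplex of the \(j\)-th padded complex \(Y_{a_j,S_j}^{\mathrm{pad}}\), and \(u_\tau=(u_j)_\tau\). Conversely, every coface of \(\rho\) inside the padded complex is a global coface. The key input here is \(\widehat G(H)[V(G_N)\cup V_j^{\mathrm{priv}}]=G_j^{\mathrm{pad}}\) from the proof of \cref{prop:certificate-scope}.

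The incidence signs and vertex weights agree because both are computed from the same oriented simplices and the same weighting. Thus the \(\rho\)-coefficient of \(\widehat D_{q-1}^*u\) equals that of \(\partial_ju_j\), and applying \(S_j^\downarrow\) gives the first identity. The up-identity follows by the same reasoning with faces in place of cofaces: for an apex-containing \((q+1)\)-simplex \(\omega\), all degree-\(q\) faces of \(\omega\) either contain the apex or are base faces. In either case they lie in the \(j\)-th padded complex, and their coefficients are read from \(u_j=\omega_0+\omega_j\). This is the selected version of \cref{lem:no-crossing}.

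For the inequality, combine the two identities with orthogonality of the selected outputs:
\[
\sum_j\|T_j^{\mathrm{sel}}u_j\|^2
=\sum_j\|S_j^\downarrow\widehat D_{q-1}^*u\|^2+\sum_j\|S_j^\uparrow\widehat D_qu\|^2
\le\|\widehat D_{q-1}^*u\|^2+\|\widehat D_qu\|^2=\mathcal E(u).
\]
The middle step uses that \(\sum_jS_j^\downarrow\) and \(\sum_jS_j^\uparrow\) are orthogonal projections, being sums of mutually orthogonal projections.

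The main obstacle is the down-boundary identity. There one must rule out a global \(q\)-coface of an apex-containing \(\rho\) that lies outside the padded complex, for example a coface created by clique completion across occurrences. C2 excludes exactly this, so the verification reduces to that induced-subgraph identity. The point is that deleting the apex is the only operation that sends a private simplex to the shared base, and selection discards precisely those outputs.
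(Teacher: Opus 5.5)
Your proposal is correct and follows the paper's argument. Every selected coordinate contains the apex of occurrence \(j\), so by C2 (through the identity \(\widehat G(H)[V(G_N)\cup V_j^{\mathrm{priv}}]=G_j^{\mathrm{pad}}\)) every contributing face and coface lies in the \(j\)-th padded complex, with the same signs and weights. Selected spaces of distinct occurrences have disjoint apex-labelled supports, and Bessel's inequality then gives the energy bound.

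One harmless slip: a \(q\)-face of an apex-containing \((q+1)\)-simplex obtained by deleting the apex need not be a base face, since it may keep other private vertices of occurrence \(j\). It still lies in the \(j\)-th padded complex because all its vertices do, and that is all your argument needs.
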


\begin{proof}
Every selected output contains the private apex of its occurrence.  By
condition C2, no face or coface contributing to that output can contain a
private vertex from another occurrence.  Its coefficient therefore depends
only on \(u_j\), and the local and global incidence coefficients agree.

Selected spaces belonging to different occurrences have disjoint
apex-labelled coordinate supports and are therefore orthogonal.  Applying
Bessel's inequality separately to the global down- and up-residuals gives
the energy bound.
\end{proof}

\begin{lemma}[Leakage from private coordinates to the base]
\label{lem:boundary-leakage}
There is a constant \(C_{\mathrm{leak}}\), depending only on the finite
catalogue, such that for every occurrence \(j\), every \(0<\lambda<\lambda_*\), and every \(v\in\operatorname{im}\mathsf C_j\), \[
  \|\mathsf C_0^{q-1}\partial_jv\|
  \leq
  C_{\mathrm{leak}}\lambda\|v\|.
\]
Here \(\mathsf C_0^{q-1}\) denotes the coordinate projection onto base
simplices in degree \(q-1\).
\end{lemma}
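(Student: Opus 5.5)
The bound is a direct matrix-coefficient computation in the weighted orthonormal simplex basis, using the weighted adjoint formula of \cref{subsec:weighted-hodge}:
\[
  D^{*,w}|\widehat\tau\rangle=\sum_{v\in\tau}\epsilon(\tau\setminus\{v\},\tau)\,w(v)\,|\widehat{\tau\setminus\{v\}}\rangle .
\]
Take \(v\in\operatorname{im}\mathsf C_j\). It is a combination of degree-\(q\) simplices \(\tau\) of the \(j\)-th padded complex, each containing at least one private vertex of occurrence \(j\). A face \(\tau\setminus\{x\}\) is a base \((q-1)\)-simplex exactly when \(\tau\) has private depth one and \(x\) is its unique private vertex. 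In that case the coefficient carries the factor \(w(x)=\lambda\). Faces obtained by deleting a base vertex, or by deleting one of several private vertices, keep a private vertex and are annihilated by \(\mathsf C_0^{q-1}\). Hence
\[
  \mathsf C_0^{q-1}\partial_jv=\lambda\,L_jv,
\]
where \(L_j\) is a \(\lambda\)-independent signed incidence operator. The map \(L_j\) sends \(|\widehat\tau\rangle\), for \(\tau\) of depth one, to \(\pm|\widehat{\tau\cap V(X_N)}\rangle\), and kills simplices of higher depth.

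It remains to bound \(\|L_j\|_{\rm op}\) uniformly. Each column of \(L_j\) has at most one nonzero entry of modulus one. Each row, indexed by a base simplex \(\rho\), has one nonzero entry for each private vertex \(p\) of occurrence \(j\) with \(\rho\cup\{p\}\) a simplex. So the row support is at most \(|V_{a_j}^{\mathrm{priv}}|\). Since \(L_j^*L_j\) is diagonal with entries \(0\) or \(1\) up to row overlaps, the Schur test (\(\|L\|^2\le\max\text{row}\cdot\max\text{col}\)) gives
\[
  \|L_j\|_{\rm op}\le\sqrt{|V_{a_j}^{\mathrm{priv}}|}.
\]
This bound does not depend on \(N\). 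Padding by untouched bow-tie factors enlarges the base part of \(\rho\), but the only private vertices that can be adjoined to \(\rho\) are those of occurrence \(j\). By \cref{prop:certificate-scope}, that set is a relabelled copy of the canonical private vertex set. Taking
\[
  C_{\mathrm{leak}}:=\max_a\sqrt{|V_a^{\mathrm{priv}}|}
\]
over the thirteen catalogue types finishes the proof.

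The one step that needs care is that the global down-boundary \(\partial_j\) has the same coefficients as the local one on these coordinates, with base weights equal to \(1\). This follows from C2 together with the induced-subgraph identity in the proof of \cref{prop:certificate-scope}.
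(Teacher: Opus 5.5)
Your proof is correct. Its first step is the paper's: only deleting the unique private vertex of a depth-one simplex lands in the base, and that incidence has coefficient $\pm\lambda$.

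You differ from the paper in how you get a bound that is uniform in $N$ and in the placement. The paper uses the join structure of the padded complex. Incidences that delete a vertex of an untouched factor leave the private vertex in place, so $\mathsf C_0^{q-1}$ kills them. The leakage map is then a direct sum of operators $\lambda L_{a,k}\otimes I$, and uniformity follows from $\|L_{a,k}\otimes I\|_{\mathrm{op}}=\|L_{a,k}\|_{\mathrm{op}}$ plus a maximum over finitely many local incidence matrices.

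You bypass the join identification, and its orientation signs, entirely. You bound the padded incidence matrix directly by sparsity:
\begin{itemize}
\item each column has at most one nonzero entry;
\item the row of a base simplex $\rho$ has at most one entry for each private vertex $p$ of occurrence $j$ with $\rho\cup\{p\}$ a simplex.
\end{itemize}
This is more elementary and gives the explicit constant $C_{\mathrm{leak}}=\max_a|V_a^{\mathrm{priv}}|^{1/2}$. The paper leaves the constant as a maximum of local operator norms, but its version fits the tensor-product bookkeeping used throughout the padding analysis.

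Two small corrections to your wording:
\begin{itemize}
\item The matrix that is diagonal is $L_jL_j^*$, not $L_j^*L_j$, because distinct columns never share a nonzero row. Its diagonal entries are the row counts, so $\|L_j\|_{\mathrm{op}}^2$ equals the largest row count exactly, and you do not even need the Schur test.
\item $\partial_j$ is by definition the boundary of the padded complex. What C2 and the induced-subgraph identity are needed for is identifying $\operatorname{im}\mathsf C_j$ with the private coordinates of that padded complex, not comparing a global and a local boundary operator.
\end{itemize}
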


\begin{proof}
A boundary face is entirely in the base only when the source simplex has
exactly one private vertex and that vertex is deleted.  In the weighted
orthonormal basis, this deletion has coefficient \(\lambda\) times its
orientation sign.

Under join padding, deleting a vertex in an untouched factor leaves the
private local vertex present and is therefore annihilated by the base
projection.  The remaining map is a direct sum of operators of the form \(\lambda L_{a,k}\otimes I\), where \(L_{a,k}\) is a finite local incidence matrix.  Hence its operator
norm is at most \(\lambda\max_{a,k}\|L_{a,k}\|_{\mathrm{op}}\). The maximum is over a fixed finite list and defines \(C_{\mathrm{leak}}\), independently of \(N\), \(t\), and the placement.
\end{proof}

\subsection{Controlling the four local sectors}
\label{subsec:local-sector-control}

\begin{lemma}[Stability of the harmonic and bulk projections]
\label{lem:projection-stability}
For every occurrence \(j\) and \(0<\lambda<\lambda_*\), \[
  \|P_j^{\mathrm{har}}-\overline P_j^{\mathrm{har}}\|_{\mathrm{op}}
  \leq
  C_{\mathrm{proj}}\lambda,
\]
\[
  \|P_j^{\mathrm{bulk}}-\overline P_j^{\mathrm{bulk}}\|_{\mathrm{op}}
  \leq
  C_{\mathrm{proj}}\lambda.
\]
In particular,
\[
  P_j^{\mathrm{har}}
  =
  J_N(I-P_j)J_N^*
  +
  R_j(\lambda),
  \qquad
  \|R_j(\lambda)\|_{\mathrm{op}}
  \leq
  C_{\mathrm{proj}}\lambda.
\]
\end{lemma}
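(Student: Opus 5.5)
The plan is to reduce both estimates to the local, unpadded cluster projections of \cref{lem:uniform-intervals} and then transport them through the padding identification of \cref{thm:spectral-padding}.  In both cases the tensor structure will give the same operator-norm bound, independent of $N$.

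For the harmonic sector, take $P_j^{\mathrm{har}}$ as the projection onto the kernel of $\Delta_j$ in degree $q$.  By \cref{thm:spectral-padding}, this kernel equals $U_{S_j}(\ker\Delta_{a_j}^{\mathrm{loc}}\otimes\Eenc_{N-m})$, where $m=m_{a_j}$, because every mode with a nonharmonic untouched factor has eigenvalue at least $c_{\rm spec}>0$.  Hence
\[
  P_j^{\mathrm{har}}=U_{S_j}\bigl(P_h(\lambda)\otimes Q_{N-m}\bigr)U_{S_j}^*,
\]
with $P_h(\lambda)$ the local exact-zero projection from the proof of \cref{thm:local-rank-one}.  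That proof identifies the limit $P_h(0)=J_m(I-\Pi_{a_j})J_m^*$.  Combining this with the identity $U_S(J_m\otimes J_n)W_S^*=J_N$ and with $Q_n=J_nJ_n^*$, the limit becomes
\[
  U_{S_j}\bigl(P_h(0)\otimes Q_n\bigr)U_{S_j}^*
  =J_NW_{S_j}\bigl((I-\Pi_{a_j})\otimes I\bigr)W_{S_j}^*J_N^*
  =J_N(I-P_j)J_N^*=\overline P_j^{\mathrm{har}}.
\]
Since $U_{S_j}$ is unitary and $\|Q_n\|=1$, we get $\|P_j^{\mathrm{har}}-\overline P_j^{\mathrm{har}}\|\le\|P_h(\lambda)-P_h(0)\|$.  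This is at most $C_{\rm proj}\lambda$, since the exact-zero branch projection was included in $C_{\rm proj}$ in \cref{lem:uniform-intervals}.  Setting $R_j(\lambda):=P_j^{\mathrm{har}}-\overline P_j^{\mathrm{har}}$ then gives the displayed decomposition.

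For the bulk sector the same transport argument applies to the $\lambda^2$ cluster $P_I(\lambda)$ in local degree $d=2m-1$, using item (iii) of \cref{lem:uniform-intervals}.  \Cref{thm:spectral-padding} shows that $P_j^{\mathrm{bulk}}=U_{S_j}(P_I(\lambda)\otimes Q_n)U_{S_j}^*$ for $\lambda<\lambda_*$: the constant $c_{\rm spec}$ was lowered below $\gamma_B$, so no high-sector mode falls into the bulk interval.  The only real content is then the identification $P_I(0)=\Proj_{\Bbulk_a^{\mathrm{loc}}}$.

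\Cref{thm:filtered-hodge} gives $P_I(0)=\Proj(\mathcal V_1^d\cap(\mathcal V_2^d)^\perp)$.  By \cref{thm:page-one}, the page-one term in degree $d$ lives at depth $0$ (the base cohomology) and at the top depth $d+1$; every intermediate depth vanishes by C3.  By \cref{thm:top-depth-relative-term}, the top-depth part is $C^d(P_a,A_a)$, whose realization in $\mathcal V_1^d$ is the coordinate span of the simplices in $(P_a)_d\setminus(A_a)_d$.  This span is exactly $\Bbulk_a^{\mathrm{loc}}$.  On page two it dies completely, since $H^d(P_a,A_a)=0$ by C4 and $d\neq2m$.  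So $\mathcal V_1^d=\operatorname{im}J_m\oplus\Bbulk_a^{\mathrm{loc}}$, orthogonally because the two pieces sit at different depths, while $\mathcal V_2^d=\operatorname{im}J_m$ as shown in the proof of \cref{thm:local-rank-one}.  Their difference is $\Bbulk_a^{\mathrm{loc}}$.

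The main obstacle is justifying that the Hilbert-space realization of page one at the top depth is literally the relative coordinate subspace, not merely isomorphic to it.  I would argue this directly from the definition of $\mathcal V_1^d$.  The leading coefficient $u_0$ must satisfy $\Dbase u_0=0$ and $\Dbase^*u_0=0$.  At depth $d+1$ there are no degree-$(d-1)$ simplices, so $\Dbase^*$ vanishes there.  The kernel of $\Dbase$ was computed explicitly in \cref{thm:top-depth-relative-term} and is precisely $C^d(P_a,A_a)$.  Orthogonality to the depth-zero part of $\mathcal V_1^d$ is automatic.  Once $P_I(0)$ is identified this way, the tensor-norm identity finishes the bulk estimate with the same constant $C_{\rm proj}$.
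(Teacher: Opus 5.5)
Your proposal is correct and takes the same route as the paper. Both arguments identify the local limits \(P_h(0)=J_m(I-\Pi_a)J_m^*\) and \(P_I(0)=\Proj_{\Bbulk_a^{\mathrm{loc}}}\) through the filtered Hodge correspondence and the page computations, take the rate \(C_{\rm proj}\lambda\) from \cref{lem:uniform-intervals}, and then transport both bounds through the padding identification \(U_{S_j}(\,\cdot\,\otimes Q_{N-m})U_{S_j}^*\) of \cref{thm:spectral-padding}. Your direct check that \(\mathcal V_1^d=\ker\Dbase\cap\ker\Dbase^*\) splits by depth, with the top-depth piece equal to the coordinate span \(C^d(P_a,A_a)\) because \(C^{d-1}_{(d+1)}=0\), and your use of \(\gamma_B\ge c_{\rm spec}\) to keep modes with a positive untouched eigenvalue out of the bulk interval, spell out steps that the paper's terse proof leaves implicit.
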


\begin{proof}
In local degree \(d=2m_a-1\), the page calculation of
\cref{thm:top-depth-relative-term}, together with
\cref{thm:filtered-hodge,thm:local-rank-one}, identifies the limiting
harmonic space with the surviving encoded space and the limiting
order-\(\lambda^2\) bulk space with \(C^d(P_a,A_a;\CC)\). The filtered Hodge correspondence and the uniform interval lemma therefore
give the two local projection estimates.  By
\cref{thm:spectral-padding}, padding tensors these projections with the
orthogonal projection onto \(\Eenc_{N-m_a}\).  Tensoring with a norm-one
projection and extending by zero preserve the operator-norm bounds.  The
resulting comparison projections are precisely \(\overline P_j^{\mathrm{har}}\) and \(\overline P_j^{\mathrm{bulk}}\).
\end{proof}

\begin{lemma}[The lifted sector is controlled by the global energy]
\label{lem:lifted-up}
For every unit global cochain \(u\), \[
  \sum_{j=1}^{t}
  \|P_j^{\mathrm{lift}}u_j\|^2
  \leq
  c_{\mathrm{spec}}^{-1}
  \lambda^{-(4M+2)}
  \mathcal E(u).
\]
\end{lemma}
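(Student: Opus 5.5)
The plan is to work only with the up-coboundary part of the energy, because that part is the one that splits exactly over occurrences. By \cref{lem:no-crossing},
\[
  \sum_{j=1}^t\|D_{j,q}u_j\|^2=\|\widehat D_qu\|^2\le\mathcal E(u).
\]
So it suffices to prove the purely local estimate
\[
  \|P_j^{\mathrm{lift}}u_j\|^2\le c_{\rm spec}^{-1}\lambda^{-(4m_{a_j}+2)}\|D_{j,q}u_j\|^2
\]
for each $j$. Once this holds, replacing $\lambda^{-(4m_{a_j}+2)}$ by the larger $\lambda^{-(4M+2)}$ (recall $\lambda<1$) and summing gives the claim. The down-boundary terms, which can cancel through the common base, are never used.

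\emph{Step 1: the lifted sector is coexact.} The key structural fact is that every lifted eigenvector $e$ of the padded Laplacian $\Delta_j$ in degree $q$ satisfies $\partial_je=0$. I would first prove this locally.

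By the finite Hodge decomposition (\cref{thm:combinatorial-hodge}, weighted version), each eigenvector of $\Delta_d(\lambda)$ with positive eigenvalue $\mu$ can be chosen either exact or coexact, since $D$ and $D^*$ intertwine the positive spectra. An exact eigenvector $e=De'$ in degree $d$ carries the same eigenvalue $\mu$ for the coexact vector $e'$ in degree $d-1$. By \cref{prop:local-clusters}(ii) (equivalently \cref{lem:uniform-intervals}(v)), degree $d-1$ has no eigenvalue in the $\lambda^{4m+2}$ interval. Hence the unique lifted eigenline in degree $d$ is coexact: $\partial e=0$ and $\Delta_de=D^*De$.

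Under padding (\cref{thm:spectral-padding}), the lifted sector is $\mathscr L^{\mathrm{lift}}\otimes\Eenc_{N-m}$, transported by $U_S$. The join boundary rule gives
\[
  \partial(e*h)=\partial e*h\pm e*\partial h=0,
\]
because $h$ is harmonic. Therefore the padded lifted vectors also lie in $\ker\partial_j$.

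\emph{Step 2: spectral comparison.} On $\ker\partial_j$ one has $\Delta_j=D_{j,q}^*D_{j,q}$. Moreover $\partial_j^*\partial_j$ and $D_{j,q}^*D_{j,q}$ commute with $\Delta_j$, since their products vanish. Hence $P_j^{\mathrm{lift}}$ commutes with $D_{j,q}^*D_{j,q}$, and
\[
  \|D_{j,q}u_j\|^2=\langle u_j,D_{j,q}^*D_{j,q}u_j\rangle
  \ge\langle P_j^{\mathrm{lift}}u_j,\Delta_jP_j^{\mathrm{lift}}u_j\rangle
  \ge c_{\rm spec}\lambda^{4m_{a_j}+2}\|P_j^{\mathrm{lift}}u_j\|^2 .
\]
Here the first inequality discards the nonnegative contributions of the complementary spectral components of $D_{j,q}^*D_{j,q}$. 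The second uses the interval of \cref{def:local-sector-projections}, which \cref{thm:spectral-padding} shows is unchanged by padding.

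\emph{Main obstacle.} I expect the delicate point to be Step 1: making sure that the lifted cluster in degree $q$ is genuinely orthogonal to all exact vectors, and not just to leading order. This is exactly what the forbidden-interval statement \cref{lem:uniform-intervals}(v) provides. For the padded version one must also check that no other padded mode, for instance one with a positive untouched-factor eigenvalue, falls into the lifted interval. \cref{thm:spectral-padding} excludes this because all such modes have eigenvalue at least $c_{\rm spec}$.
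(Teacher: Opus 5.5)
Your proof is correct and follows the paper's argument. In both, the absence of adjacent-degree spectrum in the lifted interval gives $\partial_jP_j^{\mathrm{lift}}=0$, so $D_{j,q}^*D_{j,q}\succeq c_{\mathrm{spec}}\lambda^{4m_{a_j}+2}P_j^{\mathrm{lift}}$; summing over $j$ with \cref{lem:no-crossing} and using $\lambda^{4m_{a_j}+2}\ge\lambda^{4M+2}$ finishes. The only cosmetic difference is that you prove coexactness locally and transport it by the join boundary rule, whereas the paper applies the intertwining $\Delta_{j,q-1}\partial_j=\partial_j\Delta_{j,q}$ directly in the padded complex, using \cref{thm:spectral-padding}.
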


\begin{proof}
The adjacent-degree exclusion in
\cref{lem:uniform-intervals,thm:spectral-padding} says that the padded
degree-\((q-1)\) Laplacian has no spectrum in \(I_j^{\mathrm{lift}}(\lambda)\).  Since the boundary intertwines adjacent
Hodge Laplacians,
\[
  \Delta_{j,q-1}\partial_j
  =
  \partial_j\Delta_{j,q},
\]
the boundary of a lifted eigenvector would be an adjacent-degree
eigenvector with the same eigenvalue.  It must therefore vanish: \(\partial_jP_j^{\mathrm{lift}}=0\). Thus, on the lifted sector,
\[
  D_{j,q}^*D_{j,q}
  \succeq
  c_{\mathrm{spec}}
  \lambda^{4m_{a_j}+2}
  P_j^{\mathrm{lift}}.
\]
Apply this inequality to \(u_j\), sum over \(j\), and use
\cref{lem:no-crossing}.  Since \(m_{a_j}\leq M\) and \(0<\lambda<1\), \(\lambda^{4m_{a_j}+2}\geq\lambda^{4M+2}\), which proves the claim.
\end{proof}

\begin{lemma}[High-sector estimate]
\label{lem:high-sector}
There is a catalogue-dependent constant \(C_{\mathrm{high}}\) such that, if \(u\) is a unit degree-\(q\) cochain and \(\mathcal E(u)\leq\lambda^2\), then
\[
  \sum_{j=1}^{t}\|P_j^{\mathrm{high}}u_j\|^2
  \leq
  C_{\mathrm{high}}\lambda^2t^2.
\]
\end{lemma}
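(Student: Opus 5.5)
The plan is to bound each local high-sector weight by the local Hodge energy of $u_j$, and then compare the local energies with the global energy $\mathcal E(u)$. The only non-additive terms are the down-boundary outputs landing on base simplices. By \cref{thm:spectral-padding}, $\Delta_j\succeq c_{\mathrm{spec}}P_j^{\mathrm{high}}$ on the padded space, and the four sectors are orthogonal spectral projections. Hence
\[
  c_{\mathrm{spec}}\|P_j^{\mathrm{high}}u_j\|^2\le\langle u_j,\Delta_ju_j\rangle
  =\|\partial_ju_j\|^2+\|D_{j,q}u_j\|^2 .
\]
Summing the up-parts over $j$ gives exactly $\|\widehat D_qu\|^2\le\mathcal E(u)$ by \cref{lem:no-crossing}. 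So it remains to control $\sum_j\|\partial_ju_j\|^2$.

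First split the degree-$(q-1)$ outputs into base coordinates $\mathsf C_0^{q-1}$ and coordinates private to each occurrence $i$. For a $(q-1)$-simplex $\rho$ containing a private vertex of occurrence $j$, every degree-$q$ coface of $\rho$ also contains that vertex. By C2 such a coface is either private to $j$ or absent. Hence the coefficient of $\rho$ in $\widehat D_{q-1}^*u$ equals that in $\partial_ju_j$, with the same signs and weights, exactly as in the proof of \cref{lem:selected-isolation}. Moreover, $\partial_ju_j$ has no private coordinates of other occurrences. These private-output pieces are mutually orthogonal across $j$, so their squared norms sum to at most $\|\widehat D_{q-1}^*u\|^2\le\mathcal E(u)$.

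The main obstacle is the base outputs, where different occurrences interfere. Write $u_j=\omega_0+\omega_j$. The boundary of a base simplex consists of base faces with the same coefficients in every padded complex and globally. Hence
\[
  \mathsf C_0^{q-1}\partial_ju_j=\beta+\mathsf C_0^{q-1}\partial_j\omega_j,
  \qquad
  \mathsf C_0^{q-1}\widehat D_{q-1}^*u=\beta+\sum_{i=1}^t\mathsf C_0^{q-1}\partial_i\omega_i,
\]
where $\beta:=\mathsf C_0^{q-1}\partial\omega_0$. By \cref{lem:boundary-leakage}, each $\|\mathsf C_0^{q-1}\partial_i\omega_i\|\le C_{\mathrm{leak}}\lambda\|\omega_i\|$. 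Since the $\omega_i$ are orthogonal pieces of a unit vector, Cauchy--Schwarz gives $\sum_i\|\omega_i\|\le\sqrt t$. Therefore
\[
  \|\beta\|\le\mathcal E(u)^{1/2}+C_{\mathrm{leak}}\lambda\sqrt t\le\lambda(1+C_{\mathrm{leak}}\sqrt t),
\]
and so
\[
  \|\mathsf C_0^{q-1}\partial_ju_j\|\le\lambda(1+C_{\mathrm{leak}}\sqrt t+C_{\mathrm{leak}}),
\]
using $\|\omega_j\|\le1$ and the hypothesis $\mathcal E(u)\le\lambda^2$. Squaring and summing over the $t$ occurrences bounds the base contribution by $3\lambda^2t\,(1+C_{\mathrm{leak}}^2t+C_{\mathrm{leak}}^2)\le C'\lambda^2t^2$, since $t\ge1$.

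Combining the three pieces gives
\[
  c_{\mathrm{spec}}\sum_j\|P_j^{\mathrm{high}}u_j\|^2\le2\mathcal E(u)+C'\lambda^2t^2\le(2+C')\lambda^2t^2 .
\]
This yields the lemma with $C_{\mathrm{high}}:=(2+C')/c_{\mathrm{spec}}$, which depends only on the catalogue through $c_{\mathrm{spec}}$ and $C_{\mathrm{leak}}$.
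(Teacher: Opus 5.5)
Your proof is correct and follows the paper's argument. You bound the high-sector mass by the local Hodge energy, match the up-terms with \cref{lem:no-crossing}, and bound the occurrence-labelled private down-outputs by \(\mathcal E(u)\); the interfering base down-outputs are then controlled by \cref{lem:boundary-leakage}. The only cosmetic difference is in the base term: you bound \(\beta\) by the triangle inequality together with \(\sum_i\|\omega_i\|\le\sqrt t\), whereas the paper writes \(\mathsf C_0^{q-1}\partial_ju_j=b-\sum_{i\neq j}b_i\) and sums using \((x+y)^2\le2x^2+2y^2\) and Cauchy--Schwarz, and both routes give the same \(O(\lambda^2t^2)\) bound.
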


\begin{proof}
Write \(v_j:=\mathsf C_ju, \, u_0:=\mathsf C_0u, \, b_j:=\mathsf C_0^{q-1}\partial_jv_j\). Let \(\partial_X\) denote the weighted boundary on the base complex, and set \(b:=\mathsf C_0^{q-1}\widehat D_{q-1}^*u\). The base part of the \(j\)-th local down-output is
\[
  \mathsf C_0^{q-1}\partial_ju_j
  =
  \partial_Xu_0+b_j
  =
  b-\sum_{i\neq j}b_i.
\]
All nonbase down-outputs carry a unique occurrence label, so their total
squared norm is bounded by \(\mathcal E(u)\).  By Cauchy-Schwarz and
\cref{lem:boundary-leakage},
\[
  \begin{aligned}
  \sum_{j=1}^{t}
  \left\|b-\sum_{i\neq j}b_i\right\|^2
  &\leq
  2t\|b\|^2
  +
  2(t-1)^2\sum_{i=1}^{t}\|b_i\|^2
  \\
  &\leq
  2t\mathcal E(u)
  +
  2C_{\mathrm{leak}}^2\lambda^2(t-1)^2,
  \end{aligned}
\]
because \(\sum_i\|v_i\|^2\leq\|u\|^2=1\).

The up-output sum equals the global up-energy by
\cref{lem:no-crossing}.  Consequently,
\[
  \sum_{j=1}^{t}
  \langle u_j,\Delta_ju_j\rangle
  \leq
  (2t+2)\mathcal E(u)
  +
  2C_{\mathrm{leak}}^2\lambda^2(t-1)^2.
\]
Every high-sector eigenvalue is at least \(c_{\mathrm{spec}}\).  Using \(\mathcal E(u)\leq\lambda^2\) gives the required estimate.
\end{proof}

Here \emph{coercivity} means quantitative control of the bulk norm by the selected incidence outputs together with the already controlled low- and high-energy components.

\begin{lemma}[Bulk coercivity from the selected outputs]
\label{lem:bulk-coercivity}
There are catalogue-dependent constants \(C_{\mathrm{bulk}},C_{\mathrm{sel}}>0\) such that, for every occurrence \(j\), every \(0<\lambda<\lambda_*\), and every padded degree-\(q\) cochain \(v\), \[
  \|T_j^{\mathrm{sel}}\|_{\mathrm{op}}
  \leq
  C_{\mathrm{sel}}\lambda
\]
and
\[
  \|P_j^{\mathrm{bulk}}v\|^2
  \leq
  C_{\mathrm{bulk}}
  \left(
    \lambda^{-2}\|T_j^{\mathrm{sel}}v\|^2
    +
    \|P_j^{\mathrm{high}}v\|^2
    +
    \|P_j^{\mathrm{lift}}v\|^2
    +
    \lambda^2\|v\|^2
  \right).
\]
On the unperturbed padded bulk space,
\[
  T_j^{\mathrm{sel}}
  \big|_{\overline{\mathscr B}_j}
  =
  \lambda
  \bigl(
    T_{a_j,0}\otimes I_{\Eenc_{N-m_{a_j}}}
  \bigr).
\]
\end{lemma}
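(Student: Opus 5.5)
The proof has three parts, taken in order: the exact formula on $\overline{\mathscr B}_j$, then the operator-norm bound, and finally the coercivity inequality by perturbation from the unperturbed bulk space.

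\emph{The formula on $\overline{\mathscr B}_j$.} By \cref{prop:certificate-relabel-invariance} and the join identification of \cref{thm:spectral-padding}, it suffices to treat the local gadget and then tensor with $\Eenc_{N-m}$. In the padded join, a down- or up-output selected by $S_j^\downarrow,S_j^\uparrow$ has its untouched factor in $\Eenc_{N-m}$. Any contribution of the untouched-factor boundary or coboundary is therefore projected away, because $\Eenc_n$ is harmonic and hence annihilated by $\partial_{X_n}$ and $D_{X_n}$. We are left with the local selected outputs tensored with the identity.

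Now take a local bulk cochain $u\in C^d(P_a,A_a)$, which is all-private of depth $d+1$. By C4(i) every simplex in its support contains the apex. In weighted orthonormal coordinates, a selected down-output deletes a private vertex other than the apex, and a selected up-output adds a vertex. An apex-containing coface of an all-private simplex that is not in $A_a$ must add a private vertex, since adding a base vertex is impossible off $A_a$. Each such incidence therefore carries exactly one factor $\lambda$, and dividing out the weights of the all-private simplices gives $\lambda T_{a,0}$.

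\emph{The norm bound.} For general $v$, each nonzero entry of the selected outputs involves an apex-containing source simplex and an apex-containing output simplex. These differ by one vertex, and that vertex is private, because C4(ii) rules out any edge from the apex to a base vertex of the local bow-tie graph. The vertex may also lie in an untouched factor, but those contributions vanish on $\Eenc$ after projection. So every entry carries a factor $\lambda$, and the bound $C_{\rm sel}\lambda$ follows from the operator norms of finitely many local incidence matrices, tensored with norm-one projections.

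\emph{Coercivity.} Decompose $v=P^{\rm har}v+P^{\rm lift}v+P^{\rm bulk}v+P^{\rm high}v$ with respect to the sectors of $\Delta_j$. Put $w:=\overline P_j^{\rm bulk}P_j^{\rm bulk}v$. Then \cref{lem:projection-stability} gives $\|P_j^{\rm bulk}v-w\|\le C_{\rm proj}\lambda\|v\|$. By C4, $T_{a,0}$ has full column rank, so
\[
  \|T_j^{\rm sel}w\|=\lambda\|(T_{a,0}\otimes I)w\|\ge\lambda a_*\|w\|.
\]
Hence
\[
  \lambda a_*\|w\|\le\|T_j^{\rm sel}v\|+\|T_j^{\rm sel}(v-w)\|,
\]
and $v-w$ is the sum of four pieces: $P^{\rm har}v$, $P^{\rm lift}v$, $P^{\rm high}v$, and $P^{\rm bulk}v-w$. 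The last piece costs $C_{\rm sel}\lambda\cdot C_{\rm proj}\lambda\|v\|$, and the lifted and high pieces cost $C_{\rm sel}\lambda$ times their norms. Dividing by $\lambda a_*$ and squaring then yields every term in the claimed inequality except the harmonic contribution.

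\emph{Main obstacle: the harmonic piece.} The naive bound gives only $C_{\rm sel}\lambda\|P^{\rm har}v\|$ with no gain, so I would show that $T_j^{\rm sel}$ nearly annihilates the harmonic sector. I would try this first: a harmonic vector $h$ satisfies $\partial_jh=0$ and $D_{j,q}h=0$ exactly, so $T_j^{\rm sel}P_j^{\rm har}=0$ identically, since $T_j^{\rm sel}$ consists of projections of $\partial_j$ and $D_{j,q}$. This disposes of the harmonic term entirely.

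The same trick could be pushed further on the lifted piece, where $\partial_j$ vanishes and $D_{j,q}$ has size at most $\sqrt{C_{\rm spec}}\lambda^{2m+1}$, but the stated inequality does not need that. Collecting the constants then gives $C_{\rm bulk}$, which depends only on $a_*$, $C_{\rm sel}$ and $C_{\rm proj}$ and is therefore catalogue-uniform.
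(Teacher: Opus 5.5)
Your proof is correct and follows the paper's argument. The same apex/private-vertex bookkeeping (C4(i),(ii) together with harmonicity of the untouched factor) gives both the tensor identity and $\|T_j^{\mathrm{sel}}\|_{\mathrm{op}}\le C_{\mathrm{sel}}\lambda$, and coercivity then comes from the four-sector decomposition, $T_j^{\mathrm{sel}}P_j^{\mathrm{har}}=0$, projection stability, and the singular-value bound $a_*$.

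The only difference is where the perturbation error goes. You perturb the vector, $w=\overline P_j^{\mathrm{bulk}}P_j^{\mathrm{bulk}}v$, and pay an $O(\lambda^2\|v\|^2)$ error that the statement permits, which avoids re-tuning $\lambda_*$. The paper instead shrinks $\lambda_*$ so that $\|T_j^{\mathrm{sel}}b\|\ge\tfrac{a_*}{2}\lambda\|b\|$ on $\operatorname{im}P_j^{\mathrm{bulk}}$, after which it does not need that term.

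One small slip: an apex-containing up-output can also come from a source without the apex, namely when the apex is the added vertex. The apex is private, so the factor $\lambda$ is still present and your bound stands.
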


\begin{proof}
Every selected output has a local factor containing the apex.  The apex has
no neighbour in the supported local base, so that factor is entirely
private; a padded simplex containing it may also contain vertices of the
untouched factors.  Terms acting on an untouched factor vanish after
projection onto its harmonic space: if \(Q_{\mathrm{unt}}\) is the
corresponding harmonic projection, then \(Q_{\mathrm{unt}}D_{\mathrm{unt}}
=Q_{\mathrm{unt}}D_{\mathrm{unt}}^*=0\). Every retained local boundary or coboundary incidence therefore deletes
or adds a private vertex and contributes one factor of \(\lambda\). Hence the selected map on the full padded space is \(\lambda\) times one of finitely many local incidence maps tensored with a
norm-one projection.  This gives
\[
  \|T_j^{\mathrm{sel}}\|_{\mathrm{op}}
  \leq
  C_{\mathrm{sel}}\lambda.
\]
Restricting to \(\overline{\mathscr B}_j\) gives the displayed tensor
identity.  Condition C4 therefore implies
\[
  \|T_j^{\mathrm{sel}}b_0\|
  \geq
  a_*\lambda\|b_0\|
  \qquad
  (b_0\in\overline{\mathscr B}_j).
\]

Put \(P:=P_j^{\mathrm{bulk}}, \, P_0:=\overline P_j^{\mathrm{bulk}}, \, T:=T_j^{\mathrm{sel}}\). If \(b\in\operatorname{im}P\), projection stability gives \(\|b-P_0b\| \leq C_{\mathrm{proj}}\lambda\|b\|\). It follows that
\[
  \|Tb\|
  \geq
  \lambda
  \left[
    a_*-(a_*+C_{\mathrm{sel}})
    C_{\mathrm{proj}}\lambda
  \right]
  \|b\|.
\]
Decrease the catalogue-wide threshold \(\lambda_*\), if necessary, so that
the bracket is at least \(a_*/2\).

For an arbitrary \(v\), decompose it into the four orthogonal sectors.
The harmonic sector is annihilated by both residuals and hence by \(T\). Writing \(b=Pv\), we obtain
\[
  \frac{a_*}{2}\lambda\|b\|
  \leq
  \|Tv\|
  +
  C_{\mathrm{sel}}\lambda
  \left(
    \|P_j^{\mathrm{high}}v\|
    +
    \|P_j^{\mathrm{lift}}v\|
  \right).
\]
Squaring and using \((x+y+z)^2\leq3(x^2+y^2+z^2)\) gives the stated estimate, even without
the additional nonnegative term \(\lambda^2\|v\|^2\).
\end{proof}

\begin{corollary}[Summed bulk estimate]
\label{cor:bulk-sector}
There is a catalogue-dependent constant \(C'_{\mathrm{bulk}}\) such that,
if \(u\) is unit and \(\mathcal E(u)\leq\lambda^2\), then
\[
  \sum_{j=1}^{t}\|P_j^{\mathrm{bulk}}u_j\|^2
  \leq
  C'_{\mathrm{bulk}}
  \left(
    \lambda^{-2}\mathcal E(u)
    +
    \lambda^{-(4M+2)}\mathcal E(u)
    +
    \lambda^2t^2
    +
    \lambda^2t
  \right).
\]
\end{corollary}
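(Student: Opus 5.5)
The plan is to apply \cref{lem:bulk-coercivity} to each local view \(v=u_j\), sum over \(j\), and bound each of the four summed terms by results already proved. Concretely, for every occurrence \(j\),
\[
  \|P_j^{\mathrm{bulk}}u_j\|^2
  \le
  C_{\mathrm{bulk}}
  \Bigl(
    \lambda^{-2}\|T_j^{\mathrm{sel}}u_j\|^2
    +\|P_j^{\mathrm{high}}u_j\|^2
    +\|P_j^{\mathrm{lift}}u_j\|^2
    +\lambda^2\|u_j\|^2
  \Bigr),
\]
and summing over \(j=1,\ldots,t\) reduces the corollary to four separate summed estimates.

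For the first term, \cref{lem:selected-isolation} gives \(\sum_j\|T_j^{\mathrm{sel}}u_j\|^2\le\mathcal E(u)\). This contributes \(\lambda^{-2}\mathcal E(u)\), and it is the one place where the shared base causes no loss, because the selected outputs are apex-labelled and mutually orthogonal. For the second term, since \(u\) is a unit vector and \(\mathcal E(u)\le\lambda^2\), \cref{lem:high-sector} bounds \(\sum_j\|P_j^{\mathrm{high}}u_j\|^2\) by \(C_{\mathrm{high}}\lambda^2t^2\). For the third term, \cref{lem:lifted-up} bounds \(\sum_j\|P_j^{\mathrm{lift}}u_j\|^2\) by \(c_{\mathrm{spec}}^{-1}\lambda^{-(4M+2)}\mathcal E(u)\).

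The last term is where the overlap of local views enters. Since \(u_j=\omega_0+\omega_j\) with mutually orthogonal private parts,
\[
  \sum_j\|u_j\|^2
  =t\|\omega_0\|^2+\sum_j\|\omega_j\|^2
  \le t\|u\|^2+\|u\|^2
  \le t+1
  \le 2t,
\]
because \(t\ge1\). The shared base component \(\omega_0\) is counted \(t\) times, and this is exactly the source of the \(\lambda^2t\) term.

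Collecting the four bounds, take
\[
  C'_{\mathrm{bulk}}:=C_{\mathrm{bulk}}\max\{1,\,C_{\mathrm{high}},\,c_{\mathrm{spec}}^{-1},\,2\}.
\]
This constant depends only on the catalogue, since each ingredient does.

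The main point to check is that \cref{lem:bulk-coercivity} applies to \(u_j\) with catalogue-uniform constants, independent of \(N\), \(t\), and the placement. This holds because \(u_j\) lies in the \(j\)-th padded cochain space. The other point to check is that the hypotheses of \cref{lem:high-sector} (unit norm and \(\mathcal E(u)\le\lambda^2\)) are in force, which they are by assumption. Beyond these checks the argument is bookkeeping of the four sums.
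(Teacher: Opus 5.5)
Your proposal is correct and follows essentially the paper's proof: apply \cref{lem:bulk-coercivity} to each \(u_j\), sum over \(j\), and bound the selected-output, high, lifted and norm terms by \cref{lem:selected-isolation}, \cref{lem:high-sector}, \cref{lem:lifted-up} and a count of the repeated base component. The only difference is cosmetic. The paper uses \(\sum_j\|u_j\|^2=1+(t-1)\|\mathsf C_0u\|^2\le t\) from \cref{lem:counting-gap} in place of your bound \(t+1\le 2t\), and this affects only the constant \(C'_{\mathrm{bulk}}\).
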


\begin{proof}
Apply \cref{lem:bulk-coercivity} to each \(u_j\) and sum.  By
\cref{lem:selected-isolation}, the selected-output contribution is at most \(\lambda^{-2}\mathcal E(u)\).  The high and lifted contributions are
bounded by \cref{lem:high-sector,lem:lifted-up}.  Absorbing the fixed
catalogue constants and using \(\sum_j\|u_j\|^2\leq t\) for the final term proves the result.
\end{proof}

\subsection{The master inequality}
\label{subsec:master-inequality}

The preceding estimates control the three nonharmonic sectors.  On the harmonic sector, the source projectors enter through
\[
  P_j^{\mathrm{har}}
  =
  J_N(I-P_j)J_N^*
  +
  O(\lambda).
\]

\begin{lemma}[Counting the repeated base component]
\label{lem:counting-gap}
For \(u_j=(\mathsf C_0+\mathsf C_j)u\), \[
  \sum_{j=1}^{t}\|u_j\|^2
  =
  \|u\|^2
  +(t-1)\|\mathsf C_0u\|^2.
\]
Let \(Q_N:=J_NJ_N^*\), extended by zero away from the base coordinates.  If \(H=\sum_{j=1}^{t}P_j\succeq gI_{\Qenc_N}\), then
\[
  \sum_{j=1}^{t}
  \langle
    u,J_N(I-P_j)J_N^*u
  \rangle
  \leq
  (t-g)\|Q_Nu\|^2,
\]
and \(\|Q_Nu\|^2 \leq \|\mathsf C_0u\|^2\).
\end{lemma}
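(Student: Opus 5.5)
The lemma has three assertions, and each reduces to orthogonality of coordinate projections or to the source inequality. I will prove them in the order stated.

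\emph{The counting identity.} The projections \(\mathsf C_0,\mathsf C_1,\ldots,\mathsf C_t\) are mutually orthogonal and sum to \(I\) (\cref{def:coordinate-projections}). Hence for each \(j\),
\[
  \|u_j\|^2=\|\mathsf C_0u\|^2+\|\mathsf C_ju\|^2 .
\]
Summing over \(j\) gives
\[
  t\|\mathsf C_0u\|^2+\sum_{j\ge1}\|\mathsf C_ju\|^2 .
\]
Since \(\|u\|^2=\|\mathsf C_0u\|^2+\sum_{j\ge1}\|\mathsf C_ju\|^2\), this equals \(\|u\|^2+(t-1)\|\mathsf C_0u\|^2\).

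\emph{The harmonic inequality.} Put \(\psi:=J_N^*u\in\Qenc_N\). Then
\[
  \langle u,J_N(I-P_j)J_N^*u\rangle=\langle\psi,(I-P_j)\psi\rangle .
\]
Summing over \(j\) gives
\[
  \sum_j\langle\psi,(I-P_j)\psi\rangle
  = t\|\psi\|^2-\langle\psi,H\psi\rangle
  \le (t-g)\|\psi\|^2,
\]
where the last step uses \(H\succeq gI\). It remains to identify \(\|\psi\|^2\). Because \(J_N\) is an isometry (\cref{def:encoded-harmonic}), \(\|\psi\|^2=\|J_NJ_N^*u\|^2=\|Q_Nu\|^2\).

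\emph{The final comparison.} \(Q_N\) is an orthogonal projection whose range lies in the base coordinates, so \(Q_N=Q_N\mathsf C_0\). Therefore
\[
  \|Q_Nu\|=\|Q_N\mathsf C_0u\|\le\|\mathsf C_0u\| .
\]

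There is no real obstacle in any of these steps. The one point to check carefully is that \(J_N\), viewed in the global cochain space via base-coordinate inclusion, remains an isometry whose range is supported on base simplices. This holds because the coordinate inclusion is isometric: base vertices have weight \(1\), so the weighted and unweighted base inner products agree.
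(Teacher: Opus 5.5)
Your proof is correct and follows essentially the same route as the paper: orthogonality of the coordinate projections $\mathsf C_0,\mathsf C_1,\ldots,\mathsf C_t$ for the counting identity, then $\psi=J_N^*u$ with $H\succeq gI$ and the isometry of $J_N$ for the harmonic bound, and finally $\operatorname{im}Q_N\subseteq\operatorname{im}\mathsf C_0$ for the last comparison. Your remark that base simplices have weight $1$, so the base-coordinate inclusion is isometric, makes explicit a point the paper leaves implicit.
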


\begin{proof}
Orthogonality of the coordinate supports gives
\[
  \|u_j\|^2
  =
  \|\mathsf C_0u\|^2+\|\mathsf C_ju\|^2.
\]
Summing and using \(I=\mathsf C_0+\sum_j\mathsf C_j\) proves the first identity.

Set \(x:=J_N^*u\).  Then
\[
  \sum_{j=1}^{t}
  \langle u,J_N(I-P_j)J_N^*u\rangle
  =
  \langle x,(tI-H)x\rangle
  \leq
  (t-g)\|x\|^2.
\]
Because \(J_N\) is an isometry, \(\|x\|=\|Q_Nu\|\).  Finally, \(\operatorname{im}Q_N\) lies in the base-coordinate space, which proves
the last inequality.
\end{proof}

\begin{proposition}[Master inequality]
\label{prop:master-inequality}
There is a computable constant \(C_*>0\), depending only on the certified
catalogue, with the following property.  Assume
\[
  H=\sum_{j=1}^{t}P_j\succeq gI_{\Qenc_N},
  \qquad
  0<g\leq1,
  \qquad
  t\geq1.
\]
Let \(0<\lambda<\lambda_*\), and let \(u\) be a unit degree-\(q\) cochain
with \(\mathcal E(u)\leq\lambda^2\).  Then
\[
  1
  \leq
  1-g
  +
  C_*
  \left(
    \lambda t
    +
    \lambda^{-(4M+2)}\mathcal E(u)
    +
    \lambda^2t^2
    +
    \lambda^2t
    +
    \lambda^{-2}\mathcal E(u)
  \right).
\]
\end{proposition}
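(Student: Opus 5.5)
We sum the four-sector decomposition of \cref{def:local-sector-projections} over all occurrences, and compare the result with the counting identity of \cref{lem:counting-gap}.

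\textbf{Step 1: exact sector identity.} For each $j$, orthogonality of the four sectors on $\operatorname{im}(\mathsf C_0+\mathsf C_j)$ gives
\[
  \|u_j\|^2
  =\|P_j^{\mathrm{har}}u_j\|^2+\|P_j^{\mathrm{lift}}u_j\|^2
   +\|P_j^{\mathrm{bulk}}u_j\|^2+\|P_j^{\mathrm{high}}u_j\|^2 .
\]
Sum over $j$. By \cref{lem:counting-gap}, the left side equals $1+(t-1)\|\mathsf C_0u\|^2$.

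\textbf{Step 2: the harmonic sector.} Since $P_j^{\mathrm{har}}$ is supported on the $j$-th padded space and $u_j=(\mathsf C_0+\mathsf C_j)u$, we have $\|P_j^{\mathrm{har}}u_j\|^2=\langle u,P_j^{\mathrm{har}}u\rangle$. By \cref{lem:projection-stability},
\[
  \langle u,P_j^{\mathrm{har}}u\rangle
  \le\langle u,J_N(I-P_j)J_N^*u\rangle+C_{\mathrm{proj}}\lambda .
\]
Summing with \cref{lem:counting-gap} bounds the harmonic total by $(t-g)\|Q_Nu\|^2+C_{\mathrm{proj}}\lambda t$.

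\textbf{Step 3: the nonharmonic sectors.} Bound the remaining sums as follows:
\begin{itemize}
\item the lifted sum by \cref{lem:lifted-up}, giving $c_{\mathrm{spec}}^{-1}\lambda^{-(4M+2)}\mathcal E(u)$;
\item the high sum by \cref{lem:high-sector}, giving $C_{\mathrm{high}}\lambda^2t^2$;
\item the bulk sum by \cref{cor:bulk-sector}.
\end{itemize}
The hypothesis $\mathcal E(u)\le\lambda^2$ is exactly what these lemmas require.

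\textbf{Step 4: combine.} Steps 1--3 give
\[
  1+(t-1)\|\mathsf C_0u\|^2
  \le (t-g)\|Q_Nu\|^2
  +C\bigl(\lambda t+\lambda^{-(4M+2)}\mathcal E+\lambda^2t^2+\lambda^2t+\lambda^{-2}\mathcal E\bigr).
\]
This is the main obstacle: the factor $t-1$ multiplies $\|\mathsf C_0u\|^2$ on the left, while only $t-g$ multiplies $\|Q_Nu\|^2$ on the right. Using $\|Q_Nu\|^2\le\|\mathsf C_0u\|^2$ alone gives
\[
  (t-g)\|Q_Nu\|^2-(t-1)\|\mathsf C_0u\|^2\le(1-g)\|Q_Nu\|^2\le 1-g .
\]
This works because $t-1\ge0$ lets us drop the nonnegative excess $(t-1)\bigl(\|\mathsf C_0u\|^2-\|Q_Nu\|^2\bigr)$, and $g\le1$ with $\|Q_Nu\|\le1$.

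Hence $1\le 1-g+{}$error, with $C_*$ the maximum of $C_{\mathrm{proj}}$, $c_{\mathrm{spec}}^{-1}$, $C_{\mathrm{high}}$, and $C'_{\mathrm{bulk}}$ (times small absolute factors). Each of these is computable by \cref{prop:local-effectivity}. This cancellation is the delicate point to double check: the repeated base component must enter the harmonic term with the same multiplicity $t$ as on the left, which holds because each $P_j^{\mathrm{har}}$ is tested on $u_j\supseteq\omega_0$. No cross-gadget interaction appears beyond the terms already absorbed in Steps 2--3.
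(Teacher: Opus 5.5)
Your proposal is correct and follows the paper's proof essentially step for step: the same summed four-sector decomposition, the same harmonic bound via \cref{lem:projection-stability} and \cref{lem:counting-gap}, the same sector lemmas for the lifted, high and bulk terms, and the same cancellation $(t-g)b-(t-1)a=(1-g)b-(t-1)(a-b)\le 1-g$ with $0\le b\le a\le 1$. One small point on computability: $C_{\mathrm{high}}$ and $C'_{\mathrm{bulk}}$ also depend on $C_{\mathrm{leak}}$ and $C_{\mathrm{sel}}$, which are not among the constants of \cref{prop:local-effectivity}; the paper notes that these are operator norms of finitely many certified incidence matrices and can be replaced by rational Frobenius-norm bounds, and you should add that remark.
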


\begin{proof}
For each \(j\), decompose \(u_j\) into its four orthogonal local sectors.
Summing their squared norms gives
\[
  \sum_{j=1}^{t}\|u_j\|^2
  =
  \sum_{j=1}^{t}
  \left(
    \|P_j^{\mathrm{har}}u_j\|^2
    +
    \|P_j^{\mathrm{lift}}u_j\|^2
    +
    \|P_j^{\mathrm{bulk}}u_j\|^2
    +
    \|P_j^{\mathrm{high}}u_j\|^2
  \right).
\]

By \cref{lem:projection-stability,lem:counting-gap}, the harmonic
contribution is at most
\[
  (t-g)\|Q_Nu\|^2
  +
  C_{\mathrm{proj}}\lambda
  \sum_{j=1}^{t}\|u_j\|^2.
\]
Since \(u\) is unit, every \(\|u_j\|\leq1\), and the error is at most \(C_{\mathrm{proj}}\lambda t\).  The lifted, high, and bulk contributions
are bounded by
\cref{lem:lifted-up,lem:high-sector,cor:bulk-sector}.  After enlarging one
catalogue-dependent constant \(C_*\), we obtain
\[
  1+(t-1)a
  \leq
  (t-g)b
  +
  C_*
  \left(
    \lambda t
    +
    \lambda^{-(4M+2)}\mathcal E(u)
    +
    \lambda^2t^2
    +
    \lambda^2t
    +
    \lambda^{-2}\mathcal E(u)
  \right),
\]
where \(a:=\|\mathsf C_0u\|^2, \, b:=\|Q_Nu\|^2\). Here \(0\leq b\leq a\leq1\).  Since \(t\geq1\) and \(g\leq1\), \[
  (t-g)b-(t-1)a
  =
  (1-g)b-(t-1)(a-b)
  \leq
  1-g.
\]
Subtracting the repeated-base term \((t-1)a\) proves the master
inequality.

All constants used above are effective.  The spectral, projection, and
singular-value bounds come from
\cref{prop:local-effectivity}; the remaining constants are operator norms
of finitely many certified incidence matrices and may be replaced by
rational Frobenius-norm bounds.  Thus \(C_*\) can be chosen as one rational
catalogue-dependent constant, independently of the source instance.
\end{proof}

\subsection{Choosing the private weight}
\label{subsec:global-calibration}

\begin{proposition}[Uniform rational calibration]
\label{prop:calibration}
Choose a positive rational constant \(c_0\) such that
\[
  c_0
  \leq
  \min
  \left\{
    \frac{\lambda_*}{2},
    1,
    \frac{1}{20C_*}
  \right\}.
\]
For a source instance satisfying \(H\succeq gI\) with \(0<g\le1\), define
\[
  \lambda:=c_0\frac{g}{t},
  \qquad
  E_0:=c_0\lambda^{4M+2}g.
\]
Then every unit degree-\(q\) cochain satisfies \(\mathcal E(u)\geq E_0\). Consequently,
\[
  \lambda_{\min}(\widehat\Delta_q(\lambda))
  \geq
  E_0
  =
  c_0^{4M+3}
  \frac{g^{4M+3}}{t^{4M+2}}.
\]
\end{proposition}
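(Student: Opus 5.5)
Argue by contradiction from the master inequality (\cref{prop:master-inequality}). Fix a unit cochain $u$ and suppose $\mathcal E(u)<E_0$. First check the hypotheses. Since $g\le1$ and $t\ge1$, we have $\lambda=c_0g/t\le c_0\le\lambda_*/2<\lambda_*$. Moreover $E_0=c_0\lambda^{4M+2}g\le\lambda^2$ because $c_0,g,\lambda\le1$. Hence $\mathcal E(u)\le\lambda^2$, and the master inequality applies to $u$.

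Next bound each error term in units of $g$. Substituting $\lambda t=c_0g$ gives
\[
C_*\lambda t\le g/20 .
\]
The second term satisfies
\[
C_*\lambda^{-(4M+2)}\mathcal E(u)<C_*c_0g\le g/20 .
\]
For the next two terms, $\lambda^2t^2=c_0^2g^2\le c_0 g$ and $\lambda^2t\le\lambda^2t^2$. Each therefore contributes at most $C_*c_0g\le g/20$. For the last term,
\[
\lambda^{-2}\mathcal E(u)<\lambda^{-2}E_0=c_0\lambda^{4M}g\le c_0g,
\]
using $\lambda<1$, so it also contributes at most $g/20$.

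Summing the five terms gives strictly less than $5g/20=g/4$. The master inequality then yields $1<1-g+g/4$, i.e. $g<0$, which contradicts $g>0$. Hence $\mathcal E(u)\ge E_0$ for every unit $u$. Since $\lambda_{\min}(\widehat\Delta_q(\lambda))=\min_{\|u\|=1}\mathcal E(u)$, the bound follows. The closed form is
\[
E_0=c_0(c_0g/t)^{4M+2}g=c_0^{4M+3}g^{4M+3}/t^{4M+2}.
\]

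This is routine bookkeeping. The only point needing care is that the hypothesis $\mathcal E(u)\le\lambda^2$ of the master inequality comes from $E_0\le\lambda^2$. That in turn needs $4M+2\ge2$ together with $c_0,g,\lambda\le1$. The strictness in the contradiction comes from the strict assumption $\mathcal E(u)<E_0$.
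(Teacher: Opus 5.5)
Your proposal is correct and matches the paper's proof essentially step for step. You check that \(\lambda<\lambda_*\) and \(E_0\le\lambda^2\) so the master inequality applies, bound each of the five error terms by \(C_*c_0g\le g/20\), derive the contradiction \(1\le 1-\tfrac{3g}{4}\), and finish with the Rayleigh principle and the closed form for \(E_0\).
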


\begin{proof}
The chosen \(\lambda\) is positive and smaller than \(\lambda_*\). Suppose that a unit cochain satisfies \(\mathcal E(u)<E_0\). Since \(c_0g\leq1\), \(0<\lambda<1\), and \(4M+2\geq2\), \(E_0\leq\lambda^2\). The master inequality therefore applies.

The first four error terms obey
\[
  \lambda t=c_0g,
\]
\[
  \lambda^{-(4M+2)}\mathcal E(u)<c_0g,
\]
\[
  \lambda^{-2}\mathcal E(u)<c_0g,
\]
and
\[
  \lambda^2t^2=c_0^2g^2\leq c_0g.
\]
Finally,
\[
  \lambda^2t
  =
  \frac{c_0^2g^2}{t}
  \leq
  c_0g.
\]
The total contribution of the five error terms is at most \(5C_*c_0g\leq\frac{g}{4}\). The master inequality would then give \(1\leq1-\frac{3g}{4}\), a contradiction.  Hence every unit cochain has energy at least \(E_0\). The Rayleigh principle gives the eigenvalue bound, and substitution of \(\lambda=c_0g/t\) gives the stated formula.
\end{proof}

\begin{theorem}[Global spectral gap]
\label{thm:global-gap}
Let \(H=\sum_{j=1}^{t}P_j\) be a finite-catalogue source presentation with rational promise parameter \(0<g\leq1\).  Construct \(\widehat G(H)\), give every base vertex weight \(1\), and give every private vertex the rational weight \(\lambda=c_0\frac{g}{t}\). Then:
\begin{enumerate}[label=(\roman*)]
\item if \(\ker H\neq0\), then
\[
  H_{q_N}(\widehat Y(H);\CC)\neq0
  \qquad\text{and}\qquad
  \lambda_{\min}(\widehat\Delta_{q_N})=0;
\]
\item if \(H\succeq gI\), then
\[
  H_{q_N}(\widehat Y(H);\CC)=0
\]
and
\[
  \lambda_{\min}(\widehat\Delta_{q_N})
  \geq
  c_0^{4M+3}
  \frac{g^{4M+3}}{t^{4M+2}}.
\]
\end{enumerate}
Because \(M\leq4\), \(g\) is inverse-polynomial, and \(t\) is polynomially
bounded, the lower bound in (ii) is inverse-polynomial in the source input
length.
\end{theorem}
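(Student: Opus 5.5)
The statement combines two results that are already in place, so I would assemble them rather than prove anything new. Part (i) follows directly from \cref{cor:YES}. If \(\ker H\neq0\), pick a nonzero \(\psi\in\ker H=\bigcap_j\ker P_j\). By \cref{thm:global-topology}, the class \(\mathfrak e_N^{-1}(\psi)\) survives in \(H_{q_N}(\widehat Y(H);\CC)\). By \cref{prop:weights-preserve-cohomology}, the weighted Laplacian then has a zero mode, and this holds for every positive weight, in particular for \(\lambda=c_0g/t\). Since \(\widehat\Delta_{q_N}\succeq0\), its least eigenvalue is exactly zero.

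For part (ii), the plan has three steps.

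\begin{enumerate}[label=(\alph*)]
\item Check that \(\lambda=c_0g/t\) is a legitimate choice. It is rational because \(c_0\) is a hard-coded rational from \cref{prop:calibration} and \(g,t\) are rational inputs. It satisfies \(0<\lambda\le c_0<\lambda_*<1\), so every local estimate of \cref{sec:single-gadget-spectrum,sec:global-spectrum} applies.

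\item Obtain the bound. \Cref{prop:calibration} gives \(\mathcal E(u)\ge E_0\) for every unit degree-\(q_N\) cochain. The Rayleigh principle then yields \(\lambda_{\min}(\widehat\Delta_{q_N})\ge c_0^{4M+3}g^{4M+3}/t^{4M+2}\).

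\item Get vanishing of homology for free. Because \(E_0>0\), the Laplacian has trivial kernel. By \cref{prop:weights-preserve-cohomology} and \cref{cor:homology-zero-eigenvalues}, \(\dim H_{q_N}=\dim\ker\widehat\Delta_{q_N}=0\).

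As a consistency check, \cref{thm:global-topology} gives the same conclusion: \(H\succeq gI\) forces \(\ker H=0\), so \(\sum_j\operatorname{im}P_j=\Qenc_N\) and the whole base image dies. That argument only controls the image of the base homology, though, so I would rely on the spectral route for full vanishing.
\end{enumerate}

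For the closing sentence, use \(M\le4\) to get \(4M+3\le19\). With \(g\ge1/p_V(|x|)\) and \(t\le p_V(|x|)\) from \cref{cor:finite-catalogue-source}, the bound is at least \(c_0^{19}p_V(|x|)^{-37}\), which is inverse-polynomial in the input length.

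The only delicate point is step (a). I expect it to go through, but I would confirm that \cref{prop:calibration} holds under exactly the hypotheses stated here: \(t\ge1\), \(0<g\le1\), and \(\lambda<\lambda_*\) with the final, possibly decreased, catalogue threshold. The constant \(c_0\) must be chosen after all the reductions of \(\lambda_*\) made in \cref{thm:local-rank-one,thm:spectral-padding,lem:bulk-coercivity}. Everything else is direct citation.
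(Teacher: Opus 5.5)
Your proposal is correct and follows the paper's proof. Part (i) is \cref{cor:YES}, after noting \(\ker H=\bigcap_j\ker P_j\), together with weighted Hodge theory; part (ii) is \cref{prop:calibration} plus the Rayleigh principle, and the vanishing of \(H_{q_N}\) is read off from the trivial Laplacian kernel via Hodge theory and the equality of homology and cohomology dimensions over \(\CC\). Your extra checks, that \(\lambda=c_0g/t\le c_0<\lambda_*\) and the explicit \(c_0^{19}p_V(|x|)^{-37}\) bound for the closing sentence, are accurate and make explicit what the paper leaves implicit.
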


\begin{proof}
If \(H\psi=0\), positivity of the projectors gives
\[
  0
  =
  \langle\psi,H\psi\rangle
  =
  \sum_{j=1}^{t}\|P_j\psi\|^2.
\]
Thus \(P_j\psi=0\) for every \(j\).  For nonzero \(\psi\), \cref{cor:YES} gives the nonzero homology in (i), and finite-dimensional
Hodge theory gives the zero eigenvalue.

Under the NO hypothesis, \cref{prop:calibration} gives the positive lower
bound in (ii).  The Laplacian therefore has trivial kernel, so
finite-dimensional Hodge theory gives \(H^{q_N}(\widehat Y(H);\CC)=0\). Over \(\CC\), homology and cohomology have the same dimension.  Therefore \(H_{q_N}(\widehat Y(H);\CC)=0\).
\end{proof}

\section{The exact reduction and proof of the main theorem}
\label{sec:reduction}

The preceding sections prove the topological and spectral behavior of the
construction for a finite-catalogue Hamiltonian \(H=\sum_{j=1}^{t}P_j\) on \(N\) qubits.  To complete the reduction, we verify polynomial size and bit complexity and
express the degree \(q_N=2N-1\) and the lower bound from
\cref{thm:global-gap} using functions of the output vertex count.  Isolated
vertices implement the latter step without changing the relevant homology or
Hodge Laplacian.

\subsection{Effectivity and isolated-vertex padding}

\begin{lemma}[Construction size and bit complexity]
\label{lem:construction-size}
Let \(H=\sum_{j=1}^{t}(\Pi_{a_j})_{S_j}^{(N)}\) be a finite-catalogue source presentation with \(t\geq1\), and let \(0<g\leq1\) be rational.  From the finite list \(\bigl(N,t,(a_j,S_j)_{j=1}^{t},g\bigr)\) one can construct the adjacency list and rational vertex weights of \(\widehat G(H)\) in deterministic time polynomial in \(N+t+\operatorname{bitlength}(g)\).  In particular, this running time is
polynomial in the original source input length for the presentations
produced by \cref{cor:finite-catalogue-source}.  The number of vertices and
edges is polynomial in \(N+t\), and every
weight has bit length polynomial in \(\operatorname{bitlength}(g)+\log(t+1)\).
\end{lemma}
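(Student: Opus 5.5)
The construction only ever copies a fixed finite amount of data, so the proof is a counting argument: bound the vertices and edges of \(\widehat G(H)\), bound the bit length of the weights, and note that each step of the construction is a direct transcription of \cref{def:placed-local-gadget,def:padded-gadget,def:global-graph}.

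First, the thirteen canonical local graphs \(G_a^{\mathrm{loc}}\) of \cref{constr:local-gadget} are fixed objects. They are independent of the input, so the reduction hard-codes them, together with the rational constants \(c_0\) and \(\lambda_*\) from \cref{prop:local-effectivity,prop:calibration}. Let \(V_{\max}\) and \(E_{\max}\) be the maximum private vertex count and edge count over the catalogue; both are constants.

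Next, count \(\widehat G(H)\). The base \(G_N\) has \(7N\) vertices. Its edges are the bow-tie edges \(8N\) plus all cross-factor join edges, at most \(\binom{7N}{2}\). Each occurrence \(j\) adds at most \(V_{\max}\) private vertices. Each occurrence also adds edges of three kinds:
\begin{itemize}
\item the placed local edges, at most \(E_{\max}\);
\item the padding join edges between its private vertices and the untouched factors, at most \(7N\,V_{\max}\);
\item the base edges of \(G_{a_j,S_j}^{\mathrm{pad}}\), which by C2 coincide with \(E(G_N)\) and so add nothing new.
\end{itemize}
The totals are therefore \(O(N+t)\) vertices and \(O(N^2+tN)\) edges.

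To produce the adjacency list, first write down \(G_N\). Then, for each \(j\), relabel the canonical gadget through \(\rho_{a_j,S_j}\), which takes \(O(1)\) work per vertex plus an index lookup. Next add the private-to-untouched join edges, iterating over \(S_j^c\) in \(O(N)\) time. Finally assign fresh labels to the private vertices of occurrence \(j\), for example as pairs \((j,p)\). Each of these steps is linear in the output size up to \(\operatorname{poly}\log\) factors for writing indices.

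For the weights, every base vertex receives weight \(1\) and every private vertex receives \(\lambda=c_0g/t\). Write \(c_0=\alpha/\beta\) and \(g=\gamma/\eta\) as reduced fractions. Then \(\lambda=\alpha\gamma/(\beta\eta t)\), whose numerator and denominator have bit length \(O(1)+\operatorname{bitlength}(g)+\log(t+1)\). Computing \(\lambda\) is a single rational multiplication and division, and it is copied onto \(O(t)\) vertices. We also need \(0<\lambda\le 1\), which holds because \(c_0\le1\), \(g\le1\), and \(t\ge1\).

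Summing up, the running time is polynomial in \(N+t+\operatorname{bitlength}(g)\). For the presentations from \cref{cor:finite-catalogue-source}, the quantities \(N\), \(t\) and \(\operatorname{bitlength}(g_x)=O(\log p_V(|x|))\) are all polynomially bounded in \(|x|\), which gives the final sentence of the statement.

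The one step that needs care is making sure the hard-coded gadget data really is constant-size and computable once. This rests on \cref{prop:local-effectivity} and on the relabelling invariance of \cref{prop:certificate-relabel-invariance}. Given those, only the edge count for the padding joins carries any \(N\)-dependence, and it is polynomial.
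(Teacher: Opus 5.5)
Your proposal is correct and follows essentially the same route as the paper: bound the vertices by \(7N+v_{\max}t\), build the adjacency list occurrence by occurrence by copying, relabelling and join-padding the hard-coded canonical gadgets, and observe that \(\lambda=c_0g/t\) is one exact rational computation with bit length \(O(\operatorname{bitlength}(g)+\log(t+1)+1)\) and value in \((0,1]\). Your finer edge count \(O(N^2+tN)\), in place of the paper's cruder \(\binom{7N+v_{\max}t}{2}\), is a harmless refinement. The appeal to relabelling invariance is not needed here, since only the fixed catalogue graphs and the constant \(c_0\) are hard-coded.
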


\begin{proof}
Let \(v_{\max} :=\max_{1\leq a\leq13}|V_a^{\mathrm{priv}}|\). This is a fixed constant.  The common bow-tie graph has \(7N\) vertices,
and every occurrence contributes at most \(v_{\max}\) fresh private
vertices.  Therefore
\[
  |V(\widehat G(H))|
  \leq 7N+v_{\max}t.
\]
Construct \(G_N\), and then process the occurrences one at a time.  For the \(j\)-th occurrence, copy the fixed private graph of type \(a_j\), relabel
its supported base factors according to \(S_j\), and join its private
vertices to the untouched bow-tie factors as prescribed in
\cref{def:padded-gadget,def:global-graph}.  This constructs the graph
directly from its edges.
The resulting graph has at most \(\binom{7N+v_{\max}t}{2}\) edges, and its adjacency list is produced by a polynomial number of elementary
operations.

The base vertices receive weight \(1\), while every private vertex receives \(\lambda=c_0\frac{g}{t}\). The constant \(c_0\in\QQ_{>0}\) is fixed once for the catalogue.  Hence \(\lambda\) is computed by exact rational arithmetic and has bit length \(O\bigl(\operatorname{bitlength}(g)+\log(t+1)+1\bigr)\). Because \(g\leq1\) and \(c_0\leq1\), all weights lie in \((0,1]\). The finite catalogue graphs, their placement rules, and \(c_0\) are
hard-coded into the reduction.
\end{proof}

\begin{lemma}[Spectrally inert isolated-vertex padding]
\label{lem:isolated-padding}
Let \(G\) be a finite positively weighted graph, and let \(G^+\) be obtained
from \(G\) by adjoining \(r\geq0\) isolated vertices with arbitrary positive
weights.  For every \(k\geq1\), there are canonical equalities
\[
  C^k(\Cl(G^+);\CC)=C^k(\Cl(G);\CC)
\]
and
\[
  \Delta_{k,w}(\Cl(G^+))
  =\Delta_{k,w}(\Cl(G)).
\]
Consequently the degree-\(k\) homology groups and the degree-\(k\) weighted
Hodge spectra agree exactly.
\end{lemma}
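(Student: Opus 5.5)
The plan is to compare the simplex sets of the two clique complexes directly.

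\emph{Simplices.} An isolated vertex \(x\) of \(G^+\) has no neighbours, so the only cliques of \(G^+\) containing \(x\) are \(\varnothing\) and \(\{x\}\). Hence for every \(r\ge1\),
\[
  \Cl(G^+)_r=\Cl(G)_r,
\]
and every simplex of dimension at least one involves only vertices of \(G\). The weight of such a simplex is the product of the weights of its vertices, which are the same in \(G\) and \(G^+\). So for \(k\ge1\) the weighted cochain spaces \(C^k\) coincide as Hilbert spaces, with the same orthonormal basis \(|\widehat\sigma\rangle\). This gives the first equality.

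\emph{Laplacian.} I would compare the two summands of \(\Delta_{k,w}\) separately.

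The up-part \(D_k^{*,w}D_k\) involves only degrees \(k\) and \(k+1\), both at least one. By \cref{prop:weighted-coboundary-matrix}, the matrix of \(D_k\) in the weighted orthonormal bases depends only on the cofaces \(\sigma\cup\{v\}\) present in the complex and on the weights \(w(v)\). In degree \(k\) these are the same simplices with the same weights in both complexes. So the up-parts agree.

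The down-part \(D_{k-1}D_{k-1}^{*,w}\) is where care is needed, since for \(k=1\) the space \(C^0\) does grow by the new vertices. It still suffices to check the adjoint formula. For a \(k\)-simplex \(\tau\), the vector \(D_{k-1}^{*,w}|\widehat\tau\rangle\) is supported on the faces \(\tau\setminus\{v\}\), which are simplices of \(\Cl(G)\). The coefficients are \(\epsilon\,w(v)\) with \(v\in\tau\subseteq V(G)\), so they are unchanged.

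Consequently \(D_{k-1}^{*,w}\) for \(\Cl(G^+)\) is the map for \(\Cl(G)\) followed by the isometric coordinate inclusion \(\iota:C^{k-1}(\Cl(G))\hookrightarrow C^{k-1}(\Cl(G^+))\). For \(k=1\) this \(\iota\) adds the new vertex coordinates; for \(k\ge2\) it is the identity. Taking adjoints, \(D_{k-1}\) for \(\Cl(G^+)\) restricted to the image of \(\iota\) is the old \(D_{k-1}\).

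From this,
\[
  D^+D^{+*}=D\,\iota^*\iota\, D^{*}=DD^*,
\]
since \(\iota^*\iota=I\). I should double-check here that no cochain on a new vertex enters. This is fine: the composite only passes through \(\operatorname{im}\iota\), and directly a new vertex \(x\) has \(D_0|\widehat{x}\rangle=0\) because \(x\) has no cofaces.

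Under the augmented convention, I would also check the empty-simplex term when \(k=1\) would be degree \(0\). This is not needed, because the statement is only for \(k\ge1\) and the down-part for \(k=1\) maps into \(C^0\), not \(C^{-1}\).

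\emph{Consequences.} Equal operators on equal spaces give equal spectra. By \cref{prop:weights-preserve-cohomology}, the kernel dimension equals \(\dim H_k\). Both homology groups are therefore equal; more directly, the chain complexes in degrees at least one and the boundary map \(\partial_1\) have the same image. The only mildly delicate step is the \(k=1\) down-Laplacian bookkeeping above.
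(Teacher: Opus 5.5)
Your proof is correct and follows essentially the same route as the paper. Simplices of positive dimension and their weights are unchanged, so the up-part and, for $k\ge2$, the down-part agree verbatim. For $k=1$ the edge-to-vertex boundary only acquires zero rows on the new vertex coordinates, which is exactly your factorization through the isometric inclusion $\iota$ with $\iota^*\iota=I$, and the homology statement follows from the same chain-level observation.
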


\begin{proof}
An isolated new vertex belongs to no simplex of positive dimension.  Thus all
degree-\(k\) simplices and all of their cofaces are unchanged.  If \(k\geq2\), their faces also lie in unchanged positive-degree spaces, so both adjacent
incidence matrices are unchanged.  If \(k=1\), the edge-to-vertex boundary
matrix acquires only zero rows, because no new vertex belongs to an edge;
its contribution to the Laplacian on the edge space is therefore unchanged.
The weights of all old simplices remain the same, so the degree-\(k\) Laplacians agree.  The same chain-complex observation gives the homology
assertion.
\end{proof}

\subsection{Universal degree and promise-gap functions}

For \(u,v\in\ZZ_{\geq0}\), let
\[
  \langle u,v\rangle
  :=\frac{(u+v)(u+v+1)}{2}+v
\]
be the Cantor pairing function.  It is a bijection \(\ZZ_{\geq0}^2\to\ZZ_{\geq0}\), and its inverse is computable exactly in
time polynomial in the binary length of its argument by an integer-square-root
calculation.

Put
\[
  M_{\mathrm{cat}}:=\max_{1\leq a\leq13}m_a=4,
  \qquad
  A:=4M_{\mathrm{cat}}+3=19,
  \qquad
  B:=8M_{\mathrm{cat}}+5=37.
\]
For every \(n\geq2\), let \((u(n),v(n))\) be the unique pair satisfying \(n-2=\langle u(n),v(n)\rangle\). Define
\[
  k(n)
  :=
  \begin{cases}
    2u(n)-1,
      &u(n)\geq1\text{ and }2u(n)-1<n,\\
    1,&\text{otherwise},
  \end{cases}
\]
and
\[
  \gamma(n)
  :=
  \frac{c_0^{A}}{(v(n)+1)^{B}}.
\]

\begin{lemma}[Validity of the universal parameters]
\label{lem:universal-parameters}
The functions \(k\) and \(\gamma\) defined above satisfy all the conditions
in \cref{def:weighted-gapped-clique-homology}.  In particular, they are fixed
independently of the source promise problem and are exactly computable in
deterministic polynomial time.
\end{lemma}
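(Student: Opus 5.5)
We check the conditions of \cref{def:weighted-gapped-clique-homology} one at a time; each reduces to a property of the Cantor pairing and of the rational constant \(c_0\). The conditions are: the domain and codomain of \(k\) and \(\gamma\), the bound \(1\le k(n)<n\), the bounds \(1/p(n)\le\gamma(n)\le1\), polynomial bit length of \(\gamma(n)\), and exact polynomial-time computability.

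\emph{The degree function.} For \(n\ge2\), \(n-2\ge0\), so \((u(n),v(n))\) is well defined by bijectivity of the pairing. If the first case applies, \(u(n)\ge1\) gives \(k(n)=2u(n)-1\ge1\), and the case condition itself gives \(k(n)<n\). Otherwise \(k(n)=1\), and \(1<n\) because \(n\ge2\). In both cases \(k(n)\in\ZZ_{\ge1}\) and \(1\le k(n)<n\).

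\emph{The gap function.} First record \(v\le\langle u,v\rangle\): since \((u+v)(u+v+1)/2\ge0\), we have \(\langle u,v\rangle\ge v\). Hence \(v(n)\le n-2\), and so \(v(n)+1\le n\). The constant \(c_0\) is a fixed positive rational with \(c_0\le1\) by \cref{prop:calibration}. Therefore
\[
\gamma(n)=\frac{c_0^{A}}{(v(n)+1)^{B}}\le1,
\qquad
\gamma(n)\ge c_0^{A}n^{-B}.
\]
Write \(c_0=p_0/q_0\) with positive integers \(p_0,q_0\), and take \(p(n):=q_0^{A}n^{B}\). Then \(c_0^{A}\ge q_0^{-A}\), so \(\gamma(n)\ge1/p(n)\). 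Here \(p\) is a fixed polynomial independent of any source problem, and \(\gamma(n)\in\QQ_{>0}\).

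\emph{Computability and bit length.} From \(n\) in binary, compute \(n-2\) and invert the pairing: set \(w=\lfloor(\sqrt{8(n-2)+1}-1)/2\rfloor\) by an exact integer square root, then \(v=(n-2)-w(w+1)/2\) and \(u=w-v\). All of this costs time polynomial in \(\log n\), and \(u,v\le n\).

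Evaluating \(k\) requires one comparison and one affine map. For \(\gamma\), the numerator \(p_0^{A}\) and the factor \(q_0^{A}\) are fixed integers with \(A=19\). The remaining factor \((v+1)^{B}\) with \(B=37\) has bit length \(O(\log n)\) and is computed by repeated multiplication. So \(\gamma(n)\) is output exactly as a reduced fraction whose numerator and denominator have \(O(\log n)\) bits, which is certainly polynomial in \(n\).

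\emph{Independence.} \(k\) and \(\gamma\) are built only from the Cantor pairing and the catalogue constants \(c_0\), \(A\), \(B\). None of these depends on the source promise problem, the verifier, or the input, which proves the last assertion.

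\emph{Main point of care.} There is no real obstacle. The only subtle points are the inequality \(v(n)\le n-2\), which is what makes \(\gamma\) inverse-polynomial rather than merely positive, and noting that the fallback case of \(k\) still satisfies \(k(n)<n\) because \(n\ge2\).
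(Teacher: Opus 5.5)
Your proposal is correct and follows essentially the same route as the paper's proof. Both arguments read \(1\le k(n)<n\) directly off the case definition, obtain \(0<\gamma(n)\le1\) and the inverse-polynomial lower bound from \(0<c_0\le1\) and \(v(n)\le n-2\) (a consequence of \(\langle u,v\rangle\ge v\)), and get polynomial-time computability from inverting the Cantor pairing and evaluating fixed powers. Your explicit choice \(p(n)=q_0^{A}n^{B}\) plays the same role as the paper's integer \(C\ge c_0^{-A}\).
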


\begin{proof}
The definition gives \(1\leq k(n)<n\) for every \(n\geq2\).  Inverting the
Cantor pairing function, evaluating fixed powers, and performing rational
arithmetic require a number of bit operations polynomial in \(\log(n+1)\), and therefore certainly polynomial in \(n\).  The resulting
rational numbers have polynomial binary length.

Moreover, \(0<c_0\leq1\) and \(v(n)+1\geq1\), so \(0<\gamma(n)\leq1\). Since \(\langle u,v\rangle\geq v\), we have \(v(n)+1\leq n-1<n\).  Choose
an integer \(C\geq c_0^{-A}\).  Then
\[
  \gamma(n)
  \geq \frac{1}{C n^{B}},
\]
which is the required inverse-polynomial lower bound.
\end{proof}

\subsection{The input-to-graph reduction}

\begin{theorem}[Polynomial-time promise-preserving reduction]
\label{thm:final-reduction}
Let \(L=(L_{\mathrm{YES}},L_{\mathrm{NO}}) \in\mathsf{QMA}_1(\mathcal G)\). For the single fixed pair \((k,\gamma)\) of
\cref{lem:universal-parameters}, there is a deterministic polynomial-time
map \(x\longmapsto (G_x,w_x)\) such that
\[
  x\in L_{\mathrm{YES}}
  \quad\Longrightarrow\quad
  H_{k(|V(G_x)|)}(\Cl(G_x);\CC)\neq0,
\]
whereas
\[
  x\in L_{\mathrm{NO}}
  \quad\Longrightarrow\quad
  H_{k(|V(G_x)|)}(\Cl(G_x);\CC)=0
\]
and
\[
  \lambda_{\min}
  \!\left(\Delta_{k(|V(G_x)|),w_x}\right)
  \geq\gamma(|V(G_x)|).
\]
\end{theorem}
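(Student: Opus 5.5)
The plan is to compose the finite-catalogue source reduction with the global construction and then append isolated vertices so that the output vertex count $n$ encodes, via the Cantor pairing, both the degree $q_N=2N-1$ and an upper bound on the gap denominator. Given $x$, first run the algorithm $A_V$ of \cref{cor:finite-catalogue-source} to obtain $N=N(x)$, $t=t(x)$, the labels and supports, and $g_x=p_V(|x|)^{-1}$. Then build $\widehat G(H_x)$ with private weight $\lambda=c_0g_x/t$ as in \cref{thm:global-gap}; by \cref{lem:construction-size} this is polynomial time and all weights lie in $(0,1]$. Let $n_0:=|V(\widehat G(H_x))|$.

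Next choose the target pair $(u,v)$. Take $u:=N$, so that $2u-1=q_N$. For $v$, I want $(v+1)^{-B}\le g_x^{A}/t^{A-1}$, so that $\gamma(n)=c_0^A/(v+1)^B\le c_0^{A}g_x^{A}/t^{4M+2}$, which is at most the lower bound of \cref{thm:global-gap}(ii); note $A=4M_{\mathrm{cat}}+3\ge4M+3$, and since $c_0g_x/t\le1$ a larger exponent only decreases the bound, so $c_0^{4M+3}g^{4M+3}/t^{4M+2}\ge c_0^{A}g^{A}/t^{A-1}$. Since $g_x^{-1}=p_V(|x|)$ and $t\le p_V(|x|)$, it suffices to take $v+1\ge p_V(|x|)^{(2A-1)/B}$; as $B=37\ge 2A-1=37$, the choice $v:=\max\{p_V(|x|),\,n_0\}$ works, and additionally forces $\langle u,v\rangle+2\ge n_0$. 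Put $n:=\langle u,v\rangle+2$ and adjoin $r:=n-n_0\ge0$ isolated vertices of weight $1$. Since $v\ge n_0>7N\ge 2N$, we get $2u-1<n$ and $u\ge1$, so $k(n)=2N-1=q_N$. The value $n$ is polynomial in $|x|$ because $u,v$ are, so the padding is polynomial time.

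Finally, verify the promises. By \cref{lem:isolated-padding} with $k=q_N\ge1$, homology and the weighted Laplacian in degree $k(n)$ of $\Cl(G_x)$ coincide with those of $\widehat Y(H_x)$. In the YES case $\ker H_x\ne0$, so \cref{thm:global-gap}(i) gives nonzero homology. In the NO case $H_x\succeq g_xI$, so \cref{thm:global-gap}(ii) gives vanishing homology and $\lambda_{\min}\ge c_0^{4M+3}g_x^{4M+3}/t^{4M+2}\ge\gamma(n)$ by the choice of $v$.

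The main obstacle is the exponent bookkeeping in the choice of $v$: one must check carefully that the fixed exponents $A,B$ dominate the instance-dependent $M\le4$ and the combined dependence on $g_x$ and $t$, and that $c_0g_x/t\le1$ holds so that raising exponents is monotone in the right direction. If $B$ turned out to be too small, I would fall back to choosing $v$ as a larger fixed power of $p_V(|x|)$, which still keeps $n$ polynomial.
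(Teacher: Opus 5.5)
Your proposal is correct and follows essentially the same route as the paper: apply the source corollary, build $\widehat G(H_x)$ with $\lambda=c_0g_x/t_x$, then pad with isolated vertices up to $n=2+\langle N,v\rangle$, so that $k(n)=q_N$ and \cref{lem:isolated-padding} carries both promises over. Your $v=\max\{p_V(|x|),n_0\}$ plays the role of the paper's $R_x=\max\{n_x^0,t_x,\lceil g_x^{-1}\rceil,2N_x,1\}$, and your identity $2A-1=B$ is the same exponent bookkeeping as the paper's $A+(4M_{\mathrm{cat}}+2)=B$; the only extra care needed is to use $\lceil p_V(|x|)\rceil$ if $p_V$ is not integer-valued, since the pairing requires an integer argument.
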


\begin{proof}
Fix a polynomial-time uniform verifier family for \(L\).  Apply
\cref{cor:finite-catalogue-source} to \(x\).  It produces, in polynomial
time, a finite-catalogue Hamiltonian \(H_x=\sum_{j=1}^{t_x}P_j\) on \(N_x\) qubits and a rational number \(g_x\in(0,1]\) such that
\[
  x\in L_{\mathrm{YES}}
  \Longrightarrow
  \ker H_x\neq0,
\]
and
\[
  x\in L_{\mathrm{NO}}
  \Longrightarrow
  H_x\succeq g_x I.
\]
The integers \(N_x,t_x\), the reciprocal \(g_x^{-1}\), and their binary
descriptions are polynomially bounded in \(|x|\).

Construct the weighted graph \(\widehat G(H_x)\) as in
\cref{def:global-graph}, using private weight \(\lambda_x=c_0\frac{g_x}{t_x}\), and write \(n_x^0:=|V(\widehat G(H_x))|\). Choose the integer
\[
  R_x
  :=\max
  \left\{
    n_x^0,
    t_x,
    \left\lceil g_x^{-1}\right\rceil,
    2N_x,
    1
  \right\},
\]
and set \(n_x:=2+\langle N_x,R_x\rangle\). Because \(\langle N_x,R_x\rangle\geq R_x\geq n_x^0\), we may adjoin
exactly \(n_x-n_x^0\) isolated vertices, each of weight \(1\).  Call the
resulting weighted graph \((G_x,w_x)\).  All the displayed quantities have
polynomial magnitude and polynomial bit length, so
\cref{lem:construction-size} shows that the complete map is polynomial-time.

The inverse pairing associated with \(|V(G_x)|=n_x\) returns \((u(n_x),v(n_x))=(N_x,R_x)\). Since \(R_x\geq2N_x\), the valid branch in the definition of \(k\) applies,
and therefore
\[
  k(|V(G_x)|)=2N_x-1=q_{N_x}\geq1.
\]

If \(x\in L_{\mathrm{YES}}\), then \cref{thm:global-gap}(i) gives
nonzero degree-\(q_{N_x}\) homology for \(\widehat G(H_x)\).  The isolated
vertices preserve it by \cref{lem:isolated-padding}.

Suppose instead that \(x\in L_{\mathrm{NO}}\).  By
\cref{thm:global-gap}(ii), with \(M=\max_jm_{a_j}\leq M_{\mathrm{cat}}\), the unpadded output has zero homology in degree \(q_{N_x}\) and
\[
  \lambda_{\min}(\widehat\Delta_{q_{N_x}})
  \geq
  c_0^{4M+3}\frac{g_x^{4M+3}}{t_x^{4M+2}}
  \geq
  c_0^{A}\frac{g_x^{A}}{t_x^{4M_{\mathrm{cat}}+2}},
\]
where the last inequality uses \(0<c_0,g_x\leq1\) and \(t_x\geq1\). The choice of \(R_x\) gives
\[
  g_x\geq R_x^{-1},
  \qquad
  t_x\leq R_x.
\]
Consequently
\[
  \lambda_{\min}(\widehat\Delta_{q_{N_x}})
  \geq
  c_0^{A}R_x^{-B}
  \geq
  \frac{c_0^{A}}{(R_x+1)^B}
  =\gamma(n_x).
\]
Again \cref{lem:isolated-padding} preserves both the homology and the
Laplacian in this positive degree.
\end{proof}

\begin{proof}[Proof of \cref{thm:main-hardness}]
The functions \(k\) and \(\gamma\) are the universal functions of
\cref{lem:universal-parameters}.  For every promise problem in \(\mathsf{QMA}_1(\mathcal G)\), \cref{thm:final-reduction} supplies the
required polynomial-time many-one reduction.  Hence weighted gapped clique
homology for this fixed pair is \(\mathsf{QMA}_1(\mathcal G)\)-hard.
\end{proof}

\appendix

\section {AI-assisted tools}

ChatGPT (OpenAI), using the GPT-5.6 Sol model, was used for LaTeX editing,
expository and phrasing suggestions, auxiliary checking of the arguments, and for writing
most of the Python code used to carry out the finite homology computations
and related certificate checks.  Claude Opus 5 (Anthropic) was used to
assist with the preparation and typesetting of the figures.

The mathematical results, proof strategy, certificate conditions, and
arguments were developed by the author.  The author reviewed the resulting
code and certificate outputs and takes full responsibility for the
mathematical content of the paper and its accompanying materials.

\section{Spectral-sequence conventions and edge maps}
\label{app:spectral-sequence}

\subsection{The depth filtration and its pages}

Fix one local gadget and write \(C^k=\bigoplus_{\ell=0}^{k+1}C^k_{(\ell)}\) for its orthogonal decomposition by exact private depth.  Throughout this
appendix, \(D:=\Dbase+\Dpriv\) denotes the unweighted incidence coboundary, whereas \(D_t:=\Dbase+t\Dpriv\) is its weighted rescaling on the fixed orthonormal simplex-coordinate
spaces.  The inner product stays fixed and nondegenerate at \(t=0\); all adjoints, limits, and norms below use it.

\begin{definition}[Depth-filtered cochain complex]
\label{def:depth-filtered-complex}
For every integer \(\ell\), set
\[
  U_\ell^k
  :=\bigoplus_{p\geq\ell}C^k_{(p)}.
\]
We use the endpoint conventions \(U_\ell^k=C^k\,(\ell\leq0), \, U_\ell^k=0\,(\ell>k+1)\). The full coboundary preserves this decreasing filtration: \(D(U_\ell^k)\subseteq U_\ell^{k+1}\), because adding a vertex never decreases private depth.
\end{definition}

\begin{definition}[Spectral-sequence pages]
\label{def:pages}
Page zero is
\[
  E_{0,\ell}^k
  :=U_\ell^k/U_{\ell+1}^k
  \cong C^k_{(\ell)},
\]
with differential induced by \(\Dbase\).

For \(r\geq1\), define
\[
  Z_{r,\ell}^k
  :=U_\ell^k\cap D^{-1}(U_{\ell+r}^{k+1}),
\]
\[
  B_{r-1,\ell}^k
  :=U_\ell^k\cap D(U_{\ell-r+1}^{k-1}),
\]
and put \(Z_{0,\ell}^k:=U_\ell^k\).  Then
\[
  E_{r,\ell}^k
  :=
  Z_{r,\ell}^k
  \Big/
  \bigl(
    Z_{r-1,\ell+1}^k+B_{r-1,\ell}^k
  \bigr).
\]
The page differential is induced by \(D\) and has bidegree \(\delta_r: E_{r,\ell}^k\longrightarrow E_{r,\ell+r}^{k+1}\).
\end{definition}

\begin{proposition}[Representative formula for the page differential]
\label{prop:page-representative}
Let $[\alpha]\in E_{r,\ell}^k$ be represented by
$\alpha\in Z_{r,\ell}^k$.  Then \(D\alpha\in U_{\ell+r}^{k+1}\), and the page differential is
\[
 \delta_r[\alpha]=[D\alpha]
 \in E_{r,\ell+r}^{k+1}.
\]
This formula is independent of the representative.  Its kernel modulo its
image is naturally the next page:
\[
 H(E_r,\delta_r)\cong E_{r+1}.
\]
\end{proposition}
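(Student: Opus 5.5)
This is the standard fact about the spectral sequence of a decreasingly filtered complex, and I would prove it directly from the definitions in \cref{def:pages}. The plan has four steps. First I check that the formula lands in the target. Next I verify well-definedness. Then I check that the next page is the kernel modulo the image. Throughout I use only that $D$ preserves the filtration and that $D^2=0$.

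\emph{Target and well-definedness.} If $\alpha\in Z_{r,\ell}^k$, then by definition $\alpha\in U_\ell^k$ and $D\alpha\in U_{\ell+r}^{k+1}$. Since $D(D\alpha)=0\in U_{\ell+2r}^{k+2}$, we get $D\alpha\in Z_{r,\ell+r}^{k+1}$. Also $D\alpha\in D(U_\ell^k)\cap U_{\ell+r}^{k+1}$. Now change the representative by an element of $Z_{r-1,\ell+1}^k+B_{r-1,\ell}^k$.
\begin{itemize}
\item For $\beta\in Z_{r-1,\ell+1}^k$, we have $D\beta\in U_{\ell+r}^{k+1}\cap D(U_{\ell+1}^k)$. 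This lies in $B_{r-1,\ell+r}^{k+1}=U_{\ell+r}\cap D(U_{\ell+1})$, because $(\ell+r)-r+1=\ell+1$.
\item For $\beta\in B_{r-1,\ell}^k$, we have $D\beta=0$.
\end{itemize}
So $[D\alpha]$ is independent of the representative. The same computation shows that $D$ maps the denominator of the source into the denominator of the target, so $\delta_r$ is a well-defined linear map. I would also record that $\delta_r\circ\delta_r=0$, since $D^2=0$.

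\emph{Identification $H(E_r,\delta_r)\cong E_{r+1}$.} This is the main bookkeeping step, and I would follow the classical argument (e.g.\ McCleary, Thm.~2.6). Write $\overline Z:=Z_{r-1,\ell+1}^k+B_{r-1,\ell}^k$.

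For the kernel, $\delta_r[\alpha]=0$ means
\[
D\alpha\in Z_{r-1,\ell+r+1}^{k+1}+B_{r-1,\ell+r}^{k+1}.
\]
Write $D\alpha=x+Dy$ with $y\in U_{\ell+1}^k$ and $x\in U_{\ell+r+1}^{k+1}$. Then $\alpha-y\in Z_{r+1,\ell}^k$ up to terms already in $\overline Z$. So the kernel is $(Z_{r+1,\ell}^k+\overline Z)/\overline Z$.

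For the image into degree $(k,\ell)$, the source is $E_{r,\ell-r}^{k-1}$, and the image is $(B_{r,\ell}^k+\overline Z)/\overline Z$, where $B_{r,\ell}^k=U_\ell^k\cap D(U_{\ell-r}^{k-1})$ matches the definition with $r\mapsto r+1$.

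Hence
\[
H\cong\frac{Z_{r+1,\ell}^k+\overline Z}{B_{r,\ell}^k+\overline Z}.
\]
The isomorphism theorems identify this with
\[
\frac{Z_{r+1,\ell}^k}{Z_{r+1,\ell}^k\cap\bigl(B_{r,\ell}^k+\overline Z\bigr)}.
\]

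\emph{The main obstacle.} The remaining task is to show
\[
Z_{r+1,\ell}\cap\bigl(B_{r,\ell}+Z_{r-1,\ell+1}+B_{r-1,\ell}\bigr)=Z_{r,\ell+1}+B_{r,\ell}.
\]
The inclusion $\supseteq$ is immediate: $Z_{r,\ell+1}\subseteq Z_{r-1,\ell+1}$ and $Z_{r,\ell+1}\subseteq Z_{r+1,\ell}$.

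For $\subseteq$, take $z=b+w+b'$ with $b\in B_{r,\ell}$, $w\in Z_{r-1,\ell+1}$, and $b'\in B_{r-1,\ell}\subseteq B_{r,\ell}$. Then $Dw=Dz\in U_{\ell+r+1}$, so $w\in Z_{r,\ell+1}$, as required. This modular-law step is the one I expect to need care with, including at the endpoint conventions $U_\ell=C$ for $\ell\le0$ and $U_\ell=0$ for $\ell>k+1$. Everything else is routine, and naturality makes the isomorphism canonical.
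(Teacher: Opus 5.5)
Your proposal is correct and follows essentially the same route as the paper: well-definedness comes from $D(Z_{r-1,\ell+1}^k)\subseteq B_{r-1,\ell+r}^{k+1}$ and $D(B_{r-1,\ell}^k)=0$. A kernel representative is moved by $y\in Z_{r-1,\ell+1}^k$ into $Z_{r+1,\ell}^k$, and a page boundary is split as $b+w+b'$ so that $Dw=Dz$ forces $w\in Z_{r,\ell+1}^k$, which is exactly the paper's $D\gamma+z+D\eta$ step. Your concern about the endpoint conventions does not materialize, since every inclusion you use follows only from the filtration being decreasing and $D$-stable with $D^2=0$; the case $r=0$, where $Z_{r-1}$ is undefined, is just the associated-graded computation, as the paper also notes.
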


\begin{proof}
The case $r=0$ is the associated-graded differential.  Fix $r\ge1$.
If $\alpha\in Z_{r,\ell}^k$, then $D\alpha\in U_{\ell+r}^{k+1}$
and $D(D\alpha)=0$, so $D\alpha$ represents a target page class.
If $\alpha$ is changed by $z\in Z_{r-1,\ell+1}^k$, then
$Dz\in U_{\ell+r}^{k+1}\cap D(U_{\ell+1}^k)
=B_{r-1,\ell+r}^{k+1}$, a target denominator term.
Changing $\alpha$ by a boundary does not change $D\alpha$.
Thus the differential is well defined, and its square is zero.

For the next-page assertion, suppose $[\alpha]$ is in its kernel.  The
vanishing target class means
\[
 D\alpha=a+D\beta,
 \quad a\in Z_{r-1,\ell+r+1}^{k+1},\quad
 \beta\in U_{\ell+1}^k,\quad D\beta\in U_{\ell+r}^{k+1}.
\]
Thus $\beta\in Z_{r-1,\ell+1}^k$, and replacing $\alpha$ by
$\alpha-\beta$ leaves its page class unchanged and puts its differential
in $U_{\ell+r+1}^{k+1}$.  Every cohomology class therefore has a
representative in $Z_{r+1,\ell}^k$.
Such a representative gives zero in page cohomology precisely when it
belongs to \(Z_{r,\ell+1}^k+B_{r,\ell}^k\). If it is a page boundary, write it as
$D\gamma+z+D\eta$, where
$\gamma\in Z_{r,\ell-r}^{k-1}$,
$z\in Z_{r-1,\ell+1}^k$, and
$\eta\in U_{\ell-r+1}^{k-1}$ with $D\eta\in U_\ell^k$.
Then $D(\gamma+\eta)\in B_{r,\ell}^k$; the remaining $z$ has
$Dz\in U_{\ell+r+1}^{k+1}$ and hence lies in $Z_{r,\ell+1}^k$.
Conversely, the first displayed denominator summand is already zero on
page $r$, and every element $D\gamma\in B_{r,\ell}^k$ is the image
of $[\gamma]\in E_{r,\ell-r}^{k-1}$.
Taking the quotient is exactly the definition of $E_{r+1,\ell}^k$.
\end{proof}

\begin{proposition}[Finite convergence and the depth-zero edge map]
\label{prop:edge-map}
For the private-depth filtration in degree $k$, one has
$U_{k+2}^k=0$; hence the spectral sequence stabilizes after finitely many
pages.  If $k\ge1$ and $i:X_m\hookrightarrow Y_a^{\rm loc}$ is the base inclusion, the
limiting depth-zero term is canonically
\[
 E_{\infty,0}^k
 \cong\im\bigl(i^*:H^k(Y_a^{\rm loc};\CC)
                    \longrightarrow H^k(X_m;\CC)\bigr).
\]
Thus this term is the image of restriction to base cohomology.
\end{proposition}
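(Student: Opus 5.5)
The plan is to use only the definitions of the pages (\cref{def:pages}) and the representative formula (\cref{prop:page-representative}).

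\emph{Finite convergence.} A degree-$k$ simplex has $k+1$ vertices, so its private depth is at most $k+1$. Hence $C^k_{(p)}=0$ for $p>k+1$, and so $U_{k+2}^k=0$. Consider the differential $\delta_r:E_{r,\ell}^k\to E_{r,\ell+r}^{k+1}$. Once $r>k+2-\ell$, the target depth $\ell+r$ exceeds $k+2$, the maximal depth in degree $k+1$, so the target vanishes. Similarly, once $r>\ell$, every incoming differential originates at negative depth and its source vanishes. There are finitely many degrees and depths in each bounded range, so all differentials vanish for $r$ larger than a fixed bound, and $E_r=E_\infty$ from that page on.

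\emph{Depth-zero edge map.} At $\ell=0$ the incoming boundary term uses $U_{-r+1}^{k-1}=C^{k-1}$, so $B_{r-1,0}^k=\im D\cap C^k$ is the full coboundary space. For $r$ large, $Z_{r,0}^k=D^{-1}(U_r^{k+1})$ equals $\ker D$, since $U_r^{k+1}=0$. The other denominator term, $Z_{r-1,1}^k$, becomes $\ker D\cap U_1^k$. Therefore
\[
 E_{\infty,0}^k=\ker D\big/\bigl(\ker D\cap U_1^k+\im D\bigr).
\]
Now define $\Phi:\ker D\to C^k_{(0)}$ by projection onto exact depth zero. This is the restriction map $i^\#$ to the base cochains, using C2, which identifies the induced subcomplex on the base vertices with $X_m$.

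The key point is that $\Dbase$ preserves depth and $\Dpriv$ raises it. So the depth-zero component of $Du$ is $\Dbase\pi_0u$, which is the base coboundary $D_X(i^\#u)$. Consequently $i^\#$ commutes with $D$, and $\Phi$ induces a map $\ker D\to H^k(X_m)$ that factors through $H^k(Y)$ as $i^*$. Its kernel, taken inside $\ker D$, consists of cocycles $u$ with $i^\#u=D_X\beta$ for some $\beta\in C^{k-1}(X_m)$. Extending $\beta$ by zero gives a cochain $\tilde\beta$ on $Y$ with $i^\#\tilde\beta=\beta$. Then $u-D\tilde\beta$ has zero depth-zero component, so it lies in $\ker D\cap U_1^k$. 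Hence the kernel of the induced map is exactly $\ker D\cap U_1^k+\im D$. The reverse inclusion is clear, since elements of $U_1^k$ restrict to zero and coboundaries restrict to coboundaries. The image of the induced map is $\im i^*$ by definition.

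The main obstacle is justifying that projection onto depth zero really is the cochain restriction $i^*$. This requires the depth-zero part of $D$ to equal $D_X$ on base simplices, i.e.\ that no extra base simplices arise, which is exactly condition C2. It also requires a check that the augmented degree $-1$ term does not interfere. Because we assume $k\ge1$, augmentation affects only degrees $\le0$, so the $k-1\ge0$ coboundary is the ordinary one. The case $k=1$ has $k-1=0$, where the augmented $D_{-1}$ contributes to $\im D_0$ only through degree $0$ and does not touch degree $1$. So the identification is canonical and uses ordinary cohomology.
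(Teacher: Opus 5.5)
Your proposal is correct and follows essentially the same route as the paper. For $r\ge k+3$ you read off the stable formula $E_{r,0}^k=\ker D_k/\bigl((U_1^k\cap\ker D_k)+\im D_{k-1}\bigr)$ directly from the page definition. You then identify depth-zero projection with restriction to $X_m$, using C2 together with the depth behaviour of $\Dbase$ and $\Dpriv$. Finally, extension by zero of a base primitive shows that the kernel of the induced map to $H^k(X_m;\CC)$ is exactly the denominator.

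Your finite-convergence argument is a little more explicit than the paper's. The only blemish is a wording slip at the end: the augmented map $D_{-1}$ lands in degree $0$, so it can affect only degree-zero cohomology, not $\im D_0$. This is harmless for $k\ge1$.
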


\begin{proof}
The complex is finite, so its degree and depth indices range over finite
sets.  For each fixed position, once $r$ is sufficiently large the
numerator and both denominator terms in \cref{def:pages} stop changing.
The page differential then has zero source or target outside the finite
depth range, so the pages stabilize.
For the edge-map assertion, fix $k\ge1$ and take $r\ge k+3$.  The formula becomes
\[
 E_{r,0}^k
 =\frac{\ker D_k}
 {(U_1^k\cap\ker D_k)+\im D_{k-1}}.
\]
C2 identifies the quotient complex $C^\bullet(Y_a^{\rm loc})/U_1^\bullet$
with the base cochain complex.  Restriction $i^*$ therefore sends a full
cocycle to a base cocycle and defines a map from the displayed quotient
onto $\im i^*$.
If the restricted cocycle is a base coboundary $D_Xb$, extend $b$ by zero
on private coordinates.  Subtracting its full coboundary produces a
cohomologous full cocycle whose restriction is zero, that is, an element
of $U_1^k\cap\ker D_k$.  Conversely, each denominator term has zero restricted cohomology class.
Thus the displayed quotient is canonically identified with $\im i^*$ as a
subspace of base cohomology.
\end{proof}

\begin{remark}[Index dictionary]
The superscript $k$ is cochain degree, the second subscript $\ell$ is private
depth, and the first subscript $r$ is the page.  With the convention of
\cref{def:pages}, \(\delta_r:E_{r,\ell}^k\to E_{r,\ell+r}^{k+1}\). Here $D_0=\Dbase$.
\end{remark}

\subsection{Proof of the finite-dimensional filtered Hodge correspondence}
\label{app:filtered-hodge-proof}

We use Rellich's theorem in the following finite-dimensional form; see
\cite[Theorem~1.1]{RellichStatement}.

\begin{proposition}[Finite-dimensional Rellich theorem]
\label{prop:rellich}
Let \(A(t)\) be a real-analytic family of finite-dimensional Hermitian
matrices for real \(t\) near \(0\).  After restricting the parameter
interval, there are real-analytic eigenvalue functions and a
real-analytic orthonormal eigenbasis for \(A(t)\).
\end{proposition}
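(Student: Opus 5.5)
The plan is the standard reduction-process argument of Rellich and Kato, carried out in three stages: analytic eigenvalues, analytic eigenprojections, and then analytic orthonormal frames inside each projection range.

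\emph{Eigenvalues.} First consider the characteristic polynomial $p(t,z)=\det(zI-A(t))$. Its coefficients are real-analytic in $t$, so they extend holomorphically to a complex disc $|t|<\rho$. By the Newton--Puiseux theorem, after shrinking $\rho$, the roots of $p(t,\cdot)$ are organized into finitely many cycles of Puiseux series $\mu(t)=\sum_{n\ge0}c_n t^{n/p}$. The key observation is that, because $A(t)$ is Hermitian for real $t$, every root is real for real $t$ of either sign. I would then show that this forces each branch to be a genuine power series in $t$. Suppose a cycle had period $p>1$, and let $c_n t^{n/p}$ be the first term with $p\nmid n$. Then the $p$ conjugate branches $\sum c_n\omega^n t^{n/p}$ differ from one another at that order. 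For small real $t>0$, at most one choice of the root of unity $\omega$ keeps the leading nonintegral coefficient real, so the remaining conjugates are nonreal. This contradicts reality of the spectrum. Consequently each eigenvalue function $\mu_i(t)$ is real-analytic near $0$. Listing the distinct branches (identical branches counted once) gives finitely many analytic functions that are pairwise distinct as functions. Hence they are pairwise different for $0<|t|<\rho'$, and any collision occurs only at $t=0$.

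\emph{Eigenprojections.} For each distinct branch $\mu_i$, define the eigenprojection $P_i(t)$ for $0<|t|<\rho'$ by the Riesz integral $\frac1{2\pi i}\oint(z-A(t))^{-1}\,dz$ around $\mu_i(t)$ alone. This is holomorphic on the punctured complex disc. For real $t$ it is an orthogonal projection, so it has norm one. I expect this step to be the main obstacle: the integral formula is only valid away from $0$, and we need boundedness on a full complex punctured neighbourhood, not merely on the real axis. To get it, I would use the fact that $P_i(t)$ is a polynomial in $A(t)$ with coefficients that are rational in the distinct eigenvalues, namely the Lagrange interpolation formula $\prod_{k\ne i}(A-\mu_k)/(\mu_i-\mu_k)$. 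This shows $P_i$ is meromorphic at $0$. Boundedness on the real axis then rules out a pole, because a pole of order $s$ would give growth $|t|^{-s}$ along real $t$. So $P_i$ extends analytically across $0$. Moreover $P_i(t)^*=P_i(t)$, $P_i^2=P_i$, the projections are mutually orthogonal, they sum to $I$, and $A(t)=\sum_i\mu_i(t)P_i(t)$, all by continuation from $t\ne0$.

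\emph{Orthonormal frames.} Finally, I would use Kato's transformation function. Solve the linear ODE $U'(t)=\Bigl(\sum_i P_i'(t)P_i(t)\Bigr)U(t)$ with $U(0)=I$, or more simply one such equation $U_i'=[P_i',P_i]U_i$ for each $i$. The ODE has analytic coefficients, so its solution is analytic. The generator $[P_i',P_i]$ is skew-Hermitian for real $t$, so $U_i$ is unitary. The identity $P_iP_i'P_i=0$ gives $U_i(t)P_i(0)=P_i(t)U_i(t)$. Choosing an orthonormal basis of $\operatorname{im}P_i(0)$ and transporting it by $U_i(t)$ yields analytic orthonormal eigenvectors for $\mu_i(t)$. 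Since the ranges of the $P_i$ are mutually orthogonal, the union of these frames is the required real-analytic orthonormal eigenbasis of $A(t)$. Shrinking the interval to the minimum of the finitely many radii obtained above completes the proof.
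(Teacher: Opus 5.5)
\textbf{The step that fails: analyticity of the eigenvalue branches.} Your second and third stages are sound. For real $t\neq0$ the Lagrange expression equals the orthogonal eigenprojection. The identity theorem on the connected punctured disc then makes the Riesz projection equal to it there, so it is meromorphic at $0$, and boundedness on the real axis removes the pole. Also $[P_i',P_i]$ is skew-Hermitian with $[[P_i',P_i],P_i]=P_i'$.

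The problem is the Puiseux step. You claim that for small $t>0$ at most one conjugate branch keeps the leading nonintegral coefficient real. This is false whenever $\omega^{n_0}$ can equal $-1$. In a period-two cycle $\pm c\,t^{1/2}$ with $c$ real, both branches are real for every $t>0$. The non-Hermitian family $\begin{pmatrix}0&1\\ t&0\end{pmatrix}$, with eigenvalues $\pm\sqrt t$, shows that real spectrum on one side of $0$ cannot force analyticity. So the argument as written reaches no contradiction in exactly the basic square-root case.

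\textbf{The fix.} The missing ingredient is the negative side. You state it as the key observation (``real for real $t$ of either sign'') but never use it.
\begin{itemize}
\item Write a branch of the cycle as $\sum_n c_n s^n$ with $s^p=t$. For any small $s$, this series is a root of the characteristic polynomial at $t=s^p$.
\item Take $s=t^{1/p}>0$ for $t>0$. A convergent series in $s$ that is real on an interval has real coefficients, so every $c_n$ is real.
\item Take $s=|t|^{1/p}e^{i\pi/p}$ for $t<0$. Then $\sum_n c_n e^{i\pi n/p}|t|^{n/p}$ is an eigenvalue of the Hermitian matrix $A(t)$, hence real, so every $c_n e^{i\pi n/p}$ is real.
\item For $c_n$ real and nonzero this forces $e^{i\pi n/p}=\pm1$, that is $p\mid n$. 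All nonzero terms are then integral powers of $t$, contradicting period $p>1$.
\end{itemize}
With this two-sided argument replacing the one-sided count, your proposal is a complete, self-contained proof of the standard Rellich--Kato kind. The paper gives no proof of this proposition and simply cites it from the literature.
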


For the proof of \cref{thm:filtered-hodge}, fix all vector spaces and inner
products and write $t$ for the real parameter.  A vector-valued analytic
germ is a convergent power series near $t=0$; it is $O(t^r)$ when its
coefficients in degrees $0,\ldots,r-1$ vanish.  The analytic decomposition
uses \cref{prop:rellich}.

\paragraph{Step 1: the rescaled complex and its adjoint.}
On $C^k_{(\ell)}$ define $R_t^k=t^\ell I$.  For $t\ne0$,
\[
 D_{t,k}=R_t^{k+1}D_k(R_t^k)^{-1}=D_{0,k}+tD_{1,k},
 \qquad D_0=\Dbase,\quad D_1=\Dpriv.
\]
Indeed, the two summands increase depth by zero and one, respectively.
Thus $D_t^2=0$; alternatively, separating depths in $D^2=0$ gives
$D_0^2=D_1^2=D_0D_1+D_1D_0=0$.
Taking adjoints in the fixed inner product gives, for real $t\ne0$,
\[
 D_{t,k}^*=(R_t^k)^{-1}D_k^*R_t^{k+1}
          =D_{0,k}^*+tD_{1,k}^*.
\]
The polynomial formulas define both operators at $t=0$.  In degree $k$
the two residuals are $D_{t,k}u$ and $D_{t,k-1}^*u$.

\paragraph{Step 2: an analytic orthonormal decomposition into differential pairs.}
Apply Rellich's theorem to $D_{t,k}^*D_{t,k}$ in every degree $k$.
Let $e_i(t)$ be an analytic unit eigenvector with an eigenvalue
$\eta_i(t)$ that is not identically zero.  Positivity for positive and
negative real $t$ implies
\[
 \eta_i(t)=t^{2s_i}a_i(t)^2,
 \qquad s_i\in\ZZ_{\ge0},\quad a_i(0)>0,
\]
where $a_i$ is real analytic and positive near zero.  Set
\[
 \sigma_i(t):=t^{s_i}a_i(t),\qquad
 f_i(t):=\sigma_i(t)^{-1}D_{t,k}e_i(t)\quad(t\ne0).
\]
This quotient has an analytic extension to zero: the squared norm of its
numerator is $\eta_i(t)$, so every analytic coordinate of that numerator
vanishes to order at least $s_i$.  After shrinking the parameter interval,
all nonzero branches are positive for $t\ne0$, and
\[
 D_te_i=\sigma_i f_i,\qquad D_t^*f_i=\sigma_i e_i,
 \qquad D_t^*e_i=0,\qquad D_tf_i=0.
\]
In these formulas the degree determines the appropriate component of $D_t$
or $D_t^*$.  For example, $D_t^*e_i=0$ follows from
$e_i\in\im D_{t,k}^*\subseteq\ker D_{t,k-1}^*$ for $t\ne0$.
All identities extend to zero by analyticity.

The $f_i$ in each target degree are orthonormal, since
\[
 \langle D_te_i,D_te_j\rangle
 =\langle e_i,D_t^*D_te_j\rangle
 =\eta_j\delta_{ij}.
\]
Incoming vectors $f_i$ and outgoing vectors $e_j$ in the same degree are
orthogonal because $D_t^2=0$.  Their orthogonal complement has an analytic
orthonormal frame $h_b(t)$.  To see this directly, let $P(t)$ be the
analytic complementary projection, project an orthonormal basis of $\im P(0)$ by
$P(t)$, and multiply its column matrix by the inverse square root of its
Gram matrix.  That Gram matrix equals the identity at zero, so this
normalization is analytic nearby.  For $t\ne0$ the complement is
$\ker D_t\cap\ker D_t^*$; consequently \(D_th_b(t)=D_t^*h_b(t)=0\) throughout the interval.  We have therefore obtained an analytic
orthonormal basis consisting of elementary pairs
$e_i\xrightarrow{\,\sigma_i\,}f_i$ and harmonic vectors $h_b$.
The corresponding Laplacian eigenvalues are $\sigma_i^2$ at both ends of
each pair, and zero on the harmonic vectors.

\paragraph{Step 3: leading cycle and boundary spaces.}
For $r\ge1$, define two subspaces of the fixed degree-$k$ space:
\[
 \mathcal A_r^k
 :=\{u(0):u(t)\text{ analytic},\ D_tu(t)=O(t^r)\},
\]
\[
 \mathcal B_r^k
 :=\left\{\left.(t^{-(r-1)}D_tw(t))\right|_{t=0}:
 \begin{array}{l}
 w(t)\text{ analytic of degree }k-1,\\
 D_tw(t)=O(t^{r-1})
 \end{array}\right\}.
\]
The divisibility condition makes the displayed evaluation well defined.
Since $D_t^2=0$, one has $\mathcal B_r^k\subseteq\mathcal A_r^k$.
Only finitely many Taylor coefficients are needed for either definition.
Likewise, the polynomial corrections in
\cref{def:approximate-harmonic} can be replaced by analytic germs: truncate
any such germ after order $r-1$, changing each residual only by $O(t^r)$.

Expand an analytic germ in the basis from Step 2.  On an outgoing
coordinate of exponent $s_i<r$, the condition $D_tu=O(t^r)$ forces its
coefficient to vanish at zero.  Incoming and harmonic coordinates are
unrestricted.  For the boundary space, precisely the incoming coordinates
with $s_i<r$ can have nonzero constant term: the choice
$w=t^{r-1-s_i}a_i(t)^{-1}e_i(t)$ produces the leading vector $f_i(0)$.
Coordinates with $s_i\ge r$ have zero leading boundary.  Hence
\[
 \begin{aligned}
 \mathcal A_r^k
 &=\Span\{h_b(0),\ f_i(0)\text{ incoming},\
                    e_i(0)\text{ outgoing with }s_i\ge r\},\\
 \mathcal B_r^k
 &=\Span\{f_i(0)\text{ incoming with }s_i<r\},\\
 \mathcal V_r^k
 &=\Span\{h_b(0),\ f_i(0)\text{ incoming with }s_i\ge r,\
                    e_i(0)\text{ outgoing with }s_i\ge r\}.
 \end{aligned}
\]
The last equality uses both residuals: $D_t^*u=O(t^r)$ excludes precisely
the remaining incoming coordinates of exponent less than $r$.
Thus
\[
 \mathcal A_r^k=\mathcal B_r^k\mathbin{\overset{\perp}{\oplus}}\mathcal V_r^k,
 \qquad
 \mathcal A_r^k/\mathcal B_r^k
 \xrightarrow{\ \cong\ }\mathcal V_r^k,
 \quad [u_0]\longmapsto\Proj_{\mathcal V_r^k}u_0.
\]
These spaces and the orthogonal projection are defined independently of
the analytic basis used to prove the identities.

\paragraph{Step 4: identify this quotient with the specified filtered pages.}
Write $\pi_\ell:C^k\to C^k_{(\ell)}$ for the exact-depth projection.  For
$z\in U_\ell^k$ set
\[
 \mathcal R_\ell z(t)
 :=\sum_{j\ge0}t^j\pi_{\ell+j}z=t^{-\ell}R_tz.
\]
The sum is finite.  For every integer $\ell$, including negative boundary
indices, only nonnegative powers occur because $z\in U_\ell^k$.
Depth homogeneity gives \(D_t\mathcal R_\ell z=t^{-\ell}R_tDz\). Consequently $z\in Z_{r,\ell}^k$ precisely when
$D_t\mathcal R_\ell z=O(t^r)$, and
$\mathcal R_\ell z(0)=\pi_\ell z$.
The two denominator terms in \cref{def:pages} have the following meanings:
\[
 \begin{array}{c|c}
 \text{filtered representative}&\text{rescaled representative}\\ \hline
 z\in Z_{r-1,\ell+1}^k
   &\mathcal R_\ell z=t\,\mathcal R_{\ell+1}z\\[1mm]
 D\beta\in U_\ell^k,\quad\beta\in U_{\ell-r+1}^{k-1}
   &\mathcal R_\ell(D\beta)
        =t^{-(r-1)}D_t\mathcal R_{\ell-r+1}\beta.
 \end{array}
\]
The first row has zero constant coefficient; the second gives an element
of $\mathcal B_r^k$ after evaluation at zero.  Thus there is a well-defined
map
\[
 \bigoplus_\ell E_{r,\ell}^k
 \longrightarrow\mathcal A_r^k/\mathcal B_r^k,
 \qquad ([z_\ell])_\ell\longmapsto
       \left[\sum_\ell\pi_\ell z_\ell\right].
\]

For surjectivity, take a Taylor
monomial $t^jv$ with $v\in C^k_{(p)}$ and call $p-j$ its diagonal index.
The operator $D_t=D_0+tD_1$ preserves this index.  Grouping the Taylor
coefficients of $u(t)$ by diagonal index therefore decomposes
$D_tu=O(t^r)$ into independent conditions.  The component of index $\ell$
is exactly $\mathcal R_\ell z_\ell$ for a finite vector
$z_\ell\in Z_{r,\ell}^k$.  Components with zero constant coefficient
contribute nothing to $\mathcal A_r^k$.  Only finitely many diagonals can
have a nonzero constant coefficient, since the depth range is finite.

For injectivity, suppose that the leading coefficient of a numerator
belongs to $\mathcal B_r^k$.  Choose a boundary germ realizing it and split
that germ by the same diagonal indices.  In index $\ell$ the second row of
the table supplies $D\beta_\ell\in B_{r-1,\ell}^k$ with that same leading
coefficient.  Subtract it from $z_\ell$.  The difference still lies in
$Z_{r,\ell}^k$ and has zero depth-$\ell$ coefficient, hence lies in
$Z_{r-1,\ell+1}^k$.  This is exactly the other denominator term, proving
injectivity.  Equivalently, an analytic numerator with zero constant term
is $t$ times a germ whose residual is $O(t^{r-1})$.

The same diagonal decomposition shows that both $\mathcal A_r^k$ and
$\mathcal B_r^k$ split into exact-depth subspaces.  Since those depth
subspaces are orthogonal, so does
$\mathcal V_r^k=\mathcal A_r^k\cap(\mathcal B_r^k)^\perp$.
Combining the quotient isomorphism with Step 3 gives
\[
 [z]\longmapsto\Proj_{\mathcal V_r^k}\pi_\ell z
 \quad\text{from }E_{r,\ell}^k
 \text{ onto }\mathcal V_r^k\cap C^k_{(\ell)}.
\]
At $r=0$ this is the orthogonal depth decomposition of $C^k$.

\paragraph{Step 5: eigenvalue counts and limiting projections.}
Step 2 is an exact analytic orthonormal eigenbasis for $\Delta_k(t)$.
Step 3 identifies $\mathcal V_r^k$ as the span of the values at zero of
exactly those basis vectors whose eigenvalue is $O(t^{2r})$, including
identically zero eigenvalues; hence
\[
 P_{\ge r}^k(0)=\Proj_{\mathcal V_r^k}.
\]
These projections are analytic finite sums of rank-one projections, so
their difference from the value at zero is $O(|t|)$ in operator norm.
Subtracting the projections for $r+1$ from those for $r$ proves both the
multiplicity formula and the exact-order limiting-subspace formula.
For $r$ greater than all exponents, $\mathcal V_r^k$ is the span of the
$h_b(0)$ alone, which gives the limiting zero eigenspace.  The same
finite-sum argument applies to any fixed analytic branch set $J$.

For the gadgets, the representative map above identifies the encoded and
bulk limiting subspaces in \cref{thm:local-rank-one,lem:projection-stability}.
Uniform estimates for padded families are supplied by
\cref{thm:spectral-padding}.

\section{Exact finite-certificate protocols}
\label{app:certificate-protocols}

\subsection{Matrix encodings}

\begin{definition}[Ordered simplex list and boundary matrix]
For a finite complex $K$, an \emph{ordered simplex list} (called a manifest in the companion files) is a list of every
oriented simplex in each degree, using increasing vertex order.  The integer
boundary matrix \(B_r(K)\in\ZZ^{|K_{r-1}|\times|K_r|}\) has entry $(-1)^i$ when the row simplex is obtained from the column simplex
by deleting its $i$-th vertex, and entry $0$ otherwise.  For a pair
$L\subseteq K$, the relative matrix \(B_r(K,L):C_r(K,L;\ZZ)\to C_{r-1}(K,L;\ZZ)\) is obtained by retaining only rows and columns indexed by simplices outside
$L$ and applying the quotient convention.
\end{definition}

\begin{definition}[Exact Betti-number certificate]
Let $n_r(K,L):=\dim_{\CC}C_r(K,L;\CC)$.  Once the exact ranks of the adjacent
relative boundary matrices are known, the relative Betti number is
\[
 \beta_r(K,L)
 =n_r(K,L)-\rank B_r(K,L)-\rank B_{r+1}(K,L).
\]
A Betti-number certificate consists of the simplex manifest, the two sparse
integer matrices, and exact rank witnesses for every degree in the claimed
range.
\end{definition}

\begin{lemma}[Finite-field exactness transfer]
\label{lem:rank-transfer}
Let \(\ZZ^a\xrightarrow{A}\ZZ^n\xrightarrow{B}\ZZ^b\) be integer matrices satisfying $BA=0$.  If for some prime $p$
\[
 \rank_{\FF_p}(A\bmod p)+\rank_{\FF_p}(B\bmod p)=n,
\]
then the rational complex is exact at $\QQ^n$:
\[
 \im(A\otimes\QQ)=\ker(B\otimes\QQ).
\]
In particular,
\[
 \rank_{\QQ}A=\rank_{\FF_p}(A\bmod p),
 \qquad
 \rank_{\QQ}B=\rank_{\FF_p}(B\bmod p).
\]
For a rectangular matrix $T$, a nonzero exact minor of size equal to its
number of columns certifies full column rank over $\QQ$ and hence over
$\CC$.
\end{lemma}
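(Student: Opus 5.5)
The argument is linear algebra over the field \(\FF_p\), transferred back to \(\QQ\) through the standard comparison between ranks modulo \(p\) and rational ranks. The first step is the inequality \(\rank_{\FF_p}(M\bmod p)\le\rank_{\QQ}M\) for every integer matrix \(M\). It holds because a nonzero \(r\times r\) minor modulo \(p\) is the reduction of an integer minor. That integer minor is therefore nonzero, and hence nonzero over \(\QQ\).

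For the exactness statement, write \(A_\QQ=A\otimes\QQ\) and \(B_\QQ=B\otimes\QQ\). Since \(BA=0\) holds over \(\ZZ\), it also holds over \(\QQ\). This gives \(\im A_\QQ\subseteq\ker B_\QQ\), and therefore
\[
  \rank_\QQ A+\rank_\QQ B\le\dim\ker B_\QQ+\rank_\QQ B=n .
\]
Combining this with the hypothesis and the first step, we get
\[
  n=\rank_{\FF_p}(A\bmod p)+\rank_{\FF_p}(B\bmod p)\le\rank_\QQ A+\rank_\QQ B\le n .
\]
All the inequalities are therefore equalities.

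Equality in the sum forces equality in each summand, because each summand satisfies the termwise inequality of the first step; this gives the two rank equalities in the statement. Equality in the second displayed inequality gives \(\dim\im A_\QQ=\rank_\QQ A=n-\rank_\QQ B=\dim\ker B_\QQ\). Together with the inclusion \(\im A_\QQ\subseteq\ker B_\QQ\), this yields exactness at \(\QQ^n\).

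For the full-column-rank clause, let \(T\) have \(c\) columns and suppose some \(c\times c\) minor of \(T\), computed exactly, is nonzero. Then the \(c\) columns are linearly independent over \(\QQ\). The rank of a matrix with rational entries is the same over \(\QQ\) and over \(\CC\), since it is determined by the vanishing or nonvanishing of minors. Hence \(T\) has full column rank over \(\CC\).

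No step here is a real obstacle. The one point needing care is the direction of the modular rank inequality: reduction modulo \(p\) can only lower rank. The hypothesis gives the maximal possible total rank \(n\), so no rank can have dropped under reduction.
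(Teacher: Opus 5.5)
Your proof is correct and follows the paper's argument essentially step for step. You compare ranks through nonzero minors to get $\rank_{\FF_p}(M\bmod p)\le\rank_\QQ M$, use $\im A\subseteq\ker B$ to bound $\rank_\QQ A+\rank_\QQ B\le n$, let the resulting sandwich force all equalities and exactness by a dimension count, and treat the full-column-rank clause with the same minor argument the paper invokes.
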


\begin{proof}
A nonzero minor modulo $p$ is an integer determinant not divisible by $p$,
hence is nonzero over $\QQ$.  Therefore each rational rank is at least its
rank modulo $p$.  On the other hand $BA=0$ implies
$\im A\subseteq\ker B$ over $\QQ$, so the rational ranks sum to at most $n$.
The assumed modular sum equals $n$, forcing equality in both rank lower
bounds and in the rational rank sum.  The inclusion is therefore an equality
by dimension.  Integer matrix ranks are unchanged when extending scalars
from $\QQ$ to $\CC$, since the same minors determine them.
The identical nonzero-minor argument proves the final assertion about $T$.
\end{proof}

\subsection{Certificate interfaces C0-C4}

\begin{proposition}[C0 verification interface]
\label{prop:C0-interface}
For each catalogue type $a$, construct the actual clique complex
$Y_a^{\rm loc}=\Cl(G_a^{\rm loc})$, enumerate all simplices of
$Y_a^{\rm loc}\setminus X_{m_a}$, and form every matrix \(B_r(Y_a^{\rm loc},X_{m_a})\). Exact rank certificates establish
\[
 \beta_r(Y_a^{\rm loc},X_{m_a})
 =\begin{cases}1,&r=2m_a,\\0,&r\ne2m_a.
 \end{cases}
\]
The simplex list must be generated after clique completion, so mixed cliques are
included automatically.
\end{proposition}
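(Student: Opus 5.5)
The plan is to reduce the claim to an exact finite linear-algebra computation and then invoke the rank-transfer lemma. First I check that the objects are well defined and finite. For each of the thirteen catalogue types, $G_a^{\rm loc}$ is built from the source data $(K_a,c_a,f_a,\prec_a)$ by \cref{constr:local-gadget}. From it I enumerate all cliques of $G_a^{\rm loc}$ by a deterministic clique-listing procedure. This yields the full simplex set of $Y_a^{\rm loc}=\Cl(G_a^{\rm loc})$, including every mixed clique produced by completion. I then discard those simplices lying in $X_{m_a}$. Condition C2, which is certified independently, guarantees $Y_a^{\rm loc}[V(X_{m_a})]=X_{m_a}$. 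Hence the discarded set is exactly the set of cliques with no private vertex, and the relative chain groups $C_r(Y_a^{\rm loc},X_{m_a};\ZZ)$ have as basis the cliques of positive private depth.

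Next I form the relative boundary matrices $B_r(Y_a^{\rm loc},X_{m_a})$ in increasing-vertex orientation, as in the matrix-encoding definition. Rows indexed by base faces are dropped, in accordance with the quotient convention. The identity $B_{r}B_{r+1}=0$ holds as an integer identity, since it is inherited from $\partial^2=0$ on $Y_a^{\rm loc}$ and the quotient by a subcomplex. Only finitely many degrees matter: $n_r(Y_a^{\rm loc},X_{m_a})=0$ once $r$ exceeds the clique number of $G_a^{\rm loc}$ minus one. The Betti number $\beta_r=n_r-\rank B_r-\rank B_{r+1}$ therefore needs ranks in a finite range of degrees only.

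For each degree I compute $\rank_{\FF_p}$ of the two adjacent matrices for a fixed prime $p$ by exact modular elimination. In every degree $r\ne 2m_a$ the certificate records that the two modular ranks sum to $n_r$. \Cref{lem:rank-transfer} then gives rational exactness at $C_r$, so $\beta_r=0$ over $\QQ$ and over $\CC$.

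Degree $r=2m_a$ is the main obstacle, because the transfer lemma as stated certifies exactness and not a Betti number equal to one. Here the modular ranks sum to $n_{2m_a}-1$, which only gives $\beta\le 1$ after the rational-rank lower bounds. I would obtain the matching lower bound $\beta_{2m_a}\ge1$ without any extra computation. \Cref{prop:folded-filling} exhibits the relative cycle $\Gamma_a$ with $\delta[\Gamma_a]=[j_a]\neq0$, so $[\Gamma_a]\neq0$ in $H_{2m_a}(Y_a^{\rm loc},X_{m_a};\CC)$. Combining the two bounds gives $\beta_{2m_a}=1$.

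If one wants a purely matrix-level certificate instead, the alternative is to augment $B_{2m_a+1}$ by the extra column $\Gamma_a$, scaled to be integral. One then checks that the modular ranks of this augmented matrix and of $B_{2m_a}$ sum to $n_{2m_a}$; the lemma gives exactness of the augmented sequence, and hence $\beta_{2m_a}\le 1$. Independence of $\Gamma_a$ from the boundaries again follows from $\delta[\Gamma_a]\ne0$.
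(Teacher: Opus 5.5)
Your proof is correct, but in degree $2m_a$ it takes a different route from the paper. Both arguments transfer vanishing in every degree $r\neq 2m_a$ through \cref{lem:rank-transfer}. For the remaining degree, the paper reads $\beta_{2m_a}$ off the integer Euler characteristic $\sum_r(-1)^r n_r$ of the relative chain complex, which is pure simplex counting. You instead pair the modular upper bound $\beta_{2m_a}\le 1$ with a topological lower bound from the folded filling $\Gamma_a$ of \cref{prop:folded-filling}.

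The paper's route keeps C0 self-contained. It uses nothing about $\Gamma_a$, so the C0 certificate does not depend on the C1 identity $(f_a)_\#c_a=j_a$, and it needs no rank check in degree $2m_a$ beyond counting simplices. Your route makes C0 logically depend on C1 through \cref{prop:folded-filling}. In exchange, it shows on the spot that $[\Gamma_a]$ generates the relative group, which is exactly what \cref{thm:local-topology} needs next.

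The obstacle you identified is also weaker than it looks. The rank-equality clause of \cref{lem:rank-transfer} can be applied at the certified exact degrees $2m_a-1$ and $2m_a+1$. This already gives $\rank_{\QQ}B_{2m_a}=\rank_{\FF_p}B_{2m_a}$ and $\rank_{\QQ}B_{2m_a+1}=\rank_{\FF_p}B_{2m_a+1}$. Hence the modular value $n_{2m_a}-\rank B_{2m_a}-\rank B_{2m_a+1}=1$ holds over $\QQ$ directly. This is the paper's Euler-characteristic telescoping written in matrix form, and it makes your lower bound from $\Gamma_a$, as well as your augmented-column variant, unnecessary.
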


\begin{proof}[Justification of the verification criterion]
Relative chains have exactly the nonbase simplex basis stated above.
Deleting a vertex and discarding base faces is their quotient boundary;
its square is zero because it is induced from the simplicial boundary.
Consequently rank-nullity gives
$\dim H_r=n_r-\rank B_r-\rank B_{r+1}$.
The ranks are computed on this enumerated clique complex.  When they are
obtained modulo a prime, \cref{lem:rank-transfer} transfers vanishing to
characteristic zero.  Once all but one degree vanish, the integer Euler
characteristic $\sum_r(-1)^rn_r$ determines the remaining Betti number.
A nonzero modular Betti number alone gives only an upper bound on its
characteristic-zero counterpart.
The computation covering the fixed catalogue is recorded in
\cref{thm:finite-local-certification}.
\end{proof}

\begin{proposition}[C1 verification interface]
\label{prop:C1-interface}
For each type $a$, encode $c_a$, $j_a$, and $\Gamma_a$ as rational column
vectors in their ordered simplex manifests, encode $(f_a)_\#$ and
$(q_a)_\#$ as normalized signed integer chain matrices, and verify the
following equalities by exact rational arithmetic (or after clearing
denominators):
\[
 B_{q_{m_a}}(K_a)c_a=0,
 \qquad
 (f_a)_\#c_a=j_a,
 \qquad
 B_{q_{m_a}+1}(Y_a^{\rm loc})\Gamma_a=j_a.
\]
Every orientation sign and every degenerate folded simplex is thereby
decided explicitly.
\end{proposition}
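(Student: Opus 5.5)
This is a verification-interface statement, so the task is to justify that the three matrix equalities certify exactly the content of C1 as stated. The plan is to show that each displayed matrix equality is literally the coordinate form of the corresponding chain identity in C1, once the encodings are fixed consistently.

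\emph{Step 1: encode chains and the boundary.} Fix the ordered simplex manifests for \(K_a\), \(X_{m_a}\) and \(Y_a^{\rm loc}\), all in increasing vertex order. A rational chain is then its coefficient column. By the definition of \(B_r(K)\), the product \(B_{q_{m_a}}(K_a)c_a\) is the coordinate vector of \(\partial c_a\), with the signs \((-1)^i\) of \cref{def:boundary-operator}. So the first equality is \(\partial c_a=0\).

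\emph{Step 2: encode the normalized maps.} Build \((f_a)_\#\) column by column. For a source simplex \([v_0<\cdots<v_r]\), check whether the images \(f_a(v_i)\) are distinct.
\begin{itemize}
\item If they are not distinct, the column is zero.
\item If they are distinct, the column has a single entry at the image simplex, equal to the sign of the permutation that sorts \((f_a(v_0),\ldots,f_a(v_r))\) into the target's increasing order.
\end{itemize}
This is exactly \cref{def:normalized-chain-map}, so the second equality is \((f_a)_\#c_a=j_a\). This includes the bow-tie automorphisms, since these are composed into \(f_a\) before the matrix is formed.

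The same rule defines \((q_a)_\#\) on the ordered prism and cone simplices of \(U_a\). One must also check that every nondegenerate image column indexes a simplex that really appears in the manifest of \(Y_a^{\rm loc}\). This holds because images of simplices of \(U_a\) are cliques of \(G_a^{\mathrm{loc}}\), and the \(Y_a^{\rm loc}\) manifest is generated after clique completion. With that check, \(\Gamma_a:=(q_a)_\#\Gamma_a^{\rm raw}\) is a well-defined column, and \(B_{q_{m_a}+1}(Y_a^{\rm loc})\Gamma_a\) is \(\partial\Gamma_a\). The third equality is then the filling identity of C1, verified directly rather than deduced from \cref{prop:folded-filling}.

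\emph{Step 3: exactness of the check.} All entries are integers except the coefficients of \(c_a=\zeta_a/\mu_a\). Multiplying through by \(\mu_a\) reduces the check to integer equalities, which are decided exactly. No rank argument or prime transfer is involved.

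The main obstacle is the sign bookkeeping in Step 2: the orientation convention of the prism simplices in \cref{def:prism}, the apex-first convention of the cone, and the permutation sign after folding must match the manifest's increasing order. I would handle this by computing every sign uniformly as a permutation parity against the manifest order, never by ad hoc rules.
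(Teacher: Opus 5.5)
Your proposal is correct and follows essentially the same route as the paper's justification: the ordered manifests identify chains injectively with coordinate vectors, the normalized signed matrices compute the pushforward (with zero columns for degenerate images), a support check confirms that every folded image lies in the manifest of the clique-completed $Y_a^{\rm loc}$, and the equalities are then decided by exact rational arithmetic or by integer arithmetic after clearing denominators. The paper simply treats manifest membership as a check, whereas you also argue a priori that every folded image is a clique of $G_a^{\mathrm{loc}}$; this is a harmless small addition.
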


\begin{proof}[Justification of the verification criterion]
The ordered simplex bases give injective coordinate identifications of
chains with rational vectors.  The normalized folding matrix assigns to a
simplex its signed image, or zero if vertices repeat.  Matrix multiplication
therefore computes exactly its chain pushforward.  The three displayed
zero differences are equivalent to the three signed chain identities.
Checking support in the source and actual output simplex lists ensures
that all vectors are chains of the claimed complexes.  Exact rational
arithmetic, or integer arithmetic after multiplying by a common denominator,
proves the equalities over $\QQ$ and hence over $\CC$.
\end{proof}

\begin{proposition}[C2 verification interface]
\label{prop:C2-interface}
For each type $a$, compare the following induced subgraph with the base graph,
edge by edge: \(G_a^{\rm loc}[V(G_{m_a})] \,\text{and}\, G_{m_a}\). Equality certifies
\[
 Y_a^{\rm loc}[V(X_{m_a})]=X_{m_a}.
\]
For every global instance, cross-gadget isolation follows from the
edge-set formula in \cref{def:global-graph}.
\end{proposition}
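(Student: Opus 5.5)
Since C2 is stated purely in terms of graphs, I will reduce everything to a comparison of edge sets and then pass to clique complexes at the end. First I note that $V(X_{m_a})=V(G_{m_a})$. Next, $\Sk_1(X_{m_a})=G_{m_a}$ by \cref{def:clique-complex}, because $X_{m_a}=\Cl(G_{m_a})$. Then I observe that for any graph $G$ and vertex set $W$ one has $\Cl(G)[W]=\Cl(G[W])$. Indeed, a subset of $W$ is a clique of $G$ exactly when it is a clique of the induced graph $G[W]$, since cliqueness only involves edges between its own vertices. Applying this to $G=G_a^{\rm loc}$ and $W=V(G_{m_a})$ gives
\[
 Y_a^{\rm loc}[V(X_{m_a})]=\Cl\bigl(G_a^{\rm loc}[V(G_{m_a})]\bigr).
\]
So if the edge-by-edge comparison confirms $G_a^{\rm loc}[V(G_{m_a})]=G_{m_a}$, the right-hand side is $\Cl(G_{m_a})=X_{m_a}$, and this is the claimed equality.

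The only point needing care is that the comparison must be run on the actual output graph $G_a^{\rm loc}=G_{m_a}\cup\Sk_1(\widetilde K_a)$ after folding. Folding can create new base--base edges: two prism vertices $i_0(v),i_0(w)$ that are adjacent in $U_a$ are sent by $f_a$ to base vertices, and the image edge $\{f_a(v),f_a(w)\}$ need not lie in $G_{m_a}$. The inclusion $G_{m_a}\subseteq G_a^{\rm loc}[V(G_{m_a})]$ holds automatically by construction. The finite check is therefore exactly the reverse inclusion: no edge of $\Sk_1(\widetilde K_a)$ with both endpoints in the base lies outside $E(G_{m_a})$. The criterion is justified by this reasoning; whether the equality actually holds for the thirteen gadgets is a computational question, which the certificate record of \cref{thm:finite-local-certification} answers.

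For the global statement I will argue directly from \cref{def:global-graph}. Each edge of $\widehat G(H)$ lies in $E(G_N)$ or in some $E(G_j^{\rm pad})$. The first set contains no private vertices. An edge of $G_j^{\rm pad}$ has both endpoints in $V(G_N)\sqcup V_j^{\rm priv}$, because $G_j^{\rm pad}$ is $G_a^{\rm loc}(S_j)*G_{S_j^c}$ and its only nonbase vertices are the copied private vertices of occurrence $j$. Private vertex sets of distinct occurrences are disjoint by construction, so no edge joins $V_i^{\rm priv}$ to $V_j^{\rm priv}$ for $i\ne j$. Nothing here depends on the instance, so no per-instance certificate is needed.
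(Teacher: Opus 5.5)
Your argument is correct and matches the paper's proof. An induced subcomplex of a clique complex is the clique complex of the induced graph, so edge equality on the base vertices gives $Y_a^{\rm loc}[V(X_{m_a})]=X_{m_a}$. Cross-gadget isolation then follows because each summand of the global edge set is supported on $V(G_N)\sqcup V_j^{\rm priv}$ for a single $j$.

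One correction to your side remark that folding can create base--base edges outside $G_{m_a}$: this cannot happen. The only base vertices of $\widetilde K_a$ are images of bottom vertices $i_0(v)$, and edges between them come from edges of $K_a$. Since $f_a$ is required to be simplicial into $X_{m_a}$, their images are edges of $G_{m_a}$ or collapse to a vertex. So the edge comparison is a consistency check on the certified data, not a point where the construction could add base edges.
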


\begin{proof}
A subset of base vertices is a simplex of the induced flag complex exactly
when all its pairs are edges of the induced graph.  Edge equality therefore
implies equality of the induced complexes.  In the global edge formula no
summand contains vertices from two private sets, so no such edge is added.
A clique containing vertices from both sets would require such an edge,
which proves cross-gadget isolation.
\end{proof}

\begin{proposition}[C3 verification interface]
\label{prop:C3-interface}
For each type $a$, degree $k$, and intermediate depth
$1\le\ell\le k$, form the fixed-depth integer matrices
\[
 B_{k-1,\ell}^{\rm base}:
 C^{k-1}_{(\ell)}\to C^k_{(\ell)},
 \qquad
 B_{k,\ell}^{\rm base}:
 C^k_{(\ell)}\to C^{k+1}_{(\ell)}
\]
representing $\Dbase$.  Verify \(B_{k,\ell}^{\rm base}B_{k-1,\ell}^{\rm base}=0\) and certify
\[
 \rank B_{k-1,\ell}^{\rm base}
 +\rank B_{k,\ell}^{\rm base}
 =\dim C^k_{(\ell)}.
\]
By 
\cref{lem:rank-transfer}, these identities establish
$E_{1,\ell}^k=0$ exactly over $\CC$.
\end{proposition}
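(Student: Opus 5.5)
The plan is to reduce the claimed vanishing \(E_{1,\ell}^k=0\) to the page-one formula of \cref{thm:page-one} and then to the finite-field exactness transfer of \cref{lem:rank-transfer}.

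First, fix the matrix representation. In the ordinary oriented-simplex basis, restricted to exact depth \(\ell\), \(\Dbase\) adds a base vertex with its incidence sign. Its matrix \(B_{k,\ell}^{\rm base}\) is therefore an integer matrix. Exact-depth spaces are spanned by the same simplices in the weighted orthonormal basis, so kernels and images agree in both bases: the weighted matrix differs only by invertible diagonal scalings \(W_{k+1}(\cdot)W_k^{-1}\), and these preserve depth. Next, \(\Dbase^2=0\) holds on each depth slice, since separating depths in \(D^2=0\) gives \(D_0^2=0\), as in Step 1 of Appendix~\ref{app:spectral-sequence}. This justifies the check \(B_{k,\ell}^{\rm base}B_{k-1,\ell}^{\rm base}=0\), which can be verified exactly over \(\ZZ\).

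Second, apply \cref{lem:rank-transfer} with \(A=B_{k-1,\ell}^{\rm base}\), \(B=B_{k,\ell}^{\rm base}\) and \(n=\dim C^k_{(\ell)}\). If the certified ranks, computed exactly over \(\QQ\) or modulo a prime \(p\), sum to \(n\), the lemma gives \(\im A=\ker B\) over \(\QQ\). Since extension of scalars from \(\QQ\) to \(\CC\) preserves ranks, the same equality holds over \(\CC\). This is precisely condition C3 at \((k,\ell)\). By \cref{thm:page-one}, \(E_{1,\ell}^k=\ker\Dbase/\im\Dbase\) on \(C^k_{(\ell)}\), so \(E_{1,\ell}^k=0\).

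Third, address finiteness. Only finitely many pairs \((k,\ell)\) with \(1\le\ell\le k\) index nonzero spaces, because the local complex is finite. Empty spaces give trivially exact checks, so the interface is a finite list of checks per type.

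The main subtlety, rather than an obstacle, is ensuring the certified matrices are the actual ones. The fixed-depth enumeration must come from the clique-completed complex \(Y_a^{\rm loc}\), including mixed cliques, and the rank equation must be applied with the correct \(n\). One must also avoid inferring exactness from a modular rank alone when the sum falls short of \(n\). In that case no conclusion is drawn, and the lemma's hypothesis is exactly what guards against this.
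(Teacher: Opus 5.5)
Your proposal is correct and takes essentially the same route as the paper. $\Dbase^2=0$ gives $\im B_{k-1,\ell}^{\rm base}\subseteq\ker B_{k,\ell}^{\rm base}$, the certified rank sum forces equality (transferred from $\FF_p$ to $\QQ$ and $\CC$ by \cref{lem:rank-transfer} when the ranks are computed modulo a prime), and the page-one formula of \cref{thm:page-one} then gives $E_{1,\ell}^k=0$, with finiteness of the local complex making this a finite list of checks; your extra justifications of $\Dbase^2=0$ and of basis independence only make explicit what the paper takes for granted (in fact, since base vertices have weight $1$, the weighted and unweighted matrices of $\Dbase$ coincide).
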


\begin{proof}[Justification of the verification criterion]
Since $\Dbase$ preserves depth and squares to zero, its incoming image lies
in its outgoing kernel.  The asserted adjacent rank sum says these spaces
have equal dimension, so they coincide.  Their quotient, the page-one term,
is zero.  If the test is performed modulo a prime, the exact integer
composition identity and \cref{lem:rank-transfer} prove the same statement
over $\QQ$, then $\CC$.  There are only finitely many degrees and depths
because each local complex is finite.
\end{proof}

\begin{proposition}[C4 verification interface]
\label{prop:C4-interface}
For each type $a$:
\begin{enumerate}[label=(\roman*)]
\item enumerate the pair $(P_a,A_a)$ and certify its relative Betti numbers
from the integer relative boundary matrices;
\item list an exact relative cocycle vector representing $[\kappa_a]$ and
verify that it generates the unique nonzero relative cohomology group;
\item construct the signed incidence matrix $T_{a,0}$ from the ordered
simplex lists and certify full column rank by an exact nonzero minor or the
finite-field transfer criterion;
\item verify the three apex-incidence assertions stated in
C4 directly from the simplex manifest.
\end{enumerate}
Full column rank and rational root-separation bounds give an explicit
positive rational lower bound on $s_{\min}(T_{a,0})$.
\end{proposition}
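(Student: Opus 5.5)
The statement to justify is the C4 verification interface: items (i)–(iv) establish the C4 requirements, and the final sentence yields an explicit positive rational lower bound on $s_{\min}(T_{a,0})$. I would treat each item separately and reduce it to an earlier verification pattern.

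For (i), the argument is the C0 interface argument applied to the pair $(P_a,A_a)$. Relative chains have the basis of simplices in $P_a\setminus A_a$, and the relative boundary matrix deletes a vertex and discards faces in $A_a$. Rank–nullity then gives the Betti numbers. Modular ranks transfer to $\QQ$ through \cref{lem:rank-transfer}, and the one remaining nonzero degree is fixed by the Euler characteristic. Over $\CC$, cohomology and homology have equal dimensions by universal coefficients, so this yields $H^r(P_a,A_a;\CC)$ as required.

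For (ii), I check two things about the coordinate cochain $\kappa_a$. First, its relative coboundary must vanish: this is an exact integer matrix–vector product. Second, its class must be nonzero, i.e.\ $\kappa_a$ must not lie in the image of the previous relative coboundary. I would certify this by exhibiting a relative cycle $\zeta$ in degree $2m$ with $\kappa_a(\zeta)\neq0$. Evaluation on cycles kills coboundaries, so the class is nonzero, and it generates because the group is one-dimensional by (i).

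Item (iii) is the last assertion of \cref{lem:rank-transfer}. Item (iv) consists of finite set comparisons on the manifest, as in the C2 interface.

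The step I expect to be the main obstacle is the quantitative bound. Full column rank makes $T_{a,0}^*T_{a,0}$ a positive definite integer matrix $G$, and $s_{\min}^2=\lambda_{\min}(G)$. My plan is a determinant-and-norm bound, which avoids root isolation entirely. With $n$ columns,
\[
 \lambda_{\min}(G)\ \ge\ \frac{\det G}{\lambda_{\max}(G)^{\,n-1}}\ \ge\ \frac{1}{\|G\|_F^{\,n-1}},
\]
because $\det G$ is a positive integer and $\lambda_{\max}\le\|G\|_F$. Since $\|G\|_F$ is a computable integer norm, the right side is an explicit rational number. If a tighter constant is desired, Sturm sequences or rational root isolation applied to the characteristic polynomial of $G$ can refine it. The global constant is then $a_*\ge\min_a\|G_a\|_F^{-(n_a-1)/2}$, and a rational lower bound for the square root is obtained by rounding down.
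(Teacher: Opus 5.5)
Your treatment of items (i)--(iv) follows the paper's justification. You apply the C0 Betti/Euler argument to \((P_a,A_a)\), pass to cohomology by duality over \(\CC\), and use a cycle--cocycle pairing to show \([\kappa_a]\neq0\). One step to add for (iii): the paper also checks that the computed row sets are exactly the apex-containing output simplices and that the column set equals the relative degree-\(d\) basis. That check is what identifies the certified matrix with \(T_{a,0}\), so you should state it.

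Where you genuinely differ is the final quantitative sentence. The paper obtains the bound by a search: it tests \(T_{a,0}^TT_{a,0}-a^2I\succ0\) for \(a=2^{-j}\) with exact rational principal-minor tests until one succeeds, and positive definiteness guarantees termination. You instead give an a priori closed-form bound from integrality. Since \(G=T_{a,0}^TT_{a,0}\) is a positive definite integer matrix, \(\det G\ge1\), and so
\(\lambda_{\min}(G)\ge\det G/\lambda_{\max}(G)^{n-1}\ge\lambda_{\max}(G)^{-(n-1)}\).
This is correct and closer to the ``root-separation'' phrasing of the statement.

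The two routes trade cost against strength. Yours needs nothing beyond the rank certificate and one norm, whereas the paper's search needs exact principal-minor computations on matrices with thousands of columns. However, your bound is exponentially small in the column count, with exponent \((n_a-1)/2\) in the thousands for the Pythagorean and four-qubit-difference types. That drives \(\lambda_*\), and hence \(c_0\), down enormously, while the paper's search returns a bound within a factor of two of the true \(s_{\min}\). Since the reduction only needs some fixed rational constant, either route suffices.

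One small correction: \(\|G\|_F\) is the square root of an integer, not an integer, so \(\|G\|_F^{-(n-1)}\) need not be rational. Replace it by an integer upper bound for \(\lambda_{\max}(G)\), such as \(\lceil\|G\|_F\rceil\) or the maximal absolute row sum of \(G\), before rounding the square root down to a rational.
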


\begin{proof}[Justification of the verification criterion]
A private simplex lies in $A_a$ precisely when the common intersection of
its base-neighbour sets is nonempty.  This test is closed under taking
faces, so $A_a$ is a subcomplex.  Retaining private simplices outside $A_a$
and discarding faces in $A_a$ gives exactly the relative boundary matrices.
The Betti formula, modular vanishing transfer, and Euler argument of
\cref{prop:C0-interface} therefore establish the claimed relative homology.
Transposition and duality over $\CC$ give the cohomology statement.
An explicit rational generator can be obtained by row reduction: choose a
vector in $\ker D_{2m}$ outside $\im D_{2m-1}$, and check that adjoining
it increases the latter matrix's rank by one.  Such a choice exists once
the computed Betti number is one.  The companion package instead gives a
direct pairing certificate: it lists a relative cycle $w_a$ and a unit
coordinate cochain $\kappa_a$ supported on a private simplex with no private
coface.  It checks
\[
 \partial_{\rm rel}w_a=0,\qquad
 D_{\rm rel}\kappa_a=0,\qquad \kappa_a(w_a)\ne0.
\]
If $\kappa_a=D_{\rm rel}\eta$, then
$\kappa_a(w_a)=\eta(\partial_{\rm rel}w_a)=0$, a contradiction.
Thus $[\kappa_a]$ is nonzero and generates the already verified
one-dimensional group.  All cycle coefficients and pairings are rational
and supplied explicitly in \path{source_core_witnesses.json}.

The selected row sets are exactly the apex-containing output simplices.
The checked equality of the bulk column sets with the relative degree-$d$
basis identifies the computed incidence matrix with $T_{a,0}$.
Full column rank modulo a prime, or a nonzero maximal minor, implies full
column rank over $\CC$ by \cref{lem:rank-transfer}.  Thus $T_{a,0}^*T_{a,0}$
is positive definite.  A rational lower singular-value bound can, for
example, be found by testing $T_{a,0}^TT_{a,0}-a^2I\succ0$ for
$a=2^{-j}$ with exact rational principal-minor tests, until one succeeds.
Positivity guarantees termination.  The three apex assertions are checked
by finite set and adjacency comparisons.
\end{proof}

\subsection{Certified catalogue record}
\label{app:certified-catalogue-record}

The companion archive described in \cref{app:reproducibility} contains the
complete finite data and deterministic verification suite.
Its thirteen labelled targets are
\[
\begin{gathered}
  00,\ 11,\ 011,\ 1100,\ 0111,\ 0001,\\
  \text{five four-qubit difference vectors},\qquad
  \text{two Pythagorean vectors}.
\end{gathered}
\]
For each target, \path{source_core_witnesses.json} gives the integer
source-cycle coordinates, their nonzero rational normalization denominator,
the total source-vertex order, the top and folded-bottom vertex maps,
the recorded base arm swaps, and an exact private cycle-cocycle pairing.
The graph-generation scripts and verification code determine all graph edges, base
vertices, private vertices and apex labels.

Instructions for running the verification suite are given in
\cref{app:reproducibility}.
The thirteen C1 records verify
\[
  \partial c_a=0,\qquad
  (f_a)_\#c_a=j_a,\qquad
  \partial\Gamma_a=j_a
\]
by exact rational arithmetic, including every orientation sign and every
simplex made degenerate by folding.

The column-reduction routines in \path{gf2persist.py} and
\path{gfp_persist.py} use the clearing optimization of Chen and
Kerber~\cite{ChenKerber2011}: columns paired in one dimension are cleared
before the next lower dimension is processed.

All rank computations used for a mathematical conclusion are exact.  The
bulk matrices have the following certified sizes and ranks over \(\FF_p\), with \(p=2147483647\): \[
\begin{array}{lrrr}
\toprule
\text{private-graph type}
  &\text{rows}&\text{columns}&\rank_{\FF_p}T_{a,0}\\
\midrule
\text{computational basis, }m=2 &40&32&32\\
\text{computational basis, }m=3 &304&192&192\\
\text{computational basis, }m=4 &2048&1024&1024\\
\text{Pythagorean}              &14153&9538&9538\\
\text{four-qubit difference}    &12960&6656&6656\\
\bottomrule
\end{array}
\]
Thus each \(T_{a,0}\) has full column rank over \(\QQ\), and hence over \(\CC\), by \cref{lem:rank-transfer}.  The suite also verifies the exact
relative Betti numbers in C0 and C4, induced-base equality in C2,
fixed-depth exactness in C3, every apex-incidence assertion, and an explicit
rational cycle-cocycle pairing for each of the thirteen private core
classes.

Verification uses exact arithmetic; printed floating-point singular
values serve as diagnostics.  Positive rational bounds for the reduction
are computed from the certified integer matrices by
\cref{prop:local-effectivity}.

The rank certificates are reproduced by exact recomputation.
Characteristic-zero conclusions from modular ranks use
\cref{lem:rank-transfer}.
Simplex lists and incidence matrices are generated deterministically from
the graph-generation scripts.  The source-cycle verification script reconstructs each
graph from its source, prism and cone; the independent connecting-map
verifier constructs a folded filling and checks its boundary and simplex
membership in the actual clique complex.

\section{Signed-chain details for the folded gadgets}
\label{app:diagnosis}

The calculations below use the folding maps of \cref{constr:local-gadget}.
The corresponding entangled gadget families are those of
\cite[Theorems~8.7 and~8.8]{KingKohler2026}; graph-builder provenance is
recorded in \cref{app:reproducibility}.

\subsection{Pythagorean orientation corrections}
\label{subsec:diagnosis-signs}

For a rank-one projector onto \(|\phi\rangle=\sum_b a_b|b\rangle\), it is not enough that the folded facets have the same unoriented support as
the cycles appearing in the sum.  The induced chain map must satisfy the
signed equality
\[
  f_\#c=\sum_ba_bz_b.
\]
The sign of each image simplex depends on its orientation after the vertex
map, and a simplex with repeated image vertices contributes zero in the
normalized chain complex.

With the fixed bow-tie orientation convention, the two Pythagorean
folding maps before the orientation corrections have coefficient patterns
\[
  -5|011\rangle+4|100\rangle-3|101\rangle
\]
and
\[
  +5|010\rangle-4|101\rangle+3|100\rangle,
\]
whereas the required targets are
\[
  -5|011\rangle+4|100\rangle+3|101\rangle
\]
and
\[
  -5|010\rangle-4|101\rangle+3|100\rangle.
\]
These discrepancies are not common overall signs.  Moreover, the two
Pythagorean patterns cannot both be brought to the required orientation by one shared reorientation
of the encoded basis cycles on a common qubit triple.

Each bow-tie has automorphisms that exchange the two arms of either loop
and reverse that loop's orientation.  Postcomposing the folding maps with
the appropriate arm-swap automorphisms gives the required signed targets
while leaving the source Hamiltonian fixed.  The resulting local graph is isomorphic to the graph before the
orientation correction, so the relative homology and spectral certificates
are unchanged.  The certificate package checks the signed identities for
both Pythagorean gadgets and all five four-qubit difference gadgets.

\section{Source-sphere representatives}
\label{app:source-sphere-existence}

This appendix proves \cref{prop:source-sphere-existence}.  Write $B$ for
the bow-tie complex, so \(X_m=B*\cdots*B\) ($m$ factors) and $d=2m-1$.

\begin{proof}[Proof of \cref{prop:source-sphere-existence}]
Since $\dim B=1$ and $\dim(A*A')=\dim A+\dim A'+1$, one has
$\dim X_m=d$.  Thus $C_{d+1}(X_m;\ZZ)=0$ and
$B_d(X_m;\ZZ)=0$, so the map
$Z_d(X_m;\ZZ)\to H_d(X_m;\ZZ)$ is an isomorphism.

For $m\ge2$, repeated use of the connectivity formula for joins shows that
$X_m$ is $(2m-2)$-connected; see
\cite[Section~1.2, Exercise~21, and Theorems~3B.5 and~4.32]{Hatcher}.
Hurewicz therefore gives \(\pi_d(|X_m|)\cong H_d(X_m;\ZZ)\).  Hence the class of $j$ is represented
by a map $g:S^d\to|X_m|$ with $g_*[S^d]=[j]$.  For $m=1$, the same
conclusion follows from the surjection
$\pi_1(|X_1|)\to H_1(X_1;\ZZ)$.

By simplicial approximation
\cite[Theorem~2C.1]{Hatcher}, after finitely many barycentric subdivisions
there are a finite triangulation $K$ of $S^d$ and a simplicial map
$f:K\to X_m$ homotopic to $g$.  Orienting $K$ gives
$[f_\#[K]]=[j]$ in $H_d(X_m;\ZZ)$.  The difference $f_\#[K]-j$ is
therefore a $d$-boundary.  Since $C_{d+1}(X_m;\ZZ)=0$, there are no nonzero
$d$-boundaries, and hence $f_\#[K]=j$ as chains.  This passage from
homologous cycles to equal chains uses the top-dimensionality of $X_m$.
Any total order on the finite set $V(K)$ completes the required data.
\end{proof}

\begin{remark}[Existence versus explicit data]
\label{rem:source-sphere-effectivity}
The proposition gives no useful size bound on the triangulation.  The
thirteen certified gadgets instead use the finite source complexes of
\cref{def:source-cycle-data}, which need not be spheres.  Their certificates
specify $K_a$, $c_a$, a vertex order, and $f_a$, and verify
$\partial c_a=0$ and $(f_a)_\#c_a=j_a$ exactly.
\end{remark}

\section{Source-catalogue verification}
\label{app:catalogue-closure}

This appendix lists the clock terms of \cref{thm:cited-source-reduction},
proves \cref{lem:catalogue-closure}, and counts the terms in
\cref{prop:source-catalogue-identification}.

\subsection{The clock terms}

Following \cite[Appendix~A]{CrichignoKohler}, the clock Hamiltonian of a
circuit with $L$ gates is
$H_{\mathrm{clock}}=\sum_{r=1}^{6}H^{(r)}_{\mathrm{clock}}$, where
\[
\begin{aligned}
  H^{(1)}_{\mathrm{clock}}&=|u\rangle\langle u|_1,
  \qquad
  H^{(2)}_{\mathrm{clock}}=|d\rangle\langle d|_L,\\
  H^{(3)}_{\mathrm{clock}}&=\sum_{1\le j<k\le L}
     \bigl(|a_1\rangle\langle a_1|+|a_2\rangle\langle a_2|\bigr)_j
     \otimes
     \bigl(|a_1\rangle\langle a_1|+|a_2\rangle\langle a_2|\bigr)_k,\\
  H^{(4)}_{\mathrm{clock}}&=\sum_{1\le j<k\le L}
     \bigl(|u\rangle\langle u|+|a_1\rangle\langle a_1|
           +|a_2\rangle\langle a_2|\bigr)_j
     \otimes|d\rangle\langle d|_k,\\
  H^{(5)}_{\mathrm{clock}}&=\sum_{1\le j<k\le L}
     |u\rangle\langle u|_j\otimes
     \bigl(|a_1\rangle\langle a_1|+|a_2\rangle\langle a_2|
           +|d\rangle\langle d|\bigr)_k,\\
  H^{(6)}_{\mathrm{clock}}&=\sum_{j=1}^{L-1}
     |d\rangle\langle d|_j\otimes|u\rangle\langle u|_{j+1}.
\end{aligned}
\]

\begin{remark}[Index ranges]
\label{rem:clock-index-ranges}
The printed sums in \cite[(A.4)-(A.6)]{CrichignoKohler} run over
$1\le j\le k\le L$.  For $H^{(4)}$ and $H^{(5)}$ the diagonal terms
$j=k$ vanish, since they are products of orthogonal projections on the
same particle.  For $H^{(3)}$ the diagonal term would be
$\bigl(|a_1\rangle\langle a_1|+|a_2\rangle\langle a_2|\bigr)_j$, which
penalizes every clock state with an active particle; including it would
give positive energy to every history state and destroy completeness.
The constraint it encodes, that at most one particle is active, concerns
distinct particles, so the sum runs over $j<k$.  Likewise
$H'_{\mathrm{prop},t}$ involves particles $t$ and $t+1$ and is present only
for $t\le L-1$.
\end{remark}

\subsection{Proof of the catalogue-closure lemma}

By the placement rule of \cref{subsec:projector-catalogue}, it suffices to
write each projection as a sum of pairwise orthogonal rank-one projections
onto vectors, each a nonzero multiple of a catalogue vector written onto a
reordering of its qubits.  Throughout, strings are written in the order of
the qubits on which the summand acts.

\emph{Computational-basis terms.}  The clock terms, the input and output
terms, and the additional operator are sums of $|e\rangle\langle e|$ over
distinct bit strings $e$, which are pairwise orthogonal.  In each case $e$
has the same length and Hamming weight as the indicated catalogue basis
vector:
\begin{itemize}
\item $H^{(1)}$, $H^{(2)}$: $00\mapsto|00\rangle$ and $11\mapsto|11\rangle$;
\item $H^{(3)}$: $e\in\{0101,0110,1001,1010\}\mapsto|1100\rangle$;
\item $H^{(4)}$: $e\in\{0111,1011\}\mapsto|0111\rangle$, and
$0011\mapsto|1100\rangle$;
\item $H^{(5)}$: $e\in\{0001,0010\}\mapsto|0001\rangle$, and
$0011\mapsto|1100\rangle$;
\item $H^{(6)}$: $1100\mapsto|1100\rangle$;
\item input term: $011\mapsto|011\rangle$; output term:
$101\mapsto|011\rangle$;
\item additional operator:
$|010\rangle\langle010|\otimes I
=|0100\rangle\langle0100|+|0101\rangle\langle0101|$, with
$0100\mapsto|0001\rangle$ and $0101\mapsto|1100\rangle$.
\end{itemize}

\emph{Propagation terms.}  For a unitary $g$ on $k$ data qubits and
$e\in\{0,1\}^k$, put
\[
  v_e:=|01\rangle|e\rangle-|10\rangle\,g|e\rangle .
\]
Because $|01\rangle\perp|10\rangle$ and $g$ is unitary,
$\langle v_e,v_{e'}\rangle
=\langle e,e'\rangle+\langle ge,ge'\rangle=2\delta_{ee'}$.  Expanding
$\sum_e\frac12|v_e\rangle\langle v_e|$ and using
$\sum_e|e\rangle\langle e|=I$ gives exactly $H_{\mathrm{prop},t}$.  So
$H_{\mathrm{prop},t}$ is the sum of the pairwise orthogonal rank-one
projections onto $\CC v_e$.

For $g=\mathrm{CNOT}$ on $(q_1,q_2,q_3,q_4)$, where $q_1q_2$ is the clock
particle, $q_3$ the control and $q_4$ the target:
\[
\begin{array}{lll}
  v_{00}=|0100\rangle-|1000\rangle
    & =\ |0010\rangle-|0001\rangle & \text{on }(q_3,q_4,q_2,q_1),\\
  v_{01}=|0101\rangle-|1001\rangle
    & =\ |0110\rangle-|0101\rangle & \text{on }(q_3,q_4,q_2,q_1),\\
  v_{10}=|0110\rangle-|1011\rangle
    & =\ -\bigl(|1101\rangle-|1010\rangle\bigr) & \text{on }(q_3,q_1,q_2,q_4),\\
  v_{11}=|0111\rangle-|1010\rangle
    & =\ |1011\rangle-|1100\rangle & \text{on }(q_3,q_1,q_2,q_4).
\end{array}
\]
Each identity is checked by writing the catalogue bits onto the listed
tuple; for example, $|0110\rangle$ written onto $(q_3,q_4,q_2,q_1)$ puts
$1$ on $q_4$ and $q_2$, giving $|0101\rangle$.

For $g=U_{\mathrm{Pyth}}$ on $(q_1,q_2,q_3)$, one has
$U_{\mathrm{Pyth}}|0\rangle=\frac15(3|0\rangle-4|1\rangle)$ and
$U_{\mathrm{Pyth}}|1\rangle=\frac15(4|0\rangle+3|1\rangle)$, hence
\[
  5v_0=-\bigl(-5|010\rangle+3|100\rangle-4|101\rangle\bigr),
  \qquad
  5v_1=-\bigl(-5|011\rangle+4|100\rangle+3|101\rangle\bigr),
\]
both on $(q_1,q_2,q_3)$ itself.

Finally, $H'_{\mathrm{prop},t}$ is the rank-one projection onto
$|a_2u\rangle-|da_1\rangle=|1000\rangle-|1101\rangle$, which equals
$-\bigl(|1011\rangle-|1000\rangle\bigr)$ written onto
$(q_1,q_3,q_2,q_4)$.

In every case at most four rank-one pieces occur.  \qed

\subsection{Counting}

In \cref{prop:source-catalogue-identification} the Hamiltonian has $L+1$
clock particles.  There are at most $\binom{L+1}{2}$ pairs $j<k$, each
contributing at most $4+3+3=10$ pieces to $H^{(3)}$, $H^{(4)}$,
$H^{(5)}$; at most $4(L+1)$ pieces come from propagation terms; at most
$2L$ from $H^{(6)}$ and $H'_{\mathrm{prop}}$; at most $2(N_d+1)$ from
input terms; and $3$ from $H^{(1)}$, $H^{(2)}$, $H_{\mathrm{out}}$.  The
total is at most
\[
  5(L+1)^2+6(L+1)+2N_d+5\le16\bigl((L+1)^2+N_d\bigr).
\]

\section{Dependencies and notation}
\label{app:ledger}

This appendix records the external inputs, finite local data, and notation
used in the hardness proof.

\subsection{External inputs}

The source Hamiltonian is obtained from Bravyi's perfect-completeness
history-state construction, in the form stated in
\cref{thm:cited-source-reduction}.  Standard finite-dimensional topology
supplies the long exact sequence of a pair and the join K\"unneth formula.
Rellich's theorem is the analytic input used in the filtered-Hodge
argument.  Real quantifier elimination is used only to make the fixed
catalogue constants effective.  Sparse Hamiltonian simulation and phase
estimation are used only for the independent $\mathsf{QMA}$-containment
proof.

\subsection{Finite local inputs}

For each of the thirteen canonical gadgets, C0-C4 are the complete finite
inputs used later:
\begin{enumerate}[label=(\roman*)]
\item C0 determines the relative topology of the actual completed local
clique complex;
\item C1 verifies the signed source cycle and the exact filling identity;
\item C2 verifies the induced bow-tie base and the private-vertex isolation
needed for global assembly;
\item C3 eliminates unwanted intermediate classes in the private-depth
filtration;
\item C4 identifies the unique surviving private class and proves injectivity
of the private-sector incidence matrix.
\end{enumerate}
These statements are transported to placed occurrences by
\cref{prop:certificate-relabel-invariance}; join padding is handled
separately by the uniform topological and spectral join theorems.

\subsection{Proof structure}

The finite source catalogue is encoded by the bow-tie complex.  Conditions
C0-C2 and the long exact sequence determine the local and global topological
action of the gadgets.  Conditions C3-C4, together with the filtered-Hodge
theorem, give the single-gadget spectral penalty; join padding and the
global estimate then yield the inverse-polynomial NO-case gap.  The final
isolated-vertex padding converts this statement into the fixed functions
$(k,\gamma)$ of the target problem.

\subsection{Frequently used notation}

The symbols $N$, $t$, and $m_a$ denote the number of encoded qubits, the
number of Hamiltonian-term occurrences, and the locality of catalogue type
$a$, respectively.  The base graph and clique complex are $G_N$ and $X_N$,
and $q_N=2N-1$ is the encoded homological degree.  The cycles $z_b$ encode
computational-basis states, while $j_a$ is the target cycle associated with
catalogue projector $\Pi_a$.  The local output clique complex is
$Y_a^{\mathrm{loc}}$ and its private vertices have weight $\lambda$.
The filtration counts private vertices.  The source promise gap is denoted
$g$, and the global construction uses $\lambda=c_0g/t$.

\section{Reproducing the certificate checks}
\label{app:reproducibility}

The companion archive \texttt{certificates.zip} contains the graph builders,
exact witness data, verification scripts, manifests, and logs used for C0-C4.
The required Python, NumPy, SciPy, and NetworkX versions are recorded in the
manifest.

The graph builders \path{pythag.py}, \path{fourq.py}, and
\path{compbasis.py} are Python ports of the corresponding Mathematica
constructions in King and Kohler's companion files
\cite{KingKohlerNotebooks}.  The archive also contains the orientation
adjustments, witnesses, and verification routines used here; their mapping
to the upstream notebooks is recorded in \path{PROVENANCE.md}.

For each catalogue type the archive records the graph and vertex maps, the
source cycle and folding data in \path{source_core_witnesses.json}, the
private core cycle and cocycle, the completed clique complexes and incidence
matrices, and the exact rank checks reported in
\cref{app:certified-catalogue-record}.  Running \texttt{bash RUN\_ALL.sh}
reconstructs these data, verifies C0-C4, and writes the logs to
\path{logs/}.

\printbibliography

\end{document}